\definecolor{darkred}{rgb}{0.6,0.0,0.1}
\definecolor{darkgreen}{rgb}{0,0.5,0}
\definecolor{darkblue}{rgb}{0,0,0.5}
\renewcommand{\cite}{\citet}
\definecolor{dgreen}{rgb}{0,0.5,0}
\definecolor{dblue}{rgb}{0,0,0.9}
\definecolor{dred}{rgb}{0.6,0.0,0.1}
\definecolor{dgold}{rgb}{0.5,0.3,0.0}
\definecolor{dvio}{rgb}{0.6,0.3,0.5}
\definecolor{gray}{rgb}{0.5,0.5,0.5}
\theoremstyle{mysc}\newtheorem{prop}{Proposition}[section]
\theoremstyle{mysc}
\theoremstyle{mysc}
\theoremstyle{mysc}\newtheorem{theo}[prop]{Theorem}
\theoremstyle{mysc}\newtheorem{defin}{Definition}[section]
\theoremstyle{mysc}\newtheorem{lem}[prop]{Lemma}
\theoremstyle{myex}\newtheorem{rem}{Remark}[section]
\theoremstyle{myex}\newtheorem{example}{Example}[section]
\theoremstyle{myex}
\newtheorem{assA}{Assumption}
\theoremstyle{mysc}
\theoremstyle{mysc}
\theoremstyle{mysc}
\numberwithin{equation}{section}
\def\@fnsymbol#1{\ensuremath{\ifcase#1\or * \or \star \or 1 \or 2\or 3\or  , \or
g\or h\or i\else\@ctrerr\fi}}%
 \author{\textsc{ Christoph Breunig}\thanks{Humboldt-Universit\"at zu Berlin, Spandauer Stra\ss e 1, 10178 Berlin, Germany, e-mail: christoph.breunig@hu-berlin.de}\\
\textit{ \small Humboldt-Universit\"at zu Berlin}}
\title{{Goodness-of-Fit Tests based on Series Estimators in Nonparametric Instrumental Regression}\thanks{
This paper derives from my doctoral dissertation, completed under the guidance of Enno Mammen.  I would like to thank two anonymous referees for comments and suggestions that greatly improved the paper. I also benefited from helpful comments by Jan Johannes, James Stock, Federico Crudu, and Petyo Bonev.
This work was supported by the DFG-SNF research group FOR916.}}
\date{March 2, 2015} 
\begin{document}
\maketitle
\vskip 1cm
\begin{abstract}
This paper proposes several tests of restricted specification in  nonparametric instrumental regression.
Based on series estimators, test statistics are established that allow for tests of the general model against a parametric or nonparametric specification as well as a test of exogeneity of the vector of regressors.
The tests' asymptotic distributions under correct specification are derived and their consistency against any alternative model is shown. Under a sequence of local alternative hypotheses, the asymptotic distributions of the tests is derived.  Moreover, uniform consistency is established over a class of alternatives whose distance to the null hypothesis shrinks appropriately as the sample size increases. A Monte Carlo study examines finite sample performance of the test statistics.

\end{abstract}
\begin{tabbing}
\noindent \emph{Keywords:} \=Nonparametric regression, instrument, linear operator, orthogonal series\\
\> estimation, hypothesis testing, local alternative, uniform consistency.\\[.1ex]
 \noindent\emph{JEL classification:} C12, C14.
\end{tabbing}
\section{Introduction}
While parametric instrumental variables estimators are widely used in econometrics, its nonparametric extension has not been introduced until the last decade. The study of nonparametric instrumental regression models was initiated by \cite{F03eswc} and \cite{NP03econometrica}.
In these models, given a scalar dependent variable $Y$, a vector of regressors $Z$, and a vector of instrumental variables $W$, the structural function $\sol$ satisfies
	\begin{equation}
	\label{model:niv}
	Y = \varphi(Z) + U \quad\mbox{ with }\quad\Ex[U|W]= 0
	\end{equation}
for an error term $U$. Here, $Z$ contains potentially  endogenous entries, that is, $\Ex[U|Z]$ may not be zero. Model \eqref{model:niv} does not involve the \textit{a priori} assumption that the structural function is known up to finitely many parameters. By considering this nonparametric model, we minimize the likelihood of misspecification. On the other hand, implementing the nonparametric instrumental regression model can be challenging.

Nonparametric instrumental regression models have attracted increasing attention in the econometric literature.
For example, 
\cite{AC03econometrica},  \cite{BCK07econometrica}, \cite{ChenReiss2008}, \cite{NP03econometrica} or \cite{Johannes2009} consider sieve minimum distance estimators of $\sol$, while  \cite{DFR02}, \cite{HallHorowitz2005}, \cite{GS06}
or \cite{FJVB07} study penalized least squares estimators.
When the methods of analysis are widened to include nonparametric techniques, one must confront two mayor challenges. First, identification in model \eqref{model:niv} requires far stronger assumptions about the instrumental variables than for the parametric case (cf. \cite{NP03econometrica}). Second, the accuracy of any estimator of $\sol$ can be low, even for large sample sizes.
More precisely, \cite{ChenReiss2008} showed that for a large class of joint distributions of $(Z,W)$ only logarithmic rates of convergence can be obtained. The reason for this slow convergence is that model \eqref{model:niv} leads to an inverse problem which is \textit{ill posed} in general, that is, the solution does not depend continuously on the data.

In light of the difficulties of estimating the nonparametric function $\sol$ in model \eqref{model:niv}, the need for statistically justified model simplifications is paramount.
We do not face an ill posed inverse problem if a parametric structure of $\sol$ or exogeneity of $Z$ can be justified. If these model simplifications are not supported by the data, one might still be interested in whether a smooth solution to model \eqref{model:niv} exists and if some regressors could be omitted from the structural function $\sol$. These model simplifications have important potential since they might increase the accuracy of estimators of $\sol$ or lower the required conditions imposed on  the instrumental variables to ensure identification.

In this work we present a new family of goodness-of-fit statistics which allows for several restricted specification tests of the model \eqref{model:niv}. Our method can be used for testing either a parametric or nonparametric specification. In addition, we perform a test of exogeneity and of dimension reduction of the vector of regressors $Z$, that is, whether certain regressors can be omitted from the structural function $\sol$.
By a withdrawal of  regressors which are independent of the instrument, identification in the restricted model might be possible although $\sol$ is not identified in the original model  \eqref{model:niv}.

There is a large literature concerning hypothesis testing of restricted specification of regression. In the context of conditional moment equation, \cite{Donald2003} and \cite{Tripathi2003} make use of empirical likelihood methods to  test parametric restrictions of the structural function. In addition,   \cite{Santos12} allows for different hypothesis tests, such as a test of homogeneity.
Based on kernel techniques, \cite{Horowitz2006}, \cite{Blundell2007}, and \cite{Horowitz2011} propose test statistics in which an additional smoothing step (on the exogenous entries of $Z$) is carried out. 
 \cite{Horowitz2006}  considers a parametric specification test.
\cite{Blundell2007} establish a consistent test of exogeneity of the vector of regressors $Z$, whereas \cite{Horowitz2011} tests whether the endogenous part of $Z$ can be omitted from $\sol$. \cite{GS07} and \cite{Horowitz2009} develop nonparametric specification tests in an instrumental regression model. We like to emphasize that their test cannot be applied to model \eqref{model:niv} where some entries of $Z$ might be exogenous.

Our testing procedure is entirely based on series estimation and hence is easy to implement. We use approximating functions to estimate the conditional moment restriction implied by the model \eqref{model:niv} where $\sol$ is replaced by an estimator under each conjectured hypothesis.
It is worth noting that by our methodology we can omit some assumptions typically found in related literature, such as smoothness conditions on the joint distribution of $(Z,W)$. In addition, a Monte Carlo indicates that the finite sample power of our tests exceed that of existing tests.

The paper is organized as follows. In Section 2, we start with a simple hypothesis test, that is, whether $\sol$ coincides with a known function $\sol_0$. We obtain the test's asymptotic distribution under the null hypothesis and its consistency  against any fixed alternative model. Moreover, we judge its power by considering linear local alternatives and establish uniform consistency over a class of functions. In Sections 3--5 we consider a parametric specification test, a test of exogeneity, and a nonparametric specification test. The goodness-of-fit statistics are obtained by replacing $\sol_0$ in the statistic of Section 2 by an appropriate estimator. 
In each case, the asymptotic distribution under correct specification and power statements against alternative models are derived. In Section 6, we investigate the finite sample properties of our tests by Monte Carlo simulations. All proofs can be found in the appendix.

\section{A simple hypothesis test}
In this section, we propose a goodness-of-fit statistic for testing the hypothesis $ H_0:\,\sol=\sol_0$, where $\sol_0$ is a known function, against the alternative $\sol\neq\sol_0$. We develop a test statistic based on $\cL^2$ distance. As we will see in the following chapters, it is sufficient to replace $\sol_0$ by an appropriate estimator to allow for tests of the general model against other specifications.
We first give basic assumptions, then obtain the asymptotic distribution of the proposed statistic, and further discuss its power and consistency properties.

\subsection{Assumptions and notation.}
\paragraph{The model revisited}
The nonparametric instrumental regression model \eqref{model:niv} leads to a linear operator equation. To be more precise, let us introduce the conditional expectation operator $T\phi:=\Ex[\phi(Z)|W]$ mapping $\cL^2_Z=\{\phi:\,\Ex|\phi(Z)|^2<\infty\}$ to $\cL^2_W=\{\psi:\, \Ex|\psi(W)|^2<\infty\}$. Consequently, model \eqref{model:niv} can be written as
	\begin{equation}
	\label{model:EE:t}
	g= T\sol
	\end{equation}
        where the function $g:=\Ex[Y | W]$ belongs to $\cL^2_W$.
Throughout the paper we assume that an iid. $n$-sample of $(Y,Z,W)$ from the model \eqref{model:niv} is available.
\paragraph{Assumptions.}
Our test statistic based on  a sequence of approximating functions $\{\basW_l\}_{l\geq 1}$ in $\cL^2_W$. Let $\cW$ denote the support of $W$ and the marginal density of $W$ by $p_W$.
Let $\nu$ be a probability density function that is strictly positive on $\cW$. We assume throughout the paper that $\{\basW_l\}_{l\geq 1}$ forms an orthonormal basis in $\cL_\nu^2(\mathbb R^{d_w}):=\{\phi:\,\int \phi^2(s)\nu(s)ds<\infty\}$ where $d_w$ denotes the dimension of $W$.  For instance, if $\cW\subset[a,b]$ then a natural choice of $\nu$ would be $\nu(w)=1/(b-a)$ for $w\in[a,b]$ and zero otherwise.
\begin{assA}\label{Ass:A1}
There exist constants $\eta_f, \eta_p\geqslant 1$ such that
(i) $ \sup_{l\geq 1}\int|f_l(s)|^4\nu(s)ds\leqslant \eta_f$ and (ii) $\sup_{w\in\cW}\big\{ p_W(w)/\nu(w)\big\}\leq \eta_p$ with  $\nu$ being strictly positive on $\cW$.
\end{assA}
 Assumption \ref{Ass:A1} $(i)$ restricts the magnitude of the approximating functions $\{f_j\}_{j\geq1}$ which is necessary for our proof to determine the asymptotic behavior of our test statistic. This assumption holds for sufficiently large $\eta_f$ if the basis $\{f_l\}_{l\geq1}$ is uniformly bounded, such as  trigonometric bases. Moreover,  Assumption \ref{Ass:A1} $(i)$ is satisfied by Hermite polynomials. Assumption \ref{Ass:A1} $(ii)$ is satisfied if, for instance, $p_W/\nu$ is continuous and $\cW$ is compact.

The results derived below involve assumptions on the conditional moments of the random variables $U$ given $W$ gathered in the following assumption.
\begin{assA}\label{Ass:A2}
 There exists a constant $\sigma>0$ such that $\Ex[U^4|W]\leq \sigma^4$.
\end{assA} 
The conditional moment condition on the error term $U$ helps to establish the asymptotic distribution of our test statistics.
 The following assumption ensures identification of $\sol$ in the model \eqref{model:EE:t}.
\begin{assA}\label{Ass:A3}
 The conditional expectation operator $T$ is nonsingular.
\end{assA} 
Under Assumption \ref{Ass:A3}, the hypothesis $H_0$ is equivalent to $g=T\sol_0$ which is used to construct our test statistic below. Note that the asymptotic results under null hypotheses considered in Sections 2--4 hold true even if $T$ is singular. If Assumption \ref{Ass:A3} fails, however, our test has no power against alternative models whose structural function satisfies $\sol=\sol_0+\delta$ with $\delta$ belonging to the null space of $T$.

We will see below that the power of our test can be increased by carrying out an additional smoothing step.
Therefore, we introduce a smoothing operator $L$ mapping  $\cL_W^2$ to $\cL_W^2$. 
In contrast to the unknown conditional expectation operator $\op$, which has to be estimated, the operator $L$ can be chosen by the econometrician.
Let $L$ have an eigenvalue decomposition given by $\{\tau_j^{1/2},f_j\}_{j\geq 1}$. We allow in this paper for a wide range of smoothing operators. In particular, $L$ may be the identity operator, that is, no smoothing step is carried out. We only require the following condition on the operator $L$ determined by the sequence of eigenvalues $\tau=(\tau_j)_{j\geq 1}$.
\begin{assA}\label{Ass:A4}
 The weighting sequence $\tau$ is positive, nonincreasing, and satisfies $\tau_1=1$.
\end{assA} 
Assumption \ref{Ass:A4} ensures that the operator $L$ is nonsingular.
\begin{rem}
 \cite{Horowitz2006}, \cite{Blundell2007}, and \cite{Horowitz2011} consider as a smoothing operator a Fredholm integral operator, that is,  $L\phi(s)=\int_0^1\ell(s,t)\phi(t)dt$ for some  function $\phi\in \cL^2[0,1]=\{\phi:\,\int_0^1\phi^2(s)ds<\infty\}$ and some kernel function $\ell:[0,1]^2\to\mathbb{R}$. In order to ensure $L\phi\in \cL^2[0,1]$ it is sufficient to assume $\int_0^1\int_0^1|\ell(s,t)|^2dsdt<\infty$. Let $\{\tau_j^{1/2},f_j\}_{j\geq 1}$ be the eigenvalue decomposition of $L$. By Parseval's identity
\begin{equation*}
 \int_0^1\int_0^1|\ell(s,t)|^2dsdt=\int_0^1\sum_{j=1}^\infty\tau_j|f_j(s)|^2ds=\sum_{j=1}^\infty\tau_j
\end{equation*}
where the right hand side is only finite if the sequence $\tau$ decays sufficiently fast.
In our case, if we apply a smoothing operator $L$ with $\sum_{j=1}^\infty\tau_j<\infty$ then our test statistics converges also to a weighted series of chi-squared random variables. In addition, we allow for a milder degree of smoothing or no smoothing at all and show below that then asymptotic normality of our test statistics can be obtained.
\hfill$\square$\end{rem}

\paragraph{Notation.}
 For a matrix $A$ we denote its transposed by $A^t$, its inverse by $A^{-1}$,  and its generalized inverse by $A^{-}$. The euclidean norm is denoted by $\|\cdot\|$ which in case of a matrix denotes the spectral norm, that is $\|A\|=(\text{trace}(A^tA))^{1/2}$. The norms on $L_Z^2$ and $L_W^2$ are denoted by $\|\phi\|_Z^2:=\Ex|\phi(Z)|^2$ for $\phi\in L_Z^2$ and $\|\psi\|_W^2:=\Ex|\psi(W)|^2$ for $\psi\in L_W^2$.
The $k\times k$ identity matrix  is denoted by $I_k$. For a vector $V$ we write  diag$(V)$ for the diagonal matrix with diagonal elements being the values of $V$.
 Moreover, $\basZ_\um(Z)$ and $\basW_\um(W)$ denote random vectors with entries $\basZ_j(Z)$ and $\basW_j(W)$, $1\leq j\leq m$, respectively. For any weighting sequence $w$ we introduce vectors $\basZ_\um^w(Z)$ and $\basW_\um^w(W)$ with entries $\basZ_j^w(Z)=\sqrt{w_j}e_j(Z)$ and $\basW_j^w(W)=\sqrt{w_j}f_j(W)$, $1\leq j\leq m$. 
We write $a_n\sim b_n$ when there exist
constants $c,c' > 0$ such that $c b_n\leq a_n\leq c' b_n$ for all sufficiently large $n$.

\subsection{The test statistic and its asymptotic distribution} 
Nonsingularity of the conditional expectation operator $T$  and
the smoothing operator $L$  implies that the null hypothesis $H_0$ is equivalent to $L(g-T\sol_0)=0$. Note that $\|L(g-T\sol_0)\|_W=0$ if and only if $\int\big|L(g-T\sol_0)(w)p_W(w)/\nu(w)\big|^2\nu(w)dw=0$ since the Lebesgue measure $\nu$ is strictly positive on $\cW$. Moreover, since $\{f_j\}_{j\geq 1}$ is an orthonormal basis with respect to $\nu$ we obtain by Parseval's identity
\begin{equation}
 \int\big|L(g-T\sol_0)(w)p_W(w)/\nu(w)\big|^2\nu(w)dw=\sum_{j=1}^\infty\Ex[(g-\op\sol_0)(W)f_j^\tau(W)]^2.
\end{equation}
 Now we truncate the infinite sum at some integer $\m$ which grows with the sample size $n$. This ensures consistency of our testing procedure.
Further, replacing the expectation by sample mean we obtain our test statistic
\begin{equation}\label{sn}
 S_n:=\sum_{j=1}^\m\tau_j\big|n^{-1}\sum_{i=1}^n(Y_i-\sol_0(Z_i))f_j(W_i)\big|^2.
\end{equation}
We reject the hypothesis $H_0$ if $n\,S_n$ becomes too large.
 When no additional smoothing is carried out, that is, $L$ is the identity operator, then $\tau_j=1$ for all $j\geq 1$.
 To achieve asymptotic normality we need to standardize our test statistic $S_n$ by appropriate mean and variance, which we introduce in the following definition.
\begin{defin}\label{def:sigma}
For all $m\geq 1$ let  $\varSigma_m$ be the covariance matrix of the random vector $Uf_\um^\tau(W)$ with entries ${\textsl s}_{jl}=\Ex\big[U^2f_j^\tau(W)f_{l}^\tau(W)\big]$, $1\leq j,l\leq m$.
Then the trace and the Frobenius norm of $\varSigma_m$ are respectively denoted by
\begin{equation*}
 \mu_m:=\sum_{j=1}^m {\textsl s}_{jj}\quad\text{and}\quad\varsigma_m:=\Big(\sum_{j,\,l=1}^m {\textsl s}_{jl}^2\Big)^{1/2}.
\end{equation*}
\end{defin}

Indeed the next result shows that $nS_n$ after standardization is asymptotically normally distributed if $\m$ increases appropriately as the sample size $n$ tends to infinity.

\begin{theo}\label{theo:norm:ind:main}
 Let  Assumptions \ref{Ass:A1}--\ref{Ass:A4} hold true. If   $\m$ satisfies
\begin{equation}\label{cond:theo:norm}
\varsigma_\m^{-1}=o(1)\quad \text{and}\quad \Big(\sum_{j=1}^\m\tau_j\Big)^3=o(n)
\end{equation}
 then under $H_0$
\begin{equation*}
 (\sqrt 2\varsigma_\m)^{-1}\big(n\,S_n-\mu_\m\big)\stackrel{d}{\rightarrow} \cN(0,1).
\end{equation*}
\end{theo}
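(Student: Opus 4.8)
The plan is to work throughout under $H_0$, where $Y_i-\sol_0(Z_i)=U_i$ with $\Ex[U_i\mid W_i]=0$, and to treat $n\,S_n$ as a degenerate quadratic form in the errors. Writing $f_j^\tau:=\sqrt{\tau_j}\,f_j$ and $K_\m(w,w'):=\sum_{j=1}^\m f_j^\tau(w)f_j^\tau(w')$, expanding the square in \eqref{sn} gives
\begin{equation*}
 n\,S_n=\frac1n\sum_{i,i'=1}^n U_iU_{i'}\,K_\m(W_i,W_{i'})=\underbrace{\frac1n\sum_{i=1}^n U_i^2\,K_\m(W_i,W_i)}_{=:D_n}\;+\;\underbrace{\frac1n\sum_{i\ne i'}U_iU_{i'}\,K_\m(W_i,W_{i'})}_{=:Q_n}.
\end{equation*}
Since $\Ex[U_i^2K_\m(W_i,W_i)]=\sum_{j=1}^\m {\textsl s}_{jj}=\mu_\m$, the diagonal part $D_n$ supplies the centering, and it suffices to prove $(\sqrt2\,\varsigma_\m)^{-1}(D_n-\mu_\m)=o_P(1)$ together with $(\sqrt2\,\varsigma_\m)^{-1}Q_n\stackrel d\to\cN(0,1)$; Slutsky's lemma then yields the claim.

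For the diagonal term I would bound its variance directly. Using the iid assumption and Assumption~\ref{Ass:A2},
\begin{equation*}
 \mathrm{Var}(D_n)\le \tfrac1n\,\Ex\big[U^4K_\m(W,W)^2\big]\le \tfrac{\sigma^4}{n}\,\Ex\big[K_\m(W,W)^2\big],
\end{equation*}
and since $K_\m(W,W)=\sum_{j=1}^\m\tau_j f_j^2(W)$, Cauchy--Schwarz in the $j$-sum together with $\tau_j\le\tau_1=1$ and Assumption~\ref{Ass:A1} gives $\Ex[K_\m(W,W)^2]\le\big(\sum_{j=1}^\m\tau_j\big)\sum_{j=1}^\m\tau_j\,\Ex[f_j^4(W)]\le\eta_p\eta_f\big(\sum_{j=1}^\m\tau_j\big)^2$. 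Hence $\mathrm{Var}(D_n)/\varsigma_\m^2=O\big((\sum_{j=1}^\m\tau_j)^2/(n\varsigma_\m^2)\big)$, which tends to zero because $\big(\sum_j\tau_j\big)^3=o(n)$ forces $\big(\sum_j\tau_j\big)^2=o(n)$, while $\varsigma_\m^{-1}=o(1)$ gives $\varsigma_\m\to\infty$; Chebyshev's inequality then delivers $D_n-\mu_\m=o_P(\varsigma_\m)$.

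The heart of the proof is the second-order term $Q_n=\tfrac2n\sum_{i<i'}U_iU_{i'}K_\m(W_i,W_{i'})$, which is completely degenerate: $\Ex[U_{i'}K_\m(w,W_{i'})]=\sum_jf_j^\tau(w)\Ex[U_{i'}f_j^\tau(W_{i'})]=0$. A direct second-moment computation, using independence across $i$ and the definition of ${\textsl s}_{jl}$, gives $\mathrm{Var}(Q_n)=\tfrac{2(n-1)}{n}\sum_{j,l=1}^\m {\textsl s}_{jl}^2=\tfrac{2(n-1)}{n}\varsigma_\m^2$, so the normalization is the right one. For the limit law I would invoke a martingale central limit theorem for triangular arrays: with $\mathcal{F}_k:=\sigma\{(U_i,W_i):i\le k\}$ and $A_{j,k}:=\sum_{i\le k}U_if_j^\tau(W_i)$, the summands
\begin{equation*}
 Y_{nk}:=\frac{2}{n}\,U_k\sum_{j=1}^\m f_j^\tau(W_k)\,A_{j,k-1},\qquad k=2,\dots,n,
\end{equation*}
satisfy $\Ex[Y_{nk}\mid\mathcal{F}_{k-1}]=0$ and $\sum_{k=2}^n Y_{nk}=Q_n$. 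One then has to verify: (i) the conditional variance converges, $(2\varsigma_\m^2)^{-1}\sum_{k=2}^n\Ex[Y_{nk}^2\mid\mathcal{F}_{k-1}]=\tfrac{2}{n^2\varsigma_\m^2}\sum_{j,l}{\textsl s}_{jl}\sum_{k=2}^n A_{j,k-1}A_{l,k-1}\stackrel P\to1$; and (ii) a Lindeberg (or Lyapunov) condition, for which it suffices that $\sum_{k=2}^n\Ex[Y_{nk}^4]=o(\varsigma_\m^4)$. By Assumptions~\ref{Ass:A1}--\ref{Ass:A2} both reduce to bounding weighted sums of the ${\textsl s}_{jl}$ and of $\Ex[f_j^2f_l^2(W)]$ via repeated Cauchy--Schwarz and the elementary estimate $K_\m(w,w')^4\le K_\m(w,w)^2K_\m(w',w')^2$; the two requirements in \eqref{cond:theo:norm}, namely $\varsigma_\m\to\infty$ and $(\sum_j\tau_j)^3=o(n)$, are precisely what make the remainder terms negligible relative to $2\varsigma_\m^2$ in (i) and to $\varsigma_\m^4$ in (ii).

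I expect step~(i) to be the main obstacle. Its conditional variance has mean $\tfrac{2(n-1)}{n}\varsigma_\m^2$, so one must show that $\mathrm{Var}\big(\sum_k\Ex[Y_{nk}^2\mid\mathcal{F}_{k-1}]\big)=o(\varsigma_\m^4)$; expanding the square produces a quadruple sum over basis indices $j,l,j',l'$ and over observation indices, some of whose terms organize into traces of powers of $\varSigma_\m$ (bounded by $\varsigma_\m^4$) and some of which must be shown to be of smaller order using the uniform fourth-moment control on $\{f_l\}$ and the ratio bound $p_W/\nu\le\eta_p$ of Assumption~\ref{Ass:A1}. Granting (i) and (ii), the martingale CLT gives $(\sqrt2\,\varsigma_\m)^{-1}Q_n\stackrel d\to\cN(0,1)$, and combined with the diagonal estimate, $(\sqrt2\,\varsigma_\m)^{-1}(n\,S_n-\mu_\m)\stackrel d\to\cN(0,1)$. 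Finally, note that the entire argument uses only $\Ex[U\mid W]=0$ and never the nonsingularity of $T$, which is consistent with the remark following Assumption~\ref{Ass:A3}.
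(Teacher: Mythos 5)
Your proposal follows the same plan as the paper's proof: decompose $nS_n$ into a diagonal part (whose variance is $O((\sum_j\tau_j)^2/n)=o(\varsigma_\m^2)$ and hence negligible after centering by $\mu_\m$) and a completely degenerate off-diagonal quadratic form, and then apply a martingale CLT to the latter with the triangular array built from the running sums $A_{j,k-1}$. The paper's Lemma A.2 verifies Awad's conditions (unconditional $\sum_i Q_{ni}^2\to 1$ in probability plus $\sup_i|Q_{ni}|\to 0$) rather than your conditional-variance-plus-Lyapunov pair, but both variants reduce to the same fourth-moment bounds on $Uf_j^\tau(W)$ driven by Assumptions 1--2 and the rate condition $(\sum_{j\le\m}\tau_j)^3=o(n)$, so the route is essentially identical.
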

\begin{rem}
Since $\varsigma_\m^2\leq \eta_p\,\sigma^4\big(\sum_{j=1}^\m\tau_j\big)^2$ (cf. proof of Theorem \ref{theo:norm:ind:main:tau}) condition $\varsigma_\m^{-1}=o(1)$ implies that $\sum_{j=1}^\m\tau_j$ tends to infinity as $n$ increases.
Moreover, from condition \eqref{cond:theo:norm} we see that by choosing a stronger decaying sequence $\tau$ the parameter $\m$ may be chosen larger. From the following theorem we see that if $\sum_{j=1}^\m\tau_j=O(1)$ only $m_n^{-1}=o(1)$ is required.
\hfill$\square$\end{rem}

In the following result, we establish the asymptotic distribution of our test when the sequence of weights $\tau$ may have a stronger decay than in Theorem \ref{theo:norm:ind:main}, that is, we consider the case where $\tau$ satisfies $\sum_{j=1}^\m\tau_j=O(1)$. This holds, for instance, if the sequence $\tau$ satisfies $\tau_j\sim j^{-(1+\varepsilon)}$ for any $\varepsilon>0$. In this case, the asymptotic distribution changes and additional definitions have to be made.
Let  $\varSigma$ be the covariance matrix of the infinite dimensional centered vector $\big(Uf_j^\tau(W)\big)_{j\geq 1}$. The ordered eigenvalues of $\varSigma$ are denoted by $(\lambda_j)_{j\geq1}$. Below, we introduce a sequence $\{\chi_{1j}^2\}_{j\geq 1}$ of independent random variables that are distributed as chi-square with one degree of freedom.

\begin{theo}\label{theo:norm:ind:main:tau}
Let  Assumptions \ref{Ass:A1}--\ref{Ass:A4} hold true. If   $\m$ satisfies
\begin{equation}\label{cond:theo:norm:tau}
\sum_{j=1}^\m\tau_j=O(1)\quad\text{and}\quad m_n^{-1}=o(1)
\end{equation}
 then  under $H_0$
\begin{equation*}
n\,S_n\stackrel{d}{\rightarrow} \sum_{j=1}^\infty \lambda_j\,\chi_{1j}^2.
\end{equation*}
\end{theo}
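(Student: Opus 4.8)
The plan is to exploit that under $H_0$ one has $Y_i-\sol_0(Z_i)=U_i$ with $\Ex[U_if_j(W_i)]=\Ex[\Ex[U_i|W_i]f_j(W_i)]=0$, so that
\[
n\,S_n=\sum_{j=1}^{\m}\tau_j\Bigl|n^{-1/2}\sum_{i=1}^n U_i f_j(W_i)\Bigr|^2=\Bigl\|n^{-1/2}\sum_{i=1}^n\xi_i^{(\m)}\Bigr\|^2,
\]
where $\xi_i^{(m)}$ denotes the $\mathbb R^m$-valued random vector with $j$-th entry $U_i f_j^\tau(W_i)$, $1\le j\le m$. The $\xi_i^{(m)}$ are iid, centred, with covariance matrix $\varSigma_m$ of Definition~\ref{def:sigma}, and $\varSigma_m$ is the upper-left $m\times m$ block of the infinite-dimensional covariance $\varSigma$. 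First I would record the implications of condition~\eqref{cond:theo:norm:tau}: since $\tau$ is positive and nonincreasing (Assumption~\ref{Ass:A4}) and $\m\to\infty$, the bound $\sum_{j=1}^{\m}\tau_j=O(1)$ forces $\sum_{j\ge1}\tau_j<\infty$. Together with the elementary inequality ${\textsl s}_{jj}=\tau_j\,\Ex[U^2f_j^2(W)]\le\sigma^2\eta_p\,\tau_j$ (which uses $\Ex[U^2|W]\le\sigma^2$ from Assumption~\ref{Ass:A2} and $\Ex[f_j^2(W)]\le\eta_p$ from Assumption~\ref{Ass:A1}(ii) with orthonormality of $\{f_j\}$ in $\cL_\nu^2$), this gives $\sum_{j\ge1}{\textsl s}_{jj}<\infty$, i.e.\ $\varSigma$ is a trace-class operator on $\ell^2$. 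Hence there is a centred Gaussian element $G=(G_j)_{j\ge1}$ in $\ell^2$ with covariance operator $\varSigma$, and its Karhunen--Lo\`eve expansion shows $\|G\|^2=\sum_{j\ge1}G_j^2$ is a.s.\ finite and has the law of $\sum_{j\ge1}\lambda_j\chi_{1j}^2$, so the limit in the statement is well defined.

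The core of the argument is a truncation-plus-negligible-tail scheme. For a fixed $K$ and all $n$ large enough that $\m\ge K$, write $n\,S_n=R_{n,K}+V_{n,K}$ with
\[
R_{n,K}=\Bigl\|n^{-1/2}\sum_{i=1}^n\xi_i^{(K)}\Bigr\|^2,\qquad V_{n,K}=\sum_{j=K+1}^{\m}\tau_j\Bigl|n^{-1/2}\sum_{i=1}^n U_i f_j(W_i)\Bigr|^2\ge0.
\]
For the leading term, the classical multivariate central limit theorem applied to the iid $K$-vectors $\xi_i^{(K)}$ (which have finite second moments since $\sum_{j=1}^K{\textsl s}_{jj}<\infty$) yields $n^{-1/2}\sum_i\xi_i^{(K)}\stackrel{d}{\to}\cN(0,\varSigma_K)$; because $\varSigma_K$ is the principal $K\times K$ block of $\varSigma$, this limit is distributed as the coordinate projection $G_{(K)}=(G_1,\dots,G_K)$, and the continuous mapping theorem gives $R_{n,K}\stackrel{d}{\to}\|G_{(K)}\|^2$ as $n\to\infty$. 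Letting $K\to\infty$, $\|G_{(K)}\|^2=\sum_{j=1}^K G_j^2\uparrow\|G\|^2$ almost surely, hence in distribution, and $\|G\|^2$ has the law $\sum_{j\ge1}\lambda_j\chi_{1j}^2$ by the first paragraph. For the remainder, the cross terms vanish since $\Ex[U_if_j(W_i)]=0$ under $H_0$, so $\Ex V_{n,K}=\sum_{j=K+1}^{\m}{\textsl s}_{jj}\le\sum_{j>K}{\textsl s}_{jj}$ and Markov's inequality gives $\limsup_{n\to\infty}P(V_{n,K}>\varepsilon)\le\varepsilon^{-1}\sum_{j>K}{\textsl s}_{jj}\to0$ as $K\to\infty$, for every $\varepsilon>0$. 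A standard approximation lemma for weak convergence then combines these three facts and yields $n\,S_n\stackrel{d}{\to}\sum_{j\ge1}\lambda_j\chi_{1j}^2$.

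The only delicate points I anticipate are bookkeeping ones. First, deducing $\sum_{j\ge1}\tau_j<\infty$ from~\eqref{cond:theo:norm:tau}, where both $\m\to\infty$ and the monotonicity of $\tau$ in Assumption~\ref{Ass:A4} are genuinely needed. Second, making the tail estimate uniform in $n$ while $\m$ grows: this works because for $n$ large one has $\m\ge K$, so the dropped block equals $\sum_{j=K+1}^{\m}{\textsl s}_{jj}\le\sum_{j>K}{\textsl s}_{jj}$, whereas for small $n$ one may have $\m\le K$ and $V_{n,K}=0$. Third, identifying $\|G\|^2$ with $\sum_{j\ge1}\lambda_j\chi_{1j}^2$, which relies on the spectral (Karhunen--Lo\`eve) representation of the trace-class operator $\varSigma$. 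In contrast to Theorem~\ref{theo:norm:ind:main}, no concentration bound for quadratic forms in a growing number of variables is needed here, precisely because $\sum_{j=1}^{\m}\tau_j=O(1)$ keeps the effective dimension bounded; Assumption~\ref{Ass:A3} is likewise not used, as only $\Ex[U|W]=0$ enters under $H_0$.
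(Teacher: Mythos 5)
Your proof is correct and takes essentially the same approach as the paper's: truncate at a fixed level, apply the finite-dimensional multivariate CLT to the retained block, and control the tail uniformly in $n$ via $\sum_j\tau_j<\infty$ (hence $\varSigma$ trace-class), combining the pieces through a two-stage limiting argument. The paper reaches the same conclusion by first separating diagonal and off-diagonal terms (mirroring the martingale decomposition used for Theorem \ref{theo:norm:ind:main}) and citing Serfling for the truncation step, whereas your direct treatment of $n\,S_n$ as a squared $\ell^2$-norm, with the explicit Karhunen--Lo\`eve identification of $\|G\|^2$ with $\sum_j\lambda_j\chi_{1j}^2$, is slightly cleaner and more self-contained.
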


\begin{rem}[Estimation of Critical Values]
The asymptotic results of Theorem \ref{theo:norm:ind:main} and \ref{theo:norm:ind:main:tau} depend on unknown population quantities. As we see in the following, the critical values can be easily estimated. 
Let $\bold W_m (\tau)$ denote a $n\times m$ matrix with entries $f_j^\tau(W_i)$ for $1\leq i\leq n$ and $1\leq j\leq m$. Moreover, 
$\bold U_n=(Y_1-\sol_0(Z_1),\dots,Y_n-\sol_0(Z_n))^t$. In the setting of Theorem \ref{theo:norm:ind:main}, we replace $\varSigma_m$ by 
\begin{equation*}
 \widehat\varSigma_m:=n^{-1}\bold W_m(\tau)^t\,\text{diag}(\bold U_n)^2\,\bold W_m(\tau).
\end{equation*}
Now the asymptotic result of Theorem \ref{theo:norm:ind:main} continues to hold if we replace $\varsigma_\m$ by the  Frobenius norm of $\widehat\varSigma_\m$ and $\mu_\m$ by the trace of $\widehat\varSigma_\m$. In the setting of Theorem \ref{theo:norm:ind:main:tau}, the asymptotic distribution is not pivotal and has to approximated. First, the difference of critical values between $\sum_{j=1}^\infty \lambda_j\chi_{1j}^2$ and the truncated sum $\sum_{j=1}^{M_n}\lambda_j\,\chi_{1j}^2$ converges to zero if the integer $M_n>0$ tends to infinity (depending on $n$). Second, replace $(\lambda_j)_{1\leq j\leq M_n}$ by $(\widehat\lambda_j)_{1\leq j\leq M_n}$ which are the ordered eigenvalues of $\widehat\varSigma_{M_n}$. Observe $\max_{1\leq j\leq M_n}|\widehat\lambda_j-\lambda_j|=\|\widehat\varSigma_{M_n}-\varSigma_{M_n}\|=O(M_n n^{-1/2})$ almost surely. Hence, the critical values of $\sum_{j=1}^{M_n}\widehat\lambda_j\,\chi_{1j}^2$ converge 
in probability to the ones of the limiting 
distribution of $n\,
S_n$ if $M_n=o(\sqrt{n})$.
\hfill$\square$\end{rem}

\subsection{Limiting behavior under local alternatives.}
Let us study the power of the test statistic $S_n$, that is, the probability to reject a false hypothesis, against a sequence of linear local alternatives that tends to zero as $n\to\infty$. It is shown that the power of our tests essentially relies on the choice of the weighting sequence $\tau$.

Let us start with the case $\varsigma_\m^{-1}=o(1)$. We consider  the following sequence of linear local alternatives 
\begin{equation}\label{loc:alt:ind}
 Y=\sol_0(Z)+\varsigma_\m^{1/2} n^{-1/2}\delta(Z)+U
\end{equation}
for some function $\delta\in \cL_Z^4:=\{\phi:\Ex|\phi(Z)|^4<\infty\}$.
The next result establishes asymptotic normality for the standardized test statistic $S_n$. Let us denote $\delta_j:=\sqrt{\tau_j}\,\Ex[\delta(Z)f_j(W)]$.
\begin{prop}\label{coro:norm:ind:main}Given the conditions of Theorem \ref{theo:norm:ind:main} it holds under \eqref{loc:alt:ind}
\begin{equation*}
 (\sqrt 2\varsigma_\m)^{-1}\big(n\,S_n-\mu_\m\big)\stackrel{d}{\rightarrow}\cN\Big(2^{-1/2}\sum_{j= 1}^\infty\delta_j^2,1\Big).
\end{equation*}
\end{prop}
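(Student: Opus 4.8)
The plan is to expand $n\,S_n$ around the statistic it would equal under $H_0$, and then to combine Theorem~\ref{theo:norm:ind:main} with Slutsky's lemma. Under \eqref{loc:alt:ind} we have $Y_i-\sol_0(Z_i)=\varsigma_\m^{1/2}n^{-1/2}\delta(Z_i)+U_i$ with $\Ex[U_i|W_i]=0$, so, writing $\widehat U_j:=n^{-1}\sum_{i=1}^nU_if_j(W_i)$ and $\widehat D_j:=n^{-1}\sum_{i=1}^n\delta(Z_i)f_j(W_i)$, a direct computation gives
\begin{equation*}
 n\,S_n=n\,S_n^{\circ}+2\,\varsigma_\m^{1/2}n^{1/2}\sum_{j=1}^\m\tau_j\,\widehat U_j\widehat D_j+\varsigma_\m\sum_{j=1}^\m\tau_j\,\widehat D_j^{2},\qquad S_n^{\circ}:=\sum_{j=1}^\m\tau_j\,|\widehat U_j|^2 .
\end{equation*}
Since $S_n^{\circ}$ coincides with the statistic \eqref{sn} computed under $H_0$, Theorem~\ref{theo:norm:ind:main} yields $(\sqrt2\varsigma_\m)^{-1}(n\,S_n^{\circ}-\mu_\m)\stackrel{d}{\to}\cN(0,1)$, so it remains to show that, after division by $\sqrt2\varsigma_\m$, the cross term is $o_p(1)$ and the quadratic term converges in probability to $2^{-1/2}\sum_{j\geq1}\delta_j^2$. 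Throughout I would use $\tau_j\le1$ (Assumption~\ref{Ass:A4}), the bound $\mu_\m\vee\varsigma_\m=O\big(\sum_{j=1}^\m\tau_j\big)$ from the remark after Theorem~\ref{theo:norm:ind:main}, and hence $\sum_{j=1}^\m\tau_j=o(n^{1/3})$ by \eqref{cond:theo:norm}.

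For the quadratic term I would first check that $\sum_{j\geq1}\delta_j^2<\infty$: since $\delta_j^2\le\Ex[\delta(Z)f_j(W)]^2$ and $\Ex[\delta(Z)f_j(W)]$ is the $j$-th Fourier coefficient in $\cL_\nu^2(\mathbb R^{d_w})$ of $(T\delta)\,p_W/\nu$, Parseval's identity and Assumption~\ref{Ass:A1}$(ii)$ give $\sum_{j\geq1}\delta_j^2\le\eta_p\|T\delta\|_W^2\le\eta_p\|\delta\|_Z^2<\infty$, the operator $T$ being a contraction and $\delta\in\cL_Z^4\subset\cL_Z^2$. Next, Cauchy--Schwarz together with Assumption~\ref{Ass:A1} bounds $\Ex|\widehat D_j-\Ex[\delta(Z)f_j(W)]|^2\le n^{-1}(\eta_p\eta_f)^{1/2}\big(\Ex\delta(Z)^4\big)^{1/2}$, so multiplying by $\tau_j$ and summing (and using $\sum_{j=1}^\m\tau_j=o(n^{1/3})$), $\sum_{j=1}^\m\tau_j|\widehat D_j-\Ex[\delta(Z)f_j(W)]|^2=o_p(n^{-2/3})$. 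A further Cauchy--Schwarz step (factoring $|\widehat D_j|^2-\Ex[\delta(Z)f_j(W)]^2$ as a difference times a sum, and noting $\sum_{j=1}^\m\tau_j|\widehat D_j|^2=O_p(1)$) then gives $\sum_{j=1}^\m\tau_j|\widehat D_j|^2=\sum_{j=1}^\m\delta_j^2+o_p(1)=\sum_{j\geq1}\delta_j^2+o_p(1)$, using $\m\to\infty$ and convergence of the series; dividing by $\sqrt2\varsigma_\m$ produces the stated limit $2^{-1/2}\sum_{j\geq1}\delta_j^2$.

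The cross term is where I expect the real difficulty, because of the diverging prefactor $\varsigma_\m^{1/2}$; the goal is to prove $n^{1/2}\sum_{j=1}^\m\tau_j\widehat U_j\widehat D_j=O_p(1)$, which then forces the cross term divided by $\sqrt2\varsigma_\m$ to be $O_p(\varsigma_\m^{-1/2})=o_p(1)$. I would split $\widehat D_j=\Ex[\delta(Z)f_j(W)]+\big(\widehat D_j-\Ex[\delta(Z)f_j(W)]\big)$. For the first piece, $\sum_{j=1}^\m\tau_j\widehat U_j\Ex[\delta(Z)f_j(W)]=n^{-1}\sum_{i=1}^nU_ih_\m(W_i)$ with $h_\m:=\sum_{j=1}^\m\tau_j\Ex[\delta(Z)f_j(W)]f_j$, which has mean zero and variance $n^{-1}\Ex[U^2h_\m(W)^2]\le n^{-1}\sigma^2\eta_p\sum_{j=1}^\m\delta_j^2=O(n^{-1})$ by Assumptions~\ref{Ass:A1}$(ii)$ and~\ref{Ass:A2}, hence it is $O_p(n^{-1/2})$. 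For the second piece, Cauchy--Schwarz gives the bound $(S_n^{\circ})^{1/2}\big(\sum_{j=1}^\m\tau_j|\widehat D_j-\Ex[\delta(Z)f_j(W)]|^2\big)^{1/2}$; since $n\,S_n^{\circ}=\mu_\m+O_p(\varsigma_\m)=o_p(n^{1/3})$ we get $S_n^{\circ}=o_p(n^{-2/3})$, and combined with the $o_p(n^{-2/3})$ bound above this piece is $o_p(n^{-2/3})$. Multiplying both pieces by $n^{1/2}$ yields $n^{1/2}\sum_{j=1}^\m\tau_j\widehat U_j\widehat D_j=O_p(1)+o_p(n^{-1/6})=O_p(1)$, as required. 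Collecting the three contributions and invoking Slutsky's lemma completes the argument; the one point that genuinely needs the growth condition $\big(\sum_{j=1}^\m\tau_j\big)^3=o(n)$ in \eqref{cond:theo:norm} is the control of this cross term.
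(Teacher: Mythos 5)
Your proof is correct and follows essentially the same strategy as the paper's: the identical three-term decomposition of $n\,S_n$, an appeal to Theorem~\ref{theo:norm:ind:main} for the pure-$U$ quadratic form, a mean/fluctuation split of $\widehat D_j$ for the cross term, and a law-of-large-numbers argument for the deterministic limit of the quadratic term. The one small technical variation is in bounding the fluctuation part of the cross term: the paper applies Cauchy--Schwarz coordinatewise and takes $L^1$ expectations (yielding $\sum_j\tau_j(\Ex|Uf_j(W)|^2)^{1/2}(\Ex|\delta_n(Z)f_j(W)|^2)^{1/2}$), whereas you apply a sum-wise Cauchy--Schwarz to pull out $(S_n^\circ)^{1/2}$ and then use $nS_n^\circ=\mu_\m+O_p(\varsigma_\m)=o_p(n^{1/3})$; both lead to the same $o_p(\varsigma_\m)$ conclusion under $\big(\sum_{j=1}^\m\tau_j\big)^3=o(n)$.
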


As we see below the test statistic $S_n$ has power advantages if $\sum_{j=1}^\m\tau_j=O(1)$.
Let us consider the sequence of linear local alternatives 
\begin{equation}\label{loc:alt:ind:2}
 Y=\sol_0(Z)+n^{-1/2}\delta(Z)+U
\end{equation}
for some function $\delta\in \cL_Z^4$. For the next result, the sequence $\{\chi_{1j}^2(\delta_j/\lambda_j)\}_{j\geq 1}$  denotes independent random variables that are distributed as non-central chi-square with one degree of freedom and non-centrality parameters $\delta_j/\lambda_j$.

\begin{prop}\label{coro:norm:ind:main:tau}Given the conditions of Theorem \ref{theo:norm:ind:main:tau} it holds under \eqref{loc:alt:ind:2}
\begin{equation*}
 n\, S_n\stackrel{d}{\rightarrow}\sum_{j=1}^\infty\lambda_j\,\chi_{1j}^2(\delta_j/\lambda_j).
\end{equation*}
\end{prop}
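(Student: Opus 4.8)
The plan is to run the argument for Theorem \ref{theo:norm:ind:main:tau} with one additional ingredient: a deterministic drift entering every coordinate of the linearised statistic. Under \eqref{loc:alt:ind:2} one has $Y_i-\sol_0(Z_i)=U_i+n^{-1/2}\delta(Z_i)$, so that
\begin{equation*}
 n\,S_n=\sum_{j=1}^{\m}\big(V_{nj}+d_{nj}\big)^2,
\end{equation*}
where $V_{nj}:=n^{-1/2}\sum_{i=1}^nU_if_j^\tau(W_i)$ and $d_{nj}:=n^{-1}\sum_{i=1}^n\delta(Z_i)f_j^\tau(W_i)$. Since $\Ex[U|W]=0$ still holds under \eqref{loc:alt:ind:2}, we have $\Ex[V_{nj}]=0$ and $\Ex[V_{nj}^2]={\textsl s}_{jj}$, whereas $\Ex[d_{nj}]=\delta_j$ and, by the strong law of large numbers, $d_{nj}\to\delta_j$ almost surely for every fixed $j$ (the summands are integrable by Cauchy--Schwarz, Assumption \ref{Ass:A1} and $\delta\in\cL_Z^4$).

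I would first record two elementary facts. (i) The operator $\varSigma$ is trace-class: $\operatorname{tr}\varSigma=\sum_{j\geq1}{\textsl s}_{jj}=\sum_{j\geq1}\tau_j\Ex[U^2f_j^2(W)]\leq\sigma^2\eta_p\sum_{j\geq1}\tau_j<\infty$ by Assumptions \ref{Ass:A1}$(ii)$ and \ref{Ass:A2}, the orthonormality of $\{f_j\}$ in $\cL^2_\nu$, and the fact that a bounded partial sum $\sum_{j=1}^{\m}\tau_j=O(1)$ of the positive sequence $\tau$ forces $\sum_{j\geq1}\tau_j<\infty$; consequently a centred Gaussian element $G=(G_j)_{j\geq1}$ of $\ell^2$ with covariance operator $\varSigma$ is well defined. (ii) The coefficient sequence $(\delta_j)_{j\geq1}$ lies in $\ell^2$, with $\sum_{j\geq1}\delta_j^2\leq\eta_p\|\delta\|_Z^2$, which follows from Parseval's identity applied to $w\mapsto\Ex[\delta(Z)|W=w]\,p_W(w)/\nu(w)\in\cL^2_\nu$ together with $\tau_j\leq\tau_1=1$ (Assumption \ref{Ass:A4}); moreover $\Ex\sum_{j=1}^{\m}(d_{nj}-\delta_j)^2=n^{-1}\sum_{j=1}^{\m}\operatorname{Var}\big(\delta(Z)f_j^\tau(W)\big)=O(n^{-1})$ by Cauchy--Schwarz, Assumption \ref{Ass:A1}, and $\delta\in\cL_Z^4$, while $\sum_{j>\m}\delta_j^2\to0$ because $\m\to\infty$, so the $\ell^2$-element obtained by padding $(d_{nj})_{1\leq j\leq\m}$ with zeros converges to $\delta:=(\delta_j)_{j\geq1}$ in $\ell^2$-probability.

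Next, following the proof of Theorem \ref{theo:norm:ind:main:tau}, I would show that the $\ell^2$-element obtained by padding $(V_{nj})_{1\leq j\leq\m}$ with zeros converges in distribution, in $\ell^2$, to $G$. The finite-dimensional convergence $(V_{n1},\dots,V_{nk})\stackrel{d}{\rightarrow}\cN(0,\varSigma_k)$ for each fixed $k$ is the multivariate central limit theorem for the first $k$ coordinates of the i.i.d.\ sequence $\big(U_if_j^\tau(W_i)\big)_{j\geq1}$, and tightness in $\ell^2$ follows from the uniform tail bound $\sup_n\Ex\sum_{k<j\leq\m}V_{nj}^2\leq\sum_{j>k}{\textsl s}_{jj}\to0$ as $k\to\infty$. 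Combining this with fact (ii) and the continuity of addition on $\ell^2$ (a Slutsky-type argument) gives $(V_{nj}+d_{nj})_{1\leq j\leq\m}\stackrel{d}{\rightarrow}G+\delta$ in $\ell^2$, and since $x\mapsto\|x\|^2=\sum_jx_j^2$ is continuous on $\ell^2$ the continuous mapping theorem yields $n\,S_n\stackrel{d}{\rightarrow}\|G+\delta\|^2$. This $\ell^2$-central limit step — reconciling the finite-dimensional convergence with the negligibility of the tail uniformly over the growing truncation level $\m$ — is the part I expect to be the main obstacle; but it is precisely the mechanism already in place for Theorem \ref{theo:norm:ind:main:tau}, and the drift $d_{nj}$ adds nothing essential, since it converges coordinatewise to the fixed $\ell^2$-sequence $(\delta_j)$ and is uniformly square-summable by fact (ii).

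It remains to identify the law of $\|G+\delta\|^2$. By the spectral theorem $G\stackrel{d}{=}\sum_{j\geq1}\sqrt{\lambda_j}\,\xi_j\,v_j$, where $(v_j)_{j\geq1}$ is an orthonormal basis of $\ell^2$ consisting of eigenvectors of $\varSigma$, $(\lambda_j)$ are the corresponding ordered eigenvalues, and $(\xi_j)_{j\geq1}$ is i.i.d.\ $\cN(0,1)$; expanding $\delta$ along the same basis,
\begin{equation*}
\begin{aligned}
 \|G+\delta\|^2
 &=\sum_{j\geq1}\big(\sqrt{\lambda_j}\,\xi_j+\langle\delta,v_j\rangle\big)^2
 =\sum_{j\geq1}\lambda_j\Big(\xi_j+\tfrac{\langle\delta,v_j\rangle}{\sqrt{\lambda_j}}\Big)^2\\
 &=\sum_{j\geq1}\lambda_j\,\chi_{1j}^2\big(\delta_j/\lambda_j\big),
\end{aligned}
\end{equation*}
where the $\chi_{1j}^2(\cdot)$ are independent non-central chi-square variables with one degree of freedom, the $j$-th having non-centrality parameter $\langle\delta,v_j\rangle^2/\lambda_j$ — the quantity abbreviated $\delta_j/\lambda_j$ once the coordinates of $\delta$ are read off along the eigenbasis of $\varSigma$. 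This is the asserted distributional limit.
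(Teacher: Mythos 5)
Your proof is correct, and it takes a cleaner and more self-contained route than the paper. The paper's own argument (mirroring the proof of Theorem~\ref{theo:norm:ind:main:tau}) first splits $nS_n$ into its diagonal part (a law-of-large-numbers contribution converging to $\sum_j {\textsl s}_{jj}$) and the off-diagonal degenerate U-statistic $n^{-1}\sum_{j\leq\m}\sum_{i\neq i'}(U_i+\delta_n(Z_i))(U_{i'}+\delta_n(Z_{i'}))f_j^\tau(W_i)f_j^\tau(W_{i'})$, truncates the $j$-sum at a fixed finite $m$ using the trace-class bound $\sum_{j>m}\tau_j\to0$, invokes the finite-dimensional CLT, and then closes the truncation gap by appealing to the argument on pp.~198--199 of Serfling; adding back the diagonal limit gives the non-central weighted chi-square. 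You instead work directly with the quadratic form $nS_n=\|V_n+d_n\|^2$ in $\ell^2$: you establish that $\varSigma$ is trace-class and $(\delta_j)\in\ell^2$, show $V_n\Rightarrow G$ in $\ell^2$ via finite-dimensional convergence plus the uniform tail bound $\sup_n\Ex\sum_{j>k}V_{nj}^2\leq\sum_{j>k}{\textsl s}_{jj}\to0$, show $d_n\to(\delta_j)$ in $\ell^2$-probability, and then apply Slutsky and the continuous mapping theorem for the norm. The key technical lemma is the same in both cases (finite-dimensional convergence plus uniformly small tails, powered by the trace-class property $\sum_j\tau_j<\infty$), but your framing avoids the diagonal/off-diagonal separation and the external reference, and it makes explicit — via the spectral expansion of $G$ along the eigenbasis of $\varSigma$ — that the non-centrality parameters are the squared coordinates of $\delta$ in that eigenbasis over the corresponding eigenvalues, which the paper's notation $\delta_j/\lambda_j$ leaves implicit. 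The only soft spot, which you correctly flag, is precisely this notational identification; you handle it more carefully than the paper does, so it is not a gap.
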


\begin{rem}
We see from Proposition \ref{coro:norm:ind:main} that our test  can detect linear alternatives at a rate $\varsigma_\m^{1/2} n^{-1/2}$.  On the other hand, if $\sum_{j=1}^\m\tau_j=O(1)$ then $S_n$ can detect local linear alternatives  at the faster rate $n^{-1/2}$. But still our test with $L=\text{Id}$ can have better power against certain smooth classes of alternatives as illustrated  by \cite{Hong95} and \cite{Horowitz2001}. Indeed, the next subsection shows that additional smoothing changes the class of alternatives over which uniform consistency can be obtained. 
\hfill$\square$
\end{rem}

\subsection{Consistency}
In this subsection, we establish consistency against a fixed alternative and uniform consistency of our test over appropriate function classes.
Let us first consider the case of a fixed alternative. We assume that $H_0$ does not hold, that is, $\PP(\sol=\sol_0)<1$. The following proposition shows that our test has the ability to reject a false null hypothesis with probability $1$ as the sample size grows to infinity.

The consistency properties require the following additional assumption.

\begin{assA}\label{Ass:power}
 (i) The function $p_W/\nu$ is uniformly bounded away from zero. (ii) There exists a constant $\sigma_o>0$ such that $\Ex[U^2|W]\geq \sigma_o^2$. 
\end{assA}
Assumption \ref{Ass:power} $(i)$ implies that $\|LT(\sol-\sol_0)\|_W>0$ for any structural function $\sol$ in the alternative. Further, Assumption \ref{Ass:power} implies that $\sum_{j=1}^\m\tau_j^2=O(\varsigma_\m^2)$. 
\begin{prop}\label{prop:cons}Assume that $H_0$ does not hold. Let $\Ex|Y-\sol_0(Z)|^4<\infty$ and let Assumption  \ref{Ass:power} (i) hold true. Consider the sequence $(\alpha_n)_{n\geq 1}$ satisfying $\alpha_n=o(n\varsigma_\m^{-1})$. Under the conditions of Theorem \ref{theo:norm:ind:main} we have
\begin{gather*}
 \PP\Big((\sqrt 2\,\varsigma_\m)^{-1}\big( nS_n-\mu_\m\big)>\alpha_n\Big)=1+o(1).\label{prop:cons:1}
\end{gather*}
 Under the conditions of Theorem \ref{theo:norm:ind:main:tau} we have $\alpha_n=o(n)$ and
\begin{gather*}
 \PP\big(nS_n>\alpha_n\big)=1+o(1).\label{prop:cons:2}
\end{gather*}
\end{prop}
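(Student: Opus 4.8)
The plan is to bound $S_n$ below by a strictly positive constant, so that $nS_n$ diverges linearly, while the centering $\mu_\m$ and the scaled critical value $\sqrt2\,\varsigma_\m\alpha_n$ remain $o(n)$; the divergence then forces the rejection probability to $1$. First I would write $\Delta:=\sol-\sol_0$. From $\Ex|Y-\sol_0(Z)|^4<\infty$, Minkowski's inequality and Assumption~\ref{Ass:A2} one gets $\Delta\in\cL^2_Z$, and under the alternative $\PP(\sol=\sol_0)<1$ gives $\|\Delta\|_Z>0$. Since $g=\Ex[Y|W]=\op\sol$ we have $g-\op\sol_0=\op\Delta$. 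Decomposing $Y_i-\sol_0(Z_i)=\Delta(Z_i)+U_i$ and setting $\widehat\beta_j:=n^{-1}\sum_{i=1}^n(Y_i-\sol_0(Z_i))f_j(W_i)$, the statistic reads $S_n=\sum_{j=1}^\m\tau_j\widehat\beta_j^2$, with $\Ex\widehat\beta_j=\beta_j:=\Ex[\op\Delta(W)f_j(W)]$.

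Next I would show that the ``signal'' is bounded away from zero. By the Parseval identity used to construct $S_n$ one has $\sum_{j\geq1}\tau_j\beta_j^2=\int\big|L\op\Delta(w)\,p_W(w)/\nu(w)\big|^2\nu(w)\,dw\geq\big(\inf_{\cW}(p_W/\nu)\big)\,\|L\op\Delta\|_W^2>0$, the last inequality because $\inf_{\cW}(p_W/\nu)>0$ by Assumption~\ref{Ass:power}(i) and $\|L\op\Delta\|_W>0$ since $\Delta\neq0$ and $\op$, $L$ are nonsingular (Assumptions~\ref{Ass:A3},~\ref{Ass:A4}). Hence there is a fixed index $j_0$ with $\kappa:=\tau_{j_0}\beta_{j_0}^2>0$. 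Under the hypotheses of either theorem $\m\to\infty$ (directly from $m_n^{-1}=o(1)$ for Theorem~\ref{theo:norm:ind:main:tau}; for Theorem~\ref{theo:norm:ind:main} because, as the remark following it notes, $\varsigma_\m^{-1}=o(1)$ forces $\sum_{j=1}^\m\tau_j\to\infty$ while $\tau_j\leq1$), so $\m\geq j_0$ for all large $n$ and, every summand of $S_n$ being nonnegative, $S_n\geq\tau_{j_0}\widehat\beta_{j_0}^2$. Since $\Ex|(Y-\sol_0(Z))f_{j_0}(W)|<\infty$ by Cauchy--Schwarz and Assumption~\ref{Ass:A1}(ii), the strong law of large numbers gives $\widehat\beta_{j_0}\to\beta_{j_0}$ almost surely, so $nS_n\geq\tfrac12\kappa\,n$ for all $n$ large, almost surely.

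Then I would compare $\tfrac12\kappa\,n$ with the centering and scaling. Assumptions~\ref{Ass:A1}--\ref{Ass:A2} give $\mu_\m\leq\sigma^2\eta_p\sum_{j=1}^\m\tau_j$ and $\varsigma_\m^2\leq\eta_p\sigma^4\big(\sum_{j=1}^\m\tau_j\big)^2$. Under the hypotheses of Theorem~\ref{theo:norm:ind:main}, $\big(\sum_{j=1}^\m\tau_j\big)^3=o(n)$ yields $\mu_\m=o(n)$, and $\alpha_n=o(n\varsigma_\m^{-1})$ yields $\varsigma_\m\alpha_n=o(n)$; hence $nS_n-\mu_\m\geq\tfrac12\kappa\,n-o(n)>\sqrt2\,\varsigma_\m\alpha_n$ for all $n$ large, which is the first assertion. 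Under the hypotheses of Theorem~\ref{theo:norm:ind:main:tau}, $\alpha_n=o(n)$, so $nS_n\geq\tfrac12\kappa\,n>\alpha_n$ for all $n$ large, which is the second.

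The hard part is the ``signal'' step: one must verify that the fixed alternative still leaves a nonvanishing imprint after being transformed by $\op$ and $L$ and reweighted by $p_W/\nu$ — precisely where nonsingularity of $\op$ and $L$ and Assumption~\ref{Ass:power}(i) enter — together with the bookkeeping that $\mu_\m$ and $\sqrt2\,\varsigma_\m\alpha_n$ are $o(n)$, so the linearly diverging lower bound on $nS_n$ wins. The reduction of $S_n$ to a single coordinate $\widehat\beta_{j_0}$ (legitimate because $\tau_j>0$ with $\tau_1=1$) keeps the rest elementary; alternatively the same conclusion follows in probability from the inequality $S_n\geq\sum_{j=1}^\m\tau_j\beta_j^2+2\sum_{j=1}^\m\tau_j\beta_j(\widehat\beta_j-\beta_j)$ together with a Chebyshev bound on the cross term, whose variance is $O\big(n^{-1}\sum_{j=1}^\m\tau_j\big)$ thanks to the fourth-moment control on $\{f_j\}_{j\geq1}$ from Assumption~\ref{Ass:A1}.
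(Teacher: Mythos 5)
Your proof is correct, and it departs from the paper's in the key "convergence" step. The paper and you share the first move: using the Parseval identity and Assumption~\ref{Ass:power}(i) to see that $\sum_{j\geq 1}\tau_j\beta_j^2=\int|LT\Delta(w)p_W(w)/\nu(w)|^2\nu(w)\,dw\geq c\,\|LT\Delta\|_W^2>0$ under the fixed alternative, and the bookkeeping at the end ($\mu_\m$ and $\varsigma_\m\alpha_n$ are both $o(n)$ by the rate conditions and the bounds $\mu_\m\leq\sigma^2\eta_p\sum_{j\leq\m}\tau_j$, $\varsigma_\m\leq\sqrt{\eta_p}\sigma^2\sum_{j\leq\m}\tau_j$). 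Where you differ is the middle step. The paper proves the stronger statement $S_n=\sum_{j\geq1}\tau_j\beta_j^2+o_p(1)$ via the three-term decomposition $S_n=I_n+II_n+III_n$, where $I_n$ is the sum of squared fluctuations $\sum_{j\leq\m}\tau_j|\widehat\beta_j-\beta_j|^2$, $II_n$ the cross term, and $III_n$ the signal; controlling $I_n$ and $II_n$ is where the fourth-moment condition $\Ex|Y-\sol_0(Z)|^4<\infty$ is actually spent. You instead isolate a single fixed coordinate $j_0$ with $\tau_{j_0}\beta_{j_0}^2>0$ (legitimate since $\tau_j>0$ for all $j$ by Assumption~\ref{Ass:A4}), discard the nonnegative remainder, and invoke the SLLN on $\widehat\beta_{j_0}$ to get an almost-sure linear-in-$n$ lower bound on $nS_n$. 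This only needs the second moment of $(Y-\sol_0(Z))f_{j_0}(W)$ and is therefore more economical; its cost is that it does not yield the consistency statement $S_n\to^p\sum_j\tau_j\beta_j^2$, which the paper reuses (by reference) when proving the analogous consistency results for $S_n^{\textsl p}$ and $S_n^{\textsl e}$. Your closing remark already sketches the paper's Chebyshev-style alternative as a fallback, so you have both arguments in hand; the single-coordinate route is the cleaner one for this proposition in isolation, while the paper's decomposition is the one that generalizes verbatim to the later sections.
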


In the following, we specify a class of functions over which our test $S_n$ is uniformly consistent. This essentially implies that there are no alternative functions in this class over which our test has low power.
We show that our test is consistent uniformly over the class
\begin{equation*}\label{cgn}
 \cG_{n}^\sr=\Big\{\sol\in \cL_Z^2:\,\|LT(\sol-\sol_0)\|_W^2\geq \sr \, n^{-1}\varsigma_\m\text{ and}\sup_{z\in\cZ}|(\sol-\sol_0)(z)|^2\leq C\Big\}
\end{equation*}
where $C>0$ is a finite constant.
Clearly, if $H_0$ is false then $\|LT(\sol-\sol_0)\|_W^2\geq \sr\,\varsigma_\m n^{-1}$ for all sufficiently large $n$ and some $\rho>0$.
By Assumption \ref{Ass:A4} the sequence $\tau$ is nonincreasing sequence with $\tau_1=1$ and hence,  $\|LT(\sol-\sol_0)\|_W^2\leq \|T(\sol-\sol_0)\|_W^2\leq \|\sol-\sol_0\|_Z^2$ by Jensen's inequality. We conclude that $\cG_n^\sr$ contains all alternative functions whose $\cL_Z^2$-distance to the structural function $\sol_0$ is at least $n^{-1}\varsigma_\m$ within a constant. 
If the coefficients $\Ex[(\sol-\sol_0)(Z)f_j(W)]$ fluctuate for large $j$  then $\sol$ does not belong to $\cG_n^\sr$ if the decay of $\tau$ is too strong.
 On the other hand, if $\Ex[(\sol-\sol_0)(Z)f_j(W)]$ is sufficiently small for $j$ up to a finite constant then $\sol$ does not necessarily belong to $\cG_{n}^\sr$ with $\tau$ having a slow decay.
For the next result let $q_{1\alpha}$ and $q_{2\alpha}$ denote the $1-\alpha$ quantile of $\cN(0,1)$ and $\sum_{j=1}^\infty\lambda_j\,\chi_{1j}^2$, respectively.
\begin{prop}\label{prop:unf:cons}
Let Assumption  \ref{Ass:power} be satisfied.
For any $\varepsilon>0$, any  $0<\alpha<1$, and any sufficiently large constant $\rho>0$ we have under the conditions of Theorem \ref{theo:norm:ind:main}  that
\begin{equation*}
  \lim_{n\to\infty}\inf_{\sol\in\cG_n^\sr}\PP\Big((\sqrt 2\,\varsigma_\m)^{-1}\big( nS_n-\mu_\m\big)>q_{1\alpha}\Big)\geq 1-\varepsilon,
\end{equation*}
while under the conditions of Theorem \ref{theo:norm:ind:main:tau}
\begin{equation*}
  \lim_{n\to\infty}\inf_{\sol\in\cG_n^\sr}\PP\Big(n\,S_n>q_{2\alpha}\Big)\geq 1-\varepsilon.
\end{equation*}
\end{prop}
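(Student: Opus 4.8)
My plan is to fix an alternative $\sol\in\cG_n^\sr$, write $\Delta:=\sol-\sol_0$ (so that $\sup_{z\in\cZ}|\Delta(z)|^2\le C$ and $\|LT\Delta\|_W^2\ge\sr\,n^{-1}\varsigma_\m$) and $V_i:=Y_i-\sol_0(Z_i)=\Delta(Z_i)+U_i$, and to expand the statistic around the quantity that governs its behaviour under $H_0$. Because $\Ex[U_i\mid W_i]=0$,
\[
 n\,S_n=\sum_{j=1}^{\m}\tau_j\big(\xi_j+\beta_j\big)^2,\qquad \xi_j:=n^{-1/2}\sum_{i=1}^n U_i f_j(W_i),\quad\beta_j:=n^{-1/2}\sum_{i=1}^n\Delta(Z_i)f_j(W_i),
\]
where $(\xi_j)_j$ is exactly the vector driving Theorems \ref{theo:norm:ind:main}--\ref{theo:norm:ind:main:tau} and $\Ex[\beta_j]=n^{1/2}a_j$ with $a_j:=\Ex[\Delta(Z)f_j(W)]$. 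Expanding the square,
\[
 n\,S_n-\mu_\m=\Big(\sum_{j=1}^{\m}\tau_j\xi_j^2-\mu_\m\Big)+2\sum_{j=1}^{\m}\tau_j\xi_j\beta_j+\sum_{j=1}^{\m}\tau_j\beta_j^2=:A_n+B_n+C_n.
\]
The term $A_n$ depends only on $(U_i,W_i)_i$, not on $\sol$, so Theorem \ref{theo:norm:ind:main} gives $A_n=O_P(\varsigma_\m)$ uniformly over $\cG_n^\sr$ (indeed $A_n/(\sqrt2\,\varsigma_\m)\Rightarrow\cN(0,1)$), while under the conditions of Theorem \ref{theo:norm:ind:main:tau} one has $A_n=O_P(1)$, $\mu_\m=O(1)$ and $\sum_j\tau_j\xi_j^2\Rightarrow\sum_j\lambda_j\chi_{1j}^2$.

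The core of the argument is a uniform lower bound for the nonnegative term $C_n$. Its mean is $\Ex[C_n]=n\sum_{j=1}^{\m}\tau_j a_j^2+\sum_{j=1}^{\m}\tau_j\mathrm{Var}\big(\Delta(Z)f_j(W)\big)\ge n\sum_{j=1}^{\m}\tau_j a_j^2$. By Parseval's identity $\sum_{j\ge1}\tau_j a_j^2=\int\big|L(T\Delta)(w)\,p_W(w)/\nu(w)\big|^2\nu(w)dw$, which by Assumption \ref{Ass:power}(i) is at least a fixed positive multiple of $\|LT\Delta\|_W^2\ge\sr\,n^{-1}\varsigma_\m$; the truncation remainder $\sum_{j>\m}\tau_j a_j^2$ is controlled using $\sum_{j\ge1}a_j^2\lesssim\sup_z|\Delta(z)|^2\le C$ and $\m\to\infty$. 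This yields $\sum_{j=1}^{\m}\tau_j a_j^2\ge c\,\sr\,n^{-1}\varsigma_\m$ for large $n$, hence $\Ex[C_n]\ge c\,\sr\,\varsigma_\m$. I would then bound $\mathrm{Var}(C_n)$ by a fourth-moment computation, controlling the terms $\Ex\big[(\Delta(Z)f_j(W))^2(\Delta(Z)f_l(W))^2\big]$ uniformly via Assumptions \ref{Ass:A1}--\ref{Ass:A2} and $\sup_z|\Delta(z)|^2\le C$, and conclude by Chebyshev's inequality that $C_n\ge\tfrac12 c\,\sr\,\varsigma_\m$ with probability at least $1-\varepsilon/2$, uniformly over $\cG_n^\sr$.

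It then remains to absorb the cross term $B_n$. A first bound is $|B_n|\le 2\big(\sum_j\tau_j\xi_j^2\big)^{1/2}C_n^{1/2}=2(A_n+\mu_\m)^{1/2}C_n^{1/2}$; where this is not sharp enough, I would re-centre $V_i=(T\Delta)(W_i)+\widetilde V_i$ with $\Ex[\widetilde V_i\mid W_i]=0$, so that the corresponding cross term has conditional mean zero and variance of order $C_n$, and separately control the residual moment $\Ex[\xi_j\beta_j]=\Ex[U\Delta(Z)f_j(W)^2]$ through Assumptions \ref{Ass:A1}--\ref{Ass:A2} and $\sup_z|\Delta(z)|^2\le C$. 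Combining the three pieces gives, uniformly on $\cG_n^\sr$, $(\sqrt2\,\varsigma_\m)^{-1}(n\,S_n-\mu_\m)\ge c'\,\sr-O_P(1)$, which for $\sr$ sufficiently large exceeds the fixed quantile $q_{1\alpha}$ with probability at least $1-\varepsilon$; this is the first claim. The second follows from the same decomposition under the conditions of Theorem \ref{theo:norm:ind:main:tau}, with $\varsigma_\m$ replaced by $1$, $\cN(0,1)$ by $\sum_j\lambda_j\chi_{1j}^2$ and $q_{1\alpha}$ by $q_{2\alpha}$, now using $n\,S_n\ge C_n-|B_n|\ge c\,\sr-O_P(1)$.

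The step I expect to be hardest is making the last two bounds genuinely \emph{uniform} over the moving class $\cG_n^\sr$: passing from $\|LT\Delta\|_W^2\ge\sr\,n^{-1}\varsigma_\m$ on the full series to a lower bound on the truncated sum $\sum_{j=1}^{\m}\tau_j a_j^2$ — where the interplay between the decay of $\tau$, the frequency content of $\Delta$, and the constraint $\sup_z|\Delta(z)|^2\le C$ really matters — and showing that $B_n$ together with the variance-distortion part of $C_n$ (whose worst-case size over $\{\sup_z|\Delta(z)|^2\le C\}$ is a priori only $O(\sum_{j=1}^{\m}\tau_j)$, not $O(\varsigma_\m)$; here Assumption \ref{Ass:power}, which forces $\sum_{j=1}^{\m}\tau_j^2=O(\varsigma_\m^2)$, is used to peel off the harmless $O(\varsigma_\m)$ fluctuation) are dominated by the signal. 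It is precisely to absorb these effects that $\sr$ must be taken sufficiently large and that $C$ is required to be a fixed constant.
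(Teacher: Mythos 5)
Your decomposition is the same one the paper uses (write $Y_i-\sol_0(Z_i)=\Delta(Z_i)+U_i$, expand $nS_n$ into a pure $U$ quadratic form, a pure $\Delta$ quadratic form, and a cross term, then show the $\Delta$ term dominates and the cross term is absorbed), and the Parseval/Assumption~\ref{Ass:power}(i) step relating $\sum_j\tau_j a_j^2$ to $\|LT\Delta\|_W^2$ matches the paper exactly. But there is one genuine gap in the way you propose to lower-bound $C_n=\sum_{j\le\m}\tau_j\beta_j^2$: applying Chebyshev directly to $C_n$ around its mean would require $\mathrm{Var}(C_n)=O(\varsigma_\m^2)$ (or $=o((\Ex C_n)^2)$) uniformly over $\cG_n^\sr$, and this is simply false. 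Writing $\zeta_i=\Delta(Z_i)f^\tau_{\umn}(W_i)$, $\mathbf a=\Ex\zeta$, $\tilde\zeta_i=\zeta_i-\mathbf a$, one has
\begin{equation*}
C_n=n\|\mathbf a\|^2+2\Big\langle\sum_i\tilde\zeta_i,\mathbf a\Big\rangle+\frac1n\Big\|\sum_i\tilde\zeta_i\Big\|^2=:n\|\mathbf a\|^2+E_n+D_n,
\end{equation*}
and the U-statistic part $D_n$ has variance of order $\sum_{j,l\le\m}\tau_j\tau_l\,\mathrm{Cov}(\Delta f_j,\Delta f_l)^2$, which under only $\sup_z|\Delta|^2\le C$ and Assumptions~\ref{Ass:A1}--\ref{Ass:A2} is $O\big((\sum_{j\le\m}\tau_j)^2\big)$, not $O(\varsigma_\m^2)$ (for $L=\mathrm{Id}$ and Assumption~\ref{Ass:power} one has $\varsigma_\m\asymp\sqrt{\m}$ while $\sum_j\tau_j=\m$). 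Your remark that $\sum_j\tau_j^2=O(\varsigma_\m^2)$ ``peels off'' this fluctuation does not help, since the relevant quantity is $(\sum_j\tau_j)^2$, not $\sum_j\tau_j^2$. So your advertised Chebyshev step, as stated, does not go through.

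The fix is exactly what the paper does and is also the correct way to read your fallback ``re-centring'' idea: since $D_n\ge0$, simply discard it and bound $C_n\ge n\|\mathbf a\|^2-|E_n|$, i.e.\ use the one-sided identity $\|a\|^2\ge\|b\|^2+2\langle b,a-b\rangle$ rather than expanding the square. Then $\mathrm{Var}(E_n)=O(n\|\mathbf a\|^2)$ (the paper's $\Ex|II_n|^2=O(n\|LT\Delta\|_W^2)$ computation, using $\sum_j s_j^2=1$ and $\Ex|\sum_j s_jf_j(W)|^2\le\eta_f\eta_p$), and Chebyshev applied to $E_n$ alone suffices because $n\|\mathbf a\|^2\ge c\rho\varsigma_\m\to\infty$. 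Analogously, for $B_n$ the paper splits $\beta_j$ into $n^{1/2}a_j$ plus a centred remainder; the piece $\sum_j\tau_j a_j\xi_j$ contributes $O_p(\sqrt{n\sum_j\tau_ja_j^2})=O_p(\sqrt n\,\|LT\Delta\|_W)$ and the remainder contributes $O_p\big((\sum_j\tau_j^2)^{1/2}\big)=O_p(\varsigma_\m)$ (this is where Assumption~\ref{Ass:power} is actually used); your plain Cauchy--Schwarz bound $|B_n|\le 2(\sum_j\tau_j\xi_j^2)^{1/2}C_n^{1/2}$ is indeed too crude, as you suspected. Once both are in place, combining with $A_n=\|U$-term$\|^2-\mu_\m=O_p(\varsigma_\m)$ and choosing $\rho$ large gives the first claim; the second follows with $\varsigma_\m$ replaced by $1$. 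You should also be aware that the passage from $\|LT\Delta\|_W^2$ to the truncated sum $\sum_{j\le\m}\tau_ja_j^2$ — controlling $\sum_{j>\m}\tau_ja_j^2$ uniformly over $\cG_n^\sr$ using only $\sup_z|\Delta|^2\le C$ — is handled very tersely in the paper as well (``$I_n\ge Cn\|LT(\sol-\sol_0)\|_W^2$ for $n$ sufficiently large''); your honest flagging of this as the delicate uniformity issue is warranted.
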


\section{A parametric specification test}
In this section, we present a test whether the structural function $\sol$ is known up to a finite dimensional parameter.  Let $\varTheta$ be a compact subspace of $\mathbb R^k$ then we consider  the null hypothesis $H_{\text{p}}:$ there exists some $\vartheta\in\varTheta$ such that $\sol(\cdot)=\phi(\cdot,\vartheta)$ for a known function $\phi$. The alternative hypothesis is that there exists no $\vartheta\in\varTheta$ such that $\sol(\cdot)=\phi(\cdot,\vartheta)$ holds true.
\subsection{The test statistic and its asymptotic distribution}
Under Assumptions \ref{Ass:A3} and \ref{Ass:A4}, the null hypothesis $H_{\text{p}}$ is equivalent to $L(g-T\phi(\cdot,\vartheta))=0$ for some $\vartheta\in\varTheta$. 
Thereby, to verify $H_{\text{p}}$ we make use of the test statistic $S_n$ given in \eqref{sn} where $\sol_0$ is replaced by $\phi(\cdot,\hthet)$ with $\hthet$ being an estimator of $\vartheta$.
Hence, our test statistic for a parametric specification is given by 
\begin{equation*}
  S_n^{\text{p}}:=\sum_{j=1}^\m\tau_j\big|n^{-1}\sum_{i=1}^n\big(Y_i-\phi(Z_i,\hthet)\big)f_j(W_i)\big|^2.
\end{equation*}
If the test statistic $S_n^{\text{p}}$ becomes too large then $H_{\text{p}}$ has to be rejected.
To obtain asymptotic results for the statistic $S_n^{\text{p}}$ we require smoothness conditions of the function $\phi$ with respect to its second argument. 
Below we denote the vector of partial derivatives of $\phi$ with respect to $\vartheta=(\vartheta_1,\dots,\vartheta_k)^t$ by $\phi_\vartheta=(\phi_{\vartheta_l})_{1\leq l\leq k}$ and the matrix of second-order partial derivatives by $\phi_{\vartheta\vartheta}=(\phi_{\vartheta_j\vartheta_l})_{1\leq j,l\leq k}$.
\begin{assA}\label{Ass:par} (i) Let $\hthet$ be an estimator satisfying $\|\hthet-\thet\|=O_p(n^{-1/2})$ for some  $\thet\in{\text{int}}(\varTheta)$ with $\sol(\cdot)=\phi(\cdot,\vartheta_0)$ if $H_{\text{p}}$ holds true.
(ii) The function $\phi$ is twice partial differentiable with respect to its second argument.
There exists some constant $\eta_\phi\geqslant 1$   such that
\begin{equation*}
 \sup_{1\leq l\leq k}\Ex |\phi_{\vartheta_l}(Z,\thet)|^4\leqslant \eta_\phi\quad\text{and }\quad 
\sup_{1\leq j,l\leq k}\Ex \sup_{\theta\in\Theta}|\phi_{\vartheta_j\vartheta_l}(Z,\theta)|^4\leqslant \eta_\phi.
\end{equation*}
\end{assA}

The following proposition establishes asymptotic normality of $S_n^{\text{p}}$ after standardization.
\begin{theo}\label{prop:par}
Let  Assumptions \ref{Ass:A1}--\ref{Ass:A4} and \ref{Ass:par} hold true. If   $\m$ satisfies \eqref{cond:theo:norm}, then under $H_{\textsl{p}}$
\begin{equation*}
 (\sqrt 2\varsigma_\m)^{-1}\big(n\, S_n^{\textsl{p}}-\mu_\m\big)\stackrel{d}{\rightarrow} \cN(0,1).
\end{equation*}
\end{theo}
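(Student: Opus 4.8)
The plan is to show that substituting the estimator $\hthet$ for the true parameter perturbs the statistic only by a term that is negligible relative to the scaling $\varsigma_\m$, and then to invoke Theorem~\ref{theo:norm:ind:main}. Under $H_{\textsl{p}}$ we have $\sol(\cdot)=\phi(\cdot,\vartheta_0)$ with $\vartheta_0\in\mathrm{int}(\varTheta)$, so $Y_i-\phi(Z_i,\vartheta_0)=U_i$ and the statistic $S_n$ of \eqref{sn} built with $\sol_0:=\phi(\cdot,\vartheta_0)$ satisfies the hypotheses of Theorem~\ref{theo:norm:ind:main}; hence $(\sqrt2\varsigma_\m)^{-1}(nS_n-\mu_\m)\stackrel{d}{\rightarrow}\cN(0,1)$. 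Since $\mu_\m$ and $\varsigma_\m$ are functionals of $U$ and $\{f_j^\tau(W)\}_{j\ge1}$ only, it suffices to prove $nS_n^{\textsl{p}}-nS_n=o_p(\varsigma_\m)$ and to conclude with Slutsky's lemma.

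\emph{Expansion.} Set $\hat\beta:=\sqrt n(\hthet-\vartheta_0)$, which is $O_p(1)$ by Assumption~\ref{Ass:par}(i); any intermediate point $\tilde\vartheta_i$ between $\hthet$ and $\vartheta_0$ lies in $\varTheta$ with probability tending to one. A second-order Taylor expansion of $\phi(Z_i,\cdot)$, together with $Y_i-\phi(Z_i,\vartheta_0)=U_i$, gives for $b_j:=n^{-1/2}\sum_{i=1}^nU_if_j(W_i)$, $a_j^n:=n^{-1}\sum_{i=1}^n\phi_\vartheta(Z_i,\vartheta_0)f_j(W_i)$ and $M_j^n:=n^{-1}\sum_{i=1}^n\phi_{\vartheta\vartheta}(Z_i,\tilde\vartheta_i)f_j(W_i)$ the identity
\begin{equation*}
 n^{-1/2}\sum_{i=1}^n\bigl(Y_i-\phi(Z_i,\hthet)\bigr)f_j(W_i)=b_j-\hat\beta^ta_j^n-\tfrac{1}{2\sqrt n}\hat\beta^tM_j^n\hat\beta=:b_j+c_j,
\end{equation*}
whence $nS_n^{\textsl{p}}-nS_n=2\sum_{j=1}^\m\tau_jb_jc_j+\sum_{j=1}^\m\tau_jc_j^2$.

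\emph{Bounding the two sums.} I would establish, each by a moment computation and Markov's inequality using Assumptions~\ref{Ass:A1}, \ref{Ass:A2} and \ref{Ass:par}: (a) $\varsigma_\m\to\infty$, and $\mu_\m$ and $\varsigma_\m$ are both at most $c\sum_{j\le\m}\tau_j=o(n^{1/3})$ (the remark following Theorem~\ref{theo:norm:ind:main} and \eqref{cond:theo:norm}); (b) $nS_n=\mu_\m+O_p(\varsigma_\m)=O_p\bigl(\sum_{j\le\m}\tau_j\bigr)$ by Theorem~\ref{theo:norm:ind:main}; (c) $\sum_{j\le\m}\tau_j\|a_j^n\|^2=O_p(1)$, since its mean splits into a bias part $\sum_{j}\tau_j\|a_j\|^2$ with $a_j:=\Ex[\phi_\vartheta(Z,\vartheta_0)f_j(W)]$, bounded because for each component $\sum_{j\ge1}\bigl|\Ex[\phi_{\vartheta_l}(Z,\vartheta_0)f_j(W)]\bigr|^2=\bigl\|\Ex[\phi_{\vartheta_l}(Z,\vartheta_0)\mid W=\cdot]\,p_W/\nu\bigr\|_{\cL_\nu^2}^2\le\eta_p\,\Ex|\phi_{\vartheta_l}(Z,\vartheta_0)|^2\le\eta_p\sqrt{\eta_\phi}$ by Parseval and Assumptions~\ref{Ass:A1}(ii), \ref{Ass:par}(ii), and a variance part of order $n^{-1}\sum_{j\le\m}\tau_j\,\Ex|f_j(W)|^4=o(1)$ by Assumption~\ref{Ass:A1}; (d) $\sum_{j\le\m}\tau_j\|M_j^n\|^2=O_p\bigl(\sum_{j\le\m}\tau_j\bigr)$, by the same computation applied to $\sup_{\theta\in\varTheta}|\phi_{\vartheta_j\vartheta_l}(Z,\theta)|$; (e) $\bigl\|\sum_{j\le\m}\tau_jb_ja_j\bigr\|=O_p(\sqrt{\varsigma_\m})$, because componentwise its variance equals $v^t\varSigma_\m v\le\|\varSigma_\m\|_{\mathrm{op}}\|v\|^2\le\varsigma_\m\sum_j\tau_j a_{jl}^2=O(\varsigma_\m)$ with $v=(\tau_j^{1/2}a_{jl})_j$. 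Then $\sum_{j}\tau_jc_j^2\le2\|\hat\beta\|^2\sum_{j}\tau_j\|a_j^n\|^2+\tfrac{1}{2n}\|\hat\beta\|^4\sum_{j}\tau_j\|M_j^n\|^2=O_p(1)$ by (c)--(d); and $2\sum_{j}\tau_jb_jc_j=-2\hat\beta^t\sum_{j}\tau_jb_ja_j-2\hat\beta^t\sum_{j}\tau_jb_j(a_j^n-a_j)-n^{-1/2}\hat\beta^t\bigl(\sum_{j}\tau_jb_jM_j^n\bigr)\hat\beta$, where the first summand is $O_p(\sqrt{\varsigma_\m})$ by (e) while the last two are, by the Cauchy--Schwarz inequality, each bounded by $\sqrt{nS_n}$ times $\bigl(\sum_{j}\tau_j\|a_j^n-a_j\|^2\bigr)^{1/2}=O_p\bigl((n^{-1}\sum_{j\le\m}\tau_j)^{1/2}\bigr)$, respectively $n^{-1/2}\bigl(\sum_{j}\tau_j\|M_j^n\|^2\bigr)^{1/2}$, hence are $O_p\bigl(n^{-1/2}\sum_{j\le\m}\tau_j\bigr)=o_p(1)$ by (a),(b),(d). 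Consequently $nS_n^{\textsl{p}}-nS_n=O_p(\sqrt{\varsigma_\m})+O_p(1)=o_p(\varsigma_\m)$ since $\varsigma_\m\to\infty$, which completes the argument.

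\emph{Main obstacle.} The crux is (c)--(e): one must recognise $\sum_{j\le\m}\tau_jb_ja_j$ as a bilinear form in the covariance matrix $\varSigma_\m$ so that the bound $\|\varSigma_\m\|_{\mathrm{op}}\le\|\varSigma_\m\|_F=\varsigma_\m$ applies, and one must simultaneously control the coefficient sequences $(\tau_j^{1/2}a_{jl})_j$ uniformly in the truncation level $\m$ — this is exactly a Bessel/Parseval estimate for the ($L$-smoothed) conditional expectation $\Ex[\phi_{\vartheta_l}(Z,\vartheta_0)\mid W]$, and it is here that Assumption~\ref{Ass:A1}(ii) (to pass between $\|\cdot\|_W$ and the $\nu$-weighted norm) and the fourth-moment bounds on $\phi_\vartheta$ and $\phi_{\vartheta\vartheta}$ in Assumption~\ref{Ass:par}(ii) are used. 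The remaining estimates are routine bookkeeping governed by the rate $\bigl(\sum_{j\le\m}\tau_j\bigr)^3=o(n)$.
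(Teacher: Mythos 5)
Your proof is correct and follows essentially the same approach as the paper: decompose $nS_n^{\textsl{p}}$ into the statistic built with the true $\vartheta_0$ (which Theorem~\ref{theo:norm:ind:main} handles) plus a cross-term and a quadratic remainder, Taylor-expand $\phi(\cdot,\hthet)$ around $\vartheta_0$, and kill the extra terms via the Parseval/Jensen estimate $\sum_j[T\phi_{\vartheta_l}]_j^2\le\eta_p\Ex|\phi_{\vartheta_l}(Z)|^2$ together with the moment bounds in Assumptions~\ref{Ass:A1}, \ref{Ass:A2}, \ref{Ass:par} and the rate $(\sum_{j\le\m}\tau_j)^3=o(n)$. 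The only cosmetic difference is in your step (e): you bound $\Var(\sum_j\tau_jb_ja_{jl})=v^t\varSigma_\m v$ by $\|\varSigma_\m\|_F\|v\|^2=O(\varsigma_\m)$, giving $O_p(\sqrt{\varsigma_\m})$, while the paper's computation (using $\Ex[U^2|W]\le\sigma^2$ and the $\nu$-weighted Parseval bound, which in effect shows $\|\varSigma_\m\|_{\mathrm{op}}=O(1)$) yields the sharper $O_p(1)$; both are $o_p(\varsigma_\m)$ since $\varsigma_\m\to\infty$, so the conclusion is unaffected.
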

In the following theorem, we state the asymptotic distribution of $nS_n^{\textsl{p}}$ when $\sum_{j=1}^\m\tau_j=O(1)$. 
In this case, we assume that $\hthet$ satisfies under $H_{\textsl{p}}$
\begin{equation}\label{prop:par:cond}
 \sqrt n(\hthet-\thet)=n^{-1/2}\sum_{i=1}^n h_\uk(V_{i})+o_p(1)
\end{equation}
where $V_{i}:=(Y_i,Z_i,W_i,\thet)$ and $h_\uk(V_{i})=(h_1(V_i),\dots, h_k(V_i))^t$ where $h_j$, $1\leq j\leq k$,  are real valued functions. 
It is well known that this representation holds if $\hthet$ is the generalized method of moments estimator.
Let  $\varSigma^\textsl{p}$ be the covariance matrix of the infinite dimensional centered vector $\big(Uf_j^\tau(W)-\Ex[f_j^\tau(W)\phi_\vartheta(Z,\thet)^t]h_\uk(V)\big)_{j\geq 1}$. The ordered eigenvalues of $\varSigma^\textsl{p}$ are denoted by $(\lambda_j^\textsl{p})_{j\geq1}$.
\begin{theo}\label{prop:par:tau}
Let  Assumptions\ref{Ass:A1}--\ref{Ass:A4} and \ref{Ass:par} hold true. Assume that $H_{\textsl{p}}$ holds true and $\hthet$ satisfies condition \eqref{prop:par:cond} with $\Ex h_{j}(V)=0$ and $\Ex |h_{j}(V)|^4<\infty$, $1\leq j\leq k$. 
If   $\m$ satisfies \eqref{cond:theo:norm:tau}, then
\begin{equation*}
 n\,S_n^{\textsl{p}}\,\stackrel{d}{\rightarrow} \sum_{j=1}^\infty\lambda_j^\textsl{p}\,\chi_{1j}^2.
\end{equation*}
\end{theo}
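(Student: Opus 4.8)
The strategy is to reduce $n\,S_n^{\textsl{p}}$ to the statistic treated in Theorem~\ref{theo:norm:ind:main:tau}, with the score vector $\big(Uf_j^\tau(W)\big)_{j\geq1}$ replaced by the linearized score $\xi:=\big(Uf_j^\tau(W)-\Ex[f_j^\tau(W)\phi_\vartheta(Z,\thet)^t]h_\uk(V)\big)_{j\geq1}$, whose covariance operator is precisely $\varSigma^{\textsl{p}}$. Writing $R_{nj}:=n^{-1/2}\sum_{i=1}^n\big(Y_i-\phi(Z_i,\hthet)\big)f_j^\tau(W_i)$ one has $n\,S_n^{\textsl{p}}=\sum_{j=1}^\m R_{nj}^2$; with $\xi_i=(\xi_{ij})_{j\geq1}$ the i.i.d.\ copies of $\xi$ and $\widetilde R_{nj}:=n^{-1/2}\sum_{i=1}^n\xi_{ij}$, the two things to prove are (a) $\sum_{j=1}^\m R_{nj}^2=\sum_{j=1}^\m \widetilde R_{nj}^2+o_p(1)$ and (b) $\sum_{j=1}^\m \widetilde R_{nj}^2\stackrel{d}{\rightarrow}\sum_{j=1}^\infty\lambda_j^{\textsl{p}}\,\chi_{1j}^2$.

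For (a) I would Taylor-expand $\phi(Z_i,\cdot)$ around $\thet$ (Assumption~\ref{Ass:par}(ii)) and use $Y_i-\phi(Z_i,\thet)=U_i$ under $H_{\textsl{p}}$ to write
\begin{equation*}
R_{nj}=n^{-1/2}\sum_{i=1}^n U_if_j^\tau(W_i)-\Big(n^{-1}\sum_{i=1}^n f_j^\tau(W_i)\,\phi_\vartheta(Z_i,\thet)^t\Big)\sqrt n(\hthet-\thet)-\tfrac12 r_{nj},
\end{equation*}
where $|r_{nj}|\leq n^{-1/2}\sum_i|f_j^\tau(W_i)|\,\|\hthet-\thet\|^2\gamma_i$ with $\gamma_i:=\sup_{\theta\in\varTheta}\|\phi_{\vartheta\vartheta}(Z_i,\theta)\|$ is the second-order Taylor remainder. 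Substituting the linear expansion~\eqref{prop:par:cond} and the law of large numbers $n^{-1}\sum_i f_j^\tau(W_i)\phi_\vartheta(Z_i,\thet)^t=\Ex[f_j^\tau(W)\phi_\vartheta(Z,\thet)^t]+o_p(1)$ gives $R_{nj}=\widetilde R_{nj}+b_{nj}$, where $b_{nj}$ is the sum of (i) $\Ex[f_j^\tau(W)\phi_\vartheta(Z,\thet)^t]$ times the $o_p(1)$ term of~\eqref{prop:par:cond}, (ii) $-D_{nj}\sqrt n(\hthet-\thet)$ with $D_{nj}:=n^{-1}\sum_i f_j^\tau(W_i)\phi_\vartheta(Z_i,\thet)^t-\Ex[f_j^\tau(W)\phi_\vartheta(Z,\thet)^t]$ and $\sqrt n(\hthet-\thet)=O_p(1)$, and (iii) $-\tfrac12 r_{nj}$. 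Since $\Ex\sum_{j\leq\m}\widetilde R_{nj}^2=\sum_{j\leq\m}\Ex\xi_{1j}^2\leq\sum_{j\geq1}\Ex\xi_{1j}^2<\infty$ (finiteness established in the last paragraph), $\sum_{j\leq\m}\widetilde R_{nj}^2=O_p(1)$; from $\sum_j R_{nj}^2-\sum_j\widetilde R_{nj}^2=2\sum_j\widetilde R_{nj}b_{nj}+\sum_j b_{nj}^2$ and Cauchy--Schwarz it therefore suffices to show $\sum_{j=1}^\m b_{nj}^2=o_p(1)$.

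The three pieces of $\sum_{j\leq\m}b_{nj}^2$ are controlled by moment bounds hinging on $\sum_{j=1}^\m\tau_j=O(1)$ and on the Bessel-type inequality $\sum_{j\geq1}\big|\Ex[f_j^\tau(W)\psi(W)]\big|^2\leq\eta_p\,\|\psi\|_W^2$ for $\psi\in\cL^2_W$ (valid because $\Ex[f_j(W)\psi(W)]$ is the $j$-th Fourier coefficient of $\psi\,p_W/\nu$ in $\cL^2_\nu$ and $\|\psi\,p_W/\nu\|^2_{\cL^2_\nu}\leq\eta_p\|\psi\|^2_W$ by Assumption~\ref{Ass:A1}(ii)). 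Namely: piece (i) is bounded by $\|o_p(1)\|^2\sum_{j\leq\m}\|\Ex[f_j^\tau(W)\phi_\vartheta(Z,\thet)^t]\|^2$, where $\sum_{j\leq\m}\|\Ex[f_j^\tau(W)\phi_\vartheta(Z,\thet)^t]\|^2=O(1)$ by the Bessel bound applied to $\psi(W)=\Ex[\phi_{\vartheta_l}(Z,\thet)\,|\,W]$ together with Assumption~\ref{Ass:par}(ii), hence it is $o_p(1)$; piece (ii) is bounded by $\|\sqrt n(\hthet-\thet)\|^2\sum_{j\leq\m}\|D_{nj}\|^2$, where $\Ex\sum_{j\leq\m}\|D_{nj}\|^2\leq n^{-1}\sum_{j\leq\m}\tau_j\sum_l(\Ex f_j(W)^4)^{1/2}(\Ex\phi_{\vartheta_l}(Z,\thet)^4)^{1/2}=O(n^{-1})$ by Assumptions~\ref{Ass:A1} and~\ref{Ass:par}(ii) and $\sum_j\tau_j=O(1)$, hence it is $O_p(n^{-1})$; and piece (iii) satisfies $\sum_{j\leq\m}r_{nj}^2\leq n\,\|\hthet-\thet\|^4\big(n^{-1}\sum_i\gamma_i^2\big)\sum_{j\leq\m}\big(n^{-1}\sum_if_j^\tau(W_i)^2\big)$ by Cauchy--Schwarz, where $n\,\|\hthet-\thet\|^4=O_p(n^{-1})$, $n^{-1}\sum_i\gamma_i^2=O_p(1)$ (Assumption~\ref{Ass:par}(ii)), and $\Ex\sum_{j\leq\m}(n^{-1}\sum_if_j^\tau(W_i)^2)=\sum_{j\leq\m}\tau_j\,\Ex f_j(W)^2\leq\eta_p\sum_{j\leq\m}\tau_j=O(1)$, hence it is $O_p(n^{-1})$. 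This yields (a). I expect this bookkeeping --- showing that substituting $\hthet$ by its linear expansion perturbs the \emph{entire} vector $(R_{nj})_{j\leq\m}$, not merely finitely many coordinates, by an $\ell^2$-negligible amount --- to be the main obstacle, with the uniform fourth-moment control of $\{f_j\}$ in Assumption~\ref{Ass:A1} and the summability $\sum_j\tau_j=O(1)$ as the essential inputs.

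For (b), the $\xi_i$ take values in $\ell^2$ and are square-integrable there, since $\sum_{j\geq1}\Ex\xi_{1j}^2\leq2\sigma^2\eta_p\sum_j\tau_j+2\big(\sum_j\|\Ex[f_j^\tau(W)\phi_\vartheta(Z,\thet)^t]\|^2\big)\,\Ex\|h_\uk(V)\|^2<\infty$ by Assumptions~\ref{Ass:A1}--\ref{Ass:A2},~\ref{Ass:par}, the moment conditions on $h_\uk$, and $\sum_j\tau_j=O(1)$. By the argument used in the proof of Theorem~\ref{theo:norm:ind:main:tau}, applied to the i.i.d.\ sequence $(\xi_i)_{i\geq1}$ in place of $\big(U_if_j^\tau(W_i)\big)_i$, the Hilbert-space central limit theorem gives $n^{-1/2}\sum_i\xi_i\Rightarrow G$ in $\ell^2$ with $G$ centered Gaussian of covariance $\varSigma^{\textsl{p}}$; the truncation is harmless because $\Ex\sum_{j>\m}\widetilde R_{nj}^2=\sum_{j>\m}\Ex\xi_{1j}^2\to0$ as $m_n\to\infty$, so $\sum_{j\leq\m}\widetilde R_{nj}^2=\|n^{-1/2}\sum_i\xi_i\|^2+o_p(1)\stackrel{d}{\rightarrow}\|G\|^2$ by the continuous mapping theorem; and the spectral decomposition of $\varSigma^{\textsl{p}}$ identifies $\|G\|^2$ in distribution with $\sum_{j\geq1}\lambda_j^{\textsl{p}}\chi_{1j}^2$. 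Combining (a) and (b) gives $n\,S_n^{\textsl{p}}\stackrel{d}{\rightarrow}\sum_{j\geq1}\lambda_j^{\textsl{p}}\chi_{1j}^2$.
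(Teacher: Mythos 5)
Your proposal is correct and, modulo notation, follows the same route as the paper: you linearize $\phi(\cdot,\hthet)$ around $\thet$, substitute the influence-function expansion \eqref{prop:par:cond}, and identify the main term with the influence-adjusted score $\xi$ whose covariance is $\varSigma^{\textsl p}$ (this is the paper's $A_{nj}$), then show the perturbations (the empirical-vs-population deviation of $\Ex[f_j^\tau\phi_\vartheta^t]$, the $o_p(1)$ remainder from \eqref{prop:par:cond}, and the second-order Taylor remainder) are $o_p(1)$ in $\ell^2$ — these are the paper's $B_{nj}$, $C_{nj}$, and the absorbed $o_p(1)$. The only cosmetic difference is that for the convergence of $\sum_{j\leq \m}\widetilde R_{nj}^2$ you invoke a Hilbert-space CLT plus continuous mapping, whereas the paper reuses the truncation argument from the proof of Theorem~\ref{theo:norm:ind:main:tau}; the two are interchangeable here since both reduce to finite-dimensional CLT plus tail control via $\sum_{j>m}\Ex\xi_{1j}^2\to 0$.
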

\begin{rem}\label{est:crit:par}[Estimation of Critical Values]
 For the estimation of critical values of Theorem  \ref{prop:par} and \ref{prop:par:tau}, let us define $\bold U_n^{\textsl{p}}=\big(Y_1-\phi(Z_1,\hthet),\dots,Y_n-\phi(Z_n,\hthet)\big)^t$. We estimate the covariance matrix $\varSigma_m$ by 
\begin{equation*}
 \widehat\varSigma_m:=n^{-1}\,\bold W_m(\tau)^t\,\text{diag}(\bold U_n^{\textsl{p}})^2\,\bold W_m(\tau).
\end{equation*}
Now the asymptotic result of Theorem \ref{prop:par} continues to hold if we replace $\varsigma_\m$ by the  Frobenius norm of $\widehat\varSigma_\m$ and $\mu_\m$ by the trace of $\widehat\varSigma_\m$. In the setting of Theorem \ref{prop:par:tau}, we replace $\varSigma^\textsl{p}$ by a finite dimensional matrix.
Let $\bold A_k$ be a $n\times k$ matrix with entries $\phi_{\vartheta_l}(Z_i,\hthet)$ for $1\leq i\leq n$, $1\leq l\leq k$ and $\bold h_k(V)=\big(h_\uk(V_{1}),\dots,h_\uk(V_{n})\big)^t$. Then define $\bold V_k:=n^{-1}\bold h_k(V)\bold A_k^t $. Given a sufficiently large integer $M>0$ we estimate $\varSigma^\textsl{p}$ by 
\begin{equation*}
  \widehat\varSigma_M^{\textsl{p}}:=n^{-1}\bold W_M(\tau)^t
\Big(\text{diag}(\bold U_n^{\textsl{p}})-\bold V_k\Big)^t
\Big(\text{diag}(\bold U_n^{\textsl{p}})-\bold V_k\Big)\bold W_M(\tau).
\end{equation*}
Hence, we approximate $\sum_{j=1}^\infty \lambda_j\chi_{1j}^2$ by the finite sum $\sum_{j=1}^{M_n}\widehat\lambda_j^{\textsl{p}}\,\chi_{1j}^2$ where $(\widehat\lambda_j^{\textsl{p}})_{1\leq j\leq M_n}$ are the ordered eigenvalues of $\widehat\varSigma_{M_n}^{\textsl{p}}$. We have $\max_{1\leq j\leq M_n}|\widehat\lambda_j^{\textsl{p}}-\lambda_j^{\textsl{p}}|=o_p(1)$ if $M_n=o(\sqrt n)$. 
\hfill$\square$\end{rem}

\subsection{Limiting behavior under local alternatives and consistency.}
In the following, we study the power and consistency properties of the test statistic $S_n^{\text{p}}$.
In the following, we consider a sequence of linear local alternatives \eqref{loc:alt:ind} or \eqref{loc:alt:ind:2} with $\sol_0=\phi(\thet,\cdot)$.
Further, let $\delta_\perp$ denote the projection of $\delta$ onto the orthogonal complement of $\phi(\cdot,\thet)$; that is, $\Ex[\phi_\vartheta(Z,\thet)\delta_\perp(Z)]=0$. Let us denote $\delta_{j\perp}:=\sqrt{\tau_j}\,\Ex[\delta_\perp(Z)f_j(W)]$.

\begin{prop}\label{coro:norm:par}Let the conditions of Theorem \ref{prop:par} be satisfied. Then under \eqref{loc:alt:ind} with $\sol_0=\phi(\cdot,\thet)$
it holds
\begin{equation*}
 (\sqrt 2\varsigma_\m)^{-1}\big(n\,S_n^{\textsl{p}}-\mu_\m\big)\stackrel{d}{\rightarrow}\cN\Big(2^{-1/2}\sum_{j= 1}^\infty\delta_{j\perp}^2,1\Big).
\end{equation*}
Let  the conditions of Theorem \ref{prop:par:tau} be satisfied. Then under \eqref{loc:alt:ind:2} with $\sol_0=\phi(\cdot,\thet)$
it holds
\begin{equation*}
n\,S_n^{\textsl{p}}\stackrel{d}{\rightarrow}\sum_{j=1}^\infty\lambda_j^{\textsl{p}}\,\chi_{1j}^2(\delta_{j\perp}/\lambda_j^{\text{p}}).
\end{equation*}
\end{prop}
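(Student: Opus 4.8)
\emph{Proof plan.} The strategy is to treat Proposition~\ref{coro:norm:par} as the parametric-estimation analogue of Propositions~\ref{coro:norm:ind:main} and~\ref{coro:norm:ind:main:tau}, just as Theorems~\ref{prop:par} and~\ref{prop:par:tau} are the analogues of Theorems~\ref{theo:norm:ind:main} and~\ref{theo:norm:ind:main:tau}. Put $c_n=\varsigma_\m^{1/2}n^{-1/2}$ under \eqref{loc:alt:ind} and $c_n=n^{-1/2}$ under \eqref{loc:alt:ind:2}. With $\sol_0=\phi(\cdot,\thet)$, a second-order Taylor expansion of $\phi(Z_i,\cdot)$ about $\thet$ yields
\[
 Y_i-\phi(Z_i,\hthet)=U_i+c_n\,\delta(Z_i)-\phi_\vartheta(Z_i,\thet)^t(\hthet-\thet)-\tfrac{1}{2}(\hthet-\thet)^t\phi_{\vartheta\vartheta}(Z_i,\tilde\vartheta_i)(\hthet-\thet).
\]
By Assumption~\ref{Ass:par}(ii) the quadratic remainder is $O_p(\|\hthet-\thet\|^2)$ uniformly in $i$; since $\|\hthet-\thet\|=O_p(c_n)$ under the local model (the $\delta$-shift dominates the $O_p(n^{-1/2})$ sampling fluctuation when $\varsigma_\m\to\infty$, and $c_n=n^{-1/2}$ in the other regime), multiplying by $n$ and summing the $\m$ weighted squares leaves its contribution $o_p(\varsigma_\m)$, respectively $o_p(1)$, using $\sum_{j\le\m}\tau_j=o(n)$ and Assumption~\ref{Ass:A1}.

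Next one inserts the expansion of $\hthet$ under the local model; this is the step that produces $\delta_\perp$. The estimator reacts to the signal $c_n\delta$ through a leading shift $\hthet-\thet=c_n\gamma+o_p(c_n)+O_p(n^{-1/2})$, so that, using $n^{-1}\sum_i\phi_\vartheta(Z_i,\thet)f_j(W_i)\to\Ex[\phi_\vartheta(Z,\thet)f_j(W)]$ and $n^{-1}\sum_i\delta(Z_i)f_j(W_i)\to\Ex[\delta(Z)f_j(W)]$ a.s.,
\[
 n^{-1}\!\sum_{i=1}^n\big(Y_i-\phi(Z_i,\hthet)\big)f_j(W_i)=\Big(n^{-1}\!\sum_{i=1}^nU_if_j(W_i)-R_{nj}\Big)+c_n\big(\Ex[(\delta-\phi_\vartheta(\cdot,\thet)^t\gamma)(Z)f_j(W)]+o(1)\big),
\]
where $R_{nj}$ is the $O_p(n^{-1/2})$ correction $\big(n^{-1}\sum_i\phi_\vartheta(Z_i,\thet)f_j(W_i)\big)^t$ times the sampling part of $\hthet$. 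The net drift $\sqrt{\tau_j}\,\Ex[(\delta-\phi_\vartheta(\cdot,\thet)^t\gamma)(Z)f_j(W)]$ equals $\delta_{j\perp}$ exactly because $\phi_\vartheta(\cdot,\thet)^t\gamma$ is the projection $\delta-\delta_\perp$ of $\delta$ onto $\mathrm{span}\{\phi_{\vartheta_l}(\cdot,\thet)\}_{l\le k}$; in the $\tau$-decaying regime $\gamma$ and the remaining terms are read off \eqref{prop:par:cond} at the perturbed data, with $\Ex h_j(V)=0$, $\Ex|h_j(V)|^4<\infty$ making the expansion uniform. The bracketed term, weighted by $\tau_j$ and summed over $j\le\m$, is $nS_n^{\textsl{p}}$ under $H_{\textsl{p}}$ up to the negligible quadratic remainder, hence converges (after standardization) to $\cN(0,1)$ by Theorem~\ref{prop:par} in the first regime and to $\sum_{j\ge1}\lambda_j^{\textsl{p}}\chi_{1j}^2$ by Theorem~\ref{prop:par:tau} in the second.

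It then remains to expand the square and collect terms. In the first regime $nS_n^{\textsl{p}}$ is the bracketed statistic plus the deterministic drift $\varsigma_\m\sum_{j\le\m}\delta_{j\perp}^2$ plus a cross term; after dividing by $\sqrt2\,\varsigma_\m$ the drift contributes $2^{-1/2}\sum_{j\ge1}\delta_{j\perp}^2$ (the tail past $\m$ vanishing because $\sum_{j\ge1}\tau_j\Ex[\delta_\perp(Z)f_j(W)]^2<\infty$, by Parseval applied to $(\op\delta_\perp)\,p_W/\nu$ with $\delta\in\cL_Z^4$), while the cross term $\sqrt2\,\varsigma_\m^{-1/2}\sum_{j\le\m}\tau_j\big(n^{-1/2}\sum_iU_if_j(W_i)\big)\Ex[\delta_\perp(Z)f_j(W)](1+o(1))$ has asymptotically vanishing mean and variance of order $\varsigma_\m^{-1}$, using the uniformly bounded spectral norm of $\varSigma_\m$ (Assumptions~\ref{Ass:A1}(ii), \ref{Ass:A2}) and the same summability; Slutsky gives the first assertion. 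In the second regime one instead retains the joint weak limit: $\big(n^{-1/2}\sum_i(Y_i-\phi(Z_i,\hthet))f_j(W_i)\big)_{j\ge1}$ converges to a Gaussian process with covariance $\varSigma^{\textsl{p}}$ and mean $\big(\Ex[\delta_\perp(Z)f_j(W)]\big)_{j\ge1}$, the truncation error from $\m$ to $\infty$ being controlled by $\sum_{j>\m}\tau_j=o(1)$, and diagonalising $\varSigma^{\textsl{p}}$ produces $nS_n^{\textsl{p}}\stackrel{d}{\rightarrow}\sum_{j\ge1}\lambda_j^{\textsl{p}}\chi_{1j}^2(\delta_{j\perp}/\lambda_j^{\textsl{p}})$. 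I expect the genuinely delicate point to be the second step — establishing the local-alternative expansion of $\hthet$ and checking that its leading drift removes exactly the tangent-space component of $\delta$; the remainder is a recombination of arguments already used for Theorems~\ref{prop:par},~\ref{prop:par:tau} and Propositions~\ref{coro:norm:ind:main},~\ref{coro:norm:ind:main:tau}.
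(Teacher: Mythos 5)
Your proof is correct and hinges on the same observation as the paper's — that the parametric estimator absorbs the tangent-space component of $\delta$, leaving $\delta_\perp$ as the effective drift — but you organize the argument differently. The paper never expands $\hthet$ under the local alternative. Instead it makes the reduction up front: assume WLOG that $\delta=\delta_\perp$ (``otherwise replace $\phi(Z_i)$ by $\phi(Z_i)+\phi_\vartheta(Z_i)^t\Ex[\delta(Z)\phi_\vartheta(Z)]$''), so that the local perturbation is already orthogonal to the tangent space and the estimator $\hthet$ behaves exactly as under $H_{\textsl p}$. With that normalization, the paper decomposes $S_n^{\textsl p}$ into the simple-hypothesis quadratic form in $U_i+\varsigma_\m^{1/2}n^{-1/2}\delta_\perp(Z_i)$ (handled by Proposition~\ref{coro:norm:ind:main}), the parametric-estimation error term (negligible by the proof of Theorem~\ref{prop:par}), and one genuinely new cross term, equation~\eqref{eq:coro:norm:par}, which it kills by Cauchy--Schwarz using $\sum_j\delta_{j\perp}^2<\infty$ and $\sqrt n\|\hthet-\thet\|=O_p(1)$; in the $\sum_j\tau_j=O(1)$ regime it simply reuses decomposition~\eqref{proof:prop:par:tau:1} with $U_i$ replaced by $U_i+n^{-1/2}\delta_\perp(Z_i)$. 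Your route — explicitly writing $\hthet-\thet=c_n\gamma+O_p(n^{-1/2})$ and showing that the resulting drift $\Ex[(\delta-\phi_\vartheta^t\gamma)f_j]$ is precisely $\delta_{j\perp}$ — makes the projection mechanism transparent, but it leans on a local-alternative expansion of $\hthet$ that the paper only assumes (via \eqref{prop:par:cond}) under $H_{\textsl p}$, whereas the paper's WLOG sidesteps that by shifting the parametric centre so that $\hthet$ remains $\sqrt n$-consistent for the original $\thet$; each choice trades off explicitness against the need for an extra (standard but unstated) local-power assumption on the estimator. Your rate bookkeeping (quadratic remainder $o_p(\varsigma_\m)$ using $(\sum_j\tau_j)^3=o(n)$, cross-term variance $O(\varsigma_\m^{-1})$, $\sum_{j>\m}\tau_j$ tail control in the summable regime) matches what the paper would need and is sound, modulo the minor imprecision of quoting $\sum_{j\le\m}\tau_j=o(n)$ where the paper's sharper hypothesis $(\sum_{j\le\m}\tau_j)^3=o(n)$ is what is actually used.
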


\begin{rem}
 Under  homoscedasticity, that is, $\Ex[U^2|W]=\sigma_o^2$, $W\sim\cU[0,1]$,  and $L=\text{Id}$ we see from Proposition \ref{coro:norm:par} that our test has the same power properties as the test of \cite{Hong95}. On the other hand, if $\sum_{j=1}^\m\tau_j=O(1)$ then our test can detect local linear alternatives  at a rate $n^{-1/2}$ as in \cite{Horowitz2006}, which decreases more quickly than the rate obtained by \cite{Tripathi2003}. \hfill$\square$
\end{rem} 
The next proposition establishes consistency of our test against a fixed alternative model. It is assumed that $H_\text p$ is false, that is, there exists no $\vartheta\in\varTheta$ such that $\sol(\cdot)=\phi(\cdot,\vartheta)$. In this situation, $\thet$ denotes the probability limit of the estimator $\hthet$.
\begin{prop}\label{prop:par:cons}Assume that $H_{\textsl{p}}$ does not hold. Let $\Ex|Y-\phi(Z,\thet)|^4<\infty$ and Assumption  \ref{Ass:power} (i) hold true. Let $(\alpha_n)_{n\geq 1}$ as in Proposition \ref{prop:cons}. Under the conditions of Theorem \ref{prop:par} we have
\begin{equation*}
 \PP\Big((\sqrt 2\,\varsigma_\m)^{-1}\big( nS_n^{\textsl{p}}-\mu_\m\big)>\alpha_n\Big)=1+o(1).
\end{equation*}
Given the conditions of Theorem \ref{prop:par:tau} it holds
\begin{equation*}
  \PP\big(n\,S_n^{\textsl{p}}>\alpha_n\big)=1+o(1).
\end{equation*}
\end{prop}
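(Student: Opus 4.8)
The plan is to parallel the consistency proof of Proposition \ref{prop:cons} with $\sol_0=\phi(\cdot,\thet)$, the only genuinely new ingredient being control of the effect of estimating $\thet$ by $\hthet$. Write $\hat b_j:=n^{-1}\sum_{i=1}^n\big(Y_i-\phi(Z_i,\hthet)\big)f_j(W_i)$, so that $S_n^{\textsl{p}}=\sum_{j=1}^\m\tau_j|\hat b_j|^2$, and set $b_j:=\Ex\big[(Y-\phi(Z,\thet))f_j(W)\big]=\Ex\big[(g-T\phi(\cdot,\thet))(W)f_j(W)\big]$.

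First I would show that, for every fixed $j$, $\hat b_j\to b_j$ in probability. Splitting $\hat b_j=n^{-1}\sum_i(Y_i-\phi(Z_i,\thet))f_j(W_i)-n^{-1}\sum_i\big(\phi(Z_i,\hthet)-\phi(Z_i,\thet)\big)f_j(W_i)$, the first average tends to $b_j$ by the law of large numbers, its summand being integrable since $\Ex|Y-\phi(Z,\thet)|^2<\infty$ and $\Ex f_j(W)^2\le\eta_p$ (Assumption \ref{Ass:A1}(ii)). For the second average, a first-order Taylor expansion of $\phi(Z_i,\cdot)$ about $\thet$ together with the crude bound $\|\phi_\vartheta(Z_i,\tilde\vartheta)\|\le\|\phi_\vartheta(Z_i,\thet)\|+\|\hthet-\thet\|\,\sup_{\theta\in\varTheta}\|\phi_{\vartheta\vartheta}(Z_i,\theta)\|$ at the intermediate point $\tilde\vartheta$ shows that its modulus is at most $\|\hthet-\thet\|$ times a sample average of finite expectation (by Assumption \ref{Ass:par}(ii), the moment bounds of Assumption \ref{Ass:A1}, and Cauchy--Schwarz); since $\|\hthet-\thet\|=O_p(n^{-1/2})$, the second average is $o_p(1)$. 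Only per-$j$ convergence is needed here, so no uniform-in-$j$ control of this remainder is required.

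Next I would run a truncation argument. Since $\m\to\infty$ in both regimes — in the first because $\varsigma_\m^{-1}=o(1)$ forces $\sum_{j=1}^\m\tau_j\to\infty$ (see the remark after Theorem \ref{theo:norm:ind:main}), in the second by the second part of \eqref{cond:theo:norm:tau} — for each fixed $J$ and all $n$ large one has $S_n^{\textsl{p}}\ge\sum_{j=1}^J\tau_j|\hat b_j|^2$, and the right-hand side converges in probability to $\sum_{j=1}^J\tau_j b_j^2$. By Parseval's identity, exactly as in the derivation of the statistic \eqref{sn}, $\sum_{j\ge1}\tau_j b_j^2=\int\big|L(g-T\phi(\cdot,\thet))(w)\,p_W(w)/\nu(w)\big|^2\nu(w)\,dw$. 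This is the one conceptual point: because $H_{\textsl{p}}$ fails, $\sol\neq\phi(\cdot,\thet)$ for the probability limit $\thet\in\varTheta$, so by nonsingularity of $T$ (Assumption \ref{Ass:A3}) we get $g-T\phi(\cdot,\thet)=T(\sol-\phi(\cdot,\thet))\neq0$ in $\cL_W^2$; since $L$ is nonsingular (Assumption \ref{Ass:A4}) and $p_W/\nu$ is bounded away from zero (Assumption \ref{Ass:power}(i)), the displayed integral is a strictly positive constant $c$. Choosing $J$ so that $\sum_{j=1}^J\tau_j b_j^2\ge c/2$ then gives $\PP\big(nS_n^{\textsl{p}}\ge nc/4\big)\to1$.

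It remains to match this lower bound against the centering and scaling, which is a rerun of the bookkeeping already done for $S_n$. By Assumption \ref{Ass:A2}, $\mu_\m\le\sigma^2\eta_p\sum_{j=1}^\m\tau_j$, and by the inequality $\varsigma_\m^2\le\eta_p\sigma^4\big(\sum_{j=1}^\m\tau_j\big)^2$ recalled in the remark following Theorem \ref{theo:norm:ind:main} also $\varsigma_\m\le\eta_p^{1/2}\sigma^2\sum_{j=1}^\m\tau_j$. Under the conditions of Theorem \ref{prop:par}, \eqref{cond:theo:norm} gives $\sum_{j=1}^\m\tau_j=o(n^{1/3})$, so $\mu_\m=o(n)$, and $\alpha_n=o(n\varsigma_\m^{-1})$ gives $\sqrt2\,\varsigma_\m\alpha_n=o(n)$; hence $\mu_\m+\sqrt2\,\varsigma_\m\alpha_n<nc/4$ for all $n$ large, and on $\{nS_n^{\textsl{p}}\ge nc/4\}$ one has $(\sqrt2\,\varsigma_\m)^{-1}\big(nS_n^{\textsl{p}}-\mu_\m\big)>\alpha_n$, which is the first assertion. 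Under the conditions of Theorem \ref{prop:par:tau}, $\alpha_n=o(n)<nc/4$ for all $n$ large, so $\{nS_n^{\textsl{p}}\ge nc/4\}\subset\{nS_n^{\textsl{p}}>\alpha_n\}$, which is the second assertion. The main obstacle is the first step above — quantifying the contribution of the estimation error $\hthet-\thet$ to each coefficient $\hat b_j$ — which is exactly where Assumption \ref{Ass:par} enters.
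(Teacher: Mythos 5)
Your proof is correct, but it routes around the paper's argument in a mildly different way. The paper's proof establishes the \emph{full} convergence $S_n^{\textsl{p}}=\sum_{j\geq1}\tau_j[\op(\sol-\phi(\cdot,\thet))]_j^2+o_p(1)$: it first shows $\big\|n^{-1}\sum_i(\phi(Z_i,\thet)-\phi(Z_i,\hthet))f_\um^\tau(W_i)\big\|^2=o_p(1)$ (reusing the moment bounds from Theorem~\ref{prop:par}), and then invokes the three-term decomposition from the proof of Proposition~\ref{prop:cons}, where the $\ell^2$-summed fluctuations over \emph{all} $\m$ coefficients are controlled using $\Ex|Y-\phi(Z,\thet)|^4<\infty$. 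You instead prove only a one-sided bound: fixing $J$ and using $S_n^{\textsl{p}}\geq\sum_{j=1}^J\tau_j|\hat b_j|^2$, which requires only per-$j$ consistency of $\hat b_j$ and thus avoids any uniform-in-$j$ or summed-over-$j$ control of remainders. This is more elementary (in particular the fourth-moment condition on $Y-\phi(Z,\thet)$ is not truly needed in your argument, only a second moment for the LLN), at the cost of not producing the exact asymptotic value of $S_n^{\textsl{p}}$ — which is also not needed for the stated conclusion. Both proofs then close in the same way: the positivity $\sum_{j\geq1}\tau_j b_j^2\geq C\|LT(\sol-\phi(\cdot,\thet))\|_W^2>0$ uses Assumption~\ref{Ass:power}(i) and nonsingularity of $LT$, and the deterministic bounds $\mu_\m=O(\sum_{j\leq\m}\tau_j)=o(n)$ and $\varsigma_\m\alpha_n=o(n)$ (resp.\ $\alpha_n=o(n)$) give the rejection probability tending to one. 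The argument is sound; the only point I'd flag is that you should say explicitly that $\thet\in\varTheta$ is the probability limit (so ``$H_{\textsl{p}}$ fails'' does indeed yield $\sol\neq\phi(\cdot,\thet)$ in $\cL_Z^2$), which you do, and that the whole argument for the second part of the proposition requires nothing from condition~\eqref{prop:par:cond} beyond what Assumption~\ref{Ass:par}(i) already provides — which is correct.
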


In the following, we show that $S_n^{\textsl{p}}$ is consistent uniformly over the function class
 \begin{multline*}\label{cgn}
 \cH_{n}^\sr=\Big\{\sol\in \cL_Z^2:\,\|LT(\sol-\phi(\cdot,\vartheta_0))\|_W^2\geq \sr \, n^{-1}\varsigma_\m
\text{ and}\sup_{z\in\cZ}|\sol(z)-\phi(z,\vartheta_0)|\leq C\Big\}
\end{multline*}
for some constant $C>0$ and $\thet$ denotes the probability limit of $\hthet$.
Similarly as in the previous section, it can be seen that
 $\cH_{n}^\sr$ only contains functions whose $\cL_Z^2$ distance to $\phi(\cdot,\vartheta_0)$ is at least $n^{-1}\varsigma_\m$ within a constant. For the next result let $q_{1\alpha}$ and $q_{2\alpha}$ denote the $1-\alpha$ quantile of $\cN(0,1)$ and $\sum_{j=1}^\infty\lambda_j^{\textsl{p}}\,\chi_{1j}^2$, respectively.
\begin{prop}\label{prop:par:unf:cons} Let Assumption  \ref{Ass:power} be satisfied.
For any $\varepsilon>0$, any  $0<\alpha<1$, and any sufficiently large constant $\rho>0$ we have under the conditions of Theorem \ref{prop:par} that
\begin{equation*}
  \lim_{n\to\infty}\inf_{\sol\in\cH_n^\sr}\PP\Big((\sqrt 2\,\varsigma_\m)^{-1}\big(n\,S_n^{\textsl{p}}-\mu_\m\big)>q_{1\alpha}\Big)\geq 1-\varepsilon,\label{prop:par:unf:cons:1}
\end{equation*}
whereas under the conditions of Theorem \ref{prop:par:tau} it holds
\begin{equation*}
 \lim_{n\to\infty}\inf_{\sol\in\cH_{n}^\sr}\PP\big(n\,S_n^{\textsl{p}}>q_{2\alpha}\big)\geq 1-\varepsilon.\label{prop:par:unf:cons:2}
\end{equation*}
\end{prop}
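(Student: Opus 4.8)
The plan is to reduce Proposition~\ref{prop:par:unf:cons} to the local-alternative analysis already carried out in Proposition~\ref{coro:norm:par}, combined with a uniform lower bound on the ``signal'' term $\sum_j \delta_{j\perp}^2$ coming from the definition of $\cH_n^\sr$. First I would decompose $n S_n^{\textsl{p}}$ exactly as in the proof of Theorem~\ref{prop:par}: writing $\phi(\cdot,\hthet)=\phi(\cdot,\thet)+\phi_\vartheta(\cdot,\thet)^t(\hthet-\thet)+R_n$ with $R_n$ controlled by Assumption~\ref{Ass:par}(ii) and $\|\hthet-\thet\|=O_p(n^{-1/2})$, one gets
\begin{equation*}
 n S_n^{\textsl{p}} = \sum_{j=1}^{\m}\tau_j\Big|n^{-1/2}\sum_{i=1}^n\big(Y_i-\phi(Z_i,\thet)\big)f_j(W_i) - b_{nj}^t\sqrt n(\hthet-\thet)+o_p(1)\Big|^2,
\end{equation*}
where $b_{nj}\to \Ex[f_j^\tau(W)\phi_\vartheta(Z,\thet)]$. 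Under a function $\sol\in\cH_n^\sr$ we have $Y_i-\phi(Z_i,\thet)=U_i+(\sol-\phi(\cdot,\thet))(Z_i)$, so the sample mean splits into a centered stochastic part and a bias part whose $j$th coordinate is $\sqrt n\,\tau_j^{1/2}\Ex[(\sol-\phi(\cdot,\thet))(Z)f_j(W)]$. The projection device $\delta_\perp$ (equivalently, the fact that the estimator ``absorbs'' the component of the deviation along $\phi_\vartheta$) shows that, after subtracting $b_{nj}^t\sqrt n(\hthet-\thet)$, the effective bias coordinate is $\sqrt n\,\tau_j^{1/2}\Ex[(\sol-\phi(\cdot,\thet))_\perp(Z)f_j(W)]$.

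Second, I would establish that the standardized statistic, \emph{uniformly over} $\sol\in\cH_n^\sr$, is stochastically bounded below by a normal (resp. weighted chi-square) variate with a location shift that diverges. For the Theorem~\ref{prop:par} regime ($\varsigma_\m^{-1}=o(1)$), expanding the square gives a cross term of order $\big(\text{signal}\big)^{1/2}\cdot(\text{Gaussian fluctuation})$ and a deterministic term $\sum_{j\le\m}\tau_j n\,\big(\Ex[(\sol-\phi)_\perp f_j]\big)^2 = n\|LT(\sol-\phi(\cdot,\thet))_\perp\|_W^2$ up to the truncation error. By Assumption~\ref{Ass:power}(i), $\|LT(\cdot)\|_W$ dominates a genuine norm, and the defining inequality $\|LT(\sol-\phi(\cdot,\vartheta_0))\|_W^2\ge \sr\, n^{-1}\varsigma_\m$ (together with a uniform bound showing the projection removes at most a bounded amount, using the $k$-dimensionality of $\phi_\vartheta$ and the $\sup_z$ bound $\le C$ in $\cH_n^\sr$) yields a deterministic term of size at least $c\,\rho\,\varsigma_\m$ for all large $n$. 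After dividing by $\sqrt2\,\varsigma_\m$ the location of the approximating normal is $\ge c\rho/\sqrt2$; the remaining stochastic terms (the centered quadratic form minus its mean, and the cross term) are $O_p(1)$ uniformly, by the same moment computations (Assumptions~\ref{Ass:A1}, \ref{Ass:A2}) used in Theorem~\ref{theo:norm:ind:main}, since those bounds do not depend on which $\sol\in\cH_n^\sr$ is in force. Choosing $\rho$ large enough that $c\rho/\sqrt2$ exceeds $q_{1\alpha}$ plus any slack then gives the claim via a Chebyshev/Markov tail bound applied uniformly. The Theorem~\ref{prop:par:tau} regime is analogous: $n\|LT(\sol-\phi)_\perp\|_W^2\ge c\rho$ for all large $n$ pushes the non-centrality parameters of the limiting $\sum_j\lambda_j^\textsl{p}\chi_{1j}^2(\cdot)$ beyond any fixed quantile $q_{2\alpha}$, uniformly.

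The main obstacle I anticipate is making the ``uniform in $\sol$'' part rigorous rather than pointwise: one must show that the $o_p(1)$ and $O_p(1)$ remainder terms in the expansion of $nS_n^{\textsl{p}}$ — in particular the contribution of the estimator expansion \eqref{prop:par:cond}, the Taylor remainder $R_n$, and the truncation tail $\sum_{j>\m}\tau_j\Ex[(\sol-\phi)_\perp f_j]^2$ — are bounded \emph{uniformly over} $\cH_n^\sr$. This is exactly where the extra constraint $\sup_{z\in\cZ}|\sol(z)-\phi(z,\vartheta_0)|\le C$ in the definition of $\cH_n^\sr$ is used: it gives a uniform $\cL_Z^4$ (indeed $\cL^\infty$) bound on the deviation, so the cross terms between the deviation and the empirical process, and between the deviation and $\hthet-\thet$, are controlled by the same second-moment bounds as in the i.i.d.\ case, with constants independent of $\sol$. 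A secondary point is that $\hthet$ need not be consistent for the ``true'' $\vartheta_0$ when $\sol\in\cH_n^\sr$ is not of parametric form; but since $\cH_n^\sr$ only requires closeness in a neighbourhood and the signal term is measured against $\phi(\cdot,\vartheta_0)$ with $\vartheta_0$ the probability limit of $\hthet$, the same projection bookkeeping goes through — the estimator can subtract at most the $k$-dimensional parametric component, leaving the orthogonal signal of size $\gtrsim\rho\varsigma_\m$ (resp. $\gtrsim\rho$) intact. Once this uniform control of remainders is in place, the conclusion follows by the argument above.
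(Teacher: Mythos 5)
Your plan to reduce the statement to the local-alternative/power analysis is a genuinely different route from the paper, and it contains a gap precisely at the point where you introduce the projection $\delta_\perp$. You write that after subtracting $b_{nj}^t\sqrt n(\hthet-\thet)$ the ``effective bias coordinate'' is $\sqrt n\,\tau_j^{1/2}\Ex[(\sol-\phi(\cdot,\thet))_\perp(Z)f_j(W)]$ and then assert the orthogonal signal is $\gtrsim \rho\,\varsigma_\m$. But the class $\cH_n^\sr$ only guarantees a lower bound on the \emph{unprojected} distance $\|LT(\sol-\phi(\cdot,\vartheta_0))\|_W^2\geq \sr n^{-1}\varsigma_\m$; nothing in its definition controls the component of $\sol-\phi(\cdot,\vartheta_0)$ in the span of $\phi_\vartheta$, so the projected signal could in principle be much smaller and your lower bound on it is not available.

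The deeper issue is that the projection picture is imported from the local-alternative regime (Proposition~\ref{coro:norm:par}), where $\sol-\phi(\cdot,\thet)$ shrinks at the same $n^{-1/2}$-scale as $\hthet-\thet$ and is therefore genuinely partially absorbed by the estimator. In the uniform-consistency regime the lower bound $\sr n^{-1}\varsigma_\m$ on the signal is of a strictly larger order than the $O_p(n^{-1})$ contribution of $\|\hthet-\thet\|^2$ because $\varsigma_\m\to\infty$ (resp.\ is bounded away from $0$), so $\hthet$ cannot absorb any asymptotically nontrivial part of the signal: $\thet$ is already the probability limit and the remaining fluctuation is too small. Consequently no projection is needed, and invoking one both complicates the argument and demands a bound that $\cH_n^\sr$ does not supply.

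The paper takes a simpler route that avoids this entirely: starting from the elementary inequality $(a-b)^2\geq a^2/2-b^2$, it lower-bounds $\PP\big((\sqrt 2\varsigma_\m)^{-1}(nS_n^{\textsl{p}}-\mu_\m)>q_{1\alpha}\big)$ by an event involving $\big\|n^{-1/2}\sum_i(\sol(Z_i)-\phi(Z_i,\vartheta_0))f_\umn^\tau(W_i)\big\|^2$ (unprojected), the centered term $\big\|n^{-1/2}\sum_iU_if_\umn^\tau(W_i)\big\|^2-\mu_\m$, a cross term, and the estimation contribution $\big\|n^{-1/2}\sum_i(\phi(Z_i,\hthet)-\phi(Z_i,\vartheta_0))f_\umn^\tau(W_i)\big\|^2$. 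One then shows the last term is $o_p(\varsigma_\m)$ (uniformly, using $\|\hthet-\vartheta_0\|=O_p(n^{-1/2})$ together with Assumption~\ref{Ass:par}(ii)), and that the cross term against $U_i$ may have $\phi(\cdot,\hthet)$ replaced by $\phi(\cdot,\vartheta_0)$ up to $o_p(\varsigma_\m)$. At that point the argument reduces verbatim to the proof of Proposition~\ref{prop:unf:cons} for the simple hypothesis, which uses only the unprojected lower bound $n\|LT(\sol-\phi(\cdot,\vartheta_0))\|_W^2\geq \sr\varsigma_\m$ from $\cH_n^\sr$ and the $\sup_z$ bound. Your draft is on the right track in recognising that the estimation-error terms must be shown negligible \emph{uniformly} and that the $\sup_z$ bound in $\cH_n^\sr$ supplies the moment control to do so; but you should drop the $\delta_\perp$ machinery and work directly with the unprojected signal, which is exactly the quantity the class $\cH_n^\sr$ bounds from below.
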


\section{A nonparametric test of exogeneity}
Endogeneity of regressors is a common problem in econometric applications. Falsely assuming exogeneity of the regressors leads to inconsistent estimators. On the other hand, treating exogenous regressors as if they were endogenous can lower the accuracy of estimation dramatically. In this section, we propose a test whether the vector of regressors $Z$ is exogenous, that is, $\Ex[U|Z]=0$ or equivalently $\sol(Z)=\Ex[Y|Z]$. In this section, let $\sol_0(Z)=\Ex[Y|Z]$ then the hypothesis under consideration is given by  $H_{\textsl{e}}:\, \sol=\sol_0$. The alternative hypothesis is that $\sol\neq \sol_0$. 
\subsection{The test statistic and its asymptotic distribution}
To establish a test of exogeneity, let us first introduce an estimator of the conditional mean of $Y$ given $Z$. This estimator is based on  a sequence of approximating functions $\{e_j\}_{j\geq 1}$ belonging to $\cL_Z^2$.
Further, let $\bold Z_k$ denote a $n\times k$ matrix with entries $e_j(Z_i)$ for $1\leq i\leq n$ and $1\leq j\leq k$. Moreover, let $\bold Y_n=(Y_1,\dots,Y_n)^t$.
Then we define the estimator 
\begin{equation}\label{est:ex}
\osol_k(\cdot):=
e_\uk(\cdot)^t\widehat \beta_k\quad\text{where }\quad \widehat \beta_k=(\bold Z_k^t\bold Z_k)^-\,\bold Z_k^t \bold Y_n.
\end{equation}
 In contrast to the parametric case we need to allow for $k$ tending to infinity as $n\to\infty$ in order to ensure consistency of the estimator $\osol_k$. Under conditions given below $\bold Z_\k^t\bold Z_\k$ will be nonsingular with probability approaching one and hence its generalized inverse will be the standard inverse. Note that the asymptotic behavior of the estimator $\osol_k$ was studied, for example, by \cite{Newey1997}. 

Under Assumptions \ref{Ass:A3} and \ref{Ass:A4}, the null hypothesis $H_{\textsl{e}}$ is equivalent to $L(g-T\sol_0)=0$. Consequently, our test of exogeneity of $Z$ is based on the goodness-of-fit statistic $S_n$ introduced in \eqref{sn} but where $\sol_0$ is replaced by the series estimator $\osol_\k$.
The proposed test statistic for $H_{\textsl{e}}$ is now given by 
\begin{equation*}
 S_n^{\textsl{e}}=\sum_{j=1}^\m\tau_j\big|n^{-1}\sum_{i=1}^n\big(Y_i-\osol_\k(Z_i)\big)f_j(W_i)\big|^2
\end{equation*}
where $\k$ and $\m$ tend to infinity as $n\to\infty$. The hypothesis of exogeneity of $Z$ has to be rejected if $S_n^{\textsl{e}}$ becomes too large. 

For controlling the bias of the estimator $\osol_\k$ we specify in the following a rate of approximation (cf. \cite{Newey1997}).
Let $\sw=(\sw_j)_{j\geq 1}$ be a nondecreasing sequence with $\sw_1=1$. We assume that $\sol_0$ belongs to
\begin{equation*}
 \cF_{\sw} := \big\{\phi\in \cL_Z^2: \sup_{z\in\cZ}|\phi(z)-e_\ukn(z)^t\beta_\k|^2=O(\sw_\k^{-1})\text{ for some } \beta_\k\in\mathbb R^\k\big\}.
\end{equation*}
Here, the sequence of weights $\sw$ measures the approximation error of $\sol_0$ with respect to the functions $\{e_j\}_{j\geq1}$.

\begin{assA}\label{Ass:ex} (i) Let $\sol_0\in\cF_\sw$ with nondecreasing sequence $\sw$ satisfying $j^2=o(\sw_j)$.
 (ii) There exists some constant $\eta_e\geqslant 1$ such that  $\sup_{z\in\cZ}\|e_\ukn(z)\|^2\leqslant \eta_e\k$.
 (iii) The smallest eigenvalue of $\Ex[e_\uk(Z)e_\uk(Z)^t]$ is bounded away from zero uniformly in $k$. (iv) $\Ex[U^2|Z]$ is bounded.
\end{assA}
Assumption \ref{Ass:ex} $(i)$ determines the required asymptotic behavior of the rate $\sw$. For splines and power series this assumption is satisfied if the number of continuous derivatives of $\sol_0$ divided by the dimension of $Z$ equals two. Assumption \ref{Ass:ex} $(ii)$ and $(iii)$ restrict the magnitude of the approximating functions $\{e_j\}_{j\geq1}$ and impose nonsingularity of their second moment matrix.

We are now in the position to proof the following asymptotic result for the standardized test statistic $S_n^\text e$. Here, a key requirement is that $\k=o(\varsigma_\m)$ implying that $\k=o(\sum_{j=1}^\m\tau_j)$ and, in particular, $\k=o(\m)$ if the smoothing operator $L$ is the identity.
\begin{theo}\label{prop:ex} Let  Assumptions \ref{Ass:A1}--\ref{Ass:A4} and \ref{Ass:ex} be satisfied. If
\begin{equation}\label{cond:prop:ex}
n=o(\sw_\k\varsigma_\m), \,\k=o(\varsigma_\m),\quad \text{and}\quad \Big(\sum_{j=1}^\m\tau_j\Big)^3=o(n)
\end{equation}then under $H_{\textsl{e}}$ it holds
\begin{equation*}
 (\sqrt 2\varsigma_\m)^{-1}\big(n\,S_n^{\textsl{e}}-\mu_\m\big)\stackrel{d}{\rightarrow} \cN(0,1).
\end{equation*}
\end{theo}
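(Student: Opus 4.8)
The plan is to compare the feasible statistic $S_n^{\textsl{e}}$ with the infeasible ``oracle'' statistic $S_n$ from \eqref{sn} built with the true regression function $\sol_0=\Ex[Y\mid Z]$, and to show that substituting $\osol_\k$ for $\sol_0$ perturbs $n\,S_n$ only by a $o_p(\varsigma_\m)$ term. Under $H_{\textsl{e}}$ one has $Y_i-\osol_\k(Z_i)=U_i-(\osol_\k-\sol_0)(Z_i)$, so with $a_j:=n^{-1/2}\sum_i U_if_j(W_i)$ and $b_j:=-n^{-1/2}\sum_i(\osol_\k-\sol_0)(Z_i)f_j(W_i)$ we obtain the exact identity $n\,S_n^{\textsl{e}}=\sum_{j=1}^\m\tau_j(a_j+b_j)^2=n\,S_n+\Delta_n+2\Gamma_n$ with $\Delta_n:=\sum_{j=1}^\m\tau_j b_j^2$ and $\Gamma_n:=\sum_{j=1}^\m\tau_j a_jb_j$. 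Since $\sol_0=\sol$ and $\Ex[U\mid W]=0$ under $H_{\textsl{e}}$, and since \eqref{cond:prop:ex} implies \eqref{cond:theo:norm} --- the condition $(\sum_{j\le\m}\tau_j)^3=o(n)$ is common to both, while $\k=o(\varsigma_\m)$ with $\k\to\infty$ forces $\varsigma_\m\to\infty$ --- Theorem~\ref{theo:norm:ind:main} yields $(\sqrt2\,\varsigma_\m)^{-1}(n\,S_n-\mu_\m)\stackrel{d}{\rightarrow}\cN(0,1)$. It therefore remains to show $\Delta_n=o_p(\varsigma_\m)$ and $\Gamma_n=o_p(\varsigma_\m)$.

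Next I would linearise $\osol_\k$. Let $\beta_\k$ be the approximating coefficient vector from the definition of $\cF_\sw$ and $\bold r_n:=(\sol_0(Z_i)-e_\uk(Z_i)^t\beta_\k)_{i=1}^n$, so that $\|\bold r_n\|^2=O(n\sw_\k^{-1})$ by Assumption~\ref{Ass:ex}$(i)$; Assumption~\ref{Ass:ex}$(ii)$--$(iii)$ (with a Frobenius-norm bound on $n^{-1}\bold Z_\k^t\bold Z_\k-\Ex[e_\uk(Z)e_\uk(Z)^t]$, which is $o_p(1)$ because $\k^2=o(n)$) ensure $\bold Z_\k^t\bold Z_\k$ is invertible with probability approaching one and $\|(n^{-1}\bold Z_\k^t\bold Z_\k)^{-1}\|_{op}=O_p(1)$. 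Writing $\Pi_\k$ for the orthogonal projection onto the column space of $\bold Z_\k$ and using $\bold Y_n=\bold Z_\k\beta_\k+\bold r_n+\bold U_n$, one gets $(\osol_\k-\sol_0)(Z_i)=(\Pi_\k\bold U_n)_i+((\Pi_\k-I)\bold r_n)_i$, whence in matrix form, with $\bold W_\m(\tau)$ as defined in the paper and $M:=\bold W_\m(\tau)\bold W_\m(\tau)^t$, the vector $(\sqrt{\tau_j}\,b_j)_j$ equals $-n^{-1/2}\bold W_\m(\tau)^t\big(\Pi_\k\bold U_n+(\Pi_\k-I)\bold r_n\big)$. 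A step I would isolate as a lemma is the bound $\|M\|_{op}=\lambda_{\max}\!\big(\bold W_\m(\tau)^t\bold W_\m(\tau)\big)=O_p(n)$: since $\{f_l\}$ is orthonormal in $\cL_\nu^2$ and $p_W/\nu\le\eta_p$, the population Gram matrix $\Ex[f_\um^\tau(W)f_\um^\tau(W)^t]$ has spectral norm at most $\eta_p$, and Assumption~\ref{Ass:A1} together with $(\sum_{j\le\m}\tau_j)^3=o(n)$ makes the empirical Gram matrix concentrate around it.

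For $\Delta_n$ I would split into the stochastic part $n^{-1}\bold U_n^t\Pi_\k M\Pi_\k\bold U_n$ and the bias part $n^{-1}\big((\Pi_\k-I)\bold r_n\big)^tM\big((\Pi_\k-I)\bold r_n\big)$, plus a cross term bounded by Cauchy--Schwarz. Conditioning on $(Z,W)$ and using exogeneity with Assumptions~\ref{Ass:A2} and~\ref{Ass:ex}$(iv)$, the stochastic part has mean at most $\sigma_*^2\,\Ex\,\text{trace}(M\Pi_\k)\le\sigma_*^2\,\Ex[\k\|M\|_{op}]=O(\k n)$ (with $\sigma_*^2$ a bound on $\Ex[U^2\mid Z]$), its variance is controlled by $\|\Pi_\k M\Pi_\k\|_F^2\le\k\|M\|_{op}^2$, so this part is $n^{-1}O_p(\k n)=O_p(\k)=o_p(\varsigma_\m)$ since $\k=o(\varsigma_\m)$; the bias part is at most $n^{-1}\|M\|_{op}\|\bold r_n\|^2=O_p(n\sw_\k^{-1})=o_p(\varsigma_\m)$ since $n=o(\sw_\k\varsigma_\m)$. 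For $\Gamma_n$ the naive bound $|\Gamma_n|\le(n\,S_n)^{1/2}\Delta_n^{1/2}$ is too weak (it would require $\mu_\m=o(\varsigma_\m)$, which need not hold); instead I would write $\Gamma_n=-n^{-1}\bold U_n^tM\Pi_\k\bold U_n-n^{-1}\bold U_n^tM(\Pi_\k-I)\bold r_n$ and handle the two terms separately. For the first, $M\Pi_\k$ is not symmetric, but writing $\Pi_\k=\sum_a u_au_a^t$ over an orthonormal basis of its range, the conditional mean $-n^{-1}\text{trace}(M\Pi_\k D)$ (with $D$ a bounded diagonal matrix of conditional variances) obeys $|\text{trace}(M\Pi_\k D)|\le\sum_a\|D\|_{op}\|u_a\|\,\|Mu_a\|\le\k\|D\|_{op}\|M\|_{op}=O_p(\k n)$, while its fluctuation is controlled, after symmetrisation, by $\|M\Pi_\k\|_F\le\k^{1/2}\|M\|_{op}$, so this term is $O_p(\k)=o_p(\varsigma_\m)$; the second term is linear in $\bold U_n$, has conditional mean zero and conditional variance at most $n^{-2}\sigma_*^2\|M\|_{op}^2\|\bold r_n\|^2=O_p(n\sw_\k^{-1})$, hence it is $O_p\big((n\sw_\k^{-1})^{1/2}\big)=o_p(\varsigma_\m)$ because $n\sw_\k^{-1}=o(\varsigma_\m)$ and $\varsigma_\m\to\infty$. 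Combining, $n\,S_n^{\textsl{e}}=n\,S_n+o_p(\varsigma_\m)$, and the claim follows from the oracle central limit theorem.

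The main obstacle is the sharp accounting just sketched: one must show that replacing $\sol_0$ by the $\k$-dimensional series estimator inflates the \emph{unstandardised} statistic by only $O_p(\k)$ (plus the controllable bias contribution $O_p(n\sw_\k^{-1})$), rather than by a term of order $\k\sum_{j\le\m}\tau_j$ that crude norm bounds would produce. This is exactly why the rate restrictions $\k=o(\varsigma_\m)$ and $n=o(\sw_\k\varsigma_\m)$ in \eqref{cond:prop:ex} are the natural ones, and it hinges on two features acting together: the estimation error lies in the $\k$-dimensional column space of $\bold Z_\k$, and the smoothing/Gram matrix $M=\bold W_\m(\tau)\bold W_\m(\tau)^t$ has spectral norm only $O_p(n)$ --- the latter being precisely why orthonormality of $\{f_l\}$ with respect to $\nu$ and the density bound $p_W/\nu\le\eta_p$ (Assumption~\ref{Ass:A1}) are imposed. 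A secondary technical nuisance is the bookkeeping of the conditional moments of $U$: exogeneity makes $\Ex[\bold U_n\bold U_n^t\mid Z]$ diagonal, which is used repeatedly, while boundedness of $\Ex[U^2\mid Z]$ (Assumption~\ref{Ass:ex}$(iv)$) and the fourth-moment bound of Assumption~\ref{Ass:A2} control the means and variances of the quadratic forms appearing in $\Delta_n$ and $\Gamma_n$.
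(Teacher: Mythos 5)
Your high-level plan coincides with the paper's: under $H_{\textsl e}$ write $n\,S_n^{\textsl e}=n\,S_n+\Delta_n+2\Gamma_n$ where $S_n$ is the oracle statistic with the true $\sol_0$, invoke Theorem~\ref{theo:norm:ind:main} for $n\,S_n$ (and you correctly observe that \eqref{cond:prop:ex} implies \eqref{cond:theo:norm}), and then show the two remainders are $o_p(\varsigma_\m)$. Your linearisation $(\osol_\k-\sol_0)(Z_i)=(\Pi_\k\bold U_n)_i+((\Pi_\k-I)\bold r_n)_i$ and the operator-norm bound $\|M\|_{op}=O_p(n)$ for $M=\bold W_\m(\tau)\bold W_\m(\tau)^t$ are correct and attractive, and you rightly observe that a naive Cauchy--Schwarz bound on $\Gamma_n$ would only give $O_p(\sqrt{\mu_\m\k})$, which is not small enough.

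However, there is a genuine gap in the way you handle the conditional moments of $U$. You repeatedly ``condition on $(Z,W)$'' and treat $\Ex[\bold U_n\bold U_n^t\mid Z,W]$ as a bounded diagonal matrix $D$, writing the means of your quadratic forms as traces against $D$ (e.g.\ $-n^{-1}\mathrm{trace}(M\Pi_\k D)$ for the first piece of $\Gamma_n$, and ``conditional mean zero'' for the linear piece $-n^{-1}\bold U_n^tM(\Pi_\k-I)\bold r_n$). This needs $\Ex[U\mid Z,W]=0$ and $\Ex[U^2\mid Z,W]$ bounded. Neither holds in the model: the null $H_{\textsl e}$ gives $\Ex[U\mid Z]=0$, and the structural model gives $\Ex[U\mid W]=0$, but these together do \emph{not} imply $\Ex[U\mid Z,W]=0$ (one can construct doubly centred $h(Z,W)=\Ex[U\mid Z,W]\neq 0$ with $\Ex[h\mid Z]=\Ex[h\mid W]=0$). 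Likewise, the assumptions bound $\Ex[U^2\mid Z]$ (Assumption~\ref{Ass:ex}$(iv)$) and $\Ex[U^4\mid W]$ (Assumption~\ref{Ass:A2}), but not $\Ex[U^2\mid Z,W]$. Concretely, for $i\neq i'$ the term $\Ex\big[(M\Pi_\k)_{ii'}U_iU_{i'}\big]$ does \emph{not} vanish because $(M\Pi_\k)_{ii'}$ depends jointly on $W_i$ and $Z_i$, so you can condition neither on $Z$ alone nor on $W$ alone to make $U_i$ drop out. The $\Delta_n$ part of your argument survives (and is in fact slicker than the paper's) because there $\Pi_\k M\Pi_\k\preceq\|M\|_{op}\Pi_\k$ is PSD and $\Pi_\k$ depends only on $Z$, so one can bound $\bold U_n^t\Pi_\k M\Pi_\k\bold U_n\le\|M\|_{op}\,\bold U_n^t\Pi_\k\bold U_n$ and then use $\Ex[\bold U_n^t\Pi_\k\bold U_n\mid Z]\le\sigma_*^2\k$; but for $\Gamma_n$ the matrix $M\Pi_\k$ is not positive and no such sandwich applies, so the off-diagonal terms remain.

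This is precisely what the paper's decomposition is engineered to avoid: instead of keeping the empirical Gram matrix $M$ in the quadratic form, the proof of Theorem~\ref{prop:ex} inserts the \emph{population} cross matrix $\Ex[f_\um^\tau(W)e_\ukn(Z)^t]=[\op]_{\umn\ukn}$ and pays separately for the sampling error of that matrix (the $B_{n2}$ and $C_{n1}$ terms). Each resulting term then has a product structure where one factor depends only on $(Y_i,Z_i)$ and the other only on $(U_i,W_i)$ with constant weights $[\op]_{jl}$, so $\Ex[U\mid Z]=0$ (or $\Ex[(E_\k\sol_0(Z)-Y)e_l(Z)]=0$) or $\Ex[U\mid W]=0$ can be applied term by term without ever needing $\Ex[U\mid Z,W]=0$. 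To repair your $\Gamma_n$ bound you would need an analogous decoupling step, e.g.\ replacing $M$ by $n\,\Ex[f_\um^\tau(W)f_\um^\tau(W)^t]$ inside $\bold U_n^tM\Pi_\k\bold U_n$ and controlling the residual, or switching to the paper's coefficient-space decomposition.
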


\begin{example}\label{exmp:cases}
Let $Z$ be continuously distributed with $\dim(Z)=r$ and set $L=\text{Id}$.
 Consider the polynomial case where $\sw_j\sim j^{2p/r}$ with $p>1$ and let $\m\sim n^\nu$ with $0<\nu<1/3$. Let Assumption \ref{Ass:power} hold true then $\sqrt \m=O(\varsigma_\m)$.
Hence, condition \eqref{cond:prop:ex} is satisfied if $\k\sim n^\kappa$ with
\begin{equation}\label{cond:prop:ex:cases}
 r(1-\nu/2)/(2p)<\kappa<\nu/2.
\end{equation}
This ensures that the bias of this estimator in the statistic $S_n^{\textsl{e}}$ is asymptotically negligible.
Note that condition \eqref{cond:prop:ex:cases} requires $2p>r\,(2/\nu-1)$. Hence, with a larger dimension $r$ also the smoothness of $\sol_0$ has to increase, reflecting the curse of dimensionality.
$\hfill\square$
\end{example}

The next result states an asymptotic distribution result for the statistic $S_n^{\textsl{e}}$ if $\sum_{j=1}^\m\tau_j=O(1)$.
Let  $\varSigma^\textsl{e}$ be the covariance matrix of the infinite dimensional centered vector $\big(U(f_j^\tau(W)-\sum_{l\geq 1}\Ex[f_j^\tau(W)e_l(Z)]e_l(Z))\big)_{j\geq 1}$.
The ordered eigenvalues of $\varSigma^\textsl{e}$ are denoted by $(\lambda_j^\textsl{e})_{j\geq1}$.
\begin{theo}\label{prop:ex:tau}Let  Assumptions \ref{Ass:A1}--\ref{Ass:A4} and \ref{Ass:ex} be satisfied.
If
\begin{equation}\label{prop:ex:tau:cond}
 \sum_{j=1}^\m\tau_j=O(1),\quad n=O(\sw_\k),\quad k_n^3=o(n),\text{ and}\quad m_n^{-1}=o(1)
\end{equation}
then under $H_{\textsl{e}}$ it holds
\begin{equation*}
 n\,S_n^{\textsl{e}}\stackrel{d}{\rightarrow} 
\sum_{j=1}^\infty\lambda_j^\textsl{e}\,\chi_{1j}^2.
\end{equation*}
\end{theo}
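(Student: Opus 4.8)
The plan is to reduce $n\,S_n^{\textsl{e}}$, up to an $o_p(1)$ remainder, to $\big\|n^{-1/2}\sum_{i=1}^n\zeta_i\big\|^2$, where $\zeta_i:=\big(U_i(f_j^\tau(W_i)-\sum_{l\ge1}\Ex[f_j^\tau(W)e_l(Z)]e_l(Z_i))\big)_{1\le j\le\m}$ is the $\mathbb R^{\m}$-vector whose infinite version defines $\varSigma^{\textsl{e}}$, and then to invoke the central limit argument already carried out for Theorem \ref{theo:norm:ind:main:tau}; since $S_n^{\textsl{e}}$ is $S_n$ with $\sol_0$ replaced by the series estimator $\osol_\k$, the whole proof amounts to controlling this replacement. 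By Assumption \ref{Ass:ex}(ii)--(iii) and $\k^3=o(n)$ one has $\Ex\|n^{-1}\bold Z_\k^t\bold Z_\k-\Ex[e_\uk(Z)e_\uk(Z)^t]\|^2=O(\k^2/n)=o(1)$, so with probability tending to one $n^{-1}\bold Z_\k^t\bold Z_\k$ is invertible and well conditioned, and I argue on that event. Under $H_{\textsl{e}}$ (where $\sol_0(Z)=\Ex[Y|Z]$ and $\Ex[U|Z]=0$) write $Y_i-\osol_\k(Z_i)=U_i-r_{\k,i}-u_{\k,i}$, with $r_{\k,i}$ the in-sample least-squares approximation error of $(\sol_0(Z_i))_i$ by $(e_\uk(Z_i))_i$ and $u_{\k,i}:=e_\uk(Z_i)^t(\bold Z_\k^t\bold Z_\k)^{-1}\bold Z_\k^t\bold U_n$ where $\bold U_n:=(U_1,\dots,U_n)^t$; then for each $j$,
\begin{equation*}
 n^{-1/2}\sum_{i=1}^n(Y_i-\osol_\k(Z_i))f_j^\tau(W_i)=n^{-1/2}\sum_{i=1}^nU_if_j^\tau(W_i)+B_j+C_j,
\end{equation*}
with $B_j:=-n^{-1/2}\sum_i r_{\k,i}f_j^\tau(W_i)$ and $C_j:=-\big(n^{-1}\sum_i f_j^\tau(W_i)e_\uk(Z_i)^t\big)(n^{-1}\bold Z_\k^t\bold Z_\k)^{-1}\big(n^{-1/2}\sum_l e_\uk(Z_l)U_l\big)$.

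The two terms $B_j$ and $C_j$ are shown negligible. For $B_j$ I use the normal equations $\sum_i r_{\k,i}e_l(Z_i)=0$ ($l\le\k$) to replace $f_j^\tau$ by its in-sample residual after regression on the $e_l$'s, then Cauchy--Schwarz together with $\sol_0\in\cF_\sw$ (so $n^{-1}\sum_i r_{\k,i}^2=O_p(\sw_\k^{-1})$) and the bound $\sum_{j\ge1}\|(\mathrm{Id}-\Pi_\k)\Ex[f_j^\tau(W)|\cdot]\|_Z^2=o(1)$ as $\k\to\infty$ (dominated convergence, $\Pi_\k$ the $\cL_Z^2$-projection onto $\{e_1,\dots,e_\k\}$ and $\sum_j\|\Ex[f_j^\tau(W)|\cdot]\|_Z^2\le\eta_p\sum_j\tau_j^2<\infty$); this gives $\sum_{j=1}^\m B_j^2=O_p(n\sw_\k^{-1})\,o(1)=o_p(1)$, and it is precisely this double use of orthogonality that makes $n=O(\sw_\k)$ suffice here, unlike the stronger $n=o(\sw_\k)$ needed in Theorem \ref{prop:ex}. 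For $C_j$, replace $n^{-1}\sum_i f_j^\tau(W_i)e_\uk(Z_i)^t$ by $\Ex[f_j^\tau(W)e_\uk(Z)^t]$ and $(n^{-1}\bold Z_\k^t\bold Z_\k)^{-1}$ by $\Ex[e_\uk(Z)e_\uk(Z)^t]^{-1}$: using Assumption \ref{Ass:ex}(ii),(iv) one has $\Ex\|n^{-1/2}\sum_l e_\uk(Z_l)U_l\|^2=O(\k)$, $\sum_j\Ex\|n^{-1}\sum_i f_j^\tau(W_i)e_\uk(Z_i)^t-\Ex[f_j^\tau(W)e_\uk(Z)^t]\|^2=O(\k/n)$ and $\|(n^{-1}\bold Z_\k^t\bold Z_\k)^{-1}-\Ex[e_\uk e_\uk^t]^{-1}\|=O_p(\k/\sqrt n)$, whence the remainder terms in $\sum_j C_j^2$ are of orders $O_p(\k^2/n)$, $O_p(\k^3/n)$ and smaller --- the $O_p(\k^3/n)$ term being the binding one and the reason for requiring $\k^3=o(n)$. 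A last replacement of $\Pi_\k\Ex[f_j^\tau(W)|\cdot]$ by $\Ex[f_j^\tau(W)|\cdot]=\sum_{l\ge1}\Ex[f_j^\tau(W)e_l(Z)]e_l(\cdot)$ costs only $\sup_z\Ex[U^2|Z=z]\sum_{j\ge1}\|(\mathrm{Id}-\Pi_\k)\Ex[f_j^\tau(W)|\cdot]\|_Z^2=o(1)$ in trace (Assumption \ref{Ass:ex}(iv)), so $n^{-1/2}\sum_i(Y_i-\osol_\k(Z_i))f_j^\tau(W_i)=n^{-1/2}\sum_i\zeta_{ij}+o_p(\cdot)$.

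Collecting terms, $n\,S_n^{\textsl{e}}=\big\|n^{-1/2}\sum_{i=1}^n\zeta_i\big\|^2+o_p(1)$ with the $\zeta_i$ iid, centred under $H_{\textsl{e}}$ (the $f_j^\tau(W)$-part because $\Ex[U|W]=0$, the projection part because $\Ex[U|Z]=0$), and having covariance equal to the $\m\times\m$ principal block of $\varSigma^{\textsl{e}}$. Since $\sum_j\tau_j=O(1)$, Assumptions \ref{Ass:A1}(i) and \ref{Ass:A2} give $\mathrm{tr}\,\varSigma^{\textsl{e}}\lesssim\sigma^2\sqrt{\eta_p\eta_f}\sum_j\tau_j^2<\infty$, so $\varSigma^{\textsl{e}}$ is trace class, its eigenvalues $(\lambda_j^{\textsl{e}})$ are summable, $\sum_j\lambda_j^{\textsl{e}}\chi_{1j}^2$ is a.s.\ finite, and the $\m$-truncations converge to $\varSigma^{\textsl{e}}$ in trace norm. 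The Lindeberg-type argument from the proof of Theorem \ref{theo:norm:ind:main:tau}, applied with $f_j^\tau(W)$ everywhere replaced by $f_j^\tau(W)-\sum_l\Ex[f_j^\tau(W)e_l(Z)]e_l(Z)$ and using only $m_n^{-1}=o(1)$, then yields $\big\|n^{-1/2}\sum_i\zeta_i\big\|^2\stackrel{d}{\rightarrow}\sum_{j\ge1}\lambda_j^{\textsl{e}}\chi_{1j}^2$, which is the assertion.

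The delicate point is the bias step: the crude estimate $|B_j|=O_p(\sqrt n\,\sw_\k^{-1/2})$ only gives $\sum_j B_j^2=O_p(n/\sw_\k)$, which does \emph{not} vanish when $n\sim\sw_\k$; genuinely exploiting both that $r_\k$ is $\cL_Z^2$-orthogonal to $\mathrm{span}\{e_1,\dots,e_\k\}$ \emph{and} that $\Pi_\k\Ex[f_j^\tau(W)|\cdot]\to\Ex[f_j^\tau(W)|\cdot]$ in $\cL_Z^2$ is what produces the saving $o(1)$. The accounting for $C_j$ is routine but must be kept carefully to see that $\k^3=o(n)$ is exactly what is needed; once the reduction to $\big\|n^{-1/2}\sum_i\zeta_i\big\|^2$ is in place, everything else is inherited from Theorem \ref{theo:norm:ind:main:tau}.
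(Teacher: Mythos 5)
Your overall structure --- decompose $n^{-1/2}\sum_i(Y_i-\osol_\k(Z_i))f_j^\tau(W_i)$ into the $U$-term, a projection term $C_j$ whose leading part contributes the centred sum $\zeta$, and a bias term $B_j$, and then run the central limit argument from Theorem \ref{theo:norm:ind:main:tau} --- is the same as the paper's, and your handling of $C_j$ and the final distributional step are sound. The gap is in the bound for $B_j=-n^{-1/2}\sum_ir_{\k,i}f_j^\tau(W_i)$. After using the normal equations to replace $f_j^\tau(W_i)$ by its in-sample residual $\tilde f_{ji}$ from regression on $\{e_l(Z_\cdot)\}_{l\le\k}$, Cauchy--Schwarz gives $\sum_jB_j^2\le\big(\sum_ir_{\k,i}^2\big)\cdot n^{-1}\sum_j\sum_i\tilde f_{ji}^2$; but the quantity $n^{-1}\sum_i\tilde f_{ji}^2$ behaves like $\Ex\big|f_j^\tau(W)-E_\k\Ex[f_j^\tau(W)|Z]\big|^2=\Ex[\mathrm{Var}(f_j^\tau(W)|Z)]+\|(\mathrm{Id}-E_\k)\Ex[f_j^\tau(W)|Z]\|_Z^2$, whose dominant part is the $Z$-conditional variance, not the projection defect that vanishes as $\k\to\infty$. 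Summing over $j$ this is $O(\sum_j\tau_j)=O(1)$, not $o(1)$, so your Cauchy--Schwarz only yields $\sum_jB_j^2=O_p(n\sw_\k^{-1})=O_p(1)$ under $n=O(\sw_\k)$; the factorisation into $O_p(n\sw_\k^{-1})\,o(1)$ is not justified as written.

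What your orthogonality argument does control is the conditional-mean part $\sum_j\big(\Ex[B_j\,|\,Z_1,\dots,Z_n]\big)^2$, since $\sum_ir_{\k,i}\Ex[f_j^\tau(W)|Z_i]=\sum_ir_{\k,i}\,(\mathrm{Id}-E_\k)\Ex[f_j^\tau(W)|Z_i]$. The conditional-noise part $\sum_j\mathrm{Var}(B_j\,|\,Z_1,\dots,Z_n)=n^{-1}\sum_ir_{\k,i}^2\sum_j\mathrm{Var}(f_j^\tau(W)|Z_i)$ needs a separate estimate, e.g.\ a sup-norm bound on $r_{\k,i}$ (using $\sol_0\in\cF_\sw$ and Assumption \ref{Ass:ex}(ii)) combined with $\Ex\big[\sum_j\mathrm{Var}(f_j^\tau(W)|Z)\big]\le\eta_p\sum_j\tau_j<\infty$. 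The paper's decomposition sidesteps the difficulty: its bias term $D_{nj}=-n^{-1/2}\sum_i\Ex[f_j(W)e_\ukn(Z)^t]e_\ukn(Z_i)E_\k^\perp\sol_0(Z_i)$ replaces $f_j(W_i)$ by the $Z$-measurable quantity $\Ex[f_j(W)e_\ukn(Z)^t]e_\ukn(Z_i)$, so its summands are \emph{exactly} centred via $\Ex[e_\ukn(Z)E_\k^\perp\sol_0(Z)]=0$, there is no conditional-noise term, and $\sum_j\tau_j\Ex D_{nj}^2=O(\k/\sw_\k)=o(1)$ follows directly from $\sup_z\|e_\ukn(z)\|^2\le\eta_e\k$ alone --- which is exactly why $n=O(\sw_\k)$ (rather than $n=o(\sw_\k)$) suffices, without the delicacy you are forced to negotiate. (As a minor side remark, $\|\Ex[f_j^\tau(W)|\cdot]\|_Z^2\le\eta_p\tau_j$, not $\eta_p\tau_j^2$; this does not affect your conclusion.)
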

\begin{example}\label{exmp:cases:2}
Consider the setting of Example \ref{exmp:cases} but where the eigenvalues of $L$ satisfy $\tau_j\sim j^{-2}$. 
 Condition \eqref{prop:ex:tau:cond} is satisfied if $\m\sim n^\nu$ for some $\nu>0$ and $\k\sim n^\kappa$ with $r/(2p)<\kappa<1/3$.
Here, the required smoothness of $\sol_0$ is $p>3r/2$.
 In contrast to the setting of Theorem \ref{prop:ex}, the estimator of $\sol_0$ needs to be undersmoothed. 
 This ensures that the bias of this estimator in the statistic $S_n^{\textsl{e}}$ is asymptotically negligible.
$\hfill\square$
\end{example}
\begin{rem}
 In contrast to \cite{Blundell2007} no smoothness assumptions on the joint distribution of $(Z,W)$ is required here. In addition, we do not need any assumption that links the smoothness of the regression function $\sol_0$ to the smoothness of the joint density of $(Z,W)$.
\hfill$\square$\end{rem}

\begin{rem}[Estimation of Critical Values]\label{rem:crit:ex}
 For the estimation of critical values of Theorem  \ref{prop:ex} and \ref{prop:ex:tau}, let us define $\bold U_n^{\textsl{e}}=\big(Y_1-\osol_\k(Z_1),\dots,Y_n-\osol_\k(Z_n)\big)^t$. For any $m\geq 1$ we estimate the covariance matrix $\varSigma_m$ by 
\begin{equation*}
 \widehat\varSigma_m:=n^{-1}\,\bold W_m(\tau)^t\,\text{diag}(\bold U_n^{\textsl{e}})^2\,\bold W_m(\tau).
\end{equation*}
Now the asymptotic result of Theorem \ref{prop:ex} continues to hold if we replace $\varsigma_\m$ by the  Frobenius norm of $\widehat\varSigma_\m$ and $\mu_\m$ by the trace of $\widehat\varSigma_\m$. This consistency is shown in Lemma \ref{lem:crit}. In the setting of Theorem \ref{prop:ex:tau}, we replace $\varSigma^\textsl{e}$ by a finite dimensional matrix
\begin{equation*}
  \widehat\varSigma_M^{\textsl{e}}:=n^{-1}
\bold W_M(\tau)^t\Big(I_n-n^{-1}\bold Z_\k\bold Z_\k^t\Big)
\,\text{diag}(\bold U_n^{\textsl{e}})^2\,\Big(I_n-n^{-1}\bold Z_\k\bold Z_\k^t\Big)\bold W_M(\tau)
\end{equation*}
where $M>0$ is a sufficiently large integer. Let $(\widehat\lambda_j^{\textsl{e}})_{1\leq j\leq M_n}$ denote the ordered eigenvalues of $\widehat\varSigma_{M_n}^{\textsl{e}}$.
Hence, we approximate $\sum_{j=1}^\infty \lambda_j^\textsl{e}\chi_{1j}^2$ by the finite sum $\sum_{j=1}^{M_n}\widehat\lambda_j^{\textsl{e}}\,\chi_{1j}^2$ where $\max_{1\leq j\leq M_n}|\widehat\lambda_j^{\textsl{e}}-\lambda_j^{\textsl{e}}|=o_p(1)$  if $M_n=o(\sqrt n)$. 
\hfill$\square$\end{rem}
\begin{lem}\label{lem:crit}
Consider $\widehat\varSigma_\m$ as defined in Remark \ref{rem:crit:ex}. Under conditions of Theorem \ref{prop:ex} or Theorem \ref{prop:ex:tau} the difference of its Frobenius norm to $\varsigma_\m$ and its trace to $\mu_\m$  converge in probability to zero.
\end{lem}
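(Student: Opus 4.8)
The plan is to prove the two stronger facts that $\|\widehat\varSigma_\m-\varSigma_\m\|_F=o_p(1)$ and $\operatorname{tr}(\widehat\varSigma_\m)-\mu_\m=o_p(1)$; the first gives the Frobenius-norm claim by $\bigl|\,\|\widehat\varSigma_\m\|_F-\varsigma_\m\bigr|\le\|\widehat\varSigma_\m-\varSigma_\m\|_F$, while the trace needs a separate argument since $\operatorname{tr}$ and $\|\cdot\|_F$ differ by a factor $\sqrt\m$. Under $H_{\textsl e}$ one has $\widehat U_i:=Y_i-\osol_\k(Z_i)=U_i-\Delta_i$ with $\Delta_i:=\osol_\k(Z_i)-\sol_0(Z_i)$, so that $\widehat U_i^2=U_i^2-2U_i\Delta_i+\Delta_i^2$ and $\widehat\varSigma_\m=(A)+(B)+(C)$, where $(A)$, $(B)$, $(C)$ are the $\m\times\m$ matrices with $(j,l)$-entries $n^{-1}\sum_iU_i^2f_j^\tau(W_i)f_l^\tau(W_i)$, $-2n^{-1}\sum_iU_i\Delta_if_j^\tau(W_i)f_l^\tau(W_i)$ and $n^{-1}\sum_i\Delta_i^2f_j^\tau(W_i)f_l^\tau(W_i)$. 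I would compare $(A)$ with $\varSigma_\m$ and show that $(B)$ and $(C)$ are $o_p(1)$ in Frobenius norm (and likewise for the traces). Throughout, write $g_\m(w):=\sum_{j=1}^\m\tau_jf_j^2(w)$.

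Terms $(A)$ and $(C)$ are routine. Since $\Ex[(A)]=\varSigma_\m$, a direct second-moment computation gives $\Ex\|(A)-\varSigma_\m\|_F^2\le n^{-1}\Ex[U^4g_\m(W)^2]$, and because $g_\m(W)^2\le(\sum_{j\le\m}\tau_j)\sum_{j\le\m}\tau_jf_j^4(W)$ by Cauchy--Schwarz, Assumptions \ref{Ass:A1} and \ref{Ass:A2} yield $\Ex[U^4g_\m(W)^2]\le\sigma^4\eta_f\eta_p(\sum_{j\le\m}\tau_j)^2$, which is $o(n)$ under both \eqref{cond:prop:ex} and \eqref{prop:ex:tau:cond}; the same bound gives $\operatorname{Var}(\operatorname{tr}(A))=o(1)$. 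For $(C)$, which is positive semidefinite, $\|(C)\|_F\le\operatorname{tr}(C)=n^{-1}\sum_i\Delta_i^2g_\m(W_i)\le\|\osol_\k-\sol_0\|_\infty^2\,n^{-1}\sum_ig_\m(W_i)$; here $n^{-1}\sum_ig_\m(W_i)=O_p(\sum_{j\le\m}\tau_j)$, and the sup-norm rate $\|\osol_\k-\sol_0\|_\infty^2=O_p\bigl(\eta_e\k(\k/n+\sw_\k^{-1})\bigr)$ implied by Assumption \ref{Ass:ex} (cf.\ \cite{Newey1997}), combined with $\k=o(\varsigma_\m)=o(\sum_{j\le\m}\tau_j)$, $(\sum_{j\le\m}\tau_j)^3=o(n)$ and $\sw_\k^{-1}=o(\varsigma_\m/n)$ (which is precisely what $n=o(\sw_\k\varsigma_\m)$ asserts) under \eqref{cond:prop:ex}, resp.\ $\k^3=o(n)$ and $n=O(\sw_\k)$ under \eqref{prop:ex:tau:cond}, makes this $o_p(1)$.

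The heart of the proof is the cross term $(B)$. The naive bound $\|(B)\|_F\le2\bigl(n^{-1}\sum_iU_i^2g_\m(W_i)\bigr)^{1/2}\bigl(n^{-1}\sum_i\Delta_i^2g_\m(W_i)\bigr)^{1/2}=O_p\bigl(\|\osol_\k-\sol_0\|_\infty\sum_{j\le\m}\tau_j\bigr)$ need not vanish when $\sum_{j\le\m}\tau_j$ is of order $n^{1/3}$, so the argument must exploit centering. With $\widehat Q:=n^{-1}\bold Z_\k^t\bold Z_\k$, $Q:=\Ex[e_\uk(Z)e_\uk(Z)^t]$, $\beta_\k^*:=Q^{-1}\Ex[e_\uk(Z)\sol_0(Z)]$ and the $\cL_Z^2$-projection residual $\rho_i:=\sol_0(Z_i)-e_\uk(Z_i)^t\beta_\k^*$ (so $\Ex[e_\uk(Z)\rho(Z)]=0$ and $\|\rho\|_Z^2=O(\sw_\k^{-1})$), one has, on the event that $\bold Z_\k^t\bold Z_\k$ is nonsingular, $\Delta_i=e_\uk(Z_i)^t\widehat Q^{-1}n^{-1}\sum_{i'}e_\uk(Z_{i'})U_{i'}+e_\uk(Z_i)^t\widehat Q^{-1}n^{-1}\sum_{i'}e_\uk(Z_{i'})\rho_{i'}-\rho_i$. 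The two $\rho$-driven pieces are controlled by trading each lossy $\|\cdot\|_\infty$-bound for an $\cL_Z^2$-bound: since $\Ex[U\rho(Z)f_j^\tau(W)f_l^\tau(W)]=\langle\rho,\psi_{jl}-\Pi_\k\psi_{jl}\rangle_Z$ with $\psi_{jl}(z):=\Ex[Uf_j^\tau(W)f_l^\tau(W)\mid Z=z]$ and $\Pi_\k$ the $\cL_Z^2$-projection onto $\operatorname{span}\{e_1,\dots,e_\k\}$, one obtains $\sum_{j,l}\bigl(\Ex[U\rho(Z)f_j^\tau(W)f_l^\tau(W)]\bigr)^2\le\|\rho\|_Z^2\sum_{j,l}\|\psi_{jl}\|_Z^2\le\|\rho\|_Z^2\,\sigma^2\eta_f\eta_p(\sum_{j\le\m}\tau_j)^2=o(1)$, which carries no spurious $\sqrt\k$, and the analogous fluctuation terms are smaller. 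The genuinely delicate object is the quadratic form
\[ T_\m:=n^{-2}\sum_{i,i'=1}^nU_iU_{i'}\bigl(e_\uk(Z_i)^t\widehat Q^{-1}e_\uk(Z_{i'})\bigr)\,f_\um^\tau(W_i)f_\um^\tau(W_i)^t . \]
I would first replace $\widehat Q^{-1}$ by $Q^{-1}$, the error being controlled by $\|\widehat Q^{-1}-Q^{-1}\|=O_p(\|\widehat Q-Q\|)=O_p(\sqrt{\eta_e}\,\k/\sqrt n)$ together with operator-norm bounds on $n^{-1}\sum_ie_\uk(Z_i)U_i$; then split off the diagonal $i=i'$, with $\|\cdot\|_F=O_p\bigl(n^{-1}\Ex[U^2\|e_\uk(Z)\|^2g_\m(W)]\bigr)=O_p(n^{-1}\eta_e\k\sum_{j\le\m}\tau_j)=o_p(1)$; and recognise the off-diagonal part as a U-statistic that is \emph{mean zero} precisely because $\Ex[Ue_\uk(Z)]=0$ under $H_{\textsl e}$, and whose conditional expectation given its first argument is identically zero. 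This degeneracy annihilates the dominant $O(n^3)$ covariance terms in the variance expansion, leaving a variance bound of order $n^{-1}\eta_e\k(\sum_{j\le\m}\tau_j)^2+n^{-2}\k^2(\sum_{j\le\m}\tau_j)^2=o(1)$ under the rate conditions.

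The main obstacle is exactly this cross term: controlling $\osol_\k-\sol_0$ through its sup-norm is too crude for $(B)$, and the proof hinges on using the exogeneity restriction $\Ex[Ue_\uk(Z)]=0$ to treat $T_\m$ as a centered, first-argument-degenerate U-statistic rather than an $O_p(n^{-1/2})$ sample average. Dealing with the generalised inverse $(\bold Z_\k^t\bold Z_\k)^{-}$, which equals $(\bold Z_\k^t\bold Z_\k)^{-1}$ on an event of probability $1-o(1)$ under Assumption \ref{Ass:ex}, and keeping the dependence on $\eta_e,\eta_f,\eta_p$ explicit are tedious but introduce nothing conceptual. The same estimates, specialised to scalars (with $g_\m(W_i)$ in place of $f_\um^\tau(W_i)f_\um^\tau(W_i)^t$), deliver $\operatorname{tr}(\widehat\varSigma_\m)-\mu_\m=o_p(1)$, which completes the plan.
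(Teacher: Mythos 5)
Your decomposition $\widehat U_i^2=U_i^2-2U_i\Delta_i+\Delta_i^2$ and the plan to compare $(A)$ with $\varSigma_\m$ while killing $(B)$ and $(C)$ is exactly the right skeleton, and it is more explicit than the paper's own argument, which writes down only the bounds for the two pieces of $(C)$ coming from $E_\k\sol_0-\osol_\k$ and $E_\k^\perp\sol_0$ and sweeps $(A)-\varSigma_\m$, $(B)$ and the sampling fluctuations into a single $o_p(1)$. Your treatment of $(A)$ and $(C)$ is fine.

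Where I disagree is the premise that the naive Cauchy--Schwarz bound on $(B)$ is insufficient. You base the worry on the estimate $\varsigma_\m\le C\sum_{j\le\m}\tau_j$, which together with $\k=o(\varsigma_\m)$ and $(\sum_{j\le\m}\tau_j)^3=o(n)$ indeed leaves $\k^2(\sum_{j\le\m}\tau_j)^2/n$ possibly divergent. But a sharper bound on $\varsigma_\m$ is available and settles the issue. Fix $j$ and write $\Ex[U^2f_jf_l]=\int g_j f_l\,\nu\,dw$ with $g_j(w):=\Ex[U^2|W=w]f_j(w)p_W(w)/\nu(w)$; Bessel's inequality in the orthonormal system $\{f_l\}$ of $\cL^2_\nu$ gives $\sum_{l\ge1}\bigl(\Ex[U^2f_jf_l]\bigr)^2\le\int g_j^2\,\nu\le\sigma^4\eta_p^2\int f_j^2\,\nu=\sigma^4\eta_p^2$. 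Since $\tau_l\le1$, this yields
\[
\varsigma_\m^2=\sum_{j=1}^{\m}\tau_j\sum_{l=1}^{\m}\tau_l\bigl(\Ex[U^2f_jf_l]\bigr)^2\;\le\;\sigma^4\eta_p^2\sum_{j=1}^{\m}\tau_j,
\]
that is, $\varsigma_\m=O\bigl((\sum_{j\le\m}\tau_j)^{1/2}\bigr)$. Hence $\k^2=o(\varsigma_\m^2)=o(\sum_{j\le\m}\tau_j)$, and your naive bound becomes $\|(B)\|_F^2=O_p\bigl(\|\osol_\k-\sol_0\|_\infty^2(\sum_{j\le\m}\tau_j)^2\bigr)=O_p\bigl(\k^2(\sum_{j\le\m}\tau_j)^2/n+\k\sw_\k^{-1}(\sum_{j\le\m}\tau_j)^2\bigr)$, with the first term $o\bigl((\sum_{j\le\m}\tau_j)^3/n\bigr)=o(1)$ and the second term, using $\sw_\k^{-1}=o(\varsigma_\m/n)$, of the same order. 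The same bound closes the argument trivially under \eqref{prop:ex:tau:cond}, where $\sum_{j\le\m}\tau_j=O(1)$. So the entire degenerate--U-statistic machinery for $T_\m$, the replacement of $\widehat Q^{-1}$ by $Q^{-1}$, and the first-argument degeneracy analysis, while not wrong, solve a problem that isn't there.

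One more caveat: the identity $\Ex[Ue_\uk(Z)]=0$ under $H_{\textsl{e}}$ is indeed what drives the rate $n\|\osol_\k-E_\k\sol_0\|_Z^2=O_p(\k)$ (relation \eqref{proof:ex:eq:1}), and the paper does rely on that rate; but it is used to control $\|\Delta\|_\infty$ (equivalently $\|\widehat\beta_\k-[\sol_0]_\uk\|$), not to produce a mean-zero U-statistic for the cross term. If you trim the $(B)$-argument to the Cauchy--Schwarz bound plus the sharper $\varsigma_\m$ estimate above, your write-up becomes a cleaner and, frankly, more complete proof than the one printed in the appendix, which asserts the disappearance of $(A)-\varSigma_\m$ and $(B)$ without justification.
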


\subsection{Limiting behavior under local alternatives and consistency.}
Similar to the previous sections we study the power and consistency properties of our test. 
Let us study the power of our test of exogeneity under linear local alternatives \eqref{loc:alt:ind} or \eqref{loc:alt:ind:2}. In these cases, it holds $\Ex[U|W]=0$ but $\Ex[U|Z]= -\varsigma_\m^{1/2}n^{-1/2} \delta(Z)$ under \eqref{loc:alt:ind} or $\Ex[U|Z]=-n^{-1/2}\delta(Z)$ under \eqref{loc:alt:ind:2}. 
\begin{prop}\label{coro:norm:ex}Given the conditions of Theorem \ref{prop:ex}  and Assumption  \ref{Ass:power} (ii) it holds under \eqref{loc:alt:ind}
\begin{equation*}
 (\sqrt 2\varsigma_\m)^{-1}\big(n\,S_n^{\textsl{e}}-\mu_\m\big)\stackrel{d}{\rightarrow}\cN\Big(2^{-1/2}\sum_{j= 1}^\infty\delta_j^2,1\Big).
\end{equation*}
Given the conditions of Theorem \ref{prop:ex:tau} it holds under \eqref{loc:alt:ind:2}
\begin{equation*}
n\, S_n^{\textsl{e}}\stackrel{d}{\rightarrow}\sum_{j=1}^\infty\lambda_j^\textsl{e}\,\chi_{1j}^2(\delta_j/\lambda_j^\textsl{e}).
\end{equation*}
\end{prop}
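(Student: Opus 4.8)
The plan is to exploit the fact that both local alternatives are set up so that the conditional mean $\Ex[Y|Z]$, i.e. $\sol_0$, is left unchanged, so that the series estimator $\osol_\k$ still consistently estimates $\sol_0$. I would put $\widetilde U_i:=Y_i-\sol_0(Z_i)$, which equals $\varsigma_\m^{1/2}n^{-1/2}\delta(Z_i)+U_i$ under \eqref{loc:alt:ind} and $n^{-1/2}\delta(Z_i)+U_i$ under \eqref{loc:alt:ind:2}, and record that $\Ex[\widetilde U|Z]=0$ in both cases (since $\Ex[U|Z]$ equals $-\varsigma_\m^{1/2}n^{-1/2}\delta(Z)$, resp. $-n^{-1/2}\delta(Z)$), while $\Ex[\widetilde U^2|Z]=\Ex[U^2|Z]-\varsigma_\m n^{-1}\delta(Z)^2\leq\Ex[U^2|Z]$ (resp. with $n^{-1}$) stays uniformly bounded by Assumption \ref{Ass:ex}$(iv)$. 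Hence $\osol_\k$ is the usual series regression of $Y$ on $Z$ with an innovation $\widetilde U$ satisfying every hypothesis used in the proofs of Theorems \ref{prop:ex} and \ref{prop:ex:tau}, so the bias and variance bounds for $\osol_\k-\sol_0$ obtained there transfer verbatim. The local signal is \emph{not} absorbed by $\osol_\k$: it resides in $\Ex[\widetilde U f_j(W)]=\varsigma_\m^{1/2}n^{-1/2}\Ex[\delta(Z)f_j(W)]$ (resp. without the $\varsigma_\m^{1/2}$), whereas the correction $\osol_\k-\sol_0$ feeds into the $j$-th coordinate of the statistic only through quantities converging to $\Ex[f_j^\tau(W)e_\uk(Z)^t]\big(\Ex[e_\uk(Z)e_\uk(Z)^t]\big)^{-1}\Ex[e_\uk(Z)\widetilde U]=0$, because $\Ex[e_\uk(Z)\widetilde U]=\Ex[e_\uk(Z)\Ex[\widetilde U|Z]]=0$.

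For the first part --- conditions of Theorem \ref{prop:ex} and alternative \eqref{loc:alt:ind} --- I would rerun the argument in the proof of Theorem \ref{prop:ex}, now with $\widetilde U$ in place of $U$, to obtain, under $\k=o(\varsigma_\m)$ and $n=o(\sw_\k\varsigma_\m)$,
\[
 (\sqrt2\varsigma_\m)^{-1}\big(n\,S_n^{\textsl{e}}-\mu_\m\big)=(\sqrt2\varsigma_\m)^{-1}\big(n\,\overline S_n-\mu_\m\big)+o_p(1),\qquad \overline S_n:=\sum_{j=1}^\m\tau_j\Big|n^{-1}\sum_{i=1}^n\widetilde U_i f_j(W_i)\Big|^2.
\]
Since $\Ex[U|W]=0$ still holds, $\overline S_n$ is exactly the statistic $S_n$ of \eqref{sn} evaluated under \eqref{loc:alt:ind}, and the rate conditions \eqref{cond:theo:norm} follow from those of Theorem \ref{prop:ex}; so Proposition \ref{coro:norm:ind:main} yields $(\sqrt2\varsigma_\m)^{-1}(n\,\overline S_n-\mu_\m)\stackrel{d}{\rightarrow}\cN\big(2^{-1/2}\sum_{j\geq1}\delta_j^2,1\big)$, which with the display above proves the first claim. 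Equivalently, writing $Z_j:=n^{-1/2}\sum_i U_i f_j^\tau(W_i)$ one has $\sqrt{\tau_j}\,n^{-1/2}\sum_i\widetilde U_i f_j(W_i)=Z_j+\varsigma_\m^{1/2}\delta_j+o_p(1)$, whence $n\,\overline S_n=\sum_j Z_j^2+2\varsigma_\m^{1/2}\sum_j Z_j\delta_j+\varsigma_\m\sum_j\delta_j^2+o_p(\varsigma_\m)$; Theorem \ref{theo:norm:ind:main} handles $\sum_j Z_j^2$, the cross term is $O_p(\varsigma_\m^{1/2})=o_p(\varsigma_\m)$ because $\Ex\big(\sum_j Z_j\delta_j\big)^2\leq\sigma^2\eta_p\sum_j\tau_j\delta_j^2$ is bounded, and the last term supplies the mean shift $2^{-1/2}\sum_{j\geq1}\delta_j^2$. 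Assumption \ref{Ass:power}$(ii)$ is used here to bound $\varsigma_\m$ from below, as in the discussion around Theorem \ref{prop:ex}.

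For the second part --- conditions of Theorem \ref{prop:ex:tau} and alternative \eqref{loc:alt:ind:2} --- the projection correction is no longer negligible (here $\sum_{j=1}^\m\tau_j=O(1)$ while $\k\to\infty$) and instead reshapes the limiting covariance. Following the proof of Theorem \ref{prop:ex:tau}, with $\widetilde R_j:=n^{-1/2}\sum_i(Y_i-\osol_\k(Z_i))f_j^\tau(W_i)$ one obtains
\[
 \widetilde R_j=n^{-1/2}\sum_{i=1}^n\Big(f_j^\tau(W_i)-\sum_{l\geq1}\Ex[f_j^\tau(W)e_l(Z)]e_l(Z_i)\Big)\widetilde U_i+o_p(1).
\]
Splitting $\widetilde U_i=U_i+n^{-1/2}\delta(Z_i)$: the $U_i$-part contributes a centered Gaussian vector $\mathcal G=(\mathcal G_j)_{j\geq1}$ with covariance $\varSigma^{\textsl{e}}$ together with a deterministic term $\Ex[f_j^\tau(W)\Pi_\k\delta(Z)]$ --- arising from $\Ex[U e_l(Z)]=-n^{-1/2}\Ex[e_l(Z)\delta(Z)]$ under \eqref{loc:alt:ind:2}, where $\Pi_\k\delta$ denotes the $\cL_Z^2$-projection of $\delta$ onto $e_1,\dots,e_\k$ --- while the $n^{-1/2}\delta$-part contributes $\delta_j-\Ex[f_j^\tau(W)\Pi_\k\delta(Z)]$; the two $\Pi_\k\delta$-terms cancel and the net drift is exactly $\delta_j=\sqrt{\tau_j}\,\Ex[\delta(Z)f_j(W)]$. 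Hence $\widetilde R_j\Rightarrow\mathcal G_j+\delta_j$ jointly over every finite set of indices, and a tail-tightness estimate (as in the proof of Theorem \ref{prop:ex:tau}, using $\sum_{j\geq1}\delta_j^2<\infty$) upgrades this to $n\,S_n^{\textsl{e}}=\sum_{j=1}^\m\widetilde R_j^2\stackrel{d}{\rightarrow}\sum_{j\geq1}(\mathcal G_j+\delta_j)^2$. Diagonalizing $\varSigma^{\textsl{e}}$ identifies this limit with $\sum_{j\geq1}\lambda_j^{\textsl{e}}\,\chi_{1j}^2(\delta_j/\lambda_j^{\textsl{e}})$, as claimed.

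I expect the second part to be the main obstacle, because one must simultaneously track the non-vanishing projection correction and the local drift: one has to verify both that this correction turns the noise covariance from $\varSigma$ (as in Proposition \ref{coro:norm:ind:main:tau}) into precisely $\varSigma^{\textsl{e}}$, and that it does not dilute the drift --- the apparent loss $\Ex[f_j^\tau(W)\Pi_\k\delta(Z)]$ being exactly recovered from $\Ex[U e_l(Z)]=-n^{-1/2}\Ex[e_l(Z)\delta(Z)]$, so that the full $\delta_j$ survives. The only other point needing care is justifying that the error-bound transfer from the null analyses of Theorems \ref{prop:ex} and \ref{prop:ex:tau} is legitimate, which reduces to the two properties of $\widetilde U$ recorded in the first paragraph.
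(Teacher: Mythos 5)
Your proposal follows essentially the same route as the paper: in the first regime you reduce $n\,S_n^{\textsl{e}}$ to $n\,\overline S_n$ by absorbing the estimator correction into $o_p(\varsigma_\m)$ cross terms (the paper isolates precisely the extra $\delta$-against-$(\sol_0-\osol_\k)$ inner product that your ``rerun'' must control, its display \eqref{eq:coro:norm:ex}), and in the second regime you substitute $U_i+n^{-1/2}\delta(Z_i)$ into the decomposition \eqref{proof:prop:ex:tau:1} and track the drift, exactly as the paper does. Your explicit verification that the two $\Pi_\k\delta$ projection terms cancel is a correct and welcome elaboration of a step the paper leaves implicit.
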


Let us now establish consistency of our tests when $H_\text e$ does not hold, that is, $\PP\big(\sol=\sol_0\big)<1$.
\begin{prop}\label{prop:ex:cons}Assume that $H_{\textsl{e}}$ does not hold. Let $\Ex|Y-\sol_0(Z)|^4<\infty$ and  Assumption  \ref{Ass:power} (i) hold true. Let $(\alpha_n)_{n\geq 1}$ as in Proposition \ref{prop:cons}. Under the conditions of Theorem \ref{prop:ex} we have
\begin{equation*}
 \PP\Big((\sqrt 2\,\varsigma_\m)^{-1}\big( nS_n^{\textsl{e}}-\mu_\m\big)>\alpha_n\Big)=1+o(1),
\end{equation*}
whereas in the setting of Theorem \ref{prop:ex:tau}
\begin{equation*}
  \PP\big(n\,S_n^{\textsl{e}}>\alpha_n\big)=1+o(1).
\end{equation*}
\end{prop}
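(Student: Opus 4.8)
The plan is to show that, under the fixed alternative, $S_n^{\textsl{e}}$ stays bounded away from zero with probability tending to one, and then to combine this with the elementary orders of $\mu_\m$ and $\varsigma_\m$ together with the rate imposed on $\alpha_n$. The only ingredient beyond what is already contained in the proofs of Theorems \ref{prop:ex} and \ref{prop:ex:tau} is the identification of the probability limit of each fixed coordinate of $\big(n^{-1}\sum_{i=1}^n(Y_i-\osol_\k(Z_i))f_j(W_i)\big)_{j\ge1}$.

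First I would fix $j\ge1$ and split $Y_i-\osol_\k(Z_i)=(\sol-\sol_0)(Z_i)+U_i+(\sol_0-\osol_\k)(Z_i)$. By the law of large numbers, $n^{-1}\sum_i(\sol-\sol_0)(Z_i)f_j(W_i)\stackrel{\PP}{\longrightarrow}\Ex[(\sol-\sol_0)(Z)f_j(W)]=\Ex[(g-T\sol_0)(W)f_j(W)]=:\beta_j$ (recall $g=T\sol$) and $n^{-1}\sum_i U_if_j(W_i)\stackrel{\PP}{\longrightarrow}\Ex[Uf_j(W)]=\Ex[\Ex[U|W]f_j(W)]=0$. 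For the remaining term, Cauchy--Schwarz bounds its square by $\big(n^{-1}\sum_i(\sol_0-\osol_\k)(Z_i)^2\big)\big(n^{-1}\sum_i f_j(W_i)^2\big)$: the first factor is the empirical $\cL^2$-error of the series estimator, of order $O_p(\sw_\k^{-1}+\k/n)=o_p(1)$ under either \eqref{cond:prop:ex} or \eqref{prop:ex:tau:cond} (this is exactly the bias--variance control carried out in the proofs of Theorems \ref{prop:ex} and \ref{prop:ex:tau}), while the second factor converges in probability to $\Ex[f_j(W)^2]\le\eta_p$. Hence $n^{-1}\sum_{i=1}^n(Y_i-\osol_\k(Z_i))f_j(W_i)\stackrel{\PP}{\longrightarrow}\beta_j$ for every fixed $j$.

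Next, since $H_{\textsl{e}}$ fails we have $\sol\neq\sol_0$ in $\cL_Z^2$, so Assumptions \ref{Ass:A3}, \ref{Ass:A4} and \ref{Ass:power}(i) give $\|LT(\sol-\sol_0)\|_W>0$ (see the discussion after Assumption \ref{Ass:power}). As $g-T\sol_0=T(\sol-\sol_0)$ and $\nu$ is strictly positive on $\cW$, the identity displayed just before \eqref{sn} yields $\sum_{j\ge1}\tau_j\beta_j^2=\sum_{j\ge1}\Ex[(g-T\sol_0)(W)f_j^\tau(W)]^2>0$. Fix a finite $j_0$ and a constant $c_0>0$ with $\sum_{j=1}^{j_0}\tau_j\beta_j^2\ge 2c_0$. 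Then for all $n$ with $\m\ge j_0$,
\begin{equation*}
 S_n^{\textsl{e}}\ \ge\ \sum_{j=1}^{j_0}\tau_j\,\Big|n^{-1}\sum_{i=1}^n\big(Y_i-\osol_\k(Z_i)\big)f_j(W_i)\Big|^2,
\end{equation*}
and, the right-hand side being a finite sum whose coordinates converge in probability to $\beta_j$, it converges in probability to $\sum_{j=1}^{j_0}\tau_j\beta_j^2\ge 2c_0$; therefore $\PP\big(S_n^{\textsl{e}}\ge c_0\big)\to1$.

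Finally I would conclude by elementary size bounds. Under \eqref{cond:prop:ex}, $\mu_\m=\sum_{j=1}^\m\tau_j\Ex[U^2f_j(W)^2]\le\sigma^2\eta_p\sum_{j=1}^\m\tau_j$ and $\varsigma_\m\le\sigma^2\sqrt{\eta_p}\sum_{j=1}^\m\tau_j$ (cf.\ the remark after Theorem \ref{theo:norm:ind:main}), hence both are $o(n^{1/3})$ since $\big(\sum_{j=1}^\m\tau_j\big)^3=o(n)$. Thus on the event $\{S_n^{\textsl{e}}\ge c_0\}$,
\begin{equation*}
 (\sqrt2\,\varsigma_\m)^{-1}\big(nS_n^{\textsl{e}}-\mu_\m\big)\ \ge\ (\sqrt2\,\varsigma_\m)^{-1}\big(nc_0-\mu_\m\big)\ =\ \tfrac{c_0}{\sqrt2}\,n\varsigma_\m^{-1}\big(1+o(1)\big),
\end{equation*}
which diverges to infinity, and indeed faster than $\alpha_n=o(n\varsigma_\m^{-1})$; together with $\PP(S_n^{\textsl{e}}\ge c_0)\to1$ this gives the first assertion. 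Under \eqref{prop:ex:tau:cond}, on $\{S_n^{\textsl{e}}\ge c_0\}$ one has $nS_n^{\textsl{e}}\ge nc_0\to\infty$ while $\alpha_n=o(n)$, which gives the second assertion. I expect the only delicate point to be the $o_p(1)$ bound on the $\osol_\k$-contribution in the first step, but as noted this is precisely the estimation-error control already established for Theorems \ref{prop:ex} and \ref{prop:ex:tau} and may be invoked verbatim.
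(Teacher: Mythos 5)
Your proof is correct, and it takes a genuinely different (and in one respect more economical) route than the paper. The paper's argument, traced through Propositions \ref{prop:par:cons} and \ref{prop:cons}, establishes the \emph{full} limit $S_n^{\textsl{e}}=\sum_{j=1}^\infty\tau_j[T(\sol-\sol_0)]_j^2+o_p(1)$: it decomposes $S_n^{\textsl{e}}$, shows the estimation term $\|n^{-1}\sum_i(\sol_0(Z_i)-\osol_\k(Z_i))f_\umn^\tau(W_i)\|^2=o_p(1)$, shows the remaining term converges to the infinite series, and then concludes via $\varsigma_\m\alpha_n+\mu_\m=o(n)$. You instead fix a finite $j_0$ with $\sum_{j\le j_0}\tau_j\beta_j^2\ge2c_0>0$, use monotonicity of the sum to lower-bound $S_n^{\textsl{e}}$ by the truncated statistic, and only prove convergence of finitely many coordinates $n^{-1}\sum_i(Y_i-\osol_\k(Z_i))f_j(W_i)\to_p\beta_j$. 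That saves you from controlling the tail of the $\m$-dimensional sum; the final arithmetic with $\mu_\m,\varsigma_\m=O\big(\sum_{j\le\m}\tau_j\big)=o(n^{1/3})$ and $\alpha_n=o(n\varsigma_\m^{-1})$ then matches the paper's observation that $\varsigma_\m\alpha_n+\mu_\m=o(n)$. One small caveat: you assert the empirical $\cL^2$ error $n^{-1}\sum_i(\sol_0-\osol_\k)(Z_i)^2=o_p(1)$ "may be invoked verbatim" from the proofs of Theorems \ref{prop:ex} and \ref{prop:ex:tau}; strictly, those proofs control the population norm $\|E_\k\sol_0-\osol_\k\|_Z^2=O_p(\k/n)$ and $\|E_\k^\perp\sol_0\|_Z^2=O(\sw_\k^{-1})$, not the empirical one, and passing to the empirical version requires the (standard, but not verbatim) step of using $[\widehat Q]_\ukn\to_p I_\k$. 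Alternatively you could avoid this issue entirely by observing that the proof of Theorem \ref{prop:ex} already gives $nIII_n=o_p(\varsigma_\m)$, so for each fixed $j$, $\tau_j\,n\,|n^{-1}\sum_i(\sol_0-\osol_\k)(Z_i)f_j(W_i)|^2\le nIII_n=o_p(\varsigma_\m)$, hence $n^{-1}\sum_i(\sol_0-\osol_\k)(Z_i)f_j(W_i)=o_p(1)$ directly, with no appeal to the empirical $\cL^2$ error of $\osol_\k$.
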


In the following we show that our tests are consistent uniformly over the function class
\begin{equation*}
 \cI_{n}^\sr=\Big\{\sol\in \cL_Z^2:\,\|LT(\sol-\sol_0)\|_W^2\geq \sr\, n^{-1}\varsigma_\m\text{  and}\sup_{z\in\cZ}|(\sol-\sol_0)(z)|\leq C\Big\}
\end{equation*} 
form some constant $C>0$. 
For the next result let $q_{1\alpha}$ and $q_{2\alpha}$ denote the $1-\alpha$ quantile of $\cN(0,1)$ and $\sum_{j=1}^\infty\lambda_j^{\textsl{e}}\,\chi_{1j}^2$, respectively.
\begin{prop}\label{prop:ex:unf:cons} Let Assumption  \ref{Ass:power} be satisfied.
Under the conditions of Theorem \ref{prop:ex} we have for any $\varepsilon>0$, any  $0<\alpha<1$, and any sufficiently large constant $\rho>0$ that
\begin{equation*}
  \lim_{n\to\infty}\inf_{\sol\in\cI_n^\sr}\PP\Big((\sqrt 2\,\varsigma_\m)^{-1}\big(n\,S_n^{\textsl{e}}-\mu_\m\big)>q_{1\alpha}\Big)\geq 1-\varepsilon,\label{prop:ex:unf:cons:1}\\
\end{equation*}
whereas under the conditions of Theorem \ref{prop:ex:tau} it holds
\begin{equation*}
 \lim_{n\to\infty}\inf_{\sol\in\cI_{n}^\sr}\PP\big(n\,S_n^{\textsl{e}}>q_{2\alpha}\big)\geq 1-\varepsilon.\label{prop:ex:unf:cons:2}
\end{equation*}
\end{prop}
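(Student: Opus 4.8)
\emph{Proof proposal.}
The plan is to treat this as the uniform-in-$\sol$ counterpart of Proposition~\ref{prop:ex:cons}, combining the decomposition of $S_n^{\textsl e}$ used in the proof of Theorem~\ref{prop:ex} with the signal lower bound from the proof of Proposition~\ref{prop:unf:cons}. Fix $\sol\in\cI_n^\sr$ and set $\widetilde U_i:=Y_i-\sol_0(Z_i)=Y_i-\Ex[Y_i\mid Z_i]$, so that $\Ex[\widetilde U\mid Z]=0$; since $g=\op\sol$ one has $\Ex[\widetilde U\,f_j(W)]=\Ex[(\op(\sol-\sol_0))(W)f_j(W)]=:b_j$, and by Parseval together with Assumptions~\ref{Ass:A1}(ii) and \ref{Ass:power}(i) the quantity $\sum_{j\ge1}\tau_j b_j^2$ is comparable to $\|LT(\sol-\sol_0)\|_W^2$ up to fixed constants (exactly as in the derivation preceding \eqref{sn}). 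Inserting $Y_i-\osol_\k(Z_i)=\widetilde U_i-(\osol_\k-\sol_0)(Z_i)$ into $S_n^{\textsl e}$ and expanding the square gives, with $A_j:=n^{-1/2}\sum_{i=1}^n(\widetilde U_i f_j(W_i)-b_j)$,
\begin{equation*}
 n\,S_n^{\textsl e}=\sum_{j=1}^\m\tau_j A_j^2\;+\;2\,n^{1/2}\sum_{j=1}^\m\tau_j b_j A_j\;+\;n\sum_{j=1}^\m\tau_j b_j^2\;+\;\mathcal R_n ,
\end{equation*}
where $\mathcal R_n$ collects all terms containing the series-estimation error $\osol_\k-\sol_0$.

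First I would show that $\mathcal R_n$ is negligible uniformly over $\cI_n^\sr$: the bias and stochastic-error bounds for $\osol_\k$ from the proof of Theorem~\ref{prop:ex} (under Assumption~\ref{Ass:ex} and the rate conditions \eqref{cond:prop:ex}) do not use $\Ex[\widetilde U\mid W]=0$, and $\sup_{z\in\cZ}|(\sol-\sol_0)(z)|\le C$ keeps $\Ex|\widetilde U|^4$ bounded uniformly over the class, whence $\mathcal R_n=o_p(\varsigma_\m)$ uniformly (respectively $o_p(1)$ uniformly under \eqref{prop:ex:tau:cond}, using the proof of Theorem~\ref{prop:ex:tau}). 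Next, the quadratic noise term $\sum_{j\le\m}\tau_j A_j^2$ has expectation $\mu_\m$ up to a remainder of order $\sum_{j\le\m}\tau_j\Ex[|\widetilde U^2-U^2|f_j^2(W)]$, which is controlled uniformly over the class by $\sup_z|(\sol-\sol_0)(z)|\le C$ together with Assumptions~\ref{Ass:A1}, \ref{Ass:A2} and \ref{Ass:ex}(iv); its variance is $O(\varsigma_\m^2)$ by the fourth-moment computation already carried out for Theorem~\ref{prop:ex}, so $(\sqrt2\,\varsigma_\m)^{-1}(\sum_{j\le\m}\tau_j A_j^2-\mu_\m)=O_p(1)$ uniformly. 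Finally, the cross term has mean zero and, by Cauchy--Schwarz and the same moment bounds, variance $O\big(n\,(\sum_{j\le\m}\tau_j)\,\sum_{j\le\m}\tau_j b_j^2\big)$, hence is of smaller order than the signal term $n\sum_{j\le\m}\tau_j b_j^2$ whenever the latter dominates $\varsigma_\m$.

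The hard part is the uniform lower bound on the signal: I must show that for $\rho$ large enough and all $n$ large enough, $n\sum_{j=1}^\m\tau_j b_j^2\ge c\,\rho\,\varsigma_\m$ uniformly over $\sol\in\cI_n^\sr$ (respectively $\ge c\,\rho$ when $\sum_{j\le\m}\tau_j=O(1)$). Since $\sum_{j\ge1}\tau_j b_j^2\asymp\|LT(\sol-\sol_0)\|_W^2\ge\rho\,n^{-1}\varsigma_\m$ on the class, it suffices to dominate the truncation remainder $\sum_{j>\m}\tau_j b_j^2$. Here I would use that $\tau$ is nonincreasing (Assumption~\ref{Ass:A4}), so $\sum_{j>\m}\tau_j b_j^2\le\tau_{\m+1}\sum_{j>\m}b_j^2$, and that $\sum_{j\ge1}b_j^2$ is bounded uniformly over $\cI_n^\sr$ by $\eta_p\sup_z|(\sol-\sol_0)(z)|^2\le\eta_p C^2$ (Jensen's inequality and Assumption~\ref{Ass:A1}(ii)); the resulting uniform-in-$\sol$ control of the tail is handled exactly as the analogous step in the proof of Proposition~\ref{prop:unf:cons}. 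This is the principal obstacle, precisely because $\cI_n^\sr$ is defined through the full norm $\|LT(\sol-\sol_0)\|_W^2$ rather than its truncated version, so one cannot simply read off the truncated signal.

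Putting the pieces together, on $\cI_n^\sr$ the display above yields $(\sqrt2\,\varsigma_\m)^{-1}(n\,S_n^{\textsl e}-\mu_\m)\ge (\sqrt2\,\varsigma_\m)^{-1} n\sum_{j\le\m}\tau_j b_j^2-O_p(1)\ge \rho\,c/\sqrt2-O_p(1)$ uniformly, so taking $\rho$ large makes this exceed $q_{1\alpha}$ with probability at least $1-\varepsilon$ uniformly in $\sol$, which is the first assertion. In the case $\sum_{j\le\m}\tau_j=O(1)$ the same argument applies with the proof of Theorem~\ref{prop:ex:tau} in place of Theorem~\ref{prop:ex}: now $\mu_\m=O(1)$, the relevant fluctuations are $O_p(1)$, the threshold $q_{2\alpha}$ (the $1-\alpha$ quantile of $\sum_{j\ge1}\lambda_j^{\textsl e}\chi_{1j}^2$) is a fixed finite number, and $n\sum_{j\le\m}\tau_j b_j^2\ge c\,\rho$ exceeds $q_{2\alpha}$ for $\rho$ large, giving $\liminf_{n}\inf_{\sol\in\cI_n^\sr}\PP(n\,S_n^{\textsl e}>q_{2\alpha})\ge1-\varepsilon$.
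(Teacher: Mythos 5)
Your plan is structurally the paper's: the paper plugs $\osol_\k$ into the inequality \eqref{prop:par:unf:cons:eq} established in the proof of Proposition~\ref{prop:par:unf:cons}, shows $\|n^{-1/2}\sum_i(\osol_\k(Z_i)-\sol_0(Z_i))f_\umn^\tau(W_i)\|^2=o_p(\varsigma_\m)$ and that the cross term with $\sol-\osol_\k$ equals the one with $\sol-\sol_0$ up to $o_p(\varsigma_\m)$ uniformly over $\cI_n^\sr$, and then follows the proof of Proposition~\ref{prop:unf:cons} line by line. Those are exactly the ingredients in your ``$\mathcal R_n$ negligible'' step and your final signal lower bound, so the strategy is the right one.

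There is, however, one concrete gap in the middle, caused by your centering convention. The paper's proof of Proposition~\ref{prop:unf:cons} keeps $U_if_j(W_i)$ (with $\Ex[U f_j(W)]=0$) as the noise, so $\|n^{-1/2}\sum_i U_i f_\umn^\tau(W_i)\|^2$ has mean \emph{exactly} $\mu_\m$ and $O_p(\varsigma_\m)$ fluctuation, while the $(\sol-\sol_0)(Z_i)f_j(W_i)$ contribution, including its sampling fluctuation, sits in the signal term where it is nonnegative and only helps the one-sided bound. You instead center $\widetilde U_i f_j(W_i)$ around $b_j$, so
\begin{equation*}
\Ex\Big[\sum_{j\le\m}\tau_j A_j^2\Big]=\sum_{j\le\m}\tau_j\,\text{Var}\big(\widetilde U f_j(W)\big)
=\mu_\m+\sum_{j\le\m}\tau_j\Big(\Ex[(\sol-\sol_0)^2f_j^2]+2\Ex[(\sol-\sol_0)\,Uf_j^2]-b_j^2\Big).
\end{equation*}
Under the alternative $\Ex[U\mid Z]\neq0$, the covariance piece $\Ex[(\sol-\sol_0)(Z)Uf_j^2(W)]$ has no sign, and the whole deterministic shift is only $O(\sum_{j\le\m}\tau_j)$; since $\sum_{j\le\m}\tau_j/\varsigma_\m$ can grow (up to order $\sqrt{\m}$, e.g.\ for $L=\mathrm{Id}$), your stated bound $(\sqrt2\,\varsigma_\m)^{-1}(\sum_{j\le\m}\tau_j A_j^2-\mu_\m)=O_p(1)$ uniformly does not follow from what you wrote. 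The fix is precisely what the paper does: split $\widetilde U_i=(\sol-\sol_0)(Z_i)+U_i$ at the outset and retain only $U_if_j(W_i)$ as the noise. With that change, the rest of your plan — in particular your identification of the uniform lower bound $n\sum_{j\le\m}\tau_j b_j^2\ge c\,\rho\,\varsigma_\m$ over $\cI_n^\sr$ as the key remaining step — matches the paper's argument.
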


\section{A nonparametric specification test}
A solution to the linear operator equation \eqref{model:EE:t} only exists if $g$ belongs to the range of $\op$. 
This might be violated if, for instance, the instrument is not valid, that is, $\Ex[U|W]\neq 0$. In many economic applications \textit{a priori} smoothness restriction on the unknown function can be justified which we capture by a set of functions $\cF$.
We consider the hypothesis
 $H_{\textsl{np}}$: there exists a solution $\sol_0\in\cF$ to \eqref{model:EE:t}. 
The alternative hypothesis is that there exists a solution \eqref{model:EE:t} which does not belong to $\cF$. Under the alternative only unsmooth functions solve the conditional moment restriction which can be interpreted as a failure of validity of the instrument $W$.
We see in this section that our results allow also for a test of dimension reduction of the vector of regressors $Z$, that is, whether some regressors can be omitted from the structural function $\sol_0$.
\subsection{Nonparametric estimation method}
\paragraph{The nonparametric estimator.}
In the following, we  derive an estimator of $\sol_0$ under the null hypothesis $H_{\textsl{np}}$.
For simplicity, assume that $\cZ=\cW$ and consider a sequence $\{e_j\}_{j\geq 1}$ of approximating functions which are orthonormal on $\cZ$ with respect to the Lebesque measure $\nu$.
Under conditions given below, $\sol_0$ has the expansion $\sol_0(\cdot)=\sum_{l=1}^\infty \int\sol_0(z)e_l(z)\nu(z)dz\,e_l(\cdot)$.
Thereby, the conditional moment restriction under $H_{\textsl{np}}$ leads to the following unconditional moment restrictions 
\begin{equation}
\Ex[Y e_j(W)]=\sum_{l=1}^\infty \Ex[e_j(W)e_l(Z)]\int \sol_0(z)e_l(z)\nu(z)dz
\end{equation}
 for $j \geq 1$. This motivates the following  orthogonal series type estimator.
Let $\bold Z_k$ and $\bold Y_n$ be as in the previous section and let $\bold X_k$ denote a $n\times k$ matrix with entries $e_j(W_i)$ for $1\leq i\leq n$ and $1\leq j\leq k$.
Then for each $k\geq 1$ we consider the estimator
\begin{equation}\label{gen:def:est2}
\hsol_k(\cdot):=
e_\uk(\cdot)^t\widehat \beta_k\quad \text{where } \quad \widehat \beta_k=(\bold X_k^t\bold Z_k)^-\,\bold X_k^t \bold Y_n.
\end{equation}
Under conditions given below $\bold X_\k^t\bold Z_\k$ will be nonsingular with probability approaching one and hence its generalized inverse will be the standard inverse.
The nonparametric estimator $\hsol_k$ given in \eqref{gen:def:est2} was studied by \cite{Johannes2009}, \cite{Horowitz2011}, and \cite{Horowitz2009}.

\paragraph{Additional assumptions.}
In the following, we specify \textit{a priori} smoothness assumptions captured by the set $\cF$. 
As noted by \cite{Horowitz2009}, uniformly consistent testing of $H_{\textsl{np}}$ is only possible if the null is restricted that any solution to \eqref{model:EE:t} is smooth. Here, we assume that under the null hypothesis $\sol_0$ belongs to the ellipsoid
 $\cF:=\cF_{\sw}^\sr := \big\{\phi\in \cL_Z^2: \sum_{j= 1}^\infty\sw_j\Ex[\phi(Z)e_j(Z)]^2\leq \sr\big\}$. As in the previous section, $\sw=(\sw_j)_{j\geq 1}$ measures the approximation error of $\sol_0$ with respect to the basis $\{e_j\}_{j\geq 1}$. 

Further, as usual in the context of nonparametric instrumental regression, we specify some mapping properties of 
the conditional expectation operator $T$.
Denote by $\cT$ the set of all nonsingular operators mapping $\cL_Z^2$ to $\cL^2_W$. Given a sequence of weights $\Opw:=(\Opw_j)_{j\geqslant 1}$ and $\Opd\geqslant
1$ we define the subset $\Opwd$ of $\cT$ by
\begin{equation*}
\Opwd:=\Bigl\{ T\in\cT:\quad  \int|(\Op \phi)(w)|^2\nu(w)dw\leqslant {\Opd}\, \sum_{j= 1}^\infty\opw_j\Big(\int\phi(z)e_j(z)\nu(z)dz\Big)^2\quad \text{for all }\phi \in \cL^2_Z\Bigr\}.
\end{equation*}
If $p_Z/\nu$ is bounded from above and $p_W/\nu$ is uniformly bounded away from zero then the conditional expectation operator $T$ belongs to $\Opwd$ with $\opw_j=1$, $j\geq 1$, due to Jensen's inequality.
Notice that for all $T\in\Opwd$ it follows that  $\normV{T\basZ_j}_W^2\leq d\,\eta_p
\Opw_j$ and thereby, the condition $T\in\Opwd$ links the operator $T$ to the basis $\{e_j\}_{j\geq 1}$. In the following, we denote $[\op]_\uk=\Ex[e_\uk(W)e_\uk(Z)^t]$ which is assumed to be a nonsingular matrix.
In what follows, we introduce a stronger
  condition on the basis $\{e_l\}_{l\geq1}$.  We denote by $\cTdDw$ for some $\tD\geq \td$ the subset of
  $\cTdw $ given by
\begin{equation*}
\cTdDw:=\Bigl\{ T\in \cTdw :[\op]_\uk \text{ is nonsingular and}\quad \sup_{k\geq 1}\normV{\text{diag}(\opw_1,\dots,\opw_k)^{1/2}[\op]_\uk^{-1}}^2\leqslant \tD\Bigr\}.
\end{equation*}
The class $\cTdDw$ only contains operators $T$ whose off-diagonal elements of $[T]_\uk^{-1}$ are sufficiently small for all $k\geq 1$. A similar diagonality restriction has been used by \cite{HallHorowitz2005} or \cite{BJ2011}.
Besides the mapping properties for the operator $T$ we need a stronger assumption for the basis under consideration.
The following condition gathers conditions on the sequences $\sw$ and $\opw$.
\begin{assA}\label{Ass:A6} 
(i) Under $H_{\textsl{np}}$, let $\sol_0\in\cF_\sw^\sr$ with nondecreasing sequence $\sw$ satisfying $j^3=o(\sw_j)$. (ii) The sequence $\{e_j\}_{j\geq 1}$ is an  orthogonal basis on $\cZ=\cW$ with respect to $\nu$.
 (iii) There exists some constant $\eta_e\geqslant 1$ such that  $\sup_{j\geq 1}\sup_{z\in\cZ}|e_j(z)|\leqslant \eta_e$.
(iv) Let $T\in\cTdDw$ with  $\Opw$ being a strictly positive sequences  such that $\opw$ and $(\opw_{j}/\tau_{j})_{j\geq 1}$ are nonincreasing. (v) $p_Z/\nu$ is bounded from above and $p_W/\nu$ is uniformly bounded away from zero.
\end{assA}
Due to Assumption \ref{Ass:A6} $(iv)$ the degree of additional smoothing for our testing procedure must not be stronger than the degree of ill-posedness implied by the conditional expectation operator $T$.
Under similar assumptions as above, \cite{Johannes2009} show that mean integrated squared error loss of $\hsol_{\k}$
attains the optimal rate of convergence $\cR_n:=\max\big(\sw_{\k}^{-1}, \sum_{j=1}^{\k}(n\opw_j)^{-1}\big)$. Due to Assumption \ref{Ass:A6} $(v)$ we do not require orthonormal bases with respect to the unknown distribution $(Z,W)$ (cf. Remark 3.2 of \cite{BJ2011}).

\subsection{The test statistic and its asymptotic distribution}
As in the previous sections, our test is based on the observation that the null hypothesis $H_{\textsl{np}}$ is equivalent to $L(g-T\sol_0)=0$. 
Our goodness-of-fit statistic for testing nonparametric specifications is given by $S_n$ where $\sol_0$ is replaced by the nonparametric estimator  $\hsol_{\k}$ given in \eqref{gen:def:est2}, that is, 
\begin{equation*}
  S_n^{ \textsl{np}}:=\sum_{j=1}^\m\tau_j\big|n^{-1}\sum_{i=1}^n\big(Y_i-\hsol_\k(Z_i)\big)f_j(W_i)\big|^2.
\end{equation*}
If $S_n^{\textsl{np}}$ becomes too large then there exists no function in $\cF_\sw^\sr$ solving \eqref{model:EE:t}.
The next result establishes asymptotic normality of $S_n^{\textsl{np}}$ after standardization. Again, a key requirement to obtain this asymptotic distribution is that $\k=o(\varsigma_\m)$ implying that $\k=o(\m)$ if the smoothing operator $L$ is the identity. This corresponds to the test of overidentification in the parametric framework where more orthogonality restrictions than parameters are required.
\begin{theo}\label{prop:np} Let Assumptions \ref{Ass:A1}--\ref{Ass:A4} and \ref{Ass:A6} be satisfied. If
\begin{equation}\label{cond:prop:np}
  n\opw_\k=o(\sw_\k\varsigma_{\m}), \,\k=o(\varsigma_{\m}),\,\k\Big(\sum_{j=1}^\m\tau_j\Big)^2=O(n\opw_\k),\,\text{and } \Big(\sum_{j=1}^\m\tau_j\Big)^3=o(n)
\end{equation}
then it holds under $H_{\textsl{np}}$
\begin{equation*}
 (\sqrt 2\varsigma_{\m})^{-1} \Big(nS_n^{\textsl{np}}-\mu_{\m}\Big)\stackrel{d}{\rightarrow}\mathcal N(0,1).
\end{equation*}
\end{theo}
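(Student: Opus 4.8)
The plan is to deduce the result from Theorem~\ref{theo:norm:ind:main} by showing that replacing $\sol_0$ in $S_n$ by the orthogonal-series estimator $\hsol_\k$ of \eqref{gen:def:est2} perturbs $nS_n$ by only $o_p(\varsigma_\m)$. Put $b_j:=n^{-1}\sum_{i=1}^n(Y_i-\sol_0(Z_i))f_j(W_i)$ and $\widehat b_j:=n^{-1}\sum_{i=1}^n(Y_i-\hsol_\k(Z_i))f_j(W_i)$, so that
\[
 nS_n^{\textsl{np}}-nS_n=n\sum_{j=1}^\m\tau_j(\widehat b_j-b_j)^2+2n\sum_{j=1}^\m\tau_j\,b_j(\widehat b_j-b_j).
\]
Under $H_{\textsl{np}}$ one has $g=\op\sol_0$ with $\sol_0\in\cF_\sw^\sr$, whence $\Ex[(Y-\sol_0(Z))f_j(W)]=0$ and $b_j=n^{-1}\sum_iU_if_j(W_i)$. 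Condition \eqref{cond:prop:np} contains $\big(\sum_{j=1}^\m\tau_j\big)^3=o(n)$, and since $\k\to\infty$ together with $\k=o(\varsigma_\m)$ forces $\varsigma_\m\to\infty$, Theorem~\ref{theo:norm:ind:main} applies to the infeasible statistic $S_n$ and yields $(\sqrt2\varsigma_\m)^{-1}(nS_n-\mu_\m)\stackrel{d}{\rightarrow}\cN(0,1)$. By Slutsky's lemma it then suffices to prove $nS_n^{\textsl{np}}-nS_n=o_p(\varsigma_\m)$.

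Next I would expand $\sol_0-\hsol_\k$. Writing $Y_i=\sol_0(Z_i)+U_i$ and $\sol_0(Z_i)=e_\uk(Z_i)^t\theta_\uk+r_i$, with $\theta_\uk$ the vector of generalised Fourier coefficients of $\sol_0$ and $r_i$ the approximation remainder, and abbreviating $\widehat{[\op]}_\k:=n^{-1}\bold X_\k^t\bold Z_\k$, \eqref{gen:def:est2} gives, on the event of probability $\to1$ that $\widehat{[\op]}_\k$ is invertible,
\[
 \sol_0-\hsol_\k=r-e_\uk(\cdot)^t\widehat{[\op]}_\k^{-1}\Big(n^{-1}\textstyle\sum_ie_\uk(W_i)r_i\Big)-e_\uk(\cdot)^t\widehat{[\op]}_\k^{-1}\Big(n^{-1}\textstyle\sum_ie_\uk(W_i)U_i\Big).
\]
Accordingly $\widehat b_j-b_j$ splits into an approximation-bias part, a part carrying the remainder through the estimated inverse, and a stochastic part $-\widehat M_j\widehat{[\op]}_\k^{-1}\bigl(n^{-1}\sum_ie_\uk(W_i)U_i\bigr)$ with $\widehat M_j:=n^{-1}\sum_if_j(W_i)e_\uk(Z_i)^t$. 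Two structural facts drive the argument: (i) $\hsol_\k$ solves the empirical moment equations $n^{-1}\sum_ie_l(W_i)(Y_i-\hsol_\k(Z_i))=0$ for $l\le\k$, so the low indices $j\le\k$ contribute only negligibly and the genuinely new randomness sits at $j>\k$; (ii) for $j>\k$ the approximation-bias and propagated-remainder pieces combine into a deterministic term whose $\tau$-weighted squared size is $O(\opw_\k\sw_\k^{-1})$, while the remaining stochastic piece is governed by the diagonality restriction defining $\cTdDw$.

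It then remains to bound each block against $\varsigma_\m$, for which I would assemble: (a) conditioning of $\widehat{[\op]}_\k$, from $\sup_z\|e_\uk(z)\|^2=O(\k)$ (Assumption~\ref{Ass:A6}(iii)) and $\|\widehat{[\op]}_\k-[\op]_\k\|=O_p(\k/\sqrt n)$, combined with the singular-value bound $\|[\op]_\k^{-1}\|^2=O(\opw_\k^{-1})$ coming from $\op\in\cTdDw$ and the monotonicity of $\opw$; since $\k=o(\sum_{j=1}^\m\tau_j)$ and $\k\big(\sum_{j=1}^\m\tau_j\big)^2=O(n\opw_\k)$ give $\k^2=o(n\opw_\k)$, this yields $\|\widehat{[\op]}_\k^{-1}\|^2=O_p(\opw_\k^{-1})$ and legitimises passing to population quantities; (b) the remainder bounds $\sup_z|r(z)|=o(1)$ and $\|\op r\|_W^2=O(\opw_\k\sw_\k^{-1})$, from $\sol_0\in\cF_\sw^\sr$, $j^3=o(\sw_j)$, $\sup_{j,z}|e_j(z)|\le\eta_e$, and the mapping bound $\|\op\basZ_l\|_W^2=O(\opw_l)$. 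Then the bias block of $n\sum_j\tau_j(\widehat b_j-b_j)^2$ is $O_p(n\opw_\k\sw_\k^{-1})=o(\varsigma_\m)$ by $n\opw_\k=o(\sw_\k\varsigma_\m)$; the low-index contribution to both sums is $O_p(\mu_\k)=O_p(\k)=o(\varsigma_\m)$ by $\k=o(\varsigma_\m)$; and the high-index stochastic block, a quadratic form in $\widehat{[\op]}_\k^{-1}\bigl(n^{-1}\sum_ie_\uk(W_i)U_i\bigr)$ with kernel $\sum_{\k<j\le\m}\tau_j\widehat M_j^t\widehat M_j$, is controlled via the diagonality of $[\op]_\k^{-1}$ and the monotonicity of $(\opw_j/\tau_j)_{j\ge1}$, the resulting bound being $o(n^{-1}\varsigma_\m)$ precisely because $\k\big(\sum_{j=1}^\m\tau_j\big)^2=O(n\opw_\k)$. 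The concentration remainders throughout are manageable since $\k=o(\sum_{j=1}^\m\tau_j)=o(n^{1/3})$, hence $\k^3=o(n)$.

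I expect the main obstacle to be the cross term $2n\sum_{j=1}^\m\tau_j\,b_j(\widehat b_j-b_j)$: a crude Cauchy--Schwarz against $n\sum_j\tau_jb_j^2=O_p(\mu_\m)$ costs a diverging factor $\sqrt{\mu_\m/\varsigma_\m}$ (for instance $\mu_\m\sim\m$ while $\varsigma_\m\sim\sqrt\m$ when $L=\text{Id}$), so this term must be treated by direct first- and second-moment computations rather than norm bounds. Its low-index part reduces, via the moment equations for $\hsol_\k$, to $O_p(\mu_\k)=o_p(\varsigma_\m)$; its high-index part is a bilinear form in the two averages $\bigl(n^{-1}\sum_iU_if_j(W_i)\bigr)_{j>\k}$ and $\bigl(n^{-1}\sum_ie_l(W_i)U_i\bigr)_{l\le\k}$, whose mean and variance are computed from Assumptions~\ref{Ass:A1}--\ref{Ass:A2}, the elementary bound $\bigl|\Ex[U^2f_j^\tau(W)f_l^\tau(W)]\bigr|\le\eta_p\sigma^2\sqrt{\tau_j\tau_l}$, and once more the diagonality of $[\op]_\k^{-1}$ with the monotonicity in Assumption~\ref{Ass:A6}(iv); under \eqref{cond:prop:np} this part is $o_p(\varsigma_\m)$ as well. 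Collecting all the pieces gives $nS_n^{\textsl{np}}-nS_n=o_p(\varsigma_\m)$, and the assertion of the theorem follows from Theorem~\ref{theo:norm:ind:main} and Slutsky's lemma.
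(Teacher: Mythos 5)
Your high-level plan is the paper's: show $(\sqrt2\varsigma_\m)^{-1}(nS_n-\mu_\m)\Rightarrow\cN(0,1)$ for the infeasible statistic via Theorem~\ref{theo:norm:ind:main}, then argue by Slutsky that replacing $\sol_0$ by $\hsol_\k$ perturbs $nS_n$ by $o_p(\varsigma_\m)$, splitting the perturbation into a quadratic piece and a cross term which must be handled by direct moment computations rather than Cauchy--Schwarz. The rate bookkeeping you carry out (the bias block $O_p(n\opw_\k\sw_\k^{-1})=o(\varsigma_\m)$, the estimation piece $O_p(\k)=o(\varsigma_\m)$, the $[\hop]_\k^{-1}$ conditioning via $\k^2=o(n\opw_\k)$, and the singular-value bound $\|[\op]_\k^{-1}\|^2=O(\opw_\k^{-1})$ from the diagonality restriction) all line up with the paper's bounds $B_{n1},B_{n2},B_{n3}$ and $C_{n1},\ldots,C_{n4}$.

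However, your structural observation (i) is false and cannot carry the weight you place on it. The empirical moment equations of $\hsol_\k$ read $n^{-1}\sum_i e_l(W_i)(Y_i-\hsol_\k(Z_i))=0$ for $l\le\k$ in the $\{e_j\}$ basis, whereas $\widehat b_j=n^{-1}\sum_i(Y_i-\hsol_\k(Z_i))f_j(W_i)$ projects onto the distinct basis $\{f_j\}$; the paper explicitly requires $\{e_j\}\ne\{f_j\}$ precisely so that these projections do \emph{not} vanish (otherwise the $\tau_j=O(1)$ variant of the test would degenerate). Consequently there is no annihilation of the ``low indices $j\le\k$,'' and the low/high split you build around --- for the quadratic block, and again for the cross term (``its low-index part reduces, via the moment equations for $\hsol_\k$, to $O_p(\mu_\k)$'') --- is not justified. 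The paper does not split by index: it bounds the full sum over $j=1,\dots,\m$ by expanding around the Galerkin-type pseudo-true function $\sol_\k(\cdot)=e_\ukn(\cdot)^t[\op]_\ukn^{-1}[g]_\ukn$ and exploiting the mapping bound $\|\Ex[f_\umn(W)e_\ukn(Z)^t][\op]_\ukn^{-1}\|=O(1)$ together with $\|n^{-1/2}\sum_ie_\ukn(W_i)(\sol_\k(Z_i)-Y_i)\|^2=O_p(\k)$; the factor $\k$ comes from the $\k$-dimensionality of the estimator, not from any $f_j$ being killed. Your alternative pivot around the truncated Fourier coefficient vector $\theta_\uk$ (with remainder $r=E_\k^\perp\sol_0$) is a legitimate substitute for the paper's $\sol_\k$, but you then need the bias bound for $T(\sol_\k-\sol_0)$ to control the piece $[\op]_\k^{-1}\Ex[e_\uk(W)E_\k^\perp\sol_0(Z)]$ that your decomposition generates, and the argument must treat all of $j=1,\dots,\m$ uniformly without the claimed moment-equation cancellation.
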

\begin{example}
Consider the setting of Example \ref{exmp:cases}. In the \textit{mildly ill posed case} where $\opw_j\sim j^{-2a/r}$ for some $a\geq0$ condition  \eqref{cond:prop:np} holds true if $\k\sim n^\kappa$ with $\kappa<\nu/2$ and 
\begin{equation*}
 {r(1-\nu/2)/(2a+2p)}<\kappa<r(1-2\nu)/(2a+r).
\end{equation*}
In the \textit{severely ill posed case}, that is, $\opw_j\sim \exp(-j^{2a/r})$ for some  $a>0$, condition  \eqref{cond:prop:np} is satisfied if, for example, $\m$ satisfies $\m=o(k_n^p)$ and $k_n=o(\sqrt\m)$ where $\k\sim\big(\log n-\log (m_n^{3/2})\big)^{r/(2a)}$.
$\hfill\square$
\end{example}

The next result states an asymptotic distribution of our test if $\sum_{j=1}^\m\tau_j=O(1)$.
Let  $\varSigma^\textsl{np}$ be the covariance matrix of the infinite dimensional centered vector  $\big(U(f_j^\tau(W)-e_j^\tau(W))\big)_{j\geq 1}$.
The ordered eigenvalues of $\varSigma^\textsl{np}$ are denoted by $(\lambda_j^\textsl{np})_{j\geq1}$.

\begin{theo}\label{prop:np:tau} Let Assumptions \ref{Ass:A1}--\ref{Ass:A4} and \ref{Ass:A6} be satisfied. If
\begin{equation}\label{cond:prop:np:tau}
  \sum_{j=1}^\m\tau_j=O(1), \quad n\opw_\k=o(\sw_\k),\quad k_n^3=o(n\opw_\k), \text{ and }m_n^{-1}=o(1)
\end{equation}
then it holds under $H_{\textsl{np}}$
\begin{equation*}
n\,S_n^{\textsl{np}}\stackrel{d}{\rightarrow}
\sum_{j=1}^\infty\lambda_j^\textsl{np}\,\chi_{1j}^2.
\end{equation*}
\end{theo}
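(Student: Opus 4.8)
The plan is to proceed exactly as in the proof of Theorem~\ref{prop:np} (the asymptotic-normality case), but to track the error terms at the slower rates permitted by \eqref{cond:prop:np:tau} and then invoke Theorem~\ref{prop:norm:ind:main:tau} rather than Theorem~\ref{theo:norm:ind:main} for the leading term. First I would write $Y_i-\hsol_\k(Z_i)=U_i+(\sol_0-\hsol_\k)(Z_i)$ and expand
\begin{equation*}
 n\,S_n^{\textsl{np}}=\sum_{j=1}^\m\tau_j\Big|n^{-1/2}\sum_{i=1}^n U_i f_j(W_i)+n^{1/2}\,n^{-1}\sum_{i=1}^n(\sol_0-\hsol_\k)(Z_i)f_j(W_i)\Big|^2.
\end{equation*}
Using the explicit form \eqref{gen:def:est2} of $\hsol_\k$, the estimation-error part decomposes (as in Theorem~\ref{prop:np}) into a stochastic term that, together with $n^{-1/2}\sum_i U_i f_j(W_i)$, builds the vector $\big(U(f_j^\tau(W)-e_j^\tau(W))\big)_{j\ge 1}$ whose covariance is $\varSigma^{\textsl{np}}$, plus a bias term controlled by $\sup_{z}|\sol_0(z)-e_\uk(z)^t\beta_\k|$ and hence by $\sw_\k^{-1}$, plus remainder terms coming from replacing $[\op]_\uk$ by its empirical counterpart $\bold X_\k^t\bold Z_\k/n$ and from the $k$-dimensional linear statistic.

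Next I would bound each piece. The bias contribution to $n\,S_n^{\textsl{np}}$ is of order $n\,\sw_\k^{-1}\sum_{j=1}^\m\tau_j$; since $\sum_{j=1}^\m\tau_j=O(1)$ and $n\opw_\k=o(\sw_\k)$ with $\opw_\k\le 1$, this is $o(1)$. The remainder from $\|(\bold X_\k^t\bold Z_\k/n)^{-1}-[\op]_\uk^{-1}\|$ is handled exactly as in \cite{Johannes2009} using Assumption~\ref{Ass:A6}(ii),(iii),(v) and the moment bound of Assumption~\ref{Ass:A1}; the key quantitative input is that this operator-norm difference is $O_p(\sqrt{k_n^3/(n\opw_\k)})$, which is $o_p(1)$ by the third part of \eqref{cond:prop:np:tau}, and when multiplied against $\sum_{j=1}^\m\tau_j=O(1)$ and the relevant second moments it is still asymptotically negligible in $n\,S_n^{\textsl{np}}$. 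The cross term between $U$-part and estimation-error part is mean zero and its variance is dominated by the same quantities, so it vanishes as well. What remains is
\begin{equation*}
 n\,S_n^{\textsl{np}}=\sum_{j=1}^\m\tau_j\Big|n^{-1/2}\sum_{i=1}^n U_i\big(f_j^\tau(W_i)-e_j^\tau(W_i)\big)\tau_j^{-1/2}\Big|^2+o_p(1),
\end{equation*}
i.e. a quadratic form in the centered vector with covariance $\varSigma^{\textsl{np}}$, truncated at $\m$.

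Finally, having reduced the problem to a quadratic form $\sum_{j=1}^\m$ of an i.i.d.\ average of the vector $\big(U(f_j^\tau(W)-e_j^\tau(W))\big)_{j\ge1}$, I would apply the already-established machinery behind Theorem~\ref{theo:norm:ind:main:tau}: because $\sum_{j=1}^\m\tau_j=O(1)$, the trace of the truncated covariance stays bounded, the CLT in $\ell^2$ for the vector average combined with the continuous-mapping theorem for the (Hilbert--Schmidt) quadratic form gives convergence to $\sum_{j=1}^{\m}\lambda_j^{\textsl{np}}\chi_{1j}^2$, and letting $m_n\to\infty$ (using $m_n^{-1}=o(1)$ together with summability of $(\lambda_j^{\textsl{np}})$, which follows from $\sum_j\tau_j<\infty$ and boundedness of $\Ex[U^2|W]$ and of the $e_j$) yields the limit $\sum_{j=1}^\infty\lambda_j^{\textsl{np}}\chi_{1j}^2$. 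I expect the main obstacle to be the uniform-in-$j$ control of the estimation-error term in the regime where $\m$ grows but $k_n$ must satisfy only $k_n^3=o(n\opw_\k)$: one must show that the interaction between the truncation level $\m$, the ill-posedness weights $\opw_\k$, and the empirical inversion of $[\op]_\uk$ does not inflate the quadratic form, which is precisely where Assumption~\ref{Ass:A6}(iv) (monotonicity of $\opw$ and of $\opw_j/\tau_j$) and the class $\cTdDw$ (near-diagonality of $[\op]_\uk^{-1}$) are used, in the same way as in the proof of Theorem~\ref{prop:np} but now only needing the cruder bounds afforded by \eqref{cond:prop:np:tau}.
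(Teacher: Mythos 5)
Your overall architecture matches the paper's: expand $nS_n^{\textsl{np}}$ via the explicit form of $\hsol_\k$, isolate the leading $U$-quadratic form with covariance $\varSigma^{\textsl{np}}$, show the bias and empirical-inversion remainders vanish, and invoke the machinery behind Theorem~\ref{theo:norm:ind:main:tau}. That is the right plan, and you correctly identify the limiting vector $\big(U(f_j^\tau(W)-e_j^\tau(W))\big)_{j\geq1}$ (which in the paper arises from the algebraic identity $[\hop]_\ukn[\op]_\ukn^{-1}[g]_\ukn-[\widehat g]_\ukn=n^{-1}\sum_ie_\ukn(W_i)(\sol_\k(Z_i)-Y_i)$, moving the $e$-basis from $Z_i$ to $W_i$; this mechanism is worth spelling out, but you do land on the right target).

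The genuine gap is in your bias bound. You control the bias contribution by $n\,\sw_\k^{-1}\sum_{j=1}^\m\tau_j$ and claim this is $o(1)$ ``since $n\opw_\k=o(\sw_\k)$ with $\opw_\k\leq 1$''. That inference runs the wrong way: $\opw_\k\leq 1$ gives $n\opw_\k\sw_\k^{-1}\leq n\sw_\k^{-1}$, so $n\opw_\k=o(\sw_\k)$ controls the \emph{smaller} quantity but says nothing about $n\sw_\k^{-1}$ itself (take $\sw_\k\sim n^{1/2}$, $\opw_\k\sim n^{-1}$: then $n\opw_\k=o(\sw_\k)$ yet $n\sw_\k^{-1}\to\infty$). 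The paper avoids this by never using the crude sup-norm/approximation bound on $\sol_\k-\sol_0$ directly for the mean of $D_{nj}$; instead it passes through $T$ and exploits the link condition $T\in\cTdDw$: $\|T(\sol_\k-\sol_0)\|_W^2\leq d\,\|\sol_\k-\sol_0\|_\opw^2=O(\opw_\k\sw_\k^{-1})$ (via Lemma~A.2 of Johannes et al.), which supplies precisely the extra factor $\opw_\k$ that makes $n\,\sum_{j\le\m}\tau_j[T(\sol_\k-\sol_0)]_j^2=O(n\opw_\k\sw_\k^{-1})=o(1)$. The variance part of the bias term is bounded by $O(k_n^2\sw_\k^{-1}\sum_j\tau_j)$, which is $o(1)$ because $k_n^3=o(n\opw_\k)=o(\sw_\k)$. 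So you need to route the bias through the operator $T$ and the class $\cTdDw$ — not only, as you suggest, for the empirical-inversion remainder and the uniform-in-$j$ control, but centrally for the deterministic bias itself.
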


\begin{example}
Consider the setting of Example \ref{exmp:cases:2}. In the \textit{mildly ill posed case}, that is, $\opw_j\sim j^{-2a/r}$ for some $a\geq0$,
condition \eqref{cond:prop:np:tau} is satisfied if $\m\sim n^\nu$ for some $\nu>0$ and $\k\sim n^\kappa$ with
\begin{equation*}
 r/(2a+2p)<\kappa<r/(2a+3r).
\end{equation*}
In the \textit{severely ill posed case}, that is, $\opw_j\sim \exp(-j^{2a/r})$ for some $a>0$, condition \eqref{cond:prop:np:tau}
is satisfied if $\k\sim\big(\log (n^{1+\varepsilon})\big)^{r/(2a)}$ for any $\varepsilon>0$. In contrast to Theorem \ref{prop:np}, we require undersmoothing of the estimator $\hsol_\k$.
$\hfill\square$
\end{example}
\begin{rem}
 If the basis $\{e_j\}_{j\geq 1}$ coincides with $\{f_j\}_{j\geq 1}$ then $n\,S_n^{\textsl{np}}$ is asymptotically degenerate. 
To avoid this degeneracy problem we choose different bases functions and hence, sample splitting 
as used by \cite{Horowitz2009} is not necessary here. 
$\hfill\square$
\end{rem}

\begin{rem}Let $Z'$ be a vector containing only entries of $Z$ with $\dim(Z')<\dim(Z)$.
 It is easy to generalize our previous result for a test of
$H_{\textsl{np}}'$: there exists a solution $\sol_0\in\cF_\sw^\sr$ to \eqref{model:EE:t} only depending on $Z'$.
To be more precise consider the test statistic
\begin{equation*}
  S_n^{'\textsl{np}}:=\big\|n^{-1}\sum_{i=1}^n\big(Y_i-\hsol_\k(Z_i')\big)f_\umn^\tau(W_i)\|^2
\end{equation*}
where $\hsol_\k$ is the estimator \eqref{gen:def:est2} based on an iid. sample $(Y_1,Z_1',W_1),\dotsc,(Y_n,Z_n',W_n)$ of $(Y,Z',W)$. Under $H_{\textsl{np}}'$ we consider the conditional expectation operator $T':\cL_{Z'}^2\to \cL_W^2$ with $T'\phi:=\Ex[\phi(Z')|W]$. It is interesting to note that if $T$ is nonsingular then also $T'$ is. Hence, for a test of $H_{\textsl{np}}'$  we may replace Assumption \ref{Ass:A3} by the weaker condition that $T'$ is nonsingular. Moreover, under $H_{\textsl{np}}'$ the results of Theorem \ref{prop:np} and \ref{prop:np:tau} still hold true if we replace $Z$ by $Z'$.
$\hfill\square$
\end{rem}
In the mildly ill-posed case, the estimation precision suffers from the curse of dimensionality. 
Hence, by the  test of dimension reduction of $Z$ we can increase the accuracy of estimation of $\sol_0$. On the other hand, in the severely ill-posed case the rate of convergence is independent of the dimension of $Z$ (cf. \cite{ChenReiss2008}).
As the next example illustrates, a dimension reduction test can also weaken the required restrictions on the instrument to obtain identification of $\sol$ in the restricted model.
\begin{example}
 Let $Z=(Z^{(1)},Z^{(2)})$ where both, $Z^{(1)}$ and $Z^{(2)}$ are endogenous vectors of regressors. But only $Z^{(1)}$ satisfies a sufficiently strong relationship with the instrument $W$ in the sense that for all $\phi\in \cL_{Z^{(1)}}^2$ condition $\Ex[\phi(Z^{(1)})|W]=0$ implies $\phi=0$. In this example, we do not assume that this completeness condition is fulfilled for the joint distribution of $(Z^{(2)},W)$. 
Thereby only the operator $T^{(1)}:\cL_{Z^{(1)}}^2\to \cL_W^2$ with $T^{(1)}\phi:=\Ex[\phi(Z^{(1)})|W]$ is nonsingular but $T$ is singular. If our dimension reduction test of $Z$ indicates that  $Z^{(2)}$ can be omitted from the structural function $\sol_0$ then we obtain identification in the restricted model.
$\hfill\square$
\end{example}

\begin{rem}\label{rem:crit:np}[Estimation of Critical Values]
 For the estimation of critical values of Theorem  \ref{prop:np} and \ref{prop:np:tau}, let us define $\bold U_n^{\textsl{np}}=\big(Y_1-\hsol_\k(Z_1),\dots,Y_n-\hsol_\k(Z_n)\big)^t$. For all $m\geq 1$, we estimate the covariance matrix $\varSigma_m$ by 
\begin{equation*}
 \widehat\varSigma_m:=n^{-1}\,\bold W_m(\tau)^t\text{diag}(\bold U_n^{\textsl{np}})^2\,\bold W_m(\tau).
\end{equation*}
Now the asymptotic result of Theorem \ref{prop:np} continues to hold if we replace $\varsigma_\m$ by the  Frobenius norm of $\widehat\varSigma_\m$ and $\mu_\m$ by the trace of $\widehat\varSigma_\m$ (this is easily seen from the proof of Lemma \ref{lem:crit} assuming that $\{f_j\}_{j\geq 1}$ is uniformly bounded). In the setting of Theorem \ref{prop:np:tau}, we replace $\varSigma^\textsl{np}$ by a finite dimensional matrix. Let $\bold V_{k}:=\bold W_k \big(\bold Z_k^t\bold W_k)^{-1}\bold Z_k^t$ for $k\geq 1$. Then for a sufficiently large integer $M>0$ we estimate $\varSigma^\textsl{np}$ by 
\begin{equation*}
  \widehat\varSigma_M^{\textsl{np}}:=n^{-1}
\bold W_M(\tau)\big(I_n-\bold V_{\k}\big)^t\,
\text{diag}(\bold U_n^{\textsl{np}})^2\,\big(I_n-\bold V_{\k}\big)\bold W_M(\tau).
\end{equation*}
Hence, we approximate $\sum_{j=1}^\infty \lambda_j^{\textsl{np}}\chi_{1j}^2$ by the finite sum $\sum_{j=1}^{M_n}\widehat\lambda_j^{\textsl{np}}\,\chi_{1j}^2$ where $(\widehat\lambda_j^{\textsl{np}})_{1\leq j\leq M_n}$ are the ordered eigenvalues of $\widehat\varSigma_{M_n}^{\textsl{np}}$ where $\max_{1\leq j\leq M_n}|\widehat\lambda_j^{\textsl{np}}-\lambda_j^{\textsl{np}}|=o_p(1)$ if $M_n=o(\sqrt n)$.
\hfill$\square$\end{rem}

\subsection{Limiting behavior under local alternatives and consistency.}
Similar to the previous sections we study the power and consistency properties of our test.
To study the power against local alternatives of the statistic $S_n^{\textsl{np}}$ we consider alternative models with the function $\sol_\k(\cdot)=e_\k(\cdot)^t[T]_\k^{-1}\Ex[Yf_\ukn(W)]$. 
We consider alternative models
\begin{equation}\label{loc:alt:np}
Y=\sol_\k(Z)+\varsigma_\m^{1/2}n^{-1/2}\delta(Z)+U
\end{equation}
for some function $\delta\in L_Z^4$ and where $\Ex[U|W]=0$. Let $\sol$ be such that $\Ex[Y-\sol(Z)|W]=0$. Due to \eqref{loc:alt:np} $\sol$ does not belong $\cF_\sw^\rho$ and hence $H_{\textsl{np}}$ fails.
Indeed, if $\sol\in \cF_\sw^\rho$ then we show in the appendix that $\|T(\sol-\sol_\k)\|_W^2=O(\opw_\k\sw_\k^{-1})=o(\varsigma_\m n^{-1})$ due to condition \eqref{cond:prop:np} (or \eqref{cond:prop:np:tau}), which is in contrast to \eqref{loc:alt:np}. 

\begin{prop}\label{coro:norm:np}Let  Assumption    \ref{Ass:power} (ii) be satisfied. Given the conditions of Theorem \ref{prop:np} it holds under  \eqref{loc:alt:np}
\begin{equation*}
 (\sqrt 2\varsigma_\m)^{-1}\big(n\,S_n^{\textsl{np}}-\mu_\m\big)\stackrel{d}{\rightarrow}\cN\Big(2^{-1/2}\sum_{j= 1}^\infty\xi _j^2,1\Big).
\end{equation*}
Given the conditions of Theorem \ref{prop:np:tau} it holds under \eqref{loc:alt:np} where $\varsigma_\m$ is replaced by $1$ that
\begin{equation*}
 n\, S_n^{\textsl{np}}\stackrel{d}{\rightarrow}\sum_{j=1}^\infty\lambda_j^\textsl{np}\,\chi_{1j}^2(\delta_j/\lambda_j^\textsl{np}).
\end{equation*}
\end{prop}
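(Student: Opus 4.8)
The plan is to transfer the reasoning behind Proposition~\ref{coro:norm:ind:main} to the nonparametric estimator $\hsol_\k$, the only genuinely new feature being an estimation correction. Write $\widehat g_j:=n^{-1}\sum_{i=1}^n\bigl(Y_i-\hsol_\k(Z_i)\bigr)f_j(W_i)$, so that $nS_n^{\textsl{np}}=n\sum_{j=1}^\m\tau_j\widehat g_j^{\,2}$. Since $\sol_\k$ lies in the span of $e_1,\dots,e_\k$ by construction, the projection algebra defining $\widehat\beta_\k$ in \eqref{gen:def:est2} yields, under \eqref{loc:alt:np}, the exact identity
\begin{equation*}
Y_i-\hsol_\k(Z_i)=\Bigl(U_i-e_\uk(Z_i)^t(\bold X_\k^t\bold Z_\k)^{-1}\bold X_\k^t\bold U_n\Bigr)+\varsigma_\m^{1/2}n^{-1/2}\Bigl(\delta(Z_i)-e_\uk(Z_i)^t(\bold X_\k^t\bold Z_\k)^{-1}\bold X_\k^t\bold D_n\Bigr),
\end{equation*}
with $\bold U_n=(U_1,\dots,U_n)^t$ and $\bold D_n=(\delta(Z_1),\dots,\delta(Z_n))^t$. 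The first bracket is exactly the residual that enters $S_n^{\textsl{np}}$ under $H_{\textsl{np}}$ when the structural function equals $\sol_\k$, and the second is the sample residual of $\bold D_n$ after partialling out $\bold X_\k$ through $\bold Z_\k$, whose coordinatewise probability limit is $(\delta-\delta_\k)$ with $\delta_\k(\cdot):=e_\uk(\cdot)^t[T]_\uk^{-1}\Ex[e_\uk(W)\delta(Z)]$ the $\k$-term approximation of $\delta$ built exactly like $\sol_\k$. Hence $\widehat g_j=\widehat\psi_j+\varsigma_\m^{1/2}n^{-1/2}\widehat r_j$ and
\begin{equation*}
nS_n^{\textsl{np}}=n\sum_{j=1}^\m\tau_j\widehat\psi_j^{\,2}+2\,\varsigma_\m^{1/2}n^{1/2}\sum_{j=1}^\m\tau_j\widehat\psi_j\widehat r_j+\varsigma_\m\sum_{j=1}^\m\tau_j\widehat r_j^{\,2}.
\end{equation*}

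I would treat the three summands in turn. For the first, the argument in the proof of Theorem~\ref{prop:np} applies with $\sol_0$ replaced by $\sol_\k$: because $\sol_\k$ is the exact population target of $\hsol_\k$, every appeal there to the smoothness class $\cF_\sw^\sr$ only serves to bound an approximation bias that here vanishes, so $(\sqrt2\,\varsigma_\m)^{-1}\bigl(n\sum_{j=1}^\m\tau_j\widehat\psi_j^{\,2}-\mu_\m\bigr)\stackrel{d}{\to}\cN(0,1)$ holds under \eqref{cond:prop:np}. For the third (drift) term, a uniform law of large numbers for $n^{-1}\bold X_\k^t\bold Z_\k$, $n^{-1}\bold X_\k^t\bold D_n$, $n^{-1}\sum_i e_\uk(Z_i)f_j(W_i)$ and $n^{-1}\sum_i\delta(Z_i)f_j(W_i)$ --- controlled through Assumptions~\ref{Ass:A1} and \ref{Ass:A6} and the rate $\k^3=o(n)$ --- gives $\widehat r_j=\Ex[(\delta-\delta_\k)(Z)f_j(W)]+o_p(1)$ coordinatewise, and together with the mapping bounds of Assumption~\ref{Ass:A6} this yields $\sum_{j=1}^\m\tau_j\widehat r_j^{\,2}\to\sum_{j\ge1}\xi_j^2$ in probability, where $\xi_j$ is the limit of $\sqrt{\tau_j}\,\widehat r_j$; thus the standardised drift term converges to $2^{-1/2}\sum_{j\ge1}\xi_j^2$. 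For the cross term I would use the sharper split $2\,\varsigma_\m^{1/2}\sum_{j=1}^\m\bigl(\sqrt{\tau_j}\,\widehat r_j\bigr)\bigl(\sqrt{\tau_j}\,\sqrt n\,\widehat\psi_j\bigr)$: the first factor tends to $\xi_j$ with $\sum_j\xi_j^2<\infty$, while the covariance of $(\sqrt{\tau_j}\,\sqrt n\,\widehat\psi_j)_{j\le\m}$ converges to a restriction of $\varSigma$, so the inner product is $O_p(1)$ and the whole term is $O_p(\varsigma_\m^{1/2})=o_p(\varsigma_\m)$, negligible after division by $\sqrt2\,\varsigma_\m$. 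Adding the three limits gives $\cN\bigl(2^{-1/2}\sum_{j\ge1}\xi_j^2,1\bigr)$.

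For the second assertion, where $\sum_{j=1}^\m\tau_j=O(1)$ and the drift in \eqref{loc:alt:np} is $n^{-1/2}\delta$, the same decomposition applies with $\varsigma_\m$ replaced by $1$ and no standardisation is carried out. The proof of Theorem~\ref{prop:np:tau} already shows that the growing random vector $\bigl(\sqrt n\,\sqrt{\tau_j}\,\widehat\psi_j\bigr)_{1\le j\le\m}$ converges weakly, after its truncation step, to a Gaussian vector with covariance $\varSigma^{\textsl{np}}$, and that $n\sum_j\tau_j\widehat\psi_j^{\,2}\stackrel{d}{\to}\sum_{j\ge1}\lambda_j^{\textsl{np}}\chi_{1j}^2$ by the continuous mapping theorem applied to the spectral decomposition of $\varSigma^{\textsl{np}}$. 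Under the drift this vector is shifted by $(\sqrt{\tau_j}\,\widehat r_j)_j$, whose limit contributes the non-centrality $\delta_j$ in the $j$-th direction; writing $n\,S_n^{\textsl{np}}=\|G+\mu\|^2$ with $G\sim\cN(0,\varSigma^{\textsl{np}})$ and $\mu$ the deterministic shift and expanding in the eigenbasis of $\varSigma^{\textsl{np}}$ yields $n\,S_n^{\textsl{np}}\stackrel{d}{\to}\sum_{j\ge1}\lambda_j^{\textsl{np}}\chi_{1j}^2(\delta_j/\lambda_j^{\textsl{np}})$.

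The main obstacle is the drift term: showing that $\varsigma_\m\sum_{j=1}^\m\tau_j\widehat r_j^{\,2}$ (respectively $\sum_{j=1}^\m\tau_j\widehat r_j^{\,2}$) has a genuine limit requires controlling simultaneously the deterministic piece $\|T(\delta-\delta_\k)\|_W$ --- which is where conditions \eqref{cond:prop:np} and \eqref{cond:prop:np:tau} and the mapping and diagonality restrictions of Assumption~\ref{Ass:A6} enter, via a bound of order $\opw_\k\sw_\k^{-1}$ --- and, more delicately, the aggregate stochastic fluctuation of the $\widehat r_j$ over the growing index set $j\le\m$, so that the $\tau$-weighted sum is stabilised rather than swamped by estimation noise; here the bound $\varsigma_\m^2=o(n)$ implicit in \eqref{cond:prop:np} is used. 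A secondary and more routine point is verifying that feeding the perturbed residual $U_i+\varsigma_\m^{1/2}n^{-1/2}(\delta-\delta_\k)(Z_i)$ into the quadratic form does not disturb the quadratic-form central limit theorem behind Theorem~\ref{prop:np}; this goes through because the added term is $O(n^{-1/2})$ uniformly in $i$ and, after squaring and cross-multiplying, enters only the drift and cross terms already analysed.
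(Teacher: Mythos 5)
Your approach mirrors the paper's: decompose the residual $Y_i-\hsol_\k(Z_i)$ by the projection algebra of \eqref{gen:def:est2}, feed the resulting sum into the quadratic form, analyze the leading quadratic term through the machinery of Theorem~\ref{prop:np} (respectively \ref{prop:np:tau}), and treat the drift and cross contributions separately. Your exact identity for $Y_i-\hsol_\k(Z_i)$ (valid because $\sol_\k$ lies in the span of $e_1,\dots,e_\k$) is a sharper restatement of the decomposition the paper carries out more loosely via the bound on $A_{n1}$ in the proof of Theorem~\ref{prop:np}; this part is genuinely cleaner than the paper's own argument.

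There are, however, two issues worth flagging. First, your two parts are internally inconsistent. In the first part you correctly compute $\widehat r_j\to\Ex[(\delta-\delta_\k)(Z)f_j(W)]$ — the drift \emph{after} partialling out $\bold Z_\k$ through $\bold X_\k$ — and identify $\xi_j$ with the limit of $\sqrt{\tau_j}\,\widehat r_j$. But in the second part you assert, for the very same $\widehat r_j$, that the shift ``contributes the non-centrality $\delta_j$ in the $j$-th direction'' with $\delta_j=\sqrt{\tau_j}\,\Ex[\delta(Z)f_j(W)]$; these coincide only if the partialled-out component $\delta_\k$ vanishes, i.e.\ $\Ex[\delta(Z)e_j(W)]=0$ for $j\le\k$. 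Either state that restriction once (it is in fact forced by the circular definition of $\sol_\k$ preceding \eqref{loc:alt:np}) and use it in both parts, or keep your $\xi_j$ throughout; you cannot have one answer in part one and the other in part two. The paper's own proof evaluates the drift term directly as $\varsigma_\m\sum_j\delta_j^2$ and reuses the $A_{n1}=o_p(\varsigma_\m/n)$ bound, implicitly relying on exactly this vanishing — so the paper also sweeps the leakage under the rug, but at least does so consistently in both parts.

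Second, your dismissal of the cross term is too quick. A plain Cauchy--Schwarz on $\varsigma_\m^{1/2}n^{1/2}\sum_j\tau_j\widehat\psi_j\widehat r_j$ gives $O_p\bigl(\varsigma_\m^{1/2}(\sum_{j\le\m}\tau_j)^{1/2}\bigr)$, and since $\varsigma_\m\le C\sum_{j\le\m}\tau_j$ by Assumption~\ref{Ass:A1}, this bound is \emph{not} $o_p(\varsigma_\m)$. You need to exploit centering, as in the proof of Proposition~\ref{coro:norm:ind:main} (where the analogous $II_n$ term is bounded in expectation via $\Ex[U f_j(W)]=0$ and the smallness of $\|T\delta_n\|_W$), rather than treating $(\sqrt{\tau_j}\sqrt n\,\widehat\psi_j)_{j\le\m}$ merely as a vector with bounded covariance; the inner product with a growing index range does not stay $O_p(1)$ by that observation alone. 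Spelling out the variance computation for $\sum_{j\le\m}\tau_j\widehat\psi_j\widehat r_j$ is what makes the cross term genuinely negligible.
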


In the next proposition, we establish consistency of our test when $H_{\textsl{np}}$ does not hold, that is, the solution to \eqref{model:EE:t} does not belong to  $\cF_\sw^\sr$ for any sequence $\sw$ satisfying Assumption \ref{Ass:A6} and any sufficiently large constant $0<\sr<\infty$.
\begin{prop}\label{prop:np:cons}Assume that $H_{\textsl{np}}$ does not hold. Let  Assumption  \ref{Ass:power} (i) hold true. Let $(\alpha_n)_{n\geq 1}$ be as in Proposition \ref{prop:cons}. Under the conditions of Theorem \ref{prop:np} and \ref{prop:np:tau}, respectively, we have
\begin{gather*}
 \PP\Big((\sqrt 2\,\varsigma_\m)^{-1}\big( nS_n^{\textsl{np}}-\mu_\m\big)>\alpha_n\Big)=1+o(1),\\
  \PP\big(nS_n^{\textsl{np}}>\alpha_n\big)=1+o(1).
\end{gather*}
\end{prop}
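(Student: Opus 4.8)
The plan is to follow the template of Proposition~\ref{prop:cons}, since $S_n^{\textsl{np}}$ is the same goodness-of-fit statistic with $\sol_0$ replaced by $\hsol_\k$. First I would fix the alternative: under the negation of $H_{\textsl{np}}$ the solution $\sol$ to \eqref{model:EE:t} does not lie in $\cF_\sw^\sr$ for any admissible sequence $\sw$, so that $\|LT(\sol-\sol_\k)\|_W$ stays bounded away from zero along a subsequence, or more usefully that $\sum_{j\leq\m}\tau_j\,\Ex[(\sol-\hsol_\k)(Z)f_j(W)]^2$ does not vanish. The key population quantity is $\theta_\m:=\sum_{j=1}^\m\tau_j\,\Ex[(g-T\hsol_\k)(W)f_j(W)]^2$, and the first step is to show that, with probability approaching one, $n\,S_n^{\textsl{np}}$ is at least $n\,\theta_\m$ up to lower-order fluctuation terms. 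Here Assumption~\ref{Ass:power}(i) is what guarantees, via the Parseval identity used to motivate \eqref{sn}, that $\|LT(\sol-\sol_\k)\|_W>0$ translates into a genuinely positive $\theta_\m$; and since the $\tau_j$ are nonincreasing with $\tau_1=1$, $\theta_\m$ is bounded below by a fixed positive constant for all $n$ large.

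Next I would decompose $n\,S_n^{\textsl{np}}$ around its mean. Writing $\hat c_j=n^{-1}\sum_i(Y_i-\hsol_\k(Z_i))f_j(W_i)$ and $c_j=\Ex[(Y-\hsol_\k(Z))f_j(W)]$ (conditionally on the estimation of $\hsol_\k$, or after replacing $\hsol_\k$ by its probability limit), expand $n\,\hat c_j^2 = n\,c_j^2 + 2\sqrt n\,c_j\cdot\sqrt n(\hat c_j-c_j)+n(\hat c_j-c_j)^2$ and sum against $\tau_j$, $j\le\m$. The leading term is $n\,\theta_\m\sim n$, which diverges; the remaining terms must be shown to be of smaller order. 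The cross term is controlled by Cauchy--Schwarz together with the variance bound $\Ex[n(\hat c_j-c_j)^2]=O(1)$ uniform in $j$ (using $\Ex|Y-\sol_0(Z)|^4<\infty$, Assumption~\ref{Ass:A1}, and the $\sqrt n$-consistency/rate properties of $\hsol_\k$ established under the conditions of Theorems~\ref{prop:np} and~\ref{prop:np:tau}); this gives a cross term of order $\sqrt{n\,\theta_\m}\cdot(\sum_{j\le\m}\tau_j)^{1/2}=o(n)$ under \eqref{cond:prop:np} (resp.\ \eqref{cond:prop:np:tau}). The quadratic remainder is $O_p(\sum_{j\le\m}\tau_j + k_n(\sum_{j\le\m}\tau_j)\opw_\k^{-1})$ by the same moment bounds, again $o(n)$ under the stated rate conditions. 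One also checks that replacing $\hsol_\k$ by $\sol_\k$ (or by the actual $\sol$ in the bias of the approximation) costs only $\|T(\hsol_\k-\sol_\k)\|_W^2=O_p(\cR_n)=o(1)$, which is negligible against $n\,\theta_\m$.

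Assembling these pieces: with probability tending to one, $n\,S_n^{\textsl{np}}\geq n\,\theta_\m - o_p(n) \geq c\,n$ for some $c>0$, while the threshold is $\alpha_n+\mu_\m$ with $\alpha_n=o(n\varsigma_\m^{-1})$ and $\mu_\m=\sum_j\textsl s_{jj}=O(\sum_{j\le\m}\tau_j)=o(n)$ in the first regime (resp.\ $\alpha_n=o(n)$ and a fixed limiting threshold in the second). Hence $\PP\big((\sqrt2\,\varsigma_\m)^{-1}(nS_n^{\textsl{np}}-\mu_\m)>\alpha_n\big)\to1$ and $\PP(nS_n^{\textsl{np}}>\alpha_n)\to1$, as claimed. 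I expect the main obstacle to be the uniform-in-$j$ control of the estimation error contributed by $\hsol_\k$: because $\hsol_\k$ is itself a series estimator whose error enters every coordinate $j\le\m$, one must bound $\sum_{j\le\m}\tau_j\big(\Ex[(\hsol_\k-\sol)(Z)f_j(W)]\big)^2$ and its sample analogue without losing more than an $o(n)$ factor, which is exactly where the conditions $k_n=o(\varsigma_\m)$ and $k_n(\sum_{j\le\m}\tau_j)^2=O(n\opw_\k)$ are used; the rest of the argument is a routine adaptation of the proof of Proposition~\ref{prop:cons}.
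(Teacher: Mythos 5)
Your plan is correct and follows essentially the same route as the paper: split $Y_i-\hsol_{\k}(Z_i)$ into the estimation error $\hsol_{\k}-\sol_{\k}$ (whose contribution is $o_p(1)$ by re-using the bounds from the proof of Theorem~\ref{prop:np}, resp.\ Theorem~\ref{prop:np:tau}) and the part centered at the deterministic population truncation $\sol_{\k}$, whose weighted squared-coefficient sum converges to a positive constant proportional to $\|T(\sol_{\k}-\sol)\|_W^2$ under the alternative, and then conclude because $\varsigma_{\m}\alpha_n+\mu_{\m}=o(n)$. The one cosmetic point: the paper carries out the argument with the deterministic $\sol_{\k}$ rather than the random $\hsol_{\k}$ inside the population quantity, which is the cleaner version of what you call ``replacing $\hsol_{\k}$ by its probability limit.''
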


In the following we show that our tests are consistent uniformly over the function class
\begin{equation*}
 \cJ_{n}^\sr=\Big\{\sol\in \cL_Z^2:\|LT(\sol-\sol_0)\|_W^2\geq \sr n^{-1}\varsigma_\m\text{  and}\sup_{z\in\cZ}|(\sol-\sol_0)(z)|\leq C\Big\}
\end{equation*} 
where $\sol_0\in\cF_\sw^\rho$ solves \eqref{model:EE:t} and  $C>0$ is a finite constant.
 For the next result let $q_{1\alpha}$ and $q_{2\alpha}$ denote the $1-\alpha$ quantile of $\cN(0,1)$ and $\sum_{j=1}^\infty\lambda_j^{\textsl{np}}\,\chi_{1j}^2$, respectively.
\begin{prop}\label{prop:np:unf:cons}Let Assumption  \ref{Ass:power} be satisfied.
For any $\varepsilon>0$, any  $0<\alpha<1$, and any sufficiently large constant $\rho>0$ we have under the conditions of Theorem \ref{prop:np}
\begin{equation*}
  \lim_{n\to\infty}\inf_{\sol\in\cJ_n^\sr}\PP\Big((\sqrt 2\,\varsigma_\m)^{-1}\big(n\,S_n^{\textsl{np}}-\mu_\m\big)>q_{1\alpha}\Big)\geq 1-\varepsilon,\label{prop:np:unf:cons:1}\\
\end{equation*}
whereas under the conditions of Theorem \ref{prop:np:tau}  it holds
\begin{equation*}
\lim_{n\to\infty}\inf_{\sol\in\cJ_{n}^\sr}\PP\big(n\,S_n^{\textsl{np}}>q_{2\alpha}\big)\geq 1-\varepsilon.\label{prop:np:unf:cons:2} 
\end{equation*}
\end{prop}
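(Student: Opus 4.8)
The plan is to mimic the argument used for the uniform-consistency results in the simple and parametric cases (Propositions \ref{prop:unf:cons} and \ref{prop:par:unf:cons}), transporting it to the setting where $\sol_0$ is replaced by the nonparametric estimator $\hsol_\k$ of \eqref{gen:def:est2}. The central observation is that for $\sol\in\cJ_n^\sr$ the test statistic $n\,S_n^{\textsl{np}}$ decomposes, up to asymptotically negligible remainders, into a stochastic part behaving exactly as under the null (handled by Theorems \ref{prop:np} and \ref{prop:np:tau}) plus a deterministic drift term proportional to $n\,\|LT(\sol-\sol_0)\|_W^2$. Since on $\cJ_n^\sr$ this drift is at least $\sr\,\varsigma_\m$ (resp.\ $\sr$), choosing $\sr$ large enough pushes the standardized statistic past any fixed quantile $q_{1\alpha}$ (resp.\ $q_{2\alpha}$) with probability exceeding $1-\varepsilon$. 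The uniformity comes from controlling the remainder terms \emph{uniformly} over $\cJ_n^\sr$, which is possible because the class imposes the uniform bound $\sup_{z\in\cZ}|(\sol-\sol_0)(z)|\le C$.

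Concretely, I would first write, for $\sol$ in the alternative, $Y_i-\hsol_\k(Z_i)=U_i+(\sol(Z_i)-\hsol_\k(Z_i))$ and expand $n\,S_n^{\textsl{np}}=\sum_{j=1}^\m\tau_j\big|n^{-1/2}\sum_i U_i f_j(W_i)+n^{-1/2}\sum_i(\sol(Z_i)-\hsol_\k(Z_i))f_j(W_i)\big|^2$. The first inner sum is the null term; the second, after inserting and removing $\Ex[(\sol-\sol_0)(Z)f_j(W)]$ and using that $\hsol_\k$ estimates the $\cF_\sw^\sr$-projection of $\sol_0$ (consistently, with the rate $\cR_n$ from Assumption \ref{Ass:A6} and the bias bound $\|T(\sol-\sol_\k)\|_W^2=O(\opw_\k\sw_\k^{-1})$ recorded just before Proposition \ref{coro:norm:np}), contributes the drift $n\sum_{j=1}^\m\tau_j\Ex[(\sol-\sol_0)(Z)f_j(W)]^2$, which by Parseval and Assumption \ref{Ass:power}(i) is comparable to $n\,\|LT(\sol-\sol_0)\|_W^2$. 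Cross terms are bounded by Cauchy--Schwarz together with the second-moment bounds on $n^{-1}\sum_i U_i f_j(W_i)$ (order $\mu_\m$) and on the estimation error, invoking Assumptions \ref{Ass:A1}, \ref{Ass:A2}, \ref{Ass:ex}-type moment conditions on $\{e_j\}$, and conditions \eqref{cond:prop:np} (resp.\ \eqref{cond:prop:np:tau}) that make $\k$ small relative to $\varsigma_\m$; the point is that these bounds depend on $\sol$ only through the constant $C$, so they hold uniformly on $\cJ_n^\sr$.

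Having isolated the drift, I would then argue as follows. Under the conditions of Theorem \ref{prop:np}, write the standardized statistic as $(\sqrt2\varsigma_\m)^{-1}(n\,S_n^{\textsl{np}}-\mu_\m)=Z_n+D_n+o_p(1)$ uniformly over $\cJ_n^\sr$, where $Z_n\stackrel{d}{\to}\cN(0,1)$ is the null term and $D_n\ge (\sqrt2)^{-1}\sr\,(1+o(1))$ on $\cJ_n^\sr$ by the drift lower bound. Hence $\PP_\sol\big(Z_n+D_n+o_p(1)>q_{1\alpha}\big)\ge \PP\big(Z_n>q_{1\alpha}-(\sqrt2)^{-1}\sr/2\big)+o(1)$, and choosing $\sr$ large makes the right-hand side at least $1-\varepsilon$; taking the infimum over $\cJ_n^\sr$ and then $n\to\infty$ gives the claim. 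The case of Theorem \ref{prop:np:tau} is identical with the unstandardized statistic: $n\,S_n^{\textsl{np}}=\widetilde Z_n+\widetilde D_n+o_p(1)$ with $\widetilde Z_n\stackrel{d}{\to}\sum_j\lambda_j^{\textsl{np}}\chi_{1j}^2$ and $\widetilde D_n\ge \sr(1+o(1))$ on $\cJ_n^\sr$.

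The main obstacle is making the error control \emph{uniform} over $\cJ_n^\sr$: the estimation error $\hsol_\k-\sol_0$ depends on the data but not on $\sol$, yet the cross term $n^{-1}\sum_i(\sol(Z_i)-\hsol_\k(Z_i))f_j(W_i)$ couples $\sol$ with the estimation error, and one must show the leading piece is exactly $\Ex[(\sol-\sol_0)(Z)f_j(W)]$ plus something $o(\varsigma_\m^{1/2}n^{-1/2})$ in the relevant norm, uniformly in $\sol$ subject only to $\sup|\sol-\sol_0|\le C$. This is where the uniform-boundedness part of the $\cJ_n^\sr$ definition and Assumption \ref{Ass:A6}(iii) (uniformly bounded basis $\{e_j\}$) are essential, since they let Markov/Bernstein-type bounds on $n^{-1}\sum_i(\sol-\sol_0)(Z_i)f_j(W_i)-\Ex[(\sol-\sol_0)(Z)f_j(W)]$ be taken with a constant depending only on $C$, $\eta_e$, $\eta_f$, $\eta_p$. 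Once that uniformity is in hand, the remainder of the argument is the same quantile-shift reasoning already used in Propositions \ref{prop:unf:cons} and \ref{prop:par:unf:cons}.
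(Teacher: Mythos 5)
Your proposal is in essence the paper's own argument: expand $n\,S_n^{\textsl{np}}$ into a null quadratic form, a drift proportional to $n\,\|LT(\sol-\sol_0)\|_W^2$, and a cross term, then bound the cross term and the estimation-error contribution of $\hsol_\k$ uniformly over $\cJ_n^\sr$, and finally push past the quantile by taking $\sr$ large. The paper does exactly this, but more economically: it invokes inequality \eqref{prop:par:unf:cons:eq} (which encodes the $(a-b)^2\geq a^2/2-b^2$ decomposition) with $\phi(\cdot,\hthet)$ replaced by $\hsol_\k$, observes that $\|n^{-1/2}\sum_i(\hsol_\k(Z_i)-\sol_0(Z_i))f_\umn^\tau(W_i)\|^2=o_p(\varsigma_\m)$ and that the cross term with $\hsol_\k$ can be swapped for the cross term with $\sol_0$ up to $o_p(\varsigma_\m)$ (both uniformly on $\cJ_n^\sr$, citing the proof of Theorem~\ref{prop:np}), and then defers wholesale to the proof of Proposition~\ref{prop:unf:cons}. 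So you re-derive what the paper cites; the content is the same.

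One point deserves a caution: you write that ``the estimation error $\hsol_\k-\sol_0$ depends on the data but not on $\sol$.'' This is not literally true, since $\hsol_\k$ is built from $\bold Y_n$ and $Y_i=\sol(Z_i)+U_i$ depends on $\sol$. You do acknowledge the coupling in the next sentence and correctly identify the uniform bound $\sup_z|(\sol-\sol_0)(z)|\le C$ (together with the bounded basis of Assumption~\ref{Ass:A6}(iii)) as what saves the uniformity; just be aware that the precise statement is that the \emph{bounds} from the proof of Theorem~\ref{prop:np} depend on $\sol$ only through quantities controlled on $\cJ_n^\sr$, not that $\hsol_\k$ itself is $\sol$-free.
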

\section{Monte Carlo simulation}
In this section, we study the finite-sample performance of our test by presenting the results of Monte Carlo experiments. There are $1000 $ Monte Carlo replications in each experiment. Results are presented for the nominal level $0.05$.
 Realizations of Y were
generated from
\begin{equation}\label{sim:mod}
Y = \sol(Z) + c_U U
\end{equation}
for some constant $c_U>0$ specified below.
 The structural function $\sol$ and the joint distribution of $(Z,W,U)$ varies in the experiments below. As basis $\{f_j\}_{j\geq 1}$ we choose cosine basis functions given by $f_{j}(t)=\sqrt{2}\cos(\pi j t)$ for $j=1,2,\dots$ throughout this simulation study.
\paragraph{Parametric Specification}
Let us investigate the finite sample performance of our tests in the case of parametric specifications. 
 Realizations $(Z,W)$ were generated by $ W \sim \cU[0,1]$, $Z = (\xi\, W +(1-\xi)\, \varepsilon)^2$ where $\xi=0.8$ and $\varepsilon\sim\cN(0.5, 0.1)$. 
Moreover, let $U = \kappa\,\varepsilon + \sqrt{1-\kappa^2}\,\epsilon$ with $\kappa=0.3$ and $\epsilon\sim N(0, 1)$.
Then realizations of $Y$ where generated by \eqref{sim:mod} with $c_U=0.2$ by an either linear function
\begin{equation}\label{mod:sim:1}
\sol(z)= z, 
\end{equation}
a polynomial of second degree
\begin{equation}\label{mod:sim:2}
\sol(z)= z- z^2,
\end{equation}
or a polynomial of third degree
\begin{equation}\label{mod:sim:3}
\sol(z)=z- z^2+\theta_3\, z^3.
\end{equation} 
Given \eqref{mod:sim:3} is the correct model, then $\theta_3=1.5$ if \eqref{mod:sim:1} is the null model and $\theta_3=3$ if \eqref{mod:sim:1} is the null model.
In Table 1 we depict the empirical rejection probabilities when using $S_n^\textsl{p}$ with additional smoothing where either $\tau_j= j^{-1}$ or $\tau_j= j^{-2}$, $j\geq 1$,  which we denote by $S_n^{1\textsl{p}}$ or $S_n^{2\textsl{p}}$, respectively.
\begin{table}[h!]
\begin{center}
\renewcommand{\arraystretch}{1.1}
\begin{tabular}{llllr}
\textit{Sample }&		\,\,\textit{Null}& \,\,\,\textit{Alt.} &\textit{\quad\quad Empirical Rejection probability }\\
  \,\,\,\textit{Size } & \textit{Model }  & 	\textit{Model}&\quad\quad \textit{\,\,$S_n^{1\textsl{p}}$\quad\quad\quad\quad$S_n^{2\textsl{p}}$}\textit{\quad\quad\,\,\,\,H(2006)' test}\\
 \hline	\hline
\quad 250  &\quad\eqref{mod:sim:1}	&  {\small $H^{\textsl p}$ true}& \quad\quad 0.047 \quad\quad\quad 0.045 \quad\quad\quad0.063 \\
	    &\quad\eqref{mod:sim:2}	&   {\small $H^{\textsl p}$ true} & \quad\quad 0.049 \quad\quad\quad0.050 \quad\quad\quad0.059\\
	    &\quad\eqref{mod:sim:1}	& \quad\eqref{mod:sim:2} &\quad\quad 0.902 \quad\quad\quad0.930 \quad\quad\quad0.888  \\
	    &\quad\eqref{mod:sim:1}	&\quad \eqref{mod:sim:3} &\quad\quad 0.730 \quad\quad\quad0.732 \quad\quad\quad0.653 \\
	    &\quad\eqref{mod:sim:2}	&\quad \eqref{mod:sim:3} & \quad\quad 0.442 \quad\quad\quad0.488 \quad\quad\quad0.468\\
\hline
\quad 500  &\quad\eqref{mod:sim:1}	&   {\small $H^{\textsl p}$ true} & \quad\quad0.055 \quad\quad\quad0.044 \quad\quad\quad0.053 \\
	    &\quad\eqref{mod:sim:2}	&   {\small $H^{\textsl p}$ true} & \quad\quad0.051 \quad\quad\quad0.053 \quad\quad\quad0.059\\
	    &\quad\eqref{mod:sim:1}	& \quad\eqref{mod:sim:2} &\quad\quad 0.989 \quad\quad\quad0.998 \quad\quad\quad0.988  \\
	    &\quad\eqref{mod:sim:1}	&\quad \eqref{mod:sim:3} &\quad\quad 0.899 \quad\quad\quad0.894 \quad\quad\quad0.780 \\
	    &\quad\eqref{mod:sim:2}	&\quad \eqref{mod:sim:3} & \quad\quad0.709 \quad\quad\quad0.728 \quad\quad\quad0.652\\
\hline
\end{tabular}
\end{center}
\caption{Empirical Rejection probabilities for parametric specification}
\end{table}
When $\tau_j=j^{-1}$ then the number of basis functions used is $m=200$ while in the case of $\tau_j=j^{-2}$ a choice of $m=100$ is sufficient.
The critical values are estimated as described in Remark \ref{est:crit:par} where $M=150$ if $\tau_j=j^{-1}$ and $M=100$ if $\tau_j=j^{-2}$. This choice of $M$ ensures that the estimated eigenvalues $\widehat\lambda_j$ are sufficiently close to zero for all $j\geq M$.
We compare our test statistic with the test of \cite{Horowitz2006}. 
We follow his implementation using biweight kernels. The bandwidth used to estimate the joint density of $(Z,W)$ was also selected by cross validation.
As Table 1 illustrates, the results for $S_n^{1\textsl{p}}$ and $S_n^{2\textsl{p}}$ are quite similar. In both situations, our test is more powerful than the test of \cite{Horowitz2006} when testing \eqref{mod:sim:1} against \eqref{mod:sim:3}. 
In this simulation study, we observed that the estimated coefficients of $T(\sol-\phi(\thet,\cdot))$ have a fast decay. Consequently, the test statistic $S_n$ with no weighting has less power, as we discussed in Subsection 2.4. In contrast, we will demonstrate by the end of this section that using weights can be inappropriate. 
\paragraph{Testing Exogeneity}
We now turn to the test of exogeneity where the realizations $(Z,W)$ are generated by $W \sim\cU[0,1]$ and $Z = \xi\,W +\sqrt{1-\xi^2}\,\varepsilon$ with $\xi=0.7$,  and $\varepsilon\sim \cU[0,1]$.
 Moreover, let $U = \kappa\,\varepsilon + \sqrt{1-\kappa^2}\,\epsilon$ with $\epsilon\sim \cU[0,1]$. Here, $\kappa$ measures the degree of endogeneity of $Z$ and is varied among the experiments. The null hypothesis $H_0$ holds true if $\kappa=0$ and is false otherwise. Now realizations of $Y$ where generated by \eqref{sim:mod} with $c_U=1$ and the nonparametric structural function $\sol_1(z)=\sum_{j=1}^\infty (-1)^{j+1}\,j^{-1} \sin(j\pi z)$. For computational reasons we truncate the infinite sum at $K=100$. The resulting function is displayed in Figure \ref{phi-12}. We estimate the structural relationship using Lagrange polynomials. Indeed, only a few basis functions are necessary to accurately approximate the true function. If we choose $\k$ too small or too large then the estimator will be a poor approximate of the true structural function and hence, the test statistic will reject $H_{\textsl{np}}$. In this experiment we set $\k=4$ for $n=250$ and $n=500$.

%
\begin{table}[h]\label{table:ex}	
\begin{center}
\renewcommand{\arraystretch}{1.1}
\begin{tabular}{lllr}
\textit{Sample Size} &\,\,\textit{$\kappa$ }&\textit{\quad\quad\quad Empirical Rejection probability using}\\
      & 		\quad& \quad\textit{\,\,$S_n^{1\textsl{e}}$\quad\quad\quad\quad$S_n^{2\textsl{e}}$\quad\quad BH(2007)' test }\\
\hline	\hline

 \quad \quad250 &$0.0$	 &\quad 0.038 \quad\quad\quad0.030 \quad\quad\quad\quad0.030\\
	      &  $0.15$ &\quad 0.209 \quad\quad\quad0.314 \quad\quad\quad\quad0.153\\
	      &  $0.2$  &\quad 0.369 \quad\quad\quad 0.513 \quad\quad\quad\quad0.293\\
	      &  $0.25$ & \quad0.591 \quad\quad\quad0.716 \quad\quad\quad\quad0.504\\
\hline
\quad \quad500 &$0.0$	 &\quad 0.043 \quad\quad\quad0.043 \quad\quad\quad\quad0.052\\
	      &  $0.15$ &\quad 0.476 \quad\quad\quad0.543 \quad\quad\quad\quad0.416\\
	      &  $0.2$  &\quad 0.749 \quad\quad\quad0.809 \quad\quad\quad\quad 0.693\\
	      &  $0.25$ &\quad 0.922 \quad\quad\quad 0.957 \quad\quad\quad\quad0.885\\
\hline
\end{tabular}
\end{center}
\caption{Empirical Rejection probabilities for testing exogeneity}
\end{table}
In Table 2 we depict the empirical rejection probabilities when using $S_n^\textsl{e}$ with additional smoothing where either ${\tau_j} =j^{-1}$ or ${\tau_j}=j^{-2}$, $j\geq 1$, which we denote by $S_n^{1\textsl{e}}$ or $S_n^{2\textsl{e}}$, respectively. The critical values of these statistics are estimated as described in Remark \ref{rem:crit:ex} with $M=50$ in case of $\tau_j=j^{-1}$ and $M=40$ in case of $\tau_j=j^{-2}$.
 We compare our results with the test of \cite{Blundell2007}. We follow their approach by choosing the bandwidth of  the joint density of $(Z,W)$ by cross validation. The bandwidth of the marginal of $Z$ is $n^{1/5-7/24}$ times the cross-validation bandwidth. As we see from Table 2, $S_n^{1\textsl{e}}$ is slightly more powerful than the test of \cite{Blundell2007}. If we choose a stronger sequence, however,  then our test statistic $S_n^{2\textsl{e}}$ becomes considerably more powerful.
 \paragraph{Nonparametric Specification}
Let us now study the finite sample of our test in the case of nonparametric specification. We generate the pair $(Z,W)$ as in the parametric case described above. For the generation of the dependent variable $Y$ we distinguish two cases. 
Besides the structural function $\sol_1(z)=\sum_{j=1}^\infty (-1)^{j+1}j^{-2} \sin(j\pi z)$ we also consider the function $\sol_2(z)=\sum_{j=1}^\infty ((-1)^{j+1}+1)/4\,j^{-2} \sin(j\pi z)$. Again, for computational reasons we truncate the infinite sum at $K=100$. The resulting functions are displayed in Figure \ref{phi-12}.
Further, $Y$ is generated by \eqref{sim:mod} either with $\sol_1$ and $c_U=0.2$ or $\sol_2$ and $c_U=0.8$.  In both cases, we estimate the structural relationship using Lagrange polynomials with $\k=4$ for $n=500$ and $n=1000$.

If  $H_{\textsl{np}}$ is false then $\Ex[U|W]\neq 0$ and we let $\Ex[U|W]=\Ex[\rho(Z)|W]$ where $\rho$ is defined below.
Consequently, when $H_{\textsl{np}}$ is false we generate realizations of $Y$ from
\begin{equation*}
 Y=\sol_l(Z)+\rho_j(Z)+c_U U
\end{equation*}
 for $l=1,2$ and $j\geq 1$ where $\rho_j(z)=c_j(\exp(2jz)\1_{\{z\leq 1/2\}}+\exp(2j(1-z))\1_{\{z>1/2\}}-1)$ and $c_j$ is a normalizing constant such that $\int_0^1 \rho_j(z)dz=0.5$. The functions $\rho_j$ are continuous but not differentiable at $0.5$. Roughly speaking,  the degree of roughness of $\rho_j$ is larger for larger $j$.
\begin{figure}[h]\label{phi-12}
	\centering
		\includegraphics[width=12cm, height=6.7cm]{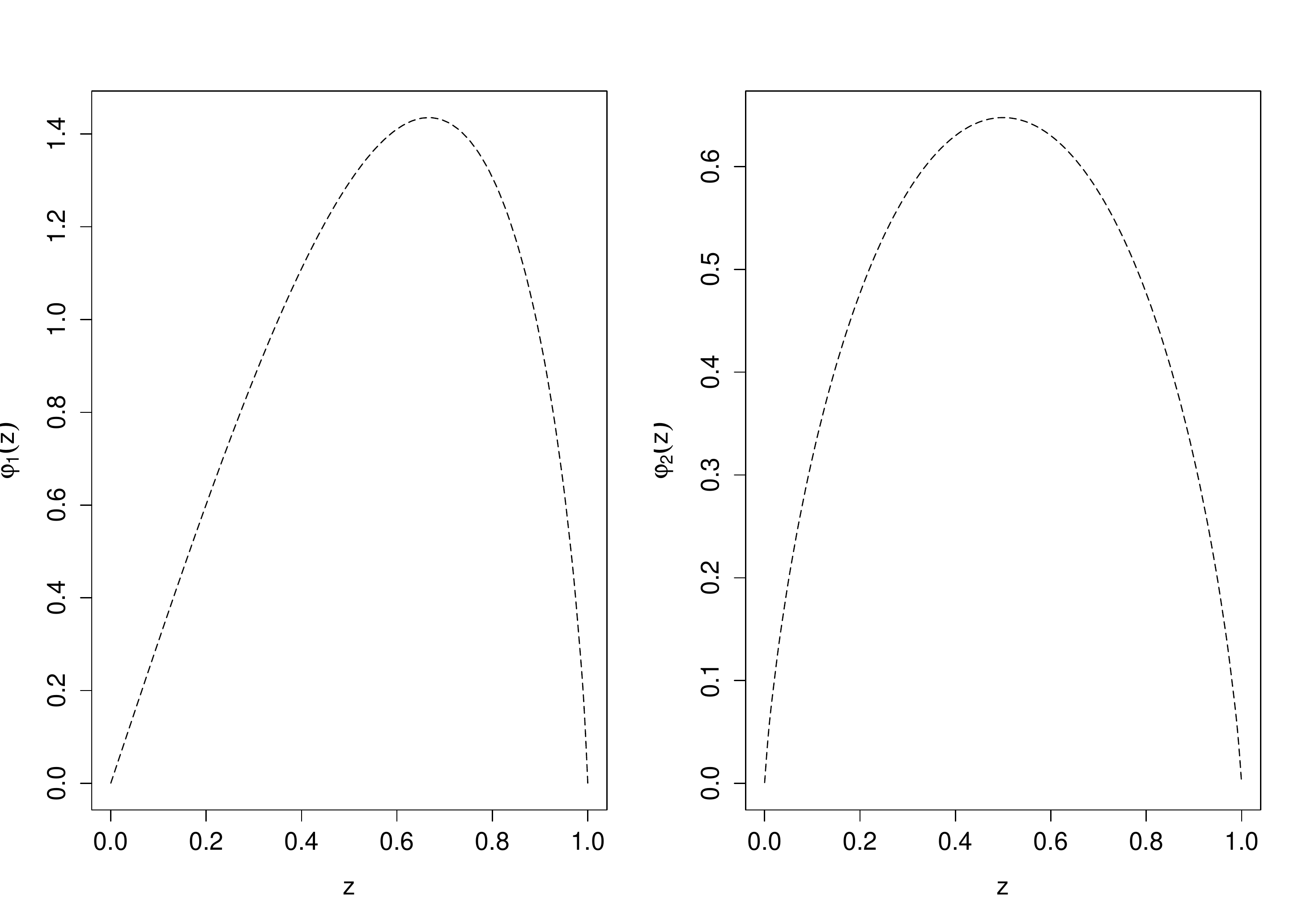}
	\caption{Graph of $\sol_1$ and $\sol_2$}
\end{figure}
\begin{table}[h]	
\begin{center}
\renewcommand{\arraystretch}{1.1}
\begin{tabular}{lllr}
\textit{Sample Size} &\,\,\textit{$\rho$ }&\textit{\quad\quad\quad Empirical Rejection probability using}\\
      & 		\quad& \quad\textit{\,\,$S_n^{0\textsl{np}}$\quad\quad\quad\quad$S_n^{2\textsl{np}}$\quad\quad H(2012)' test }\\
\hline	\hline
 \quad \quad500 &{\small $H^{\textsl{np}}$ true}&\quad 0.034 \quad\quad\quad 0.039 \quad\quad\quad\quad0.040\\
	      &  $\rho_1$ &\quad 0.099 \quad\quad\quad 0.382 \quad\quad\quad\quad0.258\\
	      &  $\rho_2$ &\quad 0.309  \quad\quad\quad0.765 \quad\quad\quad\quad0.536\\
 	      &  $\rho_4$  &\quad0.498 \quad\quad\quad 0.884 \quad\quad\quad\quad0.712\\
\hline
 \quad \quad1000 &{\small $H^{\textsl{np}}$ true}&\quad 0.058 \quad\quad\quad 0.058 \quad\quad\quad\quad0.046\\
	      &  $\rho_1$ &\quad 0.405 \quad\quad\quad 0.672 \quad\quad\quad\quad0.427\\
	      &  $\rho_2$ &\quad 0.768  \quad\quad\quad0.899 \quad\quad\quad\quad0.704\\
 	      &  $\rho_4$  &\quad0.920 \quad\quad\quad0.943 \quad\quad\quad\quad0.808\\
\hline
\end{tabular}
\end{center}
\caption{Empirical Rejection prob. for Nonparametric Specification for $\sol_1$ with $c_U=0.2$}
\end{table}
In Table 3, we depict the empirical rejection probabilities when using $S_n^\textsl{np}$ with either no smoothing or additional smoothing ${\tau_j}=j^{-2}$, $j\geq 1$, which we denote by $S_n^{0\textsl{np}}$ or $S_n^{2\textsl{np}}$, respectively.
When no additional smoothing is applied then the number of basis functions $f_j$ is given by $\m=11$ if $n=500$ and $\m=15$ if $n=1000$ and hence, the choice of $\m$ is slightly larger than $n^{1/3}$ as suggested by the theoretical results. The critical values of these statistics are estimated as described in Remark \ref{rem:crit:np} where in the case of $S_n^{2\textsl{np}}$ we choose $M=100$. We compare our results with the test of \cite{Horowitz2009}. We observe that the statistic $S_n^{0\textsl{np}}$ is less powerful than  $S_n^{2\textsl{np}}$ against the alternatives $\rho_1$ and $\rho_2$. 

In the following, we illustrate that using additional weighting can be inappropriate. Table 4 illustrates the power of our tests when the structural function $\sol_2$ is considered and realizations $(Z,W)$ were generated by $ W \sim \cU[0,1]$, $Z = (0.8\, W +0.3\, \varepsilon)^2$ where $\varepsilon\sim\cN(0.5, 0.05)$. In this case, we generate $Y$ using \eqref{sim:mod} where  $c_U=0.8$. 
In this case, the estimates of the generalized coefficients of $T(\sol-\sol_0)$ are more fluctuating and using weights is not appropriate here. Indeed, as we can see from Table 4, the test statistic $S_n^{0\textsl{np}}$ with no smoothing is more powerful than $S_n^{2\textsl{np}}$ were weighting $\tau_j=j^{-2}$, $j\geq 1$, is used. In particular, $S_n^{0\textsl{np}}$ is much more powerful than the test of \cite{Horowitz2009}.
\begin{table}[h]	
\begin{center}
\renewcommand{\arraystretch}{1.1}
\begin{tabular}{lllr}
\textit{Sample Size} &\,\,\textit{$\rho$ }&\textit{\quad\quad\quad Empirical Rejection probability using}\\
      & 		\quad& \quad\textit{\,\,$S_n^{0\textsl{np}}$\quad\quad\quad\quad$S_n^{2\textsl{np}}$\quad\quad H(2012)' test }\\
\hline	\hline
 \quad \quad500 &{\small $H^{\textsl{np}}$ true}&\quad 0.022 \quad\quad\quad 0.044 \quad\quad\quad\quad 0.044\\
				    &  $\rho_3$  &\quad 0.230 \quad\quad\quad 0.193 \quad\quad\quad\quad0.158\\
				    &  $\rho_4$  &\quad 0.400 \quad\quad\quad 0.319 \quad\quad\quad\quad0.245\\
				    &  $\rho_5$  &\quad 0.543 \quad\quad\quad 0.463 \quad\quad\quad\quad0.370\\
\hline
 \quad \quad1000 &{\small $H^{\textsl{np}}$ true}&\quad 0.044 \quad\quad\quad 0.049 \quad\quad\quad\quad 0.052\\
				      &  $\rho_3$ &\quad 0.643 \quad\quad\quad 0.343 \quad\quad\quad\quad0.302\\
				      &  $\rho_4$ &\quad 0.836 \quad\quad\quad 0.579 \quad\quad\quad\quad0.518\\
				      &  $\rho_5$ &\quad 0.924 \quad\quad\quad 0.792 \quad\quad\quad\quad0.722\\
\hline
\end{tabular}
\end{center}
\caption{ Empirical Rejection prob. for Nonparametric Specification for $\sol_2$ with $c_U=0.8$}
\end{table}

\section{Conclusion}
Based on the methodology of series estimation, we have developed in this paper a family of goodness-of-fit statistics and derived their asymptotic properties. The implementation of these statistics is straightforward. We have seen that the asymptotic results depend crucially on the choice of the smoothing operator $L$.
By choosing a stronger decaying sequence $\tau$, our test becomes more powerful with respect to local alternatives but might lose desirable consistency properties. We gave heuristic arguments how to choose the weights in practice. In addition, in a Monte Carlo investigation our tests perform well in finite samples.

\appendix
\section{Appendix}\label{app:proofs}
Throughout the Appendix, let $C>0$ denote a generic constant that may be different in different uses. For ease of notation let $\sum_i=\sum_{i=1}^n$ and $\sum_{i'<i}=\sum_{i=1}^n\sum_{i'=1}^{i-1}$.
Given $m\geq 1$, $\cE_m$ and $\cF_m$ denote the subspace of $\cL^2_Z$ and $\cL^2_W$ spanned by the functions $\{e_j\}_{j=1}^m$ and
$\{f_l\}_{l=1}^m$, respectively. $E_m$ and $E_m^\perp$ (resp. $F_m$ and $F_m^\perp$) denote the orthogonal projections on $\cE_m$ (resp. $\cF_m$) and its orthogonal complement
$\cE_m^\perp$ (resp. $\cF_m^\perp$), respectively. Respectively, given functions $\phi\in \cL^2_Z$ and  $\psi\in \cL^2_W$ we define by $[\phi]_{\um}$ and $[\psi]_{\um}$ $m$-dimensional vectors with entries 
$[\phi]_{j}=\Ex[\phi(Z)e_j(Z)]$ and $[\psi]_{l}=\Ex[\psi(W)f_l(W)]$ for $1\leq j,l\leq m$.
\subsection{Proofs of Section 2.}
\begin{proof}[\textsc{Proof of Theorem \ref{theo:norm:ind:main}.}]
Under $H_0$ we have $(Y_i-\sol_0(Z_i))f_{\um}^\tau(W_i)=U_if_{\um}^\tau(W_i)$ for all $m\geq 1$ and consequently we observe 
\begin{equation*}
\varsigma_\m^{-1}\big(nS_n-\mu_\m\big)
=\frac{1}{\varsigma_\m n}\sum_i\sum_{j=1}^\m\big(|U_if_j^\tau(W_i)|^2-{\textsl s}_{jj}\big)
+\frac{1}{\varsigma_\m n}\sum_{i\neq i'}\sum_{j=1}^\m U_iU_{i'}f_j^\tau(W_i)f_j^\tau(W_{i'})
\end{equation*}
where the first summand tends in probability to zero as $n\to\infty$. Indeed, since $\Ex |Uf_j(W)|^2-\varsigma_{jj} =0$, $j\geq 1$,  it holds for all $m\geq 1$
\begin{equation*}
 \frac{1}{(\varsigma_m n)^{2}}\Ex\big|\sum_i\sum_{j=1}^m|U_if_j^\tau(W_i)|^2-{\textsl s}_{jj}\big|^2
=\frac{1}{n\varsigma_m^{2}}\Ex\big|\sum_{j=1}^m|Uf_j^\tau(W)|^2-{\textsl s}_{jj}\big|^2\leq \frac{1}{n\varsigma_m^{2}}\Ex\|Uf_\um^\tau(W)\|^4.
\end{equation*}
By using Assumptions \ref{Ass:A1} and \ref{Ass:A2}, i.e., $\sup_{j\in \N}\Ex|f_j(W)|^4\leqslant \eta_f\,\eta_p$ and  $\Ex[U^4|W]\leq \sigma^4$, we conclude
\begin{equation}\label{norm:ind:ineq1}
 \Ex\|Uf_\um^\tau(W)\|^4\leq \max_{1\leq j\leq m}\Ex|Uf_j(W)|^4\Big(\sum_{j=1}^m\tau_j\Big)^2\leq \eta_f\,\eta_p\,\sigma^4 \Big(\sum_{j=1}^m\tau_j\Big)^2.
\end{equation}
Let $m=\m$ satisfy condition \eqref{cond:theo:norm} then $\Ex\|Uf_{\umn}^\tau(W)\|^4=o\big(n\varsigma_\m^{2}\big)$.  
Therefore, it is sufficient to prove
\begin{equation}\label{norm:result:ind}
\sqrt 2(\varsigma_\m n)^{-1}\sum_{i\neq i'}\sum_{j=1}^\m U_iU_{i'}f_j^\tau(W_i)f_j^\tau(W_{i'})\stackrel{d}{\rightarrow} \cN(0,1).
\end{equation}
Since $\varsigma_\m=o(1)$ this follows from Lemma A.2 and thus, completes the proof.
\end{proof}

\begin{proof}[\textsc{Proof of Theorem \ref{theo:norm:ind:main:tau}.}]
Similarly to the proof of Theorem \ref{theo:norm:ind:main} it is sufficient to study the asymptotic behavior of $n^{-1}\sum_{j=1}^\m\sum_{i\neq i'} U_iU_{i'}f_j^\tau(W_i)f_j^\tau(W_{i'})$.
For any finite  $m\geq 1$ we obtain
\begin{multline*}
\Ex\Big| \frac{1}{n}\sum_{j=1}^m\sum_{i\neq i'} U_iU_{i'}f_j^\tau(W_i)f_j^\tau(W_{i'})
-\frac{1}{n}\sum_{j=1}^\infty\sum_{i\neq i'} U_iU_{i'}f_j^\tau(W_i)f_j^\tau(W_{i'})\Big|^2\\
\leq \Ex\Big[\Ex[U_1^2U_{2}^2|W_1,W_2]\,\Big(\sum_{j>m}f_j^\tau(W_1)f_j^\tau(W_2)\Big)^2\Big]
\leq \sigma^4\eta_p\Big(\sum_{j> m}\tau_j\Big)^2
\end{multline*}
which, since $\sum_{j\geq 1}\tau_j=O(1)$, becomes sufficiently small (depending on $m$).
Note that $\big(\frac{1}{\sqrt n}\sum_{i} U_if_1^\tau(W_i),\dots, \frac{1}{\sqrt n}\sum_{i} U_if_m^\tau(W_i)\big)\stackrel{d}{\rightarrow} \mathcal N(0,\varSigma_m)$. Hence, for any finite $m\geq 1$ we have 
\begin{equation*}
\sum_{j=1}^m\Big|\frac{1}{\sqrt n}\sum_{i} U_if_j^\tau(W_i)\Big|^2\stackrel{d}{\rightarrow}\sum_{j=1}^m\lambda_{j}\chi_{1j}^2
\end{equation*}
with $\lambda_{j}$, $1\leq j\leq m$, being eigenvalues of $\varSigma_m$.
Moreover, we conclude for $m\geq 1$
\begin{multline*}
  \frac{1}{n}\sum_{j=1}^m\sum_{i\neq i'} U_iU_{i'}f_j^\tau(W_i)f_j^\tau(W_{i'})^t=\sum_{j=1}^m \Big(\Big|\frac{1}{\sqrt n}\sum_{i} U_if_j^\tau(W_i)\Big|^2-\frac{1}{n}\sum_{i} |U_if_j^\tau(W_i)|^2\Big)\\
\stackrel{d}{\rightarrow} \sum_{j=1}^m\big(\lambda_{j}\chi_{1j}^2-{\textsl s}_{jj}\big).
\end{multline*}
 It is easily seen that $\sum_{j=1}^m(\lambda_{j}\chi_{1j}^2-{\textsl s}_{jj})$ has expectation zero.
Hence, following the lines of page 198-199 of \cite{Serfling} we obtain that $\sum_{j>m}\big(\lambda_{j}\chi_{1j}^2-{\textsl s}_{jj}\big)$ becomes sufficiently small (depending on $m$) and thus, completes the proof.
\end{proof}

\begin{proof}[\textsc{Proof of Proposition  \ref{coro:norm:ind:main}.}]
For ease of notation let $\delta_n(\cdot):=\varsigma_\m^{1/2}n^{-1/2}\delta(\cdot)$.
Under the sequence of alternatives \eqref{loc:alt:ind} the following decomposition holds true
\begin{multline*}
S_n
=\big\|n^{-1}\sum_iU_if_\umn^\tau(W_i)\big\|^2
+2\skalarV{n^{-1}\sum_iU_if_\umn^\tau(W_i),n^{-1}\sum_i\delta_n(Z_i)f_\umn^\tau(W_i)}\\\hfill
+\big\|n^{-1}\sum_i\delta_n(Z_i)f_\umn^\tau(W_i)\big\|^2
=:I_n+2II_n+III_n.
\end{multline*}
Due to Theorem \ref{theo:norm:ind:main} we have $(\sqrt 2\varsigma_\m)^{-1}\big(n\,I_n-\mu_\m\big)\stackrel{d}{\rightarrow} \cN(0,1)$. Consider $II_n$. We observe
\begin{multline*}
 n\Ex|II_n|\leq \sum_{j=1}^\m\tau_j\big(\Ex|Uf_j(W)|^2\Ex|\delta_n(Z)f_j(W)|^2\big)^{1/2}
+\Big(n\Ex\Big|\sum_{j=1}^\m \tau_j[T\delta_n]_jUf_j(W)\Big|^2\Big)^{1/2}\\
\leq \sigma\sum_{j=1}^\m\tau_j\big(\Ex|\delta_n(Z)f_j(W)|^2\big)^{1/2}
+\sigma\eta_p\sqrt n\|\op\delta_n\|_W.
\end{multline*}
From the definition of $\delta_n$ and condition \eqref{cond:theo:norm} we infer that $n\Ex|II_n|=o(\varsigma_\m)$. Consider $III_n$. Employing again the definition of $\delta_n$ it is easily seen that $n\varsigma_\m^{-1}III_n=\sum_{j=1}^\m\tau_j[\op\delta]_j^2+o_p(1)$.
We conclude $(\sqrt 2\varsigma_\m)^{-1}nIII_n=2^{-1/2}\sum_{j\geq 1} \delta_j^2+o_p(1)$, which completes the proof.
\end{proof}

\begin{proof}[\textsc{Proof of Proposition  \ref{coro:norm:ind:main:tau}.}]
Let $\delta_n(\cdot):=n^{-1/2}\delta(\cdot)$. Similarly to the proof of Theorem \ref{theo:norm:ind:main:tau} it is straightforward to see that 
under the sequence of alternatives \eqref{loc:alt:ind:2} it holds
\begin{multline*}
 \frac{1}{n}\sum_{i\neq i'}\sum_{j=1}^\m (U_i+\delta_n(Z_i))(U_{i'}+\delta_n(Z_{i'}))f_j^\tau(W_i)f_j^\tau(W_{i'})\\
=\sum_{j=1}^\infty\Big(\Big|\frac{1}{\sqrt n}\sum_{i} U_if_j^\tau(W_i)+\frac{1}{ n}\sum_{i} \delta(Z_i)f_j^\tau(W_i)\Big|^2-\frac{1}{n}\sum_{i} 
\big|U_if_j^\tau(W_i)\big|^2\Big)\\
\stackrel{d}{\rightarrow} \sum_{j=1}^\infty\lambda_j\chi_{1j}({\delta_j}/{\lambda_j})
\end{multline*}
similar to the lines of page 198-199 of \cite{Serfling} and hence the assertion follows.
\end{proof}

\begin{proof}[\textsc{Proof of Proposition \ref{prop:cons}.}]
If $H_0$ fails we observe that $\sum_{j=1}^\infty\tau_j[\op(\sol-\sol_0)]_j^2=\int_{\cW}|LT(\sol-\sol_0)(w)p_W(w)/\nu(w)|^2\nu(w)dw\geq C\|LT(\sol-\sol_0)\|_W^2>0$ since $p_W/\nu$ is uniformly bounded from zero and $LT$ is nonsingular.
Now since $\varsigma_\m\alpha_n+\mu_\m=o(n)$ it is sufficient to show $S_n=\sum_{j=1}^\infty\tau_j[\op(\sol-\sol_0)]_j^2+o_p(1)$. We make use of the decomposition
\begin{multline*}
 S_n=\sum_{j=1}^\m\tau_j\big|n^{-1}\sum_i(Y_i-\sol_0(Z_i))f_j(W_i)-[\op(\sol-\sol_0)]_j\big|^2\\\hfill
+2\sum_{j=1}^\m\tau_j\big(n^{-1}\sum_i(Y_i-\sol_0(Z_i))f_j(W_i)-[\op(\sol-\sol_0)]_j\big)[\op(\sol-\sol_0)]_j+\sum_{j=1}^\m\tau_j[\op(\sol-\sol_0)]_j^2\\
=I_n+II_n+III_n.
\end{multline*}
Due to condition $\Ex|Y-\sol_0(Z)|^4<\infty$ it is easily seen that $I_n+II_n=o_p(1)$, which proves the result.
\end{proof}

\begin{proof}[\textsc{Proof of Proposition \ref{prop:unf:cons}.}]
 We make use of the decomposition
\begin{multline*}
 \PP\Big((\sqrt 2\,\varsigma_\m)^{-1}\big(n\,S_n-\mu_\m\big)>q_{1-\alpha}\Big)\\\hfill
\geq \PP\Big(\big\|n^{-1/2}\sum_i(\sol(Z_i)-\sol_0(Z_i))f_\umn^\tau(W_i)\big\|^2+\big\|n^{-1/2}\sum_iU_if_\umn^\tau(W_i)\big\|^2-\mu_\m\\
>\sqrt 2\,\varsigma_\m q_{1-\alpha}+2|\skalarV{n^{-1}\sum_i(\sol(Z_i)-\sol_0(Z_i))f_\umn^\tau(W_i),\sum_iU_if_\umn^\tau(W_i)}|\Big).
\end{multline*}
Uniformly over all $\sol\in\cG_n^\sr$ it holds
\begin{equation*}
 \skalarV{n^{-1}\sum_i(\sol(Z_i)-\sol_0(Z_i))f_\umn^\tau(W_i),\sum_iU_if_\umn^\tau(W_i)}=O_p\big(\max(\sqrt n\|L\op(\sol-\sol_0)\|_W,\varsigma_\m)\big).\label{proof:prop:unf:cons:1}
\end{equation*}
Indeed, this is easily seen from
\begin{equation*}
 \Ex\big|\sum_{j=1}^\m\tau_j\Ex[(\sol(Z)-\sol_0(Z))f_j(W)]\sum_iU_if_j(W_i)\big|^2
\leq \sigma^2\eta_p n\sum_{j=1}^\m\Ex[(\sol(Z)-\sol_0(Z))f_j^\tau(W)]^2
\end{equation*}
and further, denoting $\psi_{ji}=(\sol(Z_i)-\sol_0(Z_i))f_j(W)-\Ex[(\sol(Z)-\sol_0(Z))f_j(W)]$, $1\leq j\leq \m$, $1\leq i\leq n$, from
\begin{multline*}
 \Ex\big|n^{-1}\sum_{i\neq i'}\sum_{j=1}^\m\tau_j\psi_{ji}U_{i'}f_j(W_{i'})\big|^2
=\frac{n-1}{n}\sum_{j,j'=1}^\m\tau_j\tau_{j'}\Ex\big[\psi_{j1}\psi_{j'1}\big]\Ex\big[U^2f_j(W)f_{j'}(W)\big]\\
\leq C\sum_{j,j'=1}^\m\tau_j\tau_{j'}\Ex\big[U^2f_j(W)f_{j'}(W)\big]\leq C \sigma^2 \Ex\big|\sum_{j=1}^\m\tau_jf_j(W)\big|^2=O\Big(\sum_{j=1}^\m\tau_j^2\Big).
\end{multline*}
Thereby, for all $0<\varepsilon'<1$ there exists some constant $C>0$ such that
\begin{multline*}
 \PP\Big((\sqrt 2\,\varsigma_\m)^{-1}\big(n\,S_n-\mu_\m\big)>q_{1-\alpha}\Big)\\
\geq \PP\Big(\big\|n^{-1/2}\sum_i(\sol(Z_i)-\sol_0(Z_i))f_\umn^\tau(W_i)\big\|^2+\big\|n^{-1/2}\sum_iU_if_\umn^\tau(W_i)\big\|^2-\mu_\m\\
>\sqrt 2\,\varsigma_\m q_{1-\alpha}+C\max(\sqrt n\|L\op(\sol-\sol_0)\|_W,\varsigma_\m)\Big)-\varepsilon'.
\end{multline*}
Note that  $\big\|n^{-1/2}\sum_iU_if_\umn^\tau(W_i)\big\|^2=\mu_\m+O_p(\varsigma_\m)$ due to Theorem \ref{theo:norm:ind:main}. Moreover,
\begin{multline*}
\big\|n^{-1/2}\sum_i(\sol(Z_i)-\sol_0(Z_i))f_\umn^\tau(W_i)\big\|^2
\geq n\sum_{j=1}^\m\tau_j[\op(\sol-\sol_0)]_j^2\\
-2\big|\skalarV{\sum_i(\sol(Z_i)-\sol_0(Z_i))f_\umn^\tau(W_i)-n[L\op(\sol-\sol_0)]_\umn, [L\op(\sol-\sol_0)]_\umn}=I_n+II_n.
\end{multline*}
Consider $II_n$. For $1\leq j\leq \m$ let $s_j=\tau_j[\op(\sol-\sol_0)]_j/\big(\sum_{j=1}^\infty\tau_j[\op(\sol-\sol_0)]_j^2\big)^{1/2}$ then clearly $\sum_{j=1}^\m s_j^2=1$ and thus $\Ex|\sum_{j=1}^\m s_jf_j(W)|^2\leq\eta_f\eta_p$. Further, since $\sup_{z\in\cZ}|\sol(z)-\sol_0(z)|^2\leq C$ we calculate
\begin{multline*}
\Ex|II_n|^2=n\Ex\Big|\sum_{j=1}^\m\tau_j \big((\sol(Z)-\sol_0(Z))f_j(W)-[\op(\sol-\sol_0)]_j\big) [\op(\sol-\sol_0)]_j\Big|^2\\
\leq n\sum_{j=1}^\m\tau_j[\op(\sol-\sol_0)]_j^2 \Ex\Big|\sum_{j=1}^\m s_j(\sol(Z)-\sol_0(Z))f_j(W)\Big|^2
=O\big(n\|L\op(\sol-\sol_0)\|_W^2\big) 
\end{multline*}
and hence $II_n=O_p(\sqrt n\|L\op(\sol-\sol_0)\|_W)$. 
Consider $I_n$. Note that $\|LT(\sol-\sol_0)\|_W^2\leq C$ for all $\sol\in\cG_n^\sr$ we have $I_n\geq Cn\|LT(\sol-\sol_0)\|_W^2$ for $n$ sufficiently large. 
Since on $\cG_n^\sr$ we have $n\|LT(\sol-\sol_0)\|_W^2\geq \sr\,\varsigma_\m$ we obtain the result by choosing $\sr$ sufficiently large.
\end{proof}
\subsection{Proofs of Section 3.}
For ease of notation, we write in the following $\phi(\cdot)$ for $\phi(\cdot,\thet)$ and  $\phi_{\vartheta_l}(\cdot)$ for $\phi_{\vartheta_l}(\cdot,\thet)$.
\begin{proof}[\textsc{Proof of Theorem \ref{prop:par}.}]
The proof is based on the decomposition under $H_{\text{p}}$
 \begin{multline}\label{proof:ex:dec}
  S_n^{\text{p}}=\big\|n^{-1}\sum_iU_if_\umn^\tau(W_i)\big\|^2
+2\skalarV{n^{-1}\sum_iU_if_\umn^\tau(W_i),n^{-1}\sum_i\big(\phi(Z_i)-\phi(Z_i,\hthet)\big)f_\umn^\tau(W_i)}\\
+\|n^{-1}\sum_i\big(\phi(Z_i)-\phi(Z_i,\hthet)\big)f_\umn^\tau(W_i)\|^2
=I_n+2II_n+III_n.
 \end{multline}
Due to Theorem \ref{theo:norm:ind:main} it holds $(\sqrt2\varsigma_\m)^{-1}(nI_n-\mu_\m)\stackrel{d}{\rightarrow}\cN(0,1)$. Consider $III_n$. It holds $\phi(Z_i)-\phi(Z_i,\hthet)=\phi_\vartheta(Z_i)^t (\thet-\hthet)+(\thet-\hthet)^t\phi_{\vartheta\vartheta}(Z_i, \overline\vartheta_n)^t (\thet-\hthet)/2$ for some $\overline\vartheta_n$ between $\hthet$ and $\thet$. From the bounds imposed in  Assumption \ref{Ass:par} $(ii)$ we infer
\begin{equation*}
nIII_n
\leq 2n\|\thet-\hthet\|^2\Big(\sum_{l=1}^k\sum_{j=1}^\m\tau_j [\op\phi_{\vartheta_l}]_{j}^2
+\sum_{l=1}^k\sum_{j=1}^\m\tau_j\big(\frac{1}{n}\sum_i\phi_{\vartheta_l}(Z_i) f_j(W_i)-[\op\phi_{\vartheta_l}]_{j}\big)^2\Big)+o_p(1).
\end{equation*}
For each $1\leq l\leq k$  we have  
\begin{multline}\label{proof:main:ineq}
 \sum_{j=1}^\m[\op\phi_{\vartheta_l}]_{j}^2=\sum_{j=1}^\m\Big(\int_{\cW}(T\phi_{\vartheta_l})(w)f_j(w)p_W(w)dw\Big)^2
\leq \int_{\cW}|(T\phi_{\vartheta_l})(w)p_W(w)/\nu(w)|^2\nu(w)dw\\
\leq\eta_p\|\op \phi_{\vartheta_l}\|_W^2\leq\eta_p\Ex|\phi_{\vartheta_l}(Z,\thet)|^2\leq \eta_p\eta_\phi
\end{multline}
 by applying Jensen's inequality.
Moreover, we calculate
\begin{equation}\label{proof:prop:par:1}
 \sum_{l=1}^k\sum_{j=1}^\m\Ex\big|\frac{1}{n}\sum_i\phi_{\vartheta_l}(Z_i) f_j(W_i)- [\op\phi_{\vartheta_l}]_{j}\big|^2\leq \frac{k\m}{n}\sup_{j,l\geq 1}\Ex|\phi_{\vartheta_l}(Z) f_j(W)|^2\leq \eta^4\frac{k\m}{n}.
\end{equation}
These estimates together with $\|\thet-\hthet\|=O_p(n^{-1/2})$ imply $nIII_n=o_p(\varsigma_\m)$.
We are left with the proof of $nII_n=o_p(\varsigma_\m)$.
We observe for each $1\leq l\leq k$
\begin{multline*}
\Ex\Big|\sum_{j=1}^\m\tau_j \Big(n^{-1/2}\sum_iU_if_j(W_i)\big(n^{-1}\sum_i\phi_{\vartheta_l}(Z_i) f_j(W_i)- [\op\phi_{\vartheta_l}]_{j}\big)\Big)\Big|\\
\leq  n^{-1/2}\sum_{j=1}^\m \tau_j\big(\Ex|Uf_j(W)|^2\big)^{1/2}\big(\Ex|\phi_{\vartheta_l}(Z) f_j(W)|^2\big)^{1/2}
= O\Big(n^{-1/2}\sum_{j=1}^\m \tau_j\Big)=o(\varsigma_\m).
\end{multline*}
 Now since $n^{1/2 }(\thet-\hthet)=O_p(1)$ we infer
\begin{equation*}
nII_n=n^{1/2 }(\thet-\hthet)^t\sum_{j=1}^\m\tau_j \Big(\varsigma_\m^{-1}n^{-1/2}\sum_iU_if_j(W_i)\Ex[\phi_{\vartheta}(Z) f_j(W)]\Big)+o_p(1).
\end{equation*}
We observe for each $1\leq l\leq k$
\begin{equation*}
 \varsigma_\m^{-2}n^{-1}\Ex\Big|\sum_{j=1}^\m\tau_j\sum_iU_if_j(W_i)[\op\phi_{\vartheta_l}]_{j}\Big|^2
\leq \varsigma_\m^{-2}\sigma^2\eta_p\sum_{j=1}^\m [\op\phi_{\vartheta_l}]_{j}^2
\leq \varsigma_\m^{-2}\sigma^2\,\eta_p^2\,\eta_f
\end{equation*}
which implies $nII_n=o_p(\varsigma_\m)$ and thus, in light of decomposition \eqref{proof:ex:dec}, completes the proof.
\end{proof}

\begin{proof}[\textsc{Proof of Theorem \ref{prop:par:tau}.}]
For $1\leq j\leq \m$ we make use of the following decomposition
\begin{multline}\label{proof:prop:par:tau:1}
 n^{-1/2}\sum_if_j(W_i)\Big(U_i+\phi(Z_i)-\phi(Z_i,\hthet)\Big)
=n^{-1/2}\sum_i\Big(f_j(W_i)U_i-\sum_{l=1}^k [\op\phi_{\vartheta_l}]_{j}h_l(V_i)\Big)\\\hfill
+\sum_{l=1}^k\Big(n^{-1}\sum_if_j(W_i) \phi_{\vartheta_l}(Z_i)-[\op\phi_{\vartheta_l}]_{j}\Big) \Big(n^{-1/2}\sum_ih_l(V_i)\Big)\\
+\sum_{l=1}^kn^{-1}\sum_if_j(W_i) \phi_{\vartheta_l}(Z_i) r_l+o_p(1)
=A_{nj}+B_{nj}+C_{nj}+o_p(1)
\end{multline}
where $r_\uk=(r_1,\dots,r_k)^t$ is a stochastic vector satisfying $r_\uk=o_p(1)$. Consequently, under $H_{\text{p}}$ we have
\begin{equation*}
 nS_n^{\text{p}}=\sum_{j=1}^\m \tau_jA_{nj}^2+2\sum_{j=1}^\m \tau_jA_{nj}(B_{nj}+C_{nj})+\sum_{j=1}^\m\tau_j(B_{nj}+C_{nj})^2+o_p(1).
\end{equation*}
Clearly, for all $1\leq i\leq n$ the random variables $U_if_j^\tau(W_i)+\Ex\big[f_j^\tau(W)\phi_{\vartheta}(Z)^t\big]h_\uk(V_i)$, $1\leq j\leq \m$, are centered with bounded second moment. Due to the proof of Theorem \ref{theo:norm:ind:main:tau} it is easily seen that $\sum_{j=1}^\m\tau_j A_{nj}^2 \stackrel{d}{\rightarrow}\sum_{j=1}^\infty\lambda_j^\textsl{p}\,\chi_{1j}^2$. Inequality \eqref{proof:prop:par:1} yields $\sum_{j=1}^\m B_{nj}^2=o_p(1)$. Since $\sum_{j=1}^\m[\op\phi_{\vartheta}]_{j}^2\leq\eta_p\eta_\phi$ we have $\|\Ex[f_\umn(W)\phi_{\vartheta}(Z)^t]r_\uk\|^2\leq k\,\eta_p \eta_\phi\|r_\uk\|^2=o_p(1)$ and hence $\sum_{j=1}^\m C_{nj}^2=o_p(1)$.
Finally, the Cauchy-Schwarz inequality implies 
$\sum_{j=1}^\m\tau_j A_{nj}(B_{nj}+C_{nj})=o_p(1)$,
 which completes the proof.
\end{proof}

\begin{proof}[\textsc{Proof of Proposition  \ref{coro:norm:par}.}]
Without loss of generality we may assume $\delta=\delta_\perp$ (otherwise replace $\phi(Z_i)$ by $\phi(Z_i)+\phi_\vartheta(Z_i)^t\Ex[\delta(Z)\phi_\vartheta(Z)]$.
Consider the case $\varsigma_\m^{-1}=o(1)$. Under the sequence of alternatives \eqref{loc:alt:ind} the following decomposition holds true
\begin{multline*}
S_n^\text p
=\big\|n^{-1}\sum_i(U_i+\varsigma_\m^{1/2}n^{-1/2}\delta_\perp(Z_i))f_\umn^\tau(W_i)\big\|^2\\\hfill
+2\skalarV{n^{-1}\sum_i(U_i+\varsigma_\m^{1/2}n^{-1/2}\delta_\perp(Z_i))f_\umn^\tau(W_i),n^{-1}\sum_i(\phi(Z_i)-\phi(Z_i,\hthet))f_\umn^\tau(W_i)}\\\hfill
+\big\|n^{-1}\sum_i(\phi(Z_i)-\phi(Z_i,\hthet))f_\umn^\tau(W_i)\big\|^2.
\end{multline*}
Due to Proposition \ref{coro:norm:ind:main} and the proof of Theorem \ref{prop:par} it is sufficient to show 
\begin{equation}\label{eq:coro:norm:par}
 \skalarV{n^{-1}\sum_i\delta_\perp(Z_i)f_\umn^\tau(W_i),n^{-1/2}\sum_i(\phi(Z_i)-\phi(Z_i,\hthet))f_\umn^\tau(W_i)}=o_p(\sqrt{\varsigma_\m}).
\end{equation}
Indeed, since $\delta_{j \perp}=\sqrt{\tau_j}\,\Ex[\delta_\perp(Z)f_j(W)]$ we have
\begin{multline*}
 \sum_{j=1}^\m\delta_{j \perp}n^{-1/2}\sum_i(\phi(Z_i)-\phi(Z_i,\hthet))f_j(W_i)
=\sqrt n(\thet-\hthet)^t\sum_{j=1}^\m\delta_{j \perp}\Ex[\phi_{\vartheta}(Z)f_j(W)]+o_p(1)\\
\leq \eta_p\,\eta_\phi\,\sqrt n\|\thet-\hthet\|\,\sum_{j=1}^\infty\delta_{j \perp}^2+o_p(1)=O_p(1)
\end{multline*}
and hence \eqref{eq:coro:norm:par} holds true.

Consider the case $\sum_{j=1}^\m\tau_j=O(1)$. We make use of decomposition \eqref{proof:prop:par:tau:1} where $U_i$ is replaced by $U_i+n^{-1/2}\delta_\perp(Z_i)$. Similarly to the proof of Proposition  \ref{coro:norm:ind:main:tau} it is easily seen that $\sum_{j=1}^\m\tau_j A_{nj}^2\stackrel{d}{\rightarrow}\sum_{j=1}^\infty\lambda_j^{\textsl{p}}\,\chi_{1j}^2(\delta_{j \perp}/\lambda_j^{\text{p}})$. Thereby, due to the proof of Theorem \ref{prop:par:tau}, the assertion follows.
\end{proof}

\begin{proof}[\textsc{Proof of Proposition \ref{prop:par:cons}.}] 
Consider first the case $\varsigma_\m^{-1}=o(1)$.
Similar to the proof of Theorem \ref{prop:par} we observe that $\|n^{-1}\sum_i(\phi(Z_i,\vartheta_0)-\phi(Z_i,\hthet))f_\umn^\tau(W_i)\|^2=o_p(1)$  and  $\|n^{-1}\sum_i(Y_i-\phi(Z_i,\vartheta_0))f_\umn^\tau(W_i)\|^2=\sum_{j=1}^\infty\tau_j[\op(\sol-\phi(\cdot,\vartheta_0))]_j^2+o_p(1)$. Thus, the result follows as in the proof of Proposition \ref{prop:cons}.
In case of $\sum_{j=1}^\m\tau_j=O(1)$, we obtain similarly that $S_n^\text p=\sum_{j=1}^\m\tau_j\big|n^{-1}\sum_i\big((Y_i-\phi(Z_i,\vartheta_0))f_j(W_i)\big|^2+o_p(1)$ and hence, $S_n^\text p=\sum_{j=1}^\infty\tau_j[\op(\sol-\phi(\cdot,\vartheta_0))]_j^2+o_p(1).$
\end{proof}

\begin{proof}[\textsc{Proof of Proposition \ref{prop:par:unf:cons}.}]
Consider the case $\varsigma_\m^{-1}=o(1)$. 
The basic inequality $(a-b)^2\geq a^2/2-b^2$, $a,b\in\mathbb R$,  yields
\begin{multline}\label{prop:par:unf:cons:eq}
 \PP\Big((\sqrt 2\,\varsigma_\m)^{-1}\big(n\,S_n^{\text{p}}-\mu_\m\big)>q_{1-\alpha}\Big)\\\hfill
\geq \PP\Big(1/2\big\|n^{-1/2}\sum_i(\sol(Z_i)-\phi(Z_i,\vartheta_0))f_\umn^\tau(W_i)\big\|^2+\big\|n^{-1/2}\sum_iU_if_\umn^\tau(W_i)\big\|^2-\mu_\m\\\hfill
>\sqrt 2\,\varsigma_\m q_{1-\alpha}+2|\skalarV{n^{-1}\sum_i(\sol(Z_i)-\phi(Z_i,\hthet))f_\umn^\tau(W_i),\sum_iU_if_\umn^\tau(W_i)}|\\
+\big\|n^{-1/2}\sum_i(\phi(Z_i,\vartheta_0)-\phi(Z_i,\hthet))f_\umn^\tau(W_i)\big\|^2\Big).
\end{multline}
From the proof of Theorem \ref{prop:par} we infer $\big\|n^{-1/2}\sum_i(\phi(Z_i,\hthet)-\phi(Z_i,\vartheta_0))f_\umn^\tau(W_i)\big\|^2=o_p(\varsigma_\m)$ and
\begin{multline*}
 \skalarV{n^{-1}\sum_i(\sol(Z_i)-\phi(Z_i,\hthet))f_\umn^\tau(W_i),\sum_iU_if_\umn^\tau(W_i)}\\
=\skalarV{n^{-1}\sum_i(\sol(Z_i)-\phi(Z_i,\vartheta_0))f_\umn^\tau(W_i),\sum_iU_if_\umn^\tau(W_i)}+o_p(\varsigma_\m).
\end{multline*}
Thus, following line by line the proof of Proposition \ref{prop:unf:cons}, the assertion follows. In case of $\sum_{j=1}^\m\tau_j=O(1)$ the assertion follows similarly.
\end{proof}

\subsection{Proofs of Section 4.}
In the following, we denote $[\widehat Q]_\ukn=n^{-1}\sum_i e_\ukn(Z_i)e_\ukn(Z_i)^t$.
By Assumption \ref{Ass:ex}, the eigenvalues of $\Ex[ e_\ukn(Z)e_\ukn(Z)^t]$ are bounded away from zero and hence, it may be assumed  that $\Ex[ e_\ukn(Z)e_\ukn(Z)^t]=I_\k$ (cf. \cite{Newey1997}, p. 161).
\begin{proof}[\textsc{Proof of Theorem \ref{prop:ex}.}]
The proof is based on the decomposition \eqref{proof:ex:dec} where the estimator $\phi(\cdot,\hthet)$ is replaced by $\overline\sol_\k(\cdot)$ given in \eqref{est:ex}.
It holds $nIII_n=o_p(\varsigma_\m)$, which can be seen as follows. We make use of
\begin{multline*}
 III_n/2\leq \big\|\frac{1}{n}\sum_i(E_\k\sol_0(Z_i)-\osol_\k(Z_i))f_\umn^\tau(W_i)\big\|^2
+\big\|\frac{1}{n}\sum_i\big(E_\k^\perp\sol_0\big)(Z_i)f_\umn^\tau(W_i)\big\|^2\\
=: A_{n1}+A_{n2}.
\end{multline*}
Consider $A_{n1}$. We observe
\begin{multline}\label{proof:ex:an:dec}
 A_{n1}
\leq 2\|\Ex[f_\umn^\tau(W)e_\ukn(Z)^t][\widehat Q]_\ukn^{-1}([\widehat Q]_\ukn[\sol_0]_\ukn-n^{-1}\sum_iY_ie_\ukn(Z_i))\|^2\\\hfill
+2\|E_\k\sol_0-\osol_\k\|_Z^2\sum_{j=1}^\m\tau_j\sum_{l=1}^\k|n^{-1}\sum_ie_{l}(Z_i)f_j(W_i)-[\op]_{jl}|^2\\
=:2B_{n1}+2B_{n2}.
\end{multline}
 For $B_{n1}$ we evaluate due to the relation $[\widehat Q]_\ukn^{-1}=I_\k-[\widehat Q]_\ukn^{-1}([\widehat Q]_\ukn-I_\k)$ that
\begin{multline*}
 B_{n1}
\leq 2\big\|\Ex[f_\umn^\tau(W)e_\ukn(Z)^t]n^{-1}\sum_i(E_\k\sol_0(Z_i)-Y_i)e_\ukn(Z_i)\big\|^2\\\hfill
+2\big\|\Ex[f_\umn^\tau(W)e_\ukn(Z)^t]\big\|^2\big\|[\widehat Q]_\ukn-I_\k\big\|^2\,\big\|[\widehat Q]_\ukn^{-1}\big\|^2\,\big\|n^{-1}\sum_i(E_\k\sol_0(Z_i)-Y_i)e_\ukn(Z_i)\big\|^2.
\end{multline*}
Since the spectral norm of a matrix is bounded by its Frobenius norm it holds
\begin{equation*}
 \Ex\big\|[\widehat Q]_\ukn-I_\k\big\|^2\leq n^{-1}\sum_{l,l'=1}^\k\Ex|e_l(Z)e_{l'}(Z)|^2\leq \eta_e\,n^{-1}k_n^2.
\end{equation*}
Further, from $\Ex[(E_\k\sol_0(Z)-Y)e_\ukn(Z)]=0$ we deduce
\begin{multline*}
 \Ex\big\|\Ex[f_\umn^\tau(W)e_\ukn(Z)^t]n^{-1}\sum_i(E_\k\sol_0(Z_i)-Y_i)e_\ukn(Z_i)\big\|^2\\\hfill
\leq n^{-1}\sum_{j=1}^\m\Ex\big|\sum_{j=1}^\k \Ex[f_j(W)e_l(Z)](E_\k\sol_0(Z)-Y)e_l(Z)|^2\\
\leq C\eta_p\,n^{-1}\sum_{j=1}^\m\sum_{j=1}^\k \Ex[f_j(W)e_l(Z)]^2
=O(n^{-1}\k)
\end{multline*}
where we used the definition of $\cF_\sw$ and that $\Ex[U^2|Z]$ is bounded.
Moreover, since the difference of eigenvalues of $[\widehat Q]_\ukn$ and $I_\k$ is bounded by $\|[\widehat Q]_\ukn-I_\k\|$, the smallest eigenvalue of $[\widehat Q]_\ukn$ converges in probability to one and hence, $\|[\widehat Q]_\ukn^{-1}\|^2=1+o_p(1)$. Further, note that $\|\Ex[f_\umn^\tau(W)e_\ukn(Z)^t]\|^2\leq \sum_{j=1}^\m\sum_{j=1}^\k \Ex[f_j(W)e_l(Z)]^2=O(\k)$. 
Consequently, 
\begin{equation}\label{proof:ex:eq:1}
n\|E_\k\sol_0-\osol_\k\|_Z^2=O_p(\k)
\end{equation}
and since $\k=o(\varsigma_\m)$ we proved $nB_{n1}=o_p(\varsigma_\m)$. 
In addition, applying inequality \eqref{proof:prop:par:1} together with equation \eqref{proof:ex:eq:1} yields $nB_{n2}=o_p(\varsigma_\m)$.
Consequently, $nA_{n1}=o(\varsigma_\m)$. Consider $A_{n2}$. Similar to the derivation of \eqref{proof:main:ineq} we obtain
\begin{equation*}
\Ex\big\|n^{-1}\sum_i\big(E_\k^\perp\sol_0\big)(Z_i)f_\umn^\tau(W_i)\big\|^2\leq 2\eta_p\|E_\k^\perp\sol_0\|_Z^2+2n^{-1}\sum_{j=1}^\m\Ex|E_\k^\perp\sol_0(Z)f_j(W)|^2.
\end{equation*}
We have
\begin{equation}\label{proof:ex:ineq:2}
 \sum_{j=1}^\m\tau_j\Ex|(E_\k^\perp\sol_0)(Z)f_j(W)|^2=O\Big(\sw_\k^{-1}\sum_{j=1}^\m\tau_j\Big)
=o(\varsigma_\m)
\end{equation}
and $n\|E_\k^\perp\sol_0\|_Z^2=O(n\sw_\k^{-1})=o(\varsigma_\m)$. Hence, $nIII_n=o_p(\varsigma_\m)$.
Consider $II_n$. We calculate
\begin{multline}\label{proof:ex:an:dec:bn}
 nII_n \leq
\Big|\sum_{j=1}^\m \tau_j\sum_i U_if_j(W_i)([\sol_0]_\ukn-[\osol]_\ukn)^t\Big(n^{-1}\sum_ie_\ukn(Z_i) f_j(W_i)-\Ex\big[ e_\ukn(Z) f_j(W)\big]\Big)\Big|\\
+\Big|\sum_{j=1}^\m\tau_j\sum_{l=1}^\k([\sol_0]_l-[\osol]_l) \Big(\sum_iU_if_j(W_i)[\op]_{jl}\Big)\Big|\\
+\Big|\sum_{j=1}^\m\tau_j \Big(\sum_i U_if_j(W_i)\Big)\Big( n^{-1}\sum_iE_\k^\perp\sol_0(Z_i) f_j(W_i)-\Ex [E_\k^\perp\sol_0(Z) f_j(W)] \Big)\Big|\\
+\Big|\sum_{j=1}^\m\tau_j \Big(\sum_i U_if_j(W_i)\Big)\Ex[E_\k^\perp\sol_0(Z) f_j(W)] \Big|=C_{n1}+C_{n2}+C_{n3}+C_{n4}.
\end{multline}
Consider $C_{n1}$. Applying the Cauchy-Schwarz inequality twice gives
\begin{equation*}
C_{n1}\leq \|E_\k\sol_0-\osol_\k\|_Z\sum_{j=1}^\m\tau_j\big|\sum_iU_if_j(W_i)\big|\,
\Big(\sum_{l=1}^\k|n^{-1}\sum_i e_l(Z_i) f_j(W_i)-\Ex[e_l(Z) f_j(W)]|^2\Big)^{1/2}.
\end{equation*}
From $\Ex|\sum_iU_if_j(W_i)|^2\leq  n\,\eta_f\sigma^2$, relation \eqref{proof:ex:eq:1}, and inequality \eqref{proof:prop:par:1} we infer $C_{n1}=o_p(\varsigma_\m)$ due to condition \eqref{cond:prop:ex}. For $C_{n2}$ we evaluate
\begin{equation*}
 C_{n2}\leq \|E_\k\sol_0-\osol_\k\|_Z\Big(\sum_{l=1}^\k \big|\sum_{j=1}^\m \sum_iU_if_j(W_i)[\op]_{jl}\big|^2\Big)^{1/2}.
\end{equation*} 
Now $\sum_{j=1}^\m \sum_{l=1}^\k[\op]_{jl}^2=O(\k)$ together with \eqref{proof:ex:eq:1} yields $C_{n2}=o_p(1)$. 
Consider $C_{n3}$. Since $\Ex[U^2|W]\leq \sigma^2$ we conclude similarly as in inequality \eqref{proof:ex:ineq:2} that
\begin{equation*}
 \Ex C_{n3}\leq \sum_{j=1}^\m\tau_j\big(\Ex |Uf_j(W)|^2\big)^{1/2} \big(\Ex |E_\k^\perp\sol_0(Z) f_j(W)|^2\big)^{1/2}\\
=O\Big(\sw_\k^{-1/2}\sum_{j=1}^\m\tau_j\Big)
=o(\varsigma_\m).
\end{equation*}
Consider $C_{n4}$. We calculate
\begin{equation*}
\Ex |C_{n4}|^2\leq n\,\eta_p\,\sigma^2\sum_{j=1}^\m [TE_\k^\perp\sol_0]_j^2\leq n\,\eta_p^2\,\sigma^2\|TE_\k^\perp\sol_0\|_W^2
=O(n\sw_\k^{-1})
=o(\varsigma_\m).
\end{equation*}
Consequently, in light of decomposition \eqref{proof:ex:an:dec:bn} we obtain $nII_n=o(\varsigma_\m)$, which completes the proof.
\end{proof}

\begin{proof}[\textsc{Proof of Theorem \ref{prop:ex:tau}.}]
Employing the equality $[\widehat Q]_\ukn^{-1}=I_\k-[\widehat Q]_\ukn^{-1}([\widehat Q]_\ukn-I_\k)$
we obtain for all $1\leq j\leq \m$
\begin{multline}\label{proof:prop:ex:tau:1}
 n^{-1/2}\sum_if_j(W_i)\Big(U_i+\sol_0(Z_i)-\osol_\k(Z_i)\Big)\\
=n^{-1/2}\sum_i\Big(f_j(W_i)U_i+\Ex[f_j(W)e_\ukn(Z)^t]e_\ukn(Z_i)\big(\sol_0(Z_i)-Y_i\big)\Big)\\\hfill
+n^{-1/2}\sum_i\Ex\big[f_j(W)e_\ukn(Z)^t\big][\widehat Q]_\ukn^{-1}\big([\widehat Q]_\ukn-I_\k\big)e_\ukn(Z_i)\big(E_\k\sol_0(Z_i)-Y_i\big)\\\hfill
+\Big(n^{-1}\sum_if_j(W_i) e_\ukn(Z_i)-\Ex\big[f_j(W)e_\ukn(Z)^t\big]\Big) \sqrt n \big([\sol_0]_\ukn-[\osol_\k]_\ukn\big)\\
-n^{-1/2}\sum_i \Ex[f_j(W)e_\ukn(Z)^t]e_\ukn(Z_i)E_\k^\perp\sol_0(Z_i)
=A_{nj}+B_{nj}+C_{nj}+D_{nj}.
\end{multline}
Due to Assumption \ref{Ass:ex} (ii) we may assume that $\{e_1,\dots,e_k\}$ forms an orthonormal system in $\cL_Z^2$ and hence $\sum_{l=1}^k\Ex[f_j(W)e_l(Z)]^2$ is bounded uniformly in $k$. Thereby, we conclude $\sum_{j=1}^\m \tau_j\sum_{l> \k}\Ex[f_j(W)e_l(Z)]e_l(\cdot)=o(1)$.
 Now following line by line the proof of Theorem \ref{theo:norm:ind:main:tau} we deduce 
\begin{equation*}
 \sum_{j=1}^\m \tau_jA_{nj}^2=\sum_{j=1}^m\tau_j\Ex\Big|n^{-1/2}\sum_iU_i\Big(f_j(W_i)+\sum_{l\geq 1}\Ex[f_j(W)e_l(Z)]e_l(Z_i)\Big)\Big|^2+o_p(1)
\stackrel{d}{\rightarrow} 
\sum_{j=1}^\infty\lambda_j^{\textsl{e}}\,\chi_{1j}^2.
\end{equation*}
Moreover, we see similarly to the proof of Theorem \ref{prop:ex} that $\sum_{j=1}^\m\tau_j( B_{nj}^2+C_{nj}^2+D_{nj}^2)=o_p(1)$, which completes the proof.
\end{proof}

\begin{proof}[\textsc{Proof of Lemma  \ref{lem:crit}.}]
 Note that the squared Frobenius norm of $\widehat\varSigma_\m-\varSigma_\m$ is given by 
\begin{multline*}
 \sum_{j,l=1}^\m\Big|n^{-1}\sum_{i}(Y_i-\osol_\k(Z_i))^2f_j^\tau(W_i)f_l^\tau(W_i)-{\textsl s}_{jl}\big|^2\\
\leq \|\osol_\k-E_\k\sol\|_Z^4\sum_{j,l=1}^\m\Ex\big[\|e_\ukn(Z)\|^2f_j^\tau(W)f_l^\tau(W)\big]^2\\\hfill
+\sum_{j,l=1}^\m\Ex\big[(E_\k^\perp\sol_0(Z))^2f_j^\tau(W)f_l^\tau(W)\big]^2+o_p(1)\\
\leq \|\osol_\k-E_\k\sol\|_Z^4O\Big(\big(\sum_{j=1}^\m\tau_j\big)^2\Big)+O\Big(\big(\sw_\k^{-1}\sum_{j=1}^\m\tau_j\big)^2\Big)+o_p(1)=o_p(1)
\end{multline*}
by using relation \eqref{proof:ex:eq:1}. Consequently, the Frobenius norm of $\widehat\varSigma_\m$ equals $\varsigma_\m+o_p(1)$. Consistency of the trace of $\widehat\varSigma_\m$ is seen similarly.
\end{proof}

\begin{proof}[\textsc{Proof of Proposition  \ref{coro:norm:ex}.}]
Similar to the proof of Proposition  \ref{coro:norm:par} it is sufficient to show
\begin{equation}\label{eq:coro:norm:ex}
 \skalarV{n^{-1}\sum_i\delta(Z_i)f_\umn^\tau(W_i),n^{-1/2}\sum_i(\sol_0(Z_i)-\osol_\k(Z_i))f_\umn^\tau(W_i)}=o_p(\sqrt{\varsigma_\m}).
\end{equation}
By employing Jensen's inequality and estimate \eqref{proof:ex:eq:1} we obtain
\begin{multline*}
 \sum_{j=1}^\m\tau_j[\op\delta]_j\frac{1}{\sqrt n}\sum_i(E_\k\sol_0(Z_i)-\osol_\k(Z_i))f_j(W_i)\\
\leq\sqrt n\|\op\delta\|_{\tau}\|T(E_\k\sol_0-\osol_\k)\|_W+o_p(1)=o_p(\varsigma_\m).
\end{multline*}
Similarly to the upper bounds of $C_{n3}$ and $C_{n4}$ in the proof of Theorem \ref{prop:ex} it is straightforward to see that $\sum_{j=1}^\m\tau_j[\op\delta]_jn^{-1/2}\sum_iE_\k^\perp\sol_0(Z_i)f_j(W_i)=o_p(\varsigma_\m)$ and, hence equation \eqref{eq:coro:norm:ex} holds true.
Consider the case $\sum_{j=1}^\m\tau_j=O(1)$. We make use of decomposition \eqref{proof:prop:ex:tau:1} where $U_i$ is replaced by $U_i+n^{-1/2}\delta(Z_i)$. Similarly to the proof of Proposition  \ref{coro:norm:ind:main:tau} it is easily seen that $\sum_{j=1}^\m\tau_j A_{nj}^2 \stackrel{d}{\rightarrow}\sum_{j=1}^\infty\lambda_j^\textsl{e}\,\chi_{1j}^2(\delta_j/\lambda_j^\textsl{e})$. Thereby, due to the proof of Theorem \ref{prop:ex:tau}, the assertion follows.
\end{proof}

\begin{proof}[\textsc{Proof of Proposition \ref{prop:ex:cons}.}] 
Similar to the proof of Proposition \ref{prop:par:cons}.
\end{proof}

\begin{proof}[\textsc{Proof of Proposition \ref{prop:ex:unf:cons}.}]
We make use of inequality \eqref{prop:par:unf:cons:eq} where $\phi(\cdot,\hthet)$ is replaced by $\overline \sol_\k$. 
From the proof of Theorem \ref{prop:ex} we infer $\big\|n^{-1/2}\sum_i(\overline\sol_\k(Z_i)-\sol_0(Z_i))f_\umn^\tau(W_i)\big\|^2=o_p(\varsigma_\m)$ and
\begin{multline*}
 \skalarV{n^{-1}\sum_i(\sol(Z_i)-\overline\sol_\k(Z_i))f_\umn^\tau(W_i),\sum_iU_if_\umn^\tau(W_i)}\\
=\skalarV{n^{-1}\sum_i(\sol(Z_i)-\sol_0(Z_i))f_\umn^\tau(W_i),\sum_iU_if_\umn^\tau(W_i)}+o_p(\varsigma_\m)
\end{multline*}
uniformly over all $\sol\in\cI_n^\sr$. 
Thus, following line by line the proof of Proposition \ref{prop:unf:cons}, the assertion follows.
\end{proof}

\subsection{Proofs of Section 5.}
Recall that $[\op]_\uk=\Ex[e_\uk(W)e_\uk(Z)^t]$. Further,  we denote $[\hop]_\uk=n^{-1}\sum_{i=1}^ne_\uk(W_i)e_\uk(Z_i)^t$ and $[\widehat g]_\uk=n^{-1}\sum_{i=1}^nY_ie_\uk(W_i)$. In the following, we introduce the function $\sol_\k(\cdot):=e_\ukn(\cdot)^t[\op]_\ukn^{-1}[g]_\ukn$ which belongs to $\cL_Z^2$. For all $k\geq 1$ let us denote $\Omega_k:=\{\|[\hop]_\uk^{-1}\|\leq \sqrt n\}$ and $\mho_k:=\{\|R_k\|\|[\op]_\uk^{-1}\|\leq 1/2\}$ where $R_k=[\hop]_\uk-[\op]_\uk$. Note that $\Ex\1_{\Omega_\k^c}=\PP(\Omega_\k^c)=o(1)$ (cf. proof of Proposition 3.1 of \cite{BJ2011}) and, hence $\1_{\Omega_\k}=1+o_p(1)$. For a sequence of weights $\omega=(\omega_j)_{j\geq 1}$ we define the weighted norm $\|\phi\|_\omega=\big(\sum_{j\geq 1}\omega_j(\int_\cZ\phi(z)e_j(z)\nu(z)dz)^2\big)^{1/2}$. 

\begin{proof}[\textsc{Proof of Theorem \ref{prop:np}.}]
For the proof we make use of decomposition \eqref{proof:ex:dec} where the estimator $\phi(\cdot,\hthet)$ is replaced by $\hsol_\k$ given in \eqref{gen:def:est2}.
Consider $III_n$. Observe
\begin{multline}\label{proof:prop:np:dec}
III_n\leq 2\|n^{-1}\sum_i(\sol_\k(Z_i)-\hsol_\k(Z_i))f_\umn^\tau(W_i)\|^2\\
+2\|n^{-1}\sum_i\big(\sol_\k(Z_i)-\sol_0(Z_i)\big)f_\umn^\tau(W_i)\|^2 =2A_{n1}+2A_{n2}.
\end{multline}
Consider $A_{n1}$. Making use of the relation $[\hop]_\ukn[\op]_\ukn^{-1}[g]_\ukn-[\widehat g]_\ukn=n^{-1}\sum_if_\ukn(W_i)(\sol_\k(Z_i)-Y_i)$ we obtain
\begin{multline*}
 A_{n1}\leq 4\big\|\Ex[f_\umn(W)e_\ukn(Z)^t][\op]_\ukn^{-1}\big\|^2\big\|n^{-1}\sum_if_\ukn(W_i)(\sol_\k(Z_i)-Y_i)\big\|^2\\\hfill
+4\big\|\Ex[f_\umn(W)e_\ukn(Z)^t][\op]_\ukn^{-1}R_\k[\hop]_\ukn^{-1}n^{-1}\sum_if_\ukn(W_i)(\sol_\k(Z_i)-Y_i)\big\|^2\\\hfill
+2\|\sol_\k-\hsol_\k\|_\opw^2\sum_{j=1}^\m\tau_j\sum_{l=1}^\k\opw_l^{-1}|n^{-1}\sum_ie_l(Z_i)f_j(W_i)-[\op]_{jl}|^2\\
=4B_{n1}+4B_{n2}+2B_{n3}.
\end{multline*}
From Lemma A.1 of \cite{BJ2011} we deduce $\|n^{-1/2}\sum_ie_\ukn(W_i)\big(\sol_\k(Z_i)-Y_i\big)\|^2=O_p(\k)$ and since $\|\Ex[f_\umn(W)e_\ukn(Z)^t][\op]_\ukn^{-1}\|=O(1)$ we have $nB_{n1}=o(\varsigma_\m)$. Further, consider $B_{n2}$. By employing   $\normV{[\hop]_{\uk}^{-1}}\1_{\mho_k}\leq 2\normV{[\op]_{\uk}^{-1}}$ and
$\normV{[\hop]_{\uk}^{-1}}^2\1_{\Omega_k}\leq n$ for all $k\geq 1$ it follows
\begin{multline*}
 B_{n2}\1_{\Omega_\k}(\1_{\mho_\k}+\1_{\mho_\k^c}) 
=O\Big(4\|[\op]_\ukn^{-1}\|^2\|R_\k\|^2\|n^{-1}\sum_if_\ukn(W_i)\big(\sol_\k(Z_i)-Y_i\big)\|^2\\
+n\|R_\k\|^2\|n^{-1}\sum_if_\ukn(W_i)\big(\sol_\k(Z_i)-Y_i\big)\|^2\1_{\mho_\k^c}\Big).
\end{multline*}
Further, since $n\|R_\k\|^2=O_p(k_n^2)$ (cf. Lemma A.1 of \cite{BJ2011}) and  $n\|R_\k\|^2\|n^{-1/2}\sum_ie_\ukn(W_i)\big(\sol_\k(Z_i)-Y_i\big)\|^2\1_{\mho_\k^c}=o_p(1)$ (cf.  proof of Proposition 3.1 of \cite{BJ2011}) it follows $nB_{n2}\1_{\Omega_\k}=o(\varsigma_\m)$.
This together with estimate \eqref{proof:prop:par:1} 
implies $nA_{n1}=o_p(\varsigma_\m)$. 
Consider $A_{n2}$. We observe
\begin{multline}
\Ex A_{n2} \leq 2\|T(\sol_\k-\sol_0)\|_W^2+2n^{-1}\Ex\|\big(\sol_\k(Z)-\sol_0(Z)\big) f_\umn^\tau(W)\|^2\\\hfill
\leq 2\eta_pd\|\sol_\k-\sol_0\|_\opw^2+\frac{2}{n}\sum_{l\geq 1}l^2\Big(\int_\cZ(\sol_\k-\sol_0)(z)e_l(z)\nu(z)dz\Big)^2\sum_{j=1}^\m\tau_j\sum_{l\geq 1}l^{-2}\Ex|e_l(Z)f_j(W)|^2\\
=O\Big(\frac{\opw_\k}{\sw_\k}\|\sol_\k-\sol_0\|_\sw^2+ \|\sol_\k-\sol_0\|_\sw^2\frac{ k_n^2}{n\sw_\k}\sum_{j=1}^\m\tau_j\Big).
\end{multline}
where we used Lemma A.2 of \cite{Johannes2009}, i.e., $\|\sol_\k-\sol_0\|_w^2=O( w_\k \sw_\k^{-1})$ for a nondecreasing sequence $w$. Condition \eqref{cond:prop:np} together with the estimate $k_n^2\leq \sigma^4\sum_{j=1}^\m\tau_j$ for $n$ sufficiently large implies $nA_{n2}=o_p(\varsigma_\m)$. Consequently, due to \eqref{proof:prop:np:dec} we have shown $n III_n=o_p(\varsigma_\m)$.
The proof of  $n II_n=o_p(\varsigma_\m)$ is based on decomposition \eqref{proof:ex:an:dec:bn} where $\osol_\k$ and $E_\k^\perp\sol_0$ are replaced by $\hsol_\k$
and $\sol_\k-\sol_0$, respectively. Consider $C_{n1}$. We calculate
\begin{equation*}
 C_{n1}\leq \|\hsol_\k-\sol_\k\|_\opw \sum_{j=1}^\m\tau_j\big|\sum_iU_if_j(W_i)\big|\Big(\sum_{l=1}^\k\opw_l^{-1}\big|n^{-1}\sum_ie_l(Z_i)f_j(W_i)-[\op]_{jl}\big|^2\Big)^{1/2}
\end{equation*}
Since $\sqrt n\|\hsol_\k-\sol_\k\|_\opw=o_p(\varsigma_\m^{1/2})$ we obtain, similarly as in the proof of Theorem \ref{prop:ex}, 
 $C_{n1}=o_p(\varsigma_\m)$. 
Consider $C_{n2}$. Again similarly to the  proof of Theorem \ref{prop:ex} we observe
\begin{multline*}
 C_{n2}=\Big|\sum_{j=1}^\m\tau_j\sum_{l=1}^\k[\op]_{jl}\int_\cZ(\hsol_\k-\sol_\k)(z)e_l(z)\nu(z)dz\Big(\sum_iU_if_j(W_i)\Big)\Big|\\\hfill
 \leq \big(n\|\hsol_\k-\sol_\k\|_\opw^2\big)^{1/2}\Big(\sigma^2\sum_{l=1}^\k\opw_l^{-1}\sum_{j=1}^\m[\op]_{jl}^2\Big)^{1/2}+o_p(1)
=o(\varsigma_\m)
\end{multline*}
by exploiting $\sum_{j=1}^\m[\op]_{jl}^2\leq \eta_p\|Te_l\|_W^2\leq d\,\eta_p^2\opw_l$.
Consider $C_{n3}$. Since $\Ex[U^2|W]\leq \sigma^2$ we conclude similarly as in inequality \eqref{proof:ex:ineq:2} using Lemma A.2 of \cite{Johannes2009}
\begin{equation*}
 \Ex C_{n3}\leq \sigma\sum_{j=1}^\m\tau_j \big(\Ex |(\sol_\k(Z)-\sol_0(Z)) f_j(W)|^2\big)^{1/2}
\leq \eta^2\frac{\pi\sigma}{\sqrt 6}\frac{\k}{\sqrt {\sw_\k}}\|\sol_\k-\sol_0\|_\sw\sum_{j=1}^\m\tau_j=o(\varsigma_\m).
\end{equation*}
Consider $C_{n4}$. Again exploring the link condition $\op\in\cTdDw$ and Lemma A.2 of \cite{Johannes2009} we calculate
\begin{multline*}
 \Ex|C_{n4}|^2\leq n\sigma\sum_{j=1}^\m [T(\sol_\k-\sol_0)]_j^2\leq n\sigma\|T(\sol_\k-\sol_0)\|_W^2\\
\leq n\sigma d\|\sol_\k-\sol_0\|_\opw^2\leq 4D d \sr\sigma \frac{n\opw_\k}{\sw_\k}\|\sol_\k-\sol_0\|_\sw^2=o(\varsigma_\m).
\end{multline*}
Consequently, the estimates for $C_{n1}$, $C_{n2}$, $C_{n3}$, and $C_{n4}$ imply $nII_n=o_p(\varsigma_\m)$, which completes the proof.
\end{proof}

\begin{proof}[\textsc{Proof of Theorem \ref{prop:np:tau}.}]
Observe $[\hop]_\ukn[\op]_\ukn^{-1}[g]_\ukn-[\widehat g]_\ukn=n^{-1}\sum_ie_\ukn(W_i)(\sol_\k(Z_i)-Y_i)$  and hence, for all $1\leq j\leq \m$
\begin{multline}\label{prop:np:tau:eq:1}
 n^{-1/2}\sum_if_j(W_i)\big(U_i+\sol_0(Z_i)-\hsol_\k(Z_i)\big)\\\hfill
=n^{-1/2}\sum_i\Big(f_j(W_i)U_i+\Ex\big[f_j(W)e_\ukn(Z)^t\big][\op]_\ukn^{-1}e_\ukn(W_i)\big(\sol_\k(Z_i)-Y_i\big)\Big)\\\hfill
-n^{-1/2}\sum_i \Ex\big[f_j(W)e_\ukn(Z)^t\big][\op]_\ukn^{-1}R_\k[\hop]_\ukn^{-1} e_\ukn(W_i)\big(\sol_\k(Z_i)-Y_i\big)\\
+\Big(n^{-1}\sum_if_j(W_i) e_\ukn(Z_i)^t-\Ex\big[f_j(W)e_\ukn(Z)^t\big]\Big) [\hop]_\ukn^{-1}\Big(n^{-1/2}\sum_ie_\ukn(W_i)\big(\sol_\k(Z_i)-Y_i\big)\Big)\\
+ n^{-1/2}\sum_i \big(\sol_0(Z_i)-\sol_\k(Z_i)\big)f_j(W_i)=A_{nj}+B_{nj}+C_{nj}+D_{nj}.
\end{multline}
Consider $A_{nj}$. 
For each $j\geq 1$, note that $\|\Ex\big[f_j(W)e_\uk(Z)^t\big][\op]_\uk^{-1}\|$ is bounded uniformly in $k$ and further that $\Ex[e_\ukn(W)(\sol_\k(Z)-\sol_0(Z))]=0$.
Now similarly to the proof of Theorem \ref{prop:ex:tau} we conclude
\begin{equation*}
 \sum_{j=1}^\m \tau_jA_{nj}^2 
= \sum_{j=1}^\m\tau_j\Big|n^{-1/2}\sum_iU_i\big(f_j(W_i)-e_j(W_i)\big)\Big|^2+o_p(1)
\stackrel{d}{\rightarrow}\sum_{j=1}^\infty\lambda_j^\textsl{np}\,\chi_{1j}^2.
\end{equation*}
Moreover, as in the  proof of Theorem \ref{prop:np} it can be seen that $\sum_{j=1}^\m \tau_j (B_{nj}^2+C_{nj}^2+D_{nj}^2)=o_p(1)$, which proves the result
\end{proof}

\begin{proof}[\textsc{Proof of Proposition  \ref{coro:norm:np}.}]
Consider the case $\varsigma_\m^{-1}=o(1)$.
Further, under \eqref{loc:alt:np} we observe by following the upper bound for $A_{n1}$ in the proof of Theorem \ref{prop:np} that
\begin{multline*}
 \sum_{j=1}^\m\Big|n^{-1/2}\sum_i(\hsol_\k(Z_i)- \sol(Z_i))f_j^\tau(W_i)\Big|^2\\
=\sum_{j=1}^\m\frac{1}{n}\Big|\sum_i(\hsol_\k(Z_i)-\sol_\k(Z_i))f_j^\tau(W_i)\Big|^2+\sum_{j=1}^\m\Big|\frac{\varsigma_\m}{n}\sum_{i}\delta(Z_i)f_j^\tau(W_i)\Big|^2+o_p(\varsigma_\m)\\
=\varsigma_\m\sum_{j=1}^\m\delta_j^2+o_p(\varsigma_\m).
\end{multline*}
Consequently, the result follows as in the proof of Theorem \ref{prop:np}.
For $\sum_{j=1}^\m\tau_j=O(1)$ we conclude similarly.
\end{proof}

\begin{proof}[\textsc{Proof of Proposition \ref{prop:np:cons}.}] 
Following the lines of the proof of Theorem \ref{prop:np} it can be seen that $\|n^{-1}\sum_i(\hsol_\k(Z_i)-\sol_\k(Z_i))f_\umn^\tau(W_i)\|^2=o_p(1)$ with $\sol\in \cF_\sw^\rho$. On the other hand, $\|n^{-1}\sum_i(\sol_\k(Z_i)-\sol(Z_i))f_\umn^\tau(W_i)\|^2=C\|T(\sol_\k-\sol)\|_W^2+o_p(1)$. Further, since $\|T(\sol_\k-\sol)\|_W^2=\|g-T\sol\|_W^2+o(1)$ the result follows as in the proof of Proposition \ref{prop:par:cons}.
\end{proof}

\begin{proof}[\textsc{Proof of Proposition \ref{prop:np:unf:cons}.}]
We make use of inequality \eqref{prop:par:unf:cons:eq} where $\phi(\cdot,\hthet)$ is replaced by $\hsol_\k$. 
From the proof of Theorem \ref{prop:np} we infer $\big\|n^{-1/2}\sum_i(\hsol_\k(Z_i)-\sol_0(Z_i))f_\umn^\tau(W_i)\big\|^2=o_p(\varsigma_\m)$ and
\begin{multline*}
 \skalarV{n^{-1}\sum_i(\sol(Z_i)-\hsol_\k(Z_i))f_\umn^\tau(W_i),\sum_iU_if_\umn^\tau(W_i)}\\
=\skalarV{n^{-1}\sum_i(\sol(Z_i)-\sol_0(Z_i))f_\umn^\tau(W_i),\sum_iU_if_\umn^\tau(W_i)}+o_p(\varsigma_\m)
\end{multline*}
uniformly over all $\sol\in\cJ_n^\sr$. Consequently, following line by line the proof of Proposition \ref{prop:unf:cons}, the assertion follows.
\end{proof}
\subsection{Technical assertions.}
 Let us introduce $X_{ii'}:=\sqrt 2(\varsigma_\m n)^{-1}\sum_{j=1}^\m U_iU_{i'}f_j^\tau(W_i)f_j^\tau(W_{i'})$ and 
\begin{equation}\label{def:Q}
Q_{ni}:=
\left\{\begin{array}{lcl} 
\sum_{l=1}^{i-1}X_{li}, && \mbox{ for } i=2,\dots, n,\\
0,&&\mbox{ for } i=1 \mbox{ and } i>n.
\end{array}\right.
\end{equation}
Then clearly
\begin{multline*}
(\sqrt 2\varsigma_\m n)^{-1}\sum_{i\neq i'}\sum_{j=1}^\m U_iU_{i'}f_j^\tau(W_i)f_j^\tau(W_{i'})
=\sqrt 2(\varsigma_\m n)^{-1}\sum_{i< i'}\sum_{j=1}^\m U_iU_{i'}f_j^\tau(W_i)f_j^\tau(W_{i'})\\
=\sum_{i<i'}X_{ii'}=\sum_{i=1}^nQ_{ni}.
\end{multline*}
Let $\cB_{ni}:=\cB((Z_1,Y_1,W_1),\dots, (Z_i,Y_i,W_i))$, $1\leq i\leq n$, $n\geq 1$, be the $\sigma$-algebra generated by $(Z_1,Y_1,W_1),\dots, (Z_i,Y_i,W_i)$. Since $U_if_j^\tau(W_i)$, $1\leq i\leq n$, are centered random variables it follows that $\{(\sum_{i'=1}^iQ_{ni'},\cB_{ni}), \,\,i\geq 1\}$ is a Martingale for each $n\geq 1$ and hence $\{(Q_{ni},\cB_{ni}), \,i\geq 1\}$ is a Martingale difference array for each $n\geq 1$. Moreover, it satisfies the conditions of Proposition \ref{prop:adnan} as shown in the following technical result.
\begin{prop}\label{prop:adnan}
 If $\{(Q_{ni},\cB_{ni}),\,i\geq 1\}$ is a Martingale difference array for each $n\geq 1$ satisfying conditions 
\begin{gather}
 \sum_{i= 1}^\infty\Ex|Q_{ni}|^2\leq 1 \quad \text{ for all }n\geq 1\label{normal:lem:1:eq1},\\
 \sum_{i=1}^\infty Q_{ni}^2=1+o_p(1)\label{normal:lem:1:eq2},\\
 \sup_{i\geq 1}|Q_{ni}|=o_p(1) \label{normal:lem:1:eq3}
\end{gather}
 then 
 $\sum_{i=1}^\infty Q_{ni}\stackrel{d}{\rightarrow} N(0,\nu)$.
\end{prop}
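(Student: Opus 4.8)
The plan is to recognise Proposition \ref{prop:adnan} as an instance of the central limit theorem for martingale difference arrays and to reduce its proof to checking the hypotheses of that theorem. Since $Q_{ni}=0$ for $i>n$, the series $\sum_{i\geq 1}Q_{ni}$ is the finite sum $\sum_{i=1}^n Q_{ni}$, and, writing $S_{ni}:=\sum_{i'=1}^iQ_{ni'}$, the array $\{(S_{ni},\cB_{ni}):1\leq i\leq n,\ n\geq 1\}$ is a zero-mean, square-integrable martingale array by the martingale-difference property and condition \eqref{normal:lem:1:eq1}. A convenient feature here is that the $\sigma$-fields $\cB_{ni}=\cB((Z_1,Y_1,W_1),\dots,(Z_i,Y_i,W_i))$ do not depend on $n$, so the nesting requirement $\cB_{ni}\subseteq\cB_{n+1,i}$ of the array CLT holds trivially, indeed with equality.

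First I would verify the asymptotic negligibility condition: the classical martingale CLT (in the form of, e.g., Hall and Heyde, \emph{Martingale Limit Theory and Its Application}, 1980, Theorem 3.2) requires $\max_{1\leq i\leq n}|Q_{ni}|\stackrel{p}{\to}0$, which is exactly hypothesis \eqref{normal:lem:1:eq3}. Next, that theorem requires the quadratic variation $\sum_{i}Q_{ni}^2$ to converge in probability to an a.s.\ finite random variable $\eta^2$; by \eqref{normal:lem:1:eq2} this holds with $\eta^2\equiv 1$ deterministic, which pins the limiting law down to $\cN(0,1)$ (its characteristic function being $\Ex\exp(-\tfrac12\eta^2 t^2)=\exp(-t^2/2)$). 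Finally, the theorem requires $\sup_n\Ex[\max_{1\leq i\leq n}Q_{ni}^2]<\infty$; this follows from \eqref{normal:lem:1:eq1} alone, since $\Ex[\max_i Q_{ni}^2]\leq\sum_i\Ex[Q_{ni}^2]\leq 1$. Assembling these three facts with the nesting property yields $\sum_{i}Q_{ni}\stackrel{d}{\rightarrow}\cN(0,1)$, as claimed.

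If a self-contained argument is preferred, I would instead run the standard characteristic-function proof: factor $\Ex\exp(it\sum_iQ_{ni})$ into conditional characteristic functions along the filtration $(\cB_{ni})_i$, Taylor-expand each factor $\Ex[\exp(itQ_{ni})\mid\cB_{n,i-1}]$ to second order, control the remainders using $\max_i|Q_{ni}|=o_p(1)$ from \eqref{normal:lem:1:eq3}, and replace the predictable quadratic variation $\sum_i\Ex[Q_{ni}^2\mid\cB_{n,i-1}]$ by the observed one $\sum_iQ_{ni}^2$, which converges to $1$ by \eqref{normal:lem:1:eq2}; the conditional Lindeberg condition needed for the expansion follows from combining \eqref{normal:lem:1:eq3} with the $L^1$-boundedness of $\max_iQ_{ni}^2$ supplied by \eqref{normal:lem:1:eq1}. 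Either route gives the same conclusion.

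The step that needs the most care — and the only reason all three hypotheses are genuinely used — is matching the assumptions to a theorem phrased in terms of the \emph{observed} quadratic variation $\sum_iQ_{ni}^2$ rather than its predictable counterpart: this is precisely what forces us to invoke the maximal $L^1$-bound coming from \eqref{normal:lem:1:eq1} in addition to the quadratic-variation limit \eqref{normal:lem:1:eq2} and the negligibility \eqref{normal:lem:1:eq3}. Beyond this bookkeeping there is no analytic obstacle; the content of the proposition is entirely the reduction to the martingale array CLT.
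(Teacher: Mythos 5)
Your argument is correct and is essentially the same as the paper's, which disposes of the proposition with a single citation to \cite{Awad} (a martingale CLT reference). What you add is the explicit bookkeeping: identifying the result as Hall and Heyde's Theorem 3.2, noting that the nesting condition $\cB_{ni}\subseteq\cB_{n+1,i}$ holds trivially because the $\sigma$-fields do not depend on $n$, checking that \eqref{normal:lem:1:eq1} delivers the maximal $L^1$-bound $\Ex[\max_iQ_{ni}^2]\leq\sum_i\Ex Q_{ni}^2\leq 1$, and matching \eqref{normal:lem:1:eq2}--\eqref{normal:lem:1:eq3} to the observed-quadratic-variation and negligibility hypotheses of the array CLT. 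This is a genuinely useful expansion of a one-line proof, and it also makes clear that the conclusion should read $\cN(0,1)$ (so that the symbol $\nu$ in the proposition's statement is playing the role of the constant $1$, not the density $\nu$ used elsewhere in the paper). One small caution: be precise about which version of the martingale CLT you invoke --- some classical statements are phrased in terms of the predictable quadratic variation $\sum_i\Ex[Q_{ni}^2\mid\cB_{n,i-1}]$ rather than the observed $\sum_iQ_{ni}^2$, and the transfer between the two is exactly what requires the uniform integrability supplied by \eqref{normal:lem:1:eq1}; you flagged this correctly, but it is the one place a careless reader could go wrong.
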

\begin{proof}
 See \cite{Awad}.
\end{proof}
Note that this result has been also applied by \cite{Ghorai} to establish asymptotic normality of an orthogonal series type density estimator.

\begin{lem}\label{normal:lem:1}
 Let $Q_{ni}$ be defined as in \eqref{def:Q}. Let  Assumptions 1--4 be satisfied and assume $\big(\sum_{j=1}^\m\tau_j\big)^3=o(n)$. Then conditions \eqref{normal:lem:1:eq1}--\eqref{normal:lem:1:eq3} hold true.
\end{lem}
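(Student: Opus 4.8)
The plan is to verify the three hypotheses \eqref{normal:lem:1:eq1}--\eqref{normal:lem:1:eq3} of Proposition \ref{prop:adnan} by direct moment estimates. Write $Q_{ni}=\sqrt 2\,(\varsigma_\m n)^{-1}U_i\sum_{l=1}^{i-1}U_lK_\m(W_l,W_i)$ with the symmetric finite-rank kernel $K_\m(w,w'):=\sum_{j=1}^\m\tau_jf_j(w)f_j(w')=\sum_{j=1}^\m f_j^\tau(w)f_j^\tau(w')$; the array is a genuine martingale difference array since $(U_i,W_i)$ enters $Q_{ni}$ but not $\cB_{n,i-1}$ while $\Ex[U_i|W_i]=0$. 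For \eqref{normal:lem:1:eq1} I would expand $\Ex|Q_{ni}|^2$ over the double inner sum in $l,l'<i$: the off-diagonal terms ($l\neq l'$) vanish because $\Ex[U_lf_j^\tau(W_l)]=0$, and by iid-ness and Definition \ref{def:sigma} the diagonal terms reassemble to $\Ex|Q_{ni}|^2=\tfrac{2(i-1)}{\varsigma_\m^2n^2}\sum_{j,l=1}^\m{\textsl s}_{jl}^2=\tfrac{2(i-1)}{n^2}$, whence $\sum_i\Ex|Q_{ni}|^2=(n-1)/n\le 1$ (and $\to 1$).

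The remaining two conditions hinge on fourth-moment estimates, for which I would record the following bounds. Assumption \ref{Ass:A1} gives $\Ex[f_j(W)^2f_l(W)^2]\le\eta_f\eta_p$ and hence $\Ex[(\sum_{j=1}^\m\tau_jf_j(W)^2)^2]\le\eta_f\eta_p(\sum_{j=1}^\m\tau_j)^2$; moreover $\|\varSigma_\m\|_{\mathrm{op}}\le\sigma^2\eta_p$, since for a unit vector $a$ one has $a^t\varSigma_\m a=\Ex[U^2(\sum_{j}a_jf_j^\tau(W))^2]\le\sigma^2\eta_p\sum_{j}\tau_ja_j^2\le\sigma^2\eta_p$ by $\Ex[U^2|W]\le\sigma^2$ (Jensen from Assumption \ref{Ass:A2}), $p_W/\nu\le\eta_p$, $\tau_j\le\tau_1=1$ and the $\nu$-orthonormality of $\{f_j\}$. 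With $h_\m(w):=\Ex[U^2K_\m(W,w)^2]=\sum_{j,l=1}^\m{\textsl s}_{jl}f_j^\tau(w)f_l^\tau(w)$ --- a quadratic form in $\varSigma_\m$, so $h_\m(w)\le\|\varSigma_\m\|_{\mathrm{op}}\sum_{j=1}^\m\tau_jf_j(w)^2$ --- and $|K_\m(w,w')|\le(\sum_j\tau_jf_j(w)^2)^{1/2}(\sum_j\tau_jf_j(w')^2)^{1/2}$, these yield $\Ex[U^4h_\m(W)^2]=O\big((\sum_j\tau_j)^2\big)$ and $\Ex[U^4U'^4K_\m(W,W')^4]=O\big((\sum_j\tau_j)^4\big)$, where $(U',W')$ is an independent copy of $(U,W)$.

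For the negligibility \eqref{normal:lem:1:eq3} I would use $\sup_{i\ge 1}|Q_{ni}|^4\le\sum_iQ_{ni}^4$ and show $\sum_i\Ex Q_{ni}^4\to 0$: conditioning on $(U_i,W_i)$ and applying the exact fourth-moment identity for a sum of $i-1$ iid centred summands $U_lK_\m(W_l,W_i)$ gives $\Ex Q_{ni}^4=\tfrac{4}{\varsigma_\m^4n^4}\big((i-1)\Ex[U^4U'^4K_\m(W,W')^4]+3(i-1)(i-2)\Ex[U^4h_\m(W)^2]\big)$, so by the bounds above $\sum_i\Ex Q_{ni}^4=O\big(\tfrac{(\sum_j\tau_j)^4}{\varsigma_\m^4n^2}+\tfrac{(\sum_j\tau_j)^2}{\varsigma_\m^2n}\big)=o(1)$: indeed $\varsigma_\m$ is bounded away from $0$ (in the regime of Theorem \ref{theo:norm:ind:main}, $\varsigma_\m\to\infty$), and $(\sum_j\tau_j)^3=o(n)$ together with $\sum_j\tau_j\ge\tau_1=1$ force $(\sum_j\tau_j)^4=o(n^2)$ and $(\sum_j\tau_j)^2=o(n)$. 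For \eqref{normal:lem:1:eq2} I would decompose $\sum_iQ_{ni}^2=\sum_i\Ex[Q_{ni}^2|\cB_{n,i-1}]+\sum_i\big(Q_{ni}^2-\Ex[Q_{ni}^2|\cB_{n,i-1}]\big)$. The second sum is a sum of martingale differences, so by orthogonality of increments $\Ex\big(\sum_i(Q_{ni}^2-\Ex[Q_{ni}^2|\cB_{n,i-1}])\big)^2=\sum_i\Ex\big(Q_{ni}^2-\Ex[Q_{ni}^2|\cB_{n,i-1}]\big)^2\le\sum_i\Ex Q_{ni}^4=o(1)$, hence it tends to $0$ in $L^2$. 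A direct computation gives $\Ex[Q_{ni}^2|\cB_{n,i-1}]=\tfrac{2}{\varsigma_\m^2n^2}\sum_{l,l'=1}^{i-1}U_lU_{l'}\sum_{j,j'=1}^\m{\textsl s}_{jj'}f_j^\tau(W_l)f_{j'}^\tau(W_{l'})$, hence $\sum_{i=1}^n\Ex[Q_{ni}^2|\cB_{n,i-1}]=\tfrac{2}{\varsigma_\m^2n^2}\sum_{l,l'=1}^{n-1}\big(n-\max(l,l')\big)U_lU_{l'}\sum_{j,j'=1}^\m{\textsl s}_{jj'}f_j^\tau(W_l)f_{j'}^\tau(W_{l'})$; its diagonal part ($l=l'$) has mean $\tfrac{2}{\varsigma_\m^2n^2}\sum_{l=1}^{n-1}(n-l)\Ex[U^2h_\m(W)]=(n-1)/n\to 1$ (using $\Ex[U^2h_\m(W)]=\sum_{j,j'=1}^\m{\textsl s}_{jj'}^2=\varsigma_\m^2$) and variance $O\big(\tfrac{(\sum_j\tau_j)^2}{\varsigma_\m^4n}\big)=o(1)$, while its off-diagonal part has mean $0$ and, since $\Ex[U|W]=0$ leaves only the cross terms with coinciding index pairs, variance $O\big(\text{trace}(\varSigma_\m^4)/\varsigma_\m^4\big)$, which is $o(1)$ because $\text{trace}(\varSigma_\m^4)=\sum_j\lambda_j^4\le\|\varSigma_\m\|_{\mathrm{op}}^2\varsigma_\m^2\le(\sigma^2\eta_p)^2\varsigma_\m^2$ and $\varsigma_\m\to\infty$. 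Therefore $\sum_i\Ex[Q_{ni}^2|\cB_{n,i-1}]$ converges in probability to $1$, giving \eqref{normal:lem:1:eq2}.

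The plan contains no single deep step; the real work is the moment bookkeeping behind the displays in the last paragraph --- the exact fourth-moment expansion of $\Ex Q_{ni}^4$ and the variance of the off-diagonal part of $\sum_i\Ex[Q_{ni}^2|\cB_{n,i-1}]$ --- where one must enumerate the coincidence patterns of the inner summation indices (and the accompanying basis indices $j,j'$), kill the mismatched configurations via $\Ex[U|W]=0$, control the survivors via $\Ex[U^4|W]\le\sigma^4$, and reduce everything to the scalars $\sum_{j=1}^\m\tau_j$, $\varsigma_\m$ and $\text{trace}(\varSigma_\m^4)\le\|\varSigma_\m\|_{\mathrm{op}}^2\varsigma_\m^2$ using Assumption \ref{Ass:A1}. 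The only global inputs are then $(\sum_{j=1}^\m\tau_j)^3=o(n)$ with $\sum_j\tau_j\ge 1$, and $\varsigma_\m\to\infty$ --- the latter being exactly the feature that makes $\text{trace}(\varSigma_\m^4)/\varsigma_\m^4\to 0$ and so separates this Gaussian limit from the weighted chi-square limit of Theorem \ref{theo:norm:ind:main:tau}.
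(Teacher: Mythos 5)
Your proof is correct, and it takes a genuinely different route from the paper on the heart of the matter, the verification of \eqref{normal:lem:1:eq2}. The paper works directly with $\Ex\big|\sum_iQ_{ni}^2-1\big|^2$, expands $Q_{ni}^2Q_{ni'}^2$ into the four combinatorial pieces $A_{ii'},B_{ii'},C_{ii'},D_{ii'}$, and sums each one over $i<i'$ by hand, using the same fourth-moment bound you derive and then $2\sum_{i<i'}\Ex Q_{ni}^2Q_{ni'}^2=1+o(1)$. You instead use the Brown-type decomposition $\sum_iQ_{ni}^2=\sum_i\Ex[Q_{ni}^2\mid\cB_{n,i-1}]+\sum_i\big(Q_{ni}^2-\Ex[Q_{ni}^2\mid\cB_{n,i-1}]\big)$, kill the second sum by orthogonality of martingale increments plus the fourth-moment bound, and show the predictable quadratic variation $\to1$ in probability by splitting it into a diagonal sum (with the clean identity $\Ex[U^2h_\m(W)]=\varsigma_\m^2$) and an off-diagonal sum whose variance reduces, after elimination of mismatched index patterns via $\Ex[U\mid W]=0$, to $O\big(\text{trace}(\varSigma_\m^4)/\varsigma_\m^4\big)=O\big(\|\varSigma_\m\|_{\mathrm{op}}^2/\varsigma_\m^2\big)$. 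This is the more modular textbook argument, and identifying $\text{trace}(\varSigma_\m^4)/\varsigma_\m^4\to 0$ (driven precisely by $\varsigma_\m\to\infty$) as the structural reason for Gaussianity rather than a weighted chi-square limit is a nicer conceptual takeaway than the paper's term-by-term bookkeeping. Your fourth-moment bound also differs in flavor: you condition on $(U_i,W_i)$ and apply the exact fourth-moment formula for a sum of iid centred summands $U_lK_\m(W_l,W_i)$, while the paper first applies a power-mean (Jensen) inequality in $j$ to extract the factor $\big(\sum_j\tau_j\big)^3$ and then expands; your route gives the marginally sharper $O\big((\sum\tau)^2/(\varsigma_\m^2n)\big)$ in the dominant term but both are controlled by $(\sum_j\tau_j)^3=o(n)$. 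For \eqref{normal:lem:1:eq3} the two arguments are substantively identical ($\sum_i\Ex Q_{ni}^4\to0$, either via $\sup_i|Q_{ni}|^4\le\sum_iQ_{ni}^4$ or via Markov), and \eqref{normal:lem:1:eq1} is identical. One point worth making explicit, as you do at the end: both proofs, and the paper's, silently rely on $\varsigma_\m^{-1}=o(1)$ even though the lemma's formal hypotheses do not list it; this is supplied by the regime of Theorem~\ref{theo:norm:ind:main} where the lemma is invoked.
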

\begin{proof}
Proof of \eqref{normal:lem:1:eq1}. Observe that $\Ex[X_{1i}X_{1i'}]=0$ for $i\neq i'$ and thus, for $i=2,\dots,n$ we have
\begin{multline*}
\Ex |Q_{ni}|^2=\Ex|X_{1i}+\dots+X_{i-1,i}|^2=(i-1)\Ex|X_{12}|^2=\frac{2(i-1)}{n^2\varsigma_\m ^2}\Ex\big|\sum_{j=1}^\m U_1f_j^\tau(W_1)U_2f_j^\tau(W_2)\big|^2\\
=\frac{2(i-1)}{n^2\varsigma_\m ^2}\sum_{j,j'=1}^\m\big(\Ex[ U^2f_j^\tau(W)f_{j'}^\tau(W)]\big)^2
=\frac{2(i-1)}{n^2} 
\end{multline*}
by the definition of $\varsigma_\m $. Thereby, we conclude
\begin{equation}\label{normal:lem:proof:eq1}
 \sum_{i= 1}^n\Ex|Q_{ni}|^2=\frac{2}{n^2}\sum_{i= 1}^{n-1}i=\frac{n(n-1)}{n^2}=1-\frac{1}{n}
\end{equation}
which proves \eqref{normal:lem:1:eq1}.

Proof of \eqref{normal:lem:1:eq2}. 
Using relation \eqref{normal:lem:proof:eq1} we observe
\begin{equation*}
 \Ex\big|\sum_{i=1}^nQ_{ni}^2-1\big|^2=\sum_{i=1}^n \Ex Q_{ni}^4+2\sum_{i< i'}\Ex Q_{ni}^2Q_{ni'}^2-1+o(1)=:I_n+II_n-1+o(1).
\end{equation*}
Consider $I_n$. 
Observe that
\begin{multline*}
 \Ex |Q_{ni}|^4=\Ex\big|\sum_{i'=1}^{i-1}X_{i'i}\big|^4=\Ex\Big|\frac{\sqrt 2}{n\varsigma_\m }\sum_{j=1}^\m\tau_j U_if_j(W_i)\sum_{i'=1}^{i-1}U_{i'}f_j(W_{i'})\Big|^4\\\hfill
  \leq\frac{4}{n^4\varsigma_\m ^4}\Big(\sum_{j=1}^\m\tau_j\Big)^3\sum_{j=1}^\m \Ex |Uf_j(W)|^4\Big((i-1)\tau_j\Ex|Uf_j(W)|^4+3(i-1)(i-2)\varsigma_{jj}^2\Big)
\end{multline*}
where we used that $\Ex[ Uf_j(W)]=0$. Since  $\sum_{i=1}^n3(i-1)(i-2)=n(n-1)(n-2)$ we conclude
\begin{equation*}
 I_n \leq \frac{4}{n^4\varsigma_\m ^4}\Big(\sum_{j=1}^\m\tau_j\Big)^3\Big(\frac{n(n-1)}{2}\sum_{j=1}^\m \tau_j(\Ex |Uf_j(W)|^4)^2+n(n-1)(n-2)\sum_{j=1}^\m {\textsl s}_{jj}^2\Ex |Uf_j(W)|^4\Big)
\end{equation*}
Therefore, applying $\max_{j\geq 1}\Ex |Uf_j(W)|^4\leq \eta_f\,\eta_p \,\sigma^{4}$ and $\sum_{j=1}^\m\tau_j=o(n^{1/3})$ yields $I_n=o(1)$.  Consider $II_n$. 
 We calculate  for $i<i'$
\begin{multline*}
 Q_{ni}^2Q_{ni'}^2
=\Big(\sum_{k=1}^{i-1}X_{ki}^2\Big)\Big(\sum_{k=1}^{i'-1}X_{ki'}^2\Big)
+\Big(\sum_{k=1}^{i-1}X_{ki}^2\Big)\Big(\sum_{k\neq k'}^{i'-1}X_{ki'}X_{k'i'}\Big)\\\hfill
+\Big(\sum_{k\neq k'}^{i-1}X_{ki}X_{k'i}\Big)\Big(\sum_{k=1}^{i'-1}X_{ki'}^2\Big)
+\Big(\sum_{k\neq k'}^{i-1}X_{ki}X_{k'i}\Big)\Big(\sum_{k\neq k'}^{i'-1}X_{ki'}X_{k'i'}\Big)\\
=:A_{ii'}+B_{ii'}+C_{ii'}+D_{ii'}.
\end{multline*}
Consider $A_{ii'}$. Exploiting relation \eqref{normal:lem:proof:eq1} and using
 $\sum_{i<i'}(i-1)
=\sum_{i'=1}^n(i'-1)(i'-2)/2=n(n-1)(n-2)/6$ and further  $\sum_{i<i'}(i-1)(i'-3)=\sum_{i'=1}^n(i'-3)(i'-2)(i'-1)/2=n(n-1)(n-2)(n-3)/8$  we obtain 
\begin{multline*}
 2\sum_{i<i'}\Ex A_{ii'}=4 \Ex X_{12}^2X_{23}^2\sum_{i<i'}(i-1) +2(\Ex X_{12}^2)^2\sum_{i<i'}(i-1)(i'-3)+o(1)\\
=\frac{8n(n-1)(n-2)}{3n^4\varsigma_\m^4}\Big(\sum_{j,j',l,l'=1}^\m\varsigma_{jj'}\varsigma_{ll'}\Ex U^4f_j^\tau(W)f_{j'}^\tau(W)f_l^\tau(W)f_{l'}^\tau(W)\Big)\\+\frac{n(n-1)(n-2)(n-3)}{n^4}+o(1).
\end{multline*}
Moreover, applying the Cauchy-Schwarz inequality twice gives
\begin{multline*}
 \sum_{j,j',l,l'=1}^\m{\textsl s}_{jj'}{\textsl s}_{ll'}\Ex U^4f_j^\tau(W)f_{j'}^\tau(W)f_l^\tau(W)f_{l'}^\tau(W)
\leq \max_{1\leq j\leq \m}\Ex|Uf_j(W)|^4 \Big(\sum_{j,\,j'=1}^\m\sqrt{\tau_j\tau_{j'}}{\textsl s}_{jj'}\Big)^2\\
\leq \eta_f\,\eta_p\,\sigma^4 \varsigma_\m^2\Big(\sum_{j=1}^\m\tau_j\Big)^2.
\end{multline*}
Thereby, it holds $2\sum_{i<i'}\Ex A_{ii'}=1+o(1)$. 
Now consider $B_{ii'}$. Since $\{\basW_l\}_{l\geq 1}$ forms an orthonormal basis on the support of $W$ we obtain
\begin{multline*}
 \Ex \Big(\sum_{k=1}^{i-1}X_{ki}^2\Big)\Big(\sum_{k\neq k'}^{i'-1}X_{ki'}X_{k'i'}\Big)=2\sum_{k=1}^{i-1}\Ex X_{ki}^2X_{ki'}X_{ii'}\\
\leq \frac{8(i-1)}{n^4\varsigma_\m ^4}\sum_{j,j'=1}^\m\Ex\Big| U_1^3f_j^\tau(W_1)f_{j'}^\tau(W_1) U_2^3f_j^\tau(W_2)f_{j'}^\tau(W_2)\sum_{l,l'=1}^\m\varsigma_{ll'} f_l^\tau(W_1) f_{l'}^\tau(W_2)\Big|\\
\leq \frac{8(i-1)\sigma^2 \eta_p^2}{n^4\varsigma_\m ^3}
\Big(\sum_{j,j'=1}^\m\Ex| U^2f_j^\tau(W)f_{j'}^\tau(W)|^2\Big)\leq \frac{8(i-1)\sigma^6 \eta_f\, \eta_p^3}{n^4\varsigma_\m ^3}\Big(\sum_{j=1}^\m\tau_j\Big)^2.
\end{multline*}
This, together with relation \eqref{normal:lem:proof:eq1}, yields $\sum_{i<i'}\Ex B_{ii'}=o(1)$.
Further, it is easily seen that $\sum_{i<i'}\Ex C_{ii'}=o(1)$. 
Consider $D_{ii'}$. Using twice the law of iterated expectation gives
\begin{multline*}
 \Ex D_{ii'}=\Ex\Big(\sum_{k\neq k'}^{i-1}X_{ki}X_{k'i}\Big)\Big(\sum_{k\neq k'}^{i'-1}X_{ki'}X_{k'i'}\Big)
  =4 \sum_{k<k'}^{i-1}\Ex X_{ki}X_{k'i}X_{ki'}X_{k'i'}\\
  =4 \sum_{k<k'}^{i-1}\Ex\big[ X_{ki}X_{k'i}\Ex[X_{ki'}X_{k'i'}|(Y_k,Z_k,W_k),(Y_{k'},Z_{k'},W_{k'}),(Y_{i},Z_{i},W_{i})]\big]\\
  =\frac{8}{n^2\varsigma_\m^2} \sum_{k<k'}^{i-1}\Ex\Big[ \Ex[X_{ki}X_{k'i}|(Y_k,Z_k,W_k),(Y_{k'},Z_{k'},W_{k'})]\sum_{j,j'=1}^\m{\textsl s}_{jj'}U_kf_j^\tau(W_k)U_{k'}f_{j'}^\tau(W_{k'})\Big]\\
  =\frac{8}{n^4\varsigma_\m ^4}\Ex\Big|\sum_{j,j'=1}^\m{\textsl s}_{jj'}U_1f_j^\tau(W_1)U_{2}f_{j'}^\tau(W_{2})\Big|^2(i-1)(i-2)
  \leq \frac{8\sigma^4\eta_p^2}{n^4\varsigma_\m ^2}(i-1)(i-2).
\end{multline*} 
Since  $\varsigma_\m^{-1}=o(1)$ we obtain
\begin{equation*}
 \sum_{i<i'}\Ex D_{ii'}\leq \frac{8\sigma^4\eta_p^2}{n^4\varsigma_\m ^2}\sum_{i<i'}(i-1)(i-2)
  = \frac{2\sigma^4\eta_p^2\,n(n-1)(n-2)(n-3)}{3\varsigma_\m^{2}n^4}=o(1)
\end{equation*} 
and hence $2\sum_{i< i'}\Ex Q_{ni}^2Q_{ni'}^2=1+o(1)$.

Proof of \eqref{normal:lem:1:eq3}. Note that $\PP\big(\sup_{i\geq 1}|Q_{ni}|>\varepsilon\big)\leq \sum_{i=1}^n\PP\big(Q_{ni}^2>\varepsilon^2\big)$ and, hence the assertion follows from the Markov inequality.
\end{proof}
\bibliography{BiB-NIR-LinFun}

\begin{thebibliography}{25}
\providecommand{\natexlab}[1]{#1}
\providecommand{\url}[1]{\texttt{#1}}
\expandafter\ifx\csname urlstyle\endcsname\relax
  \providecommand{\doi}[1]{doi: #1}\else
  \providecommand{\doi}{doi: \begingroup \urlstyle{rm}\Url}\fi

\bibitem[Ai and Chen(2003)]{AC03econometrica}
C.~Ai and X.~Chen.
\newblock Efficient estimation of models with conditional moment restrictions
  containing unknown functions.
\newblock \emph{Econometrica}, 71:\penalty0 1795--1843, 2003.

\bibitem[Awad(1981)]{Awad}
A.~M. Awad.
\newblock Conditional central limit theorems for martingales and reversed
  martingales.
\newblock \emph{The Indian Journal of Statistics, Series A}, 43:\penalty0
  10--106, 1981.

\bibitem[Blundell and Horowitz(2007)]{Blundell2007}
R.~Blundell and J.~Horowitz.
\newblock A nonparametric test of exogeneity.
\newblock \emph{Review of Economic Studies}, 74\penalty0 (4):\penalty0
  1035--1058, Oct 2007.

\bibitem[Blundell et~al.(2007)Blundell, Chen, and
  Kristensen]{BCK07econometrica}
R.~Blundell, X.~Chen, and D.~Kristensen.
\newblock Semi-nonparametric iv estimation of shape-invariant engel curves.
\newblock \emph{Econometrica}, 75\penalty0 (6):\penalty0 1613--1669, 2007.

\bibitem[Breunig and Johannes(2011)]{BJ2011}
C.~Breunig and J.~Johannes.
\newblock Adaptive estimation of functionals in nonparametric instrumental
  regression.
\newblock Technical report, University of Mannheim (submitted.), 2011.

\bibitem[Chen and Rei\ss(2011)]{ChenReiss2008}
X.~Chen and M.~Rei\ss.
\newblock On rate optimality for ill-posed inverse problems in econometrics.
\newblock \emph{Econometric Theory}, 27\penalty0 (03):\penalty0 497--521, 2011.

\bibitem[Darolles et~al.(2011)Darolles, Fan, Florens, and Renault]{DFR02}
S.~Darolles, Y.~Fan, J.~P. Florens, and E.~M. Renault.
\newblock Nonparametric instrumental regression.
\newblock \emph{Econometrica}, 79\penalty0 (5):\penalty0 1541--1565, 2011.

\bibitem[Donald et~al.(2003)Donald, Imbens, and Newey]{Donald2003}
S.~G. Donald, G.~Imbens, and W.~K. Newey.
\newblock Empirical likelihood estimation and consistent tests with conditional
  moment restrictions.
\newblock \emph{Journal of Econometrics}, 117\penalty0 (1):\penalty0 55--93,
  2003.

\bibitem[Florens(2003)]{F03eswc}
J.-P. Florens.
\newblock Inverse problems and structural econometrics: The example of
  instrumental variables.
\newblock In M.~Dewatripont, L.~P. Hansen, and S.~J. Turnovsky, editors,
  \emph{Advances in Economics and Econometrics: Theory and Applications --
  Eight World Congress}, volume~36 of \emph{Econometric Society Monographs}.
  Cambridge University Press, 2003.

\bibitem[Florens et~al.(2011)Florens, Johannes, and Van~Bellegem]{FJVB07}
J.~P. Florens, J.~Johannes, and S.~Van~Bellegem.
\newblock Identification and estimation by penalization in nonparametric
  instrumental regression.
\newblock \emph{Econometric Theory}, 27\penalty0 (03):\penalty0 472--496, 2011.

\bibitem[Gagliardini and Scaillet(2007)]{GS07}
P.~Gagliardini and O.~Scaillet.
\newblock A specification test for nonparametric instrumental variable
  regression.
\newblock Swiss Finance Institute Research Paper No. 07-13, 2007.

\bibitem[Gagliardini and Scaillet(2011)]{GS06}
P.~Gagliardini and O.~Scaillet.
\newblock Tikhonov regularization for nonparametric instrumental variable
  estimators.
\newblock \emph{Journal of Econometrics}, 167:\penalty0 61–75, 2011.

\bibitem[Ghorai(1980)]{Ghorai}
J.~Ghorai.
\newblock Asymptotic normality of a quadratic measure of the orthogonal series
  type density estimate.
\newblock \emph{Annals of the Institute of Statistical Mathematics},
  32:\penalty0 341--350, 1980.

\bibitem[Hall and Horowitz(2005)]{HallHorowitz2005}
P.~Hall and J.~L. Horowitz.
\newblock Nonparametric methods for inference in the presence of instrumental
  variables.
\newblock \emph{Annals of Statistics}, 33:\penalty0 2904--2929, 2005.

\bibitem[Hong and White(1995)]{Hong95}
Y.~Hong and H.~White.
\newblock Consistent specification testing via nonparametric series regression.
\newblock \emph{Econometrica}, 63:\penalty0 1133--1159, 1995.

\bibitem[Horowitz(2006)]{Horowitz2006}
J.~L. Horowitz.
\newblock Testing a parametric model against a nonparametric alternative with
  identification through instrumental variables.
\newblock \emph{Econometrica}, 74\penalty0 (2):\penalty0 521--538, 2006.

\bibitem[Horowitz(2011)]{Horowitz2011}
J.~L. Horowitz.
\newblock Applied nonparametric instrumental variables estimation.
\newblock \emph{Econometrica}, 79\penalty0 (2):\penalty0 347--394, 2011.

\bibitem[Horowitz(2012)]{Horowitz2009}
J.~L. Horowitz.
\newblock Specification testing in nonparametric instrumental variables
  estimation.
\newblock \emph{Journal of Econometrics}, 167:\penalty0 383--396, 2012.

\bibitem[Horowitz and Spokoiny(2001)]{Horowitz2001}
J.~L. Horowitz and V.~G. Spokoiny.
\newblock An adaptive, rate-optimal test of a parametric mean-regression model
  against a nonparametric alternative.
\newblock \emph{Econometrica}, 69\penalty0 (3):\penalty0 599--631, 2001.

\bibitem[Johannes and Schwarz(2010)]{Johannes2009}
J.~Johannes and M.~Schwarz.
\newblock Adaptive nonparametric instrumental regression by model selection.
\newblock Technical report, Universit\'e catholique de Louvain, 2010.

\bibitem[Newey(1997)]{Newey1997}
W.~K. Newey.
\newblock Convergence rates and asymptotic normality for series estimators.
\newblock \emph{Journal of Econometrics}, 79\penalty0 (1):\penalty0 147 -- 168,
  1997.

\bibitem[Newey and Powell(2003)]{NP03econometrica}
W.~K. Newey and J.~L. Powell.
\newblock Instrumental variable estimation of nonparametric models.
\newblock \emph{Econometrica}, 71:\penalty0 1565--1578, 2003.

\bibitem[Santos(2012)]{Santos12}
A.~Santos.
\newblock Inference in nonparametric instrumental variables with partial
  identification.
\newblock \emph{Econometrica}, 80\penalty0 (1):\penalty0 213--275, 2012.

\bibitem[Serfling(1981)]{Serfling}
R.~J. Serfling.
\newblock \emph{Approximation theorems of mathematical statistics}.
\newblock Wiley Series in Probability and Statistics. Wiley, Hoboken, NJ, 1981.

\bibitem[Tripathi and Kitamura(2003)]{Tripathi2003}
G.~Tripathi and Y.~Kitamura.
\newblock Testing conditional moment restrictions.
\newblock \emph{Annals of Statistics}, 31\penalty0 (6):\penalty0 2059--2095,
  2003.

\end{thebibliography}
\end{document}